\definecolor{darkblue}{rgb}{0,0,0.4}
\definecolor{verylightgray}{rgb}{0.95,0.95,0.95}
\tikzstyle{arr}=[->,>=stealth']
\tikzstyle{predcirc}=[
\tikzstyle{predrect}=[predcirc,rectangle,inner sep=2pt]
\newcommand\fgnodeR{2pt}  
\tikzstyle{fgdraw}=[
\tikzstyle{fgfill}=[fill=black]
\else\errmessage{\string\fgnode already defined!}\fi
\def\fgnode#1#2(#3) at (#4){
    \begin{scope}[shift={(#4)},shift only]
      \clip (-\fgnodeR,-#1*\fgnodeR) rectangle (\fgnodeR,#2*\fgnodeR);
      \node[fgdraw,circle,fgfill] {};
    \end{scope}
    \node[fgdraw,circle] (#3) at (#4)}
\newcommand{\oonode}{\fgnode00} 
\newcommand{\ronode}{\fgnode01} 
\newcommand{\wonode}{\fgnode10} 
\newcommand{\rwnode}{\fgnode11} 
\newcommand{\gnode}{\node[fgdraw,circle,fill=gray]}  
\newcommand{\cnode}{\node[fgdraw,fgfill,rectangle]} 
\newcommand{\enode}{\oonode} 
\newcommand{\fnode}{\rwnode} 
\lstdefinestyle{boogie}{
  morekeywords={procedure,returns,assume,assert,havoc,goto,return,
    int,bool,type,while,if,true,false,function,bool,returns,axiom,
    forall,var}
}
\lstdefinestyle{jml}{
  language=java,
  morekeywords={requires,ensures,old,invariant,forall,exists,axiom,also,
    result,pure,assert,modifies},
  deletekeywords={label}
}
\lstdefinestyle{smt}{
  morekeywords={ite,true,false,BG_PUSH,IFF,FORALL,EQ,NEQ,NOT,TRUE,FALSE,
    IMPLIES}
}
\newcommand{\boogieCode}{\lstinline[style=boogie,basicstyle=\normalsize]}
\newcommand{\jmlCode}{\lstinline[style=jml,basicstyle=\normalsize]}
\newcommand{\smtCode}{\lstinline[style=smt,basicstyle=\normalsize]}
\newcommand{\deflang}[1]{\lstnewenvironment{#1}[1][]{\lstset{style=#1,##1}}{}}
  \g@addto@macro\verbatim{\microtypesetup{activate=false}}
\def\fb#1{{\bf #1}}
\newcommand{\R}{\mathbb{R}}
\newcommand{\Z}{\mathbb{Z}}
\newcommand{\csharp}{C$^\sharp$\xspace}
\newcommand{\dts}{\mathinner{\ldotp\ldotp}}
\newcommand{\edge}{\mathrel-\joinrel\joinrel\mathrel-}
\newcommand{\escjava}{ESC\slash Java\xspace}
\newcommand{\fls}{\bot}
\newcommand{\framac}{\hbox{Frama-C}}
\newcommand{\fx}{F\kern-0.1667em\lower.5ex\hbox{X}\kern-.125em 7\xspace}
\newcommand{\hoare}[3]{\{#1\}\>\text{#2}\>\{#3\}}
\newcommand{\indentline}[1]{\\\leftline{\indent#1}}
\newcommand{\limp}{\Rightarrow}
\newcommand{\macro}[1]{\texttt{\char`\\#1}}
\newcommand{\shell}[1]{\indentline{\texttt{#1}} }
\newcommand{\specsharp}{Spec$^\sharp$\xspace}
\newcommand{\spp}[2]{\mathit{sp}\>{#1}\>{#2}}
\newcommand{\sqatop}[2]{\genfrac{[}{]}{0pt}{}{#1}{#2}}
\newcommand{\tru}{\top}
\newcommand{\wpp}[2]{\mathit{wp}\>{#1}\>{#2}}
\newcommand{\startgrammar}{
  \begingroup
  \def\is{$\>\to$&}
  \def\|{$\mid$}
  \def\b##1{\textbf{##1}}
  \def\i##1{\textsl{##1}}
  \def\?{$^?$}
  \def\*{$^\ast$}
  \begin{figure}
  \centering
  \footnotesize
  \begin{tabular}{rl}
}
\newcommand{\stopgrammar}[2]{
  \end{tabular}
  \caption{#1}\label{#2}
  \end{figure}
  \endgroup
}
\newcommand{\bc}{\begin{figure}\centering\begin{tabular}{c}} 
\newcommand{\ec}[2]{\end{tabular}\caption{#1}\label{#2}\end{figure}} 
\newcommand{\chquote}[2]{\hfill\hbox{\vbox{%
  \hsize=.6\hsize\leftskip=0pt plus1fill%
  \noindent\textsf{\textsl{``#1''}\\\smallskip%
  \rightline{--- #2}} }}\bigskip}
\newtheoremstyle{slanted}{}{}{\slshape}{}{\bf}{.}{.5em}{}
\theoremstyle{slanted}
\newtheorem{problem}{Problem}
\newtheorem{conjecture}{Conjecture}
\newtheorem{lemma}{Lemma}
\newtheorem{proposition}{Proposition}
\newtheorem{theorem}{Theorem}
\newtheorem{corollary}{Corollary}
\theoremstyle{definition}
\newtheorem{definition}{Definition}
\newtheorem{example}{Example}
\theoremstyle{remark}
\newtheorem{remark}{Remark}
\begin{document}
\frontmatter
\baselineskip=15pt 
\parskip=0pt plus 1.25pt  
\pagestyle{plain}
\begin{titlepage}
\begin{center}
\vspace*{\stretch{1}}
{\huge\bf The Design and Algorithms of \\[.5ex] 
a Verification Condition Generator}
\\[1ex]
{\large FreeBoogie} 
\\[3ex]
{\Large Radu Grigore}

\bigskip\bigskip
The thesis is submitted to University College Dublin for
the degree of PhD in the College of Engineering, Mathematical
\& Physical Sciences.

\bigskip
March 2010

\bigskip
School of Computer Science and Informatics \\
\emph{Head of School}: Prof.~Joe Carthy\\
\emph{Supervisor}: Assoc.~Prof.~Joseph Roland Kiniry\\
\emph{Second supervisor}: Prof.~Simon Dobson
\vspace*{\stretch{1}}
\end{center}
\end{titlepage}

\setcounter{tocdepth}{1}\tableofcontents

\newpage\section*{Acknowledgments}

In October 2005, Joe Kiniry picked me up from Dublin Airport. The same
week, he showed me \escjava and asked me to start fixing its bugs. To do
so, I had to learn about Hoare triples, guarded commands, and a few other
things: He taught me by throwing at me the right problems. He also provided
a good environment for research, by building from scratch in UCD a group
focused on applied formal methods. Joe helped me as a friend when I had
difficulties in my personal life.

I tend to spend most of my time learning, rather than doing.  However, in
the summer of~2007 the reverse was true. The cause was the epidemic
enthusiasm of Micha{\l} Moskal, who visited our group.  For the rest of the
four years that I spent in Dublin, Mikol\'a\v{s} Janota was the main target
of my technical ramblings, since we lived and worked together. 

Rustan Leino provided much feedback on a draft of this
dissertation.  He observed typesetting problems and suggested how
to fix them; he alerted me to subtle errors; he suggested
numerous improvements; he made important high-level observations.
Henry McLoughlin, Joe, and Mikol\'a\v{s} also provided
substantial feedback.
 
Fintan Fairmichael, Joe, Julien Charles, Micha{\l}, and
Mikol\'a\v{s} are coauthors of the papers on which this
dissertation is based.  During my trips to conferences and to
graduate schools I learned much by discussing and by listening to
many, many people. Some of them are Christian Haack, Cl\'ement
Hurlin, Cormac Flanagan, Erik Poll, Jan Smans, and Matt
Parkinson.  I particularly enjoyed attempting to solve Rustan's
puzzles.

I remember Dublin as the friendliest city I have ever been to. I remember
the great Chinese food prepared by Juan `Erica' Ye. I remember the great
lectures given by Henry.  I remember playing pool with Javi G\'omez during
our internship with Google.  I remember Julien being over-excited by things
I would hardly notice, such as weird music.  I remember a few colleagues,
like Lorcan Coyle, Fintan, Rosemary Monahan, and Ross Shannon, with whom I
wish I had communicated more and whom I hope to meet again. I remember many
former UCD undergraduates, such as Eugene Kenny, with whom I also hope to
meet~again.

It was a pleasure to spend these years doing nothing but research. However,
I hope in the future I will strike a better balance between family and
work.  I hope I will be a better husband to Claudia, a better father to
Mircea, and a better son to Mariana and Corneliu.  All of them were very
patient and supportive.  This dissertation is for them, and especially for
my son Mircea who, I hope, will read it some day.

\eject
This page intentionally contains only this sentence.
\newpage\section*{Summary}

This dissertation belongs to the broad field of formal methods, which is,
roughly, about using mathematics to improve the quality of software.
Theoreticians help by teaching future programmers how to understand
programs and how to construct them. Once people use mathematical techniques
to do something, it is usually a matter of time until computers take over.
Researchers in applied formal methods try to produce tools that that help
programmers to write high-quality new code and also to find problems in old
code. \specsharp is one attempt to produce such a tool. Looking at its
backend, the Boogie tool, I noticed questions that begged to be answered.

At some point the program is brought into passive form, meaning
that assignments are replaced by equivalent statements. It was
clear that sometimes the passive form was bigger than necessary.
Would it be possible to always compute the best passive form?  To
define precisely what `best' means one first needs a precise
definition of what constitutes a passive form. A part of the
dissertation gives such a precise definition and explores what
follows from it.  Briefly, there are multiple natural meanings
for `best', for some of them it is easy to compute the `best'
passive form, for others it is hard.

Later the program is transformed into a logic formula that is
valid if and only if the program is correct. There are two ways
of doing this, based on the weakest precondition and based on
strongest postcondition, but it was unclear what is the trade-off
involved in choosing one method over the other. 

One use-case for program verifiers is to monitor programmers
writing new code and point out bugs as they are introduced.
Batch processing is bound to be
inefficient. Is there a way to reuse the results of a run when
the program text changes as little as adding a statement or
tweaking a loop condition? The answer is related to Craig
interpolants.

Another question is why the Boogie tool checks only for partial
correctness, and does not perform other analyzes as well. In
particular, checking for semantic reachability of Boogie
statements may reveal a wide range of problems such as
inconsistent specifications, doomed code, bugs in the frontend of
\specsharp, and high-level dead code. Of course, one can manually
insert \textbf{assert false} statements, but this is cumbersome.
Also, the dissertation shows that it is possible to solve the
task much more efficiently than by simply replacing each
statement in turn with \textbf{assert false}.
 
\mainmatter
\chapter{Introduction} \label{ch:intro}

\chquote{The purpose of your paper is not $\ldots$ to describe
the WizWoz system. Your reader does not have a WizWoz. She is
primarily interested in re-usable brain-stuff, not executable
artifacts.}{Simon Peyton-Jones~\cite{peyton-htwgrp}}

\section{Motivation} \label{sec:motivation}

Ideal programs are \emph{correct}, \emph{efficient}, and easy to
\emph{evolve}. Tools can help with all three aspects:
Type-checkers ensure that certain classes of errors do not occur,
profilers identify performance hot-spots, and IDEs (\fb
integrated \fb development \fb environments) refactor programs.
Automation allows humans to focus on the interesting issues.
Knuth~\cite{knuth1974cpa} put it differently: ``Science is
knowledge which we understand so well that we can teach it to a
computer; and if we don't fully understand something, it is an
art to deal with.'' For example, a bit string can represent both
a text and an integer, we understand well how to check that a
program does not mix the two interpretations, and we leave the
task to type-checkers. If, on the other hand, we want to check
that a program computes the transitive closure of a graph we
usually do it by hand.

\bc\begin{jml}
void fw(boolean[][] G)  // $G$ is an adjacency matrix 
  requires square(G); 
  requires (forall i; G[i][i]); 
  ensures (forall i, j; G[i][j] == path(old(G), i, j)); 
{ final int n = G.length; 
  for (int k = 0; k < n; ++k) invariant (forall i, j; G[i][j] == pathK(old(G), i, j, k)); 
    for (int i = 0; i < n; ++i)
      for (int j = 0; j < n; ++j) 
        G[i][j] = G[i][j] || (G[i][k] && G[k][j]); 
}

axiom (forall G, i, j, k; pathK(G, i, j, k) = G[i][j] || (exists q; q<k && G[i][q] && pathK(G, q, j, k))); 
axiom (forall G, i, j; path(G, i, j) = pathK(G, i, j, G.length));
\end{jml}\ec{An example of what program verifiers can ideally do}{lst:fw}

A \emph{program verifier} automatically checks whether code agrees with
specifications. Figure~\ref{lst:fw} shows an example.  Variables $i$,~$j$,
$k$, and~$q$ range over nonnegative integers.  The code is an
implementation of the Roy--Warshall
algorithm~\cite{roy1959,warshall1962tbm}\index{Roy--Warshall algorithm}.
The programmer spent energy to specify \emph{what} the algorithm does
(\textbf{requires}, \textbf{ensures}) and \emph{how} it works
(\textbf{invariant}). In words, $\mathit{path}(G,i,j)$ means that there is
a path~$i \leadsto j$ in the graph~$G$, and $\mathit{pathK}(G,i,j,k)$ means
that there is such a path whose intermediate nodes come only from the
set~$0\dts k-1$. The example illustrates what program verifiers ought to be
able to do.

Note that it is trivial to establish the invariant (because \textit{pathK}
reduces to $G$ when $k=0$) and to infer the postcondition from the
invariant (because \textit{pathK} reduces to \textit{path} when $k=n$).
Proving that the invariant is preserved is trickier than it might seem
mainly because of the in-place update, but could conceivably be done
automatically.  The proof of the invariant is sometimes omitted from
informal explanations of the algorithm, but the information contained by
the definitions of \textit{path} and \textit{pathK} is always communicated.
The long term goal illustrated here is: A program verifier should be
able to automatically check a program even when its annotations contain no
more than what we would say to a reasonably good programmer to explain what
the code does and how it~works.

From an engineering perspective, program verifiers are similar to
compilers. The input is a program written in a high-level
language, and the output is a set of warnings (or errors) that
indicate possible bugs. For a good program verifier, the lack of
warnings should be a better indicator that the program is correct
than any human-made argument. The architecture \emph{often}
consists of a front-end that translates a high-level language
into a much simpler intermediate language, and a back-end that
does the interesting work. The same back-end may be connected to
different front-ends, each supporting some high-level language.
The back-end is itself split into a VC (\fb verification \fb
condition) generator and an SMT (\fb satisfiability \fb modulo
\fb theories) solver. A few alternatives to this architecture are
discussed later (Section~\ref{sec:intro.related}).

The VC generator is a trusted component of program
verifiers. Therefore it is important to study it carefully,
including its less interesting corners. This dissertation shows
that even such corners come to life when analyzed in detail from
the point of view of correctness and efficiency. The insights
gained from such an analysis sometimes lead to simpler and
cleaner implementations and sometimes lead to more efficient
implementations.

\section{History} \label{sec:intro.history}

In 1957, `programming' was not a profession. At least that's what
Dijkstra was told~\cite{ewd340}. ``And, believe it or not, but
under the heading \emph{profession} my marriage act shows the
ridiculous entry \emph{theoretical physicist}!'' That is only one
story showing that in those times programmers were second class
citizens and did not have many rights.  Their predicament was,
however, well-deserved: They did not care about the correctness
of programs, and they did not even grasp what it \emph{means} for
a program to be correct! A few bright people changed the
situation. In 1961 McCarthy~\cite{mccarthy1961} published the
first article concerned with the study of what programs mean. The
article is focused on handling recursive functions without side
effects, which corresponds to the style of programming used
nowadays in pure functional languages like
Haskell~\cite{haskell}. Six years later, in 1967,
Floyd~\cite{floyd1967} showed how programs with side effects and
arbitrary control flow can be handled formally. He acknowledges
that some ideas were inspired by Alan Perlis. He emphasized the
view of programs as flowgraphs (or, more precisely, flowcharts).
His method is usually known under the name ``inductive
assertions.'' The method was popularized by
Knuth~\cite{knuth1968v1}. In 1969, Hoare~\cite{hoare1969}
introduced an axiomatic method of assigning meanings to programs
that appealed more to logicians than to algorithmists.
Although the presentation style is
very different, Hoare states that ``the formal treatment of
program execution presented in this paper is clearly derived from
Floyd.'' Based on Hoare's work, Dijkstra~\cite{dijkstra1975}
introduced in 1975 yet another way of defining the meaning of
programs based on predicate transformers such as the weakest
precondition transformer. He did so in the context of a language
(without \textbf{goto}) called ``guarded commands,'' which
provided the main inspiration for the Boogie language.

It is revealing that \emph{all} the authors mentioned in the
previous paragraph received the Turing Award, although not
necessarily for closely related topics:

\begin{itemize} 
\item[1966] Alan Perlis: ``For his influence in
  the area of advanced programming techniques and compiler
  construction.'' 
\item[1971] John McCarthy: ``For his major
  contributions to the field of artificial intelligence.''
\item[1972] Edsger W.~Dijkstra: ``[For] his approach to
  programming as a high, intellectual challenge; his eloquent
  insistence and practical demonstration that programs should be
  composed correctly, not just debugged into correctness; and his
  illuminating perception of problems at the foundations of program
  design.'' 
\item[1974] Donald E.~Knuth: ``For his major
  contributions to the analysis of algorithms and the design of
  programming languages.'' 
\item[1978] Robert W.~Floyd: ``For
  having a clear influence on methodologies for the creation of
  efficient and reliable software, and for helping to found the
  following important subfields of computer science: the theory of
  parsing, the semantics of programming languages, automatic
  program verification, automatic program synthesis, and analysis
  of algorithms.'' 
\item[1980] C.~A.~R.~Hoare: ``For his
  fundamental contributions to the definition and design of
  programming languages.'' 
\end{itemize}

Most researchers now prefer to define programming languages from an
operational point of view. Such definitions (1)~tend to be more intuitive
for programmers and (2)~correspond directly to how interpreters are
implemented. Plotkin's lecture notes~\cite{plotkin1981sos} constitute the
first coherent and comprehensive account of this approach.  Much later,
in 2004, Plotkin put together a historical account~\cite{plotkin2004} of
how his ideas on structural operational semantics crystallized.  He
points to alternative ways of handling the \textbf{goto} statement. He
cites McCarthy~\cite{mccarthy1961} as inspiring him to simplify existing
work, and credits Smullyan~\cite{smullyan1961} for the
$\frac{\text{up}}{\text{down}}$~rules. He also relates operational
semantics to denotational semantics~\cite{allison1986}.

All these developments are based on even older work. In the nineteenth
century Hilbert advocated a rigorous approach to mathematics: It should be
possible in principle to decompose any mathematical proof\index{proof} into
a finite sequence of formulas $p_1$,~$p_2$, $\ldots$, $p_n$ such that each
formula is either an axiom or is obtained from previous formulas by the
application of a simple transformation rule. (Such a sequence is a proof of
all the formulas it contains.) The set of \emph{axioms}\index{axiom} is
fixed in advance and is called \emph{theory}\index{theory}. The set of
transformation rules is also fixed in advance and is called
\emph{calculus}\index{calculus}.  Without looking at the language used to
express formulas, there is not much more that can be said about this
process. If the language is propositional logic ($\tru$, $\fls$, and
variables, connected by $\lnot$, $\lor$, and~$\land$), then we can
\emph{evaluate}\index{evaluation} a formula to $\tru$ or $\fls$ once a
\emph{valuation}\index{valuation}---assignment of values to variables---is
fixed.  A \emph{model}\index{model} is a valuation that makes all axioms
evaluate to~$\tru$.  A formula is \emph{valid}\index{validity} when it
evaluates to~$\tru$ for all models. A calculus is
\emph{sound}\index{soundness} if it produces only valid formulas starting
from valid axioms; a calculus is \emph{complete}\index{completeness} if it
can produce all the valid formulas. Even if a sound calculus is used,
anything can be derived if we start with an
\emph{inconsistent}\index{consistency} theory, one that has no model. These
observations generalize for other languages like fol, hol (\fb higher \fb
order \fb logic), and lambda calculus. The notion of evaluating formulas is
`operational,' while the calculus feels more like the axiomatic approach to
programming languages. The intimate connection between proofs and programs
is explored in a tutorial style by Wadler~\cite{wadler2000pap}.

\section{Related Work} \label{sec:intro.related}

The previous section gave (intentionally) a very narrow view of
modern research on program verification. It is now time to right
that wrong, partly. Because the field is so vast, we still do not
look at all important subfields. For example, no testing tool
inspired by theory is mentioned.

The sharpest divide is perhaps between tools mainly
informed by practice and tools mainly informed by theory. The
authors of FindBugs~\cite{findbugs}, PMD~\cite{pmd}, and
FxCop~\cite{fxcop} started by looking at patterns that appear in
bad code and then built specific checkers for each of those
patterns.  Hence, those tools are collections of checkers plus
some common functionality that makes it easy to implement and
drive such checkers.  Crystal~\cite{crystal}, Soot~\cite{soot},
and NQuery~\cite{nquery} are stand-alone platforms that make it
easy to implement specific checkers.
Rutar~et~al.~\cite{rutar2004} compare a few static analysis tools
for Java, including FindBugs and PMD, from a practical point of
view.

Tools informed by theory lag behind in impact, but promise to
offer better correctness guarantees in the future. These tools
can be roughly classified by the main techniques used by their
reasoning core.

Model checking~\cite{modcheck} led to successful hardware verifiers like
RuleBase~\cite{rulebase, beer1996}. A model checker verifies whether
certain (liveness and safety) properties hold for a certain state
graph---the `model.' The properties are written in a temporal logic, such
as LTL or CTL; the model is a Kripke structure and is represented usually
in some implicit form that tries to avoid state explosion. SPIN~\cite{spin,
spinbook} and NuSMV~\cite{nusmv, cimatti2002} are generic model checkers
and each has its input language. Hardware model checkers start by
transforming a VHDL or Verilog description into a state graph, while
software model checkers start by transforming a program written in a
language like Java into a state graph.  Software model checkers that are
clearly under active development include the open source Java
Pathfinder~\cite{jpf, visser2003} and CHESS~\cite{chess, chess_tr} (for
Win32 object code and for CIL).  SLAM~\cite{ball2004}, a commercially
successful tool developed by MSR, uses a combination of techniques,
including model checking, to verify several properties of Windows drivers.
Another noteworthy software model checker is BLAST~\cite{blast, beyer2007}
(for C\null). Bogor~\cite{bogor, robby2003} is a framework for developing
software model checkers.

The input of theorem provers~\cite{robinson2001v1,
robinson2001v2} is a logic formula. Usually, when the language is
fol, the theorem prover tries to decide automatically if the
input is valid; usually, when the language is hol the theorem
prover waits patiently to be guided through the proof. The former
are proof finders, while the later are proof checkers.  The
distinction is not clear cut: Sometimes, the steps that an
interactive prover ``checks'' are as complicated as some of the
theorems that are ``proved'' automatically. Still, in practice
the distinction is important: Automatic theorem provers tend to
be fast, while interactive theorem provers tend to be~expressive.

The most widely used hol provers are Coq~\cite{coq, coqart},
Isabelle\slash HOL~\cite{hol, isabelle, holbook}, and
PVS~\cite{pvs, owre1992}. The gist of such provers is that they (should)
rely on a very small trusted core. One way to use such theorem
provers for program verification is to do everything in their
input language. For example, Leroy~\cite{leroy2009} implemented a
compiler for a subset of C in Coq's language, formulated theorems
that capture desired properties of a C compiler, and proved them.
Since Coq comes with a converter from Coq scripts to
OCaml~\cite{remy2000} programs, the compiler is executable. (The
converter fails if non-constructive laws such as the excluded
middle are used.) Another approach is to introduce notation that
makes the Coq\slash HOL\slash Isabelle script look `very much'
like the program that is being run~\cite{marti2006}. Yet another
approach is to use hol as the target language of a VC generator
(see~\cite{bohme2008, barthe2006, berg2001}).

The first approach, that of interactively programming and proving
in the same language, is also used with ACL2~\cite{acl2,
acl2book}, whose input language is not
higher-order. 

Provers that handle only fol do not usually require programmers to interact
directly with them. SMT solvers~\cite{barrett2005}, like Z3~\cite{moura2008z3}
and CVC3~\cite{barrett2007}, are designed for program verification. The modules
of an SMT solver---a SAT solver and decision procedures for various
theories---communicate through equalities. This architecture dates back to
Nelson and Oppen~\cite{nelson1980}. The theories are axiomatizations of things
that occur frequently in programs, like arrays, bit vectors, and integers. A
specialized decision procedure can handle integers a lot more efficiently than
a generic procedure.  The extra speed comes at the cost of increasing
considerably the size of the trusted code base. To have the best of both
worlds, speed and reliability, SMT solvers may produce proofs that can be later
checked by a small program~\cite{moskal2008proofs}. However, this poses the
extra challenge of producing proof pieces from \emph{each} decision procedure
and then gluing them together.  Simplify~\cite{detlefs2005} was for many years
the solver of choice for program verification.  It has a similar architecture
(being co-developed by Nelson), but understands a language slightly different
than the standardized SMT language~\cite{barrett2010lang}.

Throughout this dissertation, the term `program verifier' is used
usually in a very restricted sense. It refers to a tool that
\begin{enumerate} \item uses an SMT solver as its reasoning core,
\item has a pipeline architecture and an intermediate language,
and \item can be used to generate warnings just like a compiler.
\end{enumerate} The pipeline architecture with an intermediate
language is typical in translators and
compilers~\cite{novillo2004}.

Many tools fit this narrow definition, including
\escjava~\cite{flanagan2002escjava}, the static verifier in
\specsharp~\cite{barnett2005spec} (for \csharp),
HAVOC~\cite{lahiri2006} (for C), VCC~\cite{cohen2009} (for C),
Krakatoa~\cite{filliatre2007why} (for Java), the Jessie plugin in
\framac~\cite{framac} (for C), and JACK~\cite{barthe2006} (for
Java). \escjava\ and JACK support JML~\cite{leavens2006jml} (the
\fb Java \fb modeling \fb language), an annotation language for
Java that has wide support, a reference manual, and even
(preliminary and partial) formal semantics~\cite{leavens2006sem}.
\framac\ supports ACSL~\cite{baudin2008}, another of the few
annotation languages with an adequate reference~manual.

The intermediate language, or at least the intermediate representation, used by
these tools is much better specified.  \escjava uses a variant of Dijkstra's
guarded commands~\cite{dijkstra1975} that has exceptions. \specsharp, HAVOC,
and VCC, which are developed by Microsoft Research, use the Boogie
language~\cite{leino2008boogie,leino2010boogie}. Krakatoa and \framac, which
are developed by INRIA, use the Why language~\cite{filliatre2007why}. The
Boogie language (and the associated verifier from Microsoft
Research) was used also as a high-level interface to SMT solvers in order to
verify algorithms and to explore encoding strategies for high-level
languages~\cite{banerjee2008,summers2010}.

Many tools have more than one reasoning engine. JACK uses Simplify and Coq.
The Why tool uses a wide range of theorem provers: Coq, PVS, Isabelle\slash
HOL, HOL~4, HOL Light, Mizar; Ergo, Simplify, CVC Lite, haRVey, Zenon,
Yices, CVC3.  SLAM uses both a model checker and the SMT solver Z3.

The Boogie tool, FreeBoogie, and the Why tool are fairly
big pieces of code that convert a program written in an intermediate language
into a formula that should be valid if and only if the program is correct.
Moore~\cite{moore2006} argues that a better solution is to explicitly write the
operational semantics, which leads to a much smaller VC generator. It is not
clear what impact this has on speed. However, the technique is most intriguing
and it would be interesting to pursue it in the context of Boogie and Why.
(Moore uses ACL2.) Other tools, like KeY~\cite{key} (for Java, using dynamic
logic~\cite{harel2000}), jStar~\cite{distefano2008jstar} (for Java, using
separation logic~\cite{reynolds2002}), and VeriFast~\cite{verifast} (for C,
using separation logic~\cite{reynolds2002}), avoid the VC generation step
because they rely on symbolic execution~\cite{king1976}.
This means, very roughly, that instead of turning the program into
a proof obligation into one giant step, they `execute' the program and they
keep track of the current possible states using formulas. At each execution
step they may use the help of a reasoning engine. (Note that at this level of
abstraction and hand-waving there is not much difference between symbolic
execution and abstract interpretation~\cite{cousot1977}.)

With so many tools, it is surprising that they do not have a more serious
impact in practice. On this subject one can only speculate. It is probably
true that a closer collaboration between theoreticians and practitioners
would ameliorate the situation. But it is also true that researchers still
have much work to do on \emph{known} problems\index{problem!open}. The
expressivity of specification languages is a family of such problems. For
example, it is still considered a research challenge to annotate
particularly simple patterns, like the Iterator pattern~\cite{haack2009} or
the Composite pattern~\cite{summers2010}. Speed is another problem. As any
(non-Eclipse) computer user will tell you, the number of users of a program
tends to decrease with its average response time.  Since program verifiers
are intended to be used similarly with compilers, their speed is naturally
compared with that of compilers. Right now, the response times of program
verifiers are higher and have a greater variance.

It is, of course, bothersome that in today's state of affairs it
is hard to annotate the Iterator pattern properly in many
verification methodologies. (Otherwise it would not have been the
official challenge of SAVCBS~2006.) But we should not expect
practitioners to annotate such things \emph{at all}, just as we
should not expect them to state the invariant $0\le i\le n$ on
the omnipresent \textbf{for} loop. In general, usable
verification tools should embody knowledge that is common to
people in a specific domain. There is some work on tackling
typical exercises in an introductory programming
course~\cite{leino2007sum} and there is also work in encoding
traditional mathematical knowledge in
verifiers~\cite{buchberger2006}.

A choice that may have seemed unusual is the use of the name
`Roy--Warshall,' instead of the more standard `Floyd--Warshall,' for the
algorithm in Figure~\ref{lst:fw}. There are many algorithms with the shape 
\begin{equation} 
  \hbox{ \textbf{for} $k$ 
    \textbf{do for} $i$, $j$ \textbf{do} $a_{ij} 
      \gets a_{ij} \circ (a_{ik} \bullet a_{kj})$} 
  \index{Roy--Warshall algorithm}
\end{equation} 
and they are sometimes known under the name ``the $kij$ algorithm.''
Roy~\cite{roy1959} and Warshall~\cite{warshall1962tbm} noticed
independently that $\circ = \lor$ and $\bullet = \land$ solves the
transitive closure problem. Floyd~\cite{floyd1962} noticed that $\circ =
\min$ and $\bullet = +$ solves the all pairs shortest paths problem. (He
did this five years before assigning meanings to programs.) Although less
clear, other earlier algorithms are instantiations of the $kij$ schema.
Kleene~\cite{kleene1951} proved that every finite automaton corresponds to
a regular expression. His proof is constructive and is now known as
Kleene's algorithm. The Gauss--Jordan method for solving systems of linear
equations is another example.  Pratt~\cite{pratt1989} discusses the $kij$
algorithm in general terms.

\section{A Guided Tour} \label{sec:intro.tour}

The content of this dissertation touches on issues related to software
engineering, compilers, programming languages, algorithm analysis, and
theorem provers. It is often abstract and theoretical, but sometimes
descends into implementation details.  Therefore, there are plenty of
opportunities for a reader to find something enjoyable. Of course, there
are even more opportunities for a reader to find something not enjoyable,
especially if the reader has a strong preference for one sub-field of
computer science. Fortunately, the dissertation does not stem from one big
contribution, but rather from many smaller ones that are related and yet
can be understood independently. The dependencies between chapters appear
in Figure~\ref{fig:deps}.  Chapter~\ref{ch:boogie} is very important, as it
is required reading for all subsequent chapters.

\begin{figure}\centering
\begin{tikzpicture}[y=\baselineskip]
  \foreach \y/\n/\l in {
    0/boogie/The Core Boogie,
    1/design/Design Overview,
    2/passive/Optimal Passive Form,
    3/spwp/Strongest Postcondition versus Weakest Precondition,
    4/ev/Edit and Verify,
    5/reachability/Semantic Reachability Analysis}
    \node[draw,circle,inner sep=1pt,thick] (\n) at (0,-\y) [label=right:\hbox to 26em{\l\leaders\hbox to 1em{\hfil.\hfil}\hfil \pageref{ch:\n}}] {};
  \foreach \i/\j in {boogie/design,boogie/passive,boogie/spwp,spwp/ev,spwp/reachability}
    \draw[arr,thick] (\i) to [out=180,in=110] (\j);
\end{tikzpicture}
\caption{Dependencies between chapters}\label{fig:deps}
\end{figure}

The scope of the dissertation reflects the broad interests of the
author. Often, broad texts say nothing about a lot, and are quite
boring. The focus on a very specific part of a very specific type
of static analysis tools and the focus on a very small subset of
the Boogie language are meant to ensure enough depth.

My hope is that the dissertation will bring together at
least two people from different sub-fields of computer science to
work together on a common problem related to program verifiers.
\medskip

Chapter~\ref{ch:boogie} presents the syntax and the operational
semantics of a subset of the Boogie language, which is used
throughout subsequent chapters.

Chapter~\ref{ch:design} presents \emph{what} FreeBoogie does. Its
architecture is sketched, including the interfaces between major
components. The advantages and disadvantages of each important
design decision are discussed, so that others who endeavor in
similar tasks avail of our experience and avoid repeating our
mistakes.

Chapter~\ref{ch:passive} presents \emph{how} FreeBoogie replaces
assignments by equivalent assumptions. The algorithm's complexity
is analyzed in detail and its goals are clearly defined. A
natural variant of the problem is proved to be NP-hard.

Chapter~\ref{ch:spwp} presents \emph{how} FreeBoogie generates a
prover query from a Boogie program, using either a weakest
precondition transformer or a strongest postcondition
transformer. In the process, we see how four types of assigning
meanings to Boogie programs relate to each other: operational
semantics, Hoare triples, weakest preconditions, and strongest
postconditions. (Chapters~\ref{ch:passive} and~\ref{ch:spwp} are
based on~\cite{grigore2009spu}.)

Chapter~\ref{ch:ev} presents \emph{how} FreeBoogie exploits the
incremental nature of writing code and annotations in order to
improve its response times. The generic idea, of exploiting what
is known from previous runs of the verifier, is refined and then
proved correct. (This chapter is based on~\cite{grigore2007ev}.)

Chapter~\ref{ch:reachability} presents \emph{what} FreeBoogie does to
protect developers from making silly mistakes that render verification
useless: Inconsistencies make everything provable.  An algorithm that
efficiently searches for inconsistencies is given and is analyzed.
Experimental results show that it is practical.  (This chapter is based
on~\cite{janota2007}.)

Chapter~\ref{ch:conclusions} concludes the dissertation.
Appendix~\ref{ch:notation} summarizes the notation used
throughout the dissertation. The reader is advised to browse that
appendix before continuing with the next chapter.

Software engineers and practitioners are likely to enjoy most
Chapter~\ref{ch:design}. People with a formal background, like
logicians and type theorists, are likely to enjoy most
Chapters~\ref{ch:spwp} and~\ref{ch:ev}. Algorithmists are likely
to enjoy most Chapter~\ref{ch:passive},~\ref{ch:ev}
and~\ref{ch:reachability}. Developers of program verification
tools are likely to enjoy most Chapters~\ref{ch:ev}
and~\ref{ch:reachability}.
\chapter{The Core Boogie}
\label{ch:boogie}

\chquote{The boogie bass is defined as a left-hand rhythmic
pattern, developed from boogie-woogie piano styles, that is
played on the bass string.}{Frederick M. Noad~\cite{noad2002}}

\noindent From now on almost all programs are written in the Boogie
language, and almost all are written in a subset---the core
Boogie---defined in this section. What is `core' is relative to
the topics discussed in this dissertation.

Figure~\ref{lst:first-boogie} shows an implementation of sequential search.
After the counter~$i$ is initialized in line~2 the control goes
nondeterministically to \emph{both}\index{nondeterminism} labels
$b$~and~$d$.  An execution gets \emph{stuck}\index{stuck execution} if it
hits an assumption that does not hold. Since the conditions on the lines 3
and~5 are complementary, exactly one of the two executions will continue.
(In general, the \textbf{if} statement of full Boogie may be desugared into
a \textbf{goto} statement that targets two \textbf{assume} statements with
complementary conditions.) The \textbf{return} statement is reached only if
$i \ge \mathit{vl}$ or $v[i]=u$.

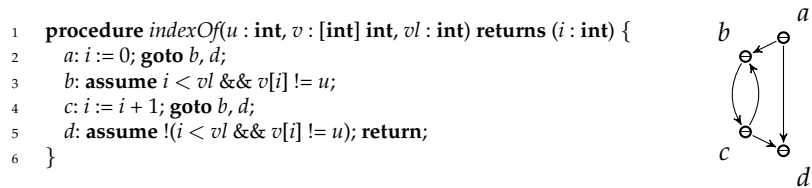
\begin{figure}
\centering
\begin{tabular}{cc}
\begin{boogie}[boxpos=c]
procedure indexOf(u : int, v : [int] int, vl : int) returns (i : int) {
  a: i := 0; goto b, d;
  b: assume i < vl && v[i] != u;
  c: i := i + 1; goto b, d;
  d: assume !(i < vl && v[i] != u); return;
}
\end{boogie}&
\hspace{5mm}
\begin{tikzpicture}[scale=.5,baseline=0.7cm]
  \oonode (A) at (1,3)  [label=60:$a$] {};
  \oonode (B) at (0,2.5)  [label=135:$b$] {};
  \oonode (C) at (0,0.5)  [label=-135:$c$] {};
  \oonode (D) at (1,0)  [label=-60:$d$] {};

  \draw[arr] (A) -- (B);
  \draw[arr] (A) -- (D);
  \draw[arr] (B) to [bend right=30] (C);
  \draw[arr] (C) to [bend right=30] (B);
  \draw[arr] (C) -- (D);
\end{tikzpicture}
\\
\end{tabular}
\caption{Boogie program with one loop and its flowgraph}
\label{lst:first-boogie}
\end{figure}

The high-level constructs of full Boogie (such as \textbf{if}
statements, \textbf{while} statements, and \textbf{call}
statements) can always be desugared into the core that is
formalized here. The concrete grammar of core statements
appears in Figure~\ref{grm:boogie-stmts}. The statement
\textbf{assert}~$p$ means ``when you reach this point, check that
$p$ holds.'' The statement \textbf{assume}~$p$ means ``continue
to look only at executions that satisfy~$p$.''

\startgrammar
  statement\is label\? (assignment \| assumption \| assertion \| jump) \b; \\
  label\is \i{id} \b{:} \\
  assignment\is id \b{:=} expression \\
  assumption\is \b{assume} expression \\
  assertion\is \b{assert} expression \\
  jump\is (\b{goto} id-tuple \b{}) \| \b{return} \\
  id-tuple\is \i{id} (\b, \i{id})\* \\
\stopgrammar{Syntax of core Boogie statements}{grm:boogie-stmts}

The type system of full Boogie is rich, featuring polymorphic
maps, bit vectors, and user-defined types, among others. Its
expression language is similarly rich. Unlike in the case
of statements, the VC generator implementation (henceforth
known as FreeBoogie) does not desugar types and expressions
into simpler ones. How they are treated, however, is not
novel. For the sake of the presentation, only a few types
and expressions are retained in the core, the ones in
Figure~\ref{grm:exprs-and-types}.

\startgrammar
  type\is primitive-type \| map-type \\
  primitive-type\is \b{int} \| \b{bool} \\
  map-type\is \b[ primitive-type \b] type \\
  expression\is \b( quantifier \i{id} \b: type \b{::} expression \b) \\
  expression\is unary-operator expression \| expression binary-operator expression \\
  expression\is expression \b[ expression \b] \\
  expression\is \i{id} \| literal \| \b( expression \b) \\
  quantifier\is \b{forall} \| \b{exists} \\
  unary-operator\is \b! \| \b- \\
  binary-operator\is \b+ \| \b- \| {\boldmath$<$} \| \b{==} \| \b{\&\&} \| {\boldmath$||$} \\
  literal\is \b{true} \| \b{false} \| \b0 \| \b1 \| \b2 \| \ldots \\
\stopgrammar{Syntax of core Boogie types and expressions}{grm:exprs-and-types}

The syntax for the overall structure of core Boogie programs
appears in Figure~\ref{grm:boogie}. Statements are preceded
by variable declarations, in which the variables representing
the input and the output are singled out. The keyword
\textbf{procedure} is retained from full Boogie, where a program
may contain more than one procedure and where there is also a
\textbf{call} statement. Procedures are not included in the core,
because there is nothing novel related to procedure calls in
this dissertation. The mandatory \textbf{return} statement at
the end of each body reduces the number of special cases that
need to be discussed later. FreeBoogie inserts such a statement
automatically during parsing, so the user does not have to end
all procedures with \textbf{return}.

\startgrammar
  program\is \b{procedure} \i{id} \b( arguments\? \b{) returns (} results\? \b) body \\
  arguments, results\is \i{id} \b: type (\b, \i{id} \b: type)\* \\
  body\is variable-declaration\* statement\* \b{return} \b; \\
  variable-declaration\is \b{var} \i{id} \b: type \b; \\
\stopgrammar{Structure of a (core) Boogie program}{grm:boogie}

Typechecking core Boogie is straightforward:
Figure~\ref{fig:boogie-typing} shows a representative sample of
the typing rules. The judgment $\vdash p$ means that the program
fragment~$p$ is well-typed and the judgment $\vdash p:t$ means
that the program fragment~$p$ is well-typed \emph{and} has the
type~$t$. The customary environment is omitted because it is
fixed---it consists of all the variable typings appearing at the
beginning of the program. The rules that are missing are similar
to the ones given: For example, there is a rule that says that
the expression appearing in an assumption must have the type
\textbf{bool} (analogous to rule [asrt]).

\begin{figure}
\centering
\begin{displaymath}
\begin{array}{ccc}
  \frac
    {\exists t,\;  \vdash v : t \;\land\; \vdash e : t}
    { \vdash v \boldsymbol{:=} e}
    \;\text{[asgn]}
  &&
  \frac{\vdash e : \mathbf{bool}}{\vdash \mathbf{assert}\;e}
    \;\text{[asrt]}
  \\ \\
  \frac
    {\vdash e : \mathbf{bool} \quad \vdash f : \mathbf{bool}}
    {\vdash e \:\boldsymbol{\&\&}\: f : \mathbf{bool}}
    \;\text{[bool]}
  &&
  \frac
    {\vdash e : \mathbf{int} \quad \vdash f : \mathbf{int}}
    {\vdash e \:\boldsymbol{+}\: f : \mathbf{int}}
    \;\text{[arith]}
  \\ \\
  \frac
    {\vdash e : \mathbf{int} \quad \vdash f : \mathbf{int}}
    {\vdash e \:\boldsymbol{<}\: f : \mathbf{bool}}
    \;\text{[comp]}
  &&
  \frac
    {\exists t,\; \vdash e : t \;\land\; \vdash f : t}
    {\vdash e \:\boldsymbol{==}\: f : \mathbf{bool}}
    \;\text{[eq]}
  \\ \\
  \frac{}{\vdash\mathbf{true}:\mathbf{bool}} \;\text{[lit-bool]}
  &&
  \frac{}{\vdash\boldsymbol{0}:\mathbf{int}} \;\text{[lit-int]}
\end{array}
\end{displaymath}
\caption{Typing rules for core Boogie}\label{fig:boogie-typing}
\end{figure}

\paragraph{Operational Semantics}\index{operational semantics}

The set of all variables is denoted by~\textit{Variable}\index{variable}.
It can be thought of as the set of all identifiers or as the set of all
strings. The set of all values is denoted by~\textit{Value}\index{value}
and it contains the set of booleans~$\mathbb{B}=\{\tru,\fls\}$.
\emph{Stores}\index{store} assign values to variables.
\begin{align}
\mathit{Store} &= \mathit{Variable} \to \mathit{Value} \\
\sigma &\in\mathit{Store}
\end{align}
\emph{Expressions}\index{expression} assign values to a stores. Boolean
expressions, called \emph{predicates}\index{predicate}, define sets of
stores.
\begin{align}
\mathit{Expression} &= \mathit{Store} \to \mathit{Value} \\
\mathit{Predicate} &= \mathit{Store} \to \mathbb{B} \\
p,q,r &\in \mathit{Predicate}
\end{align}

According to the syntax, the program is a list of statements,
which means that we can assign counters $0$,~$1$, $2$,~$\ldots$
to them. To simplify the presentation, we will assume that all
labels are counters (in the proper range). The \emph{state} of
a Boogie program is either the special \textit{error} state
or a pair~$\langle\sigma,c\rangle$ of a store~$\sigma$ and a
counter~$c$ of the statement about to be executed. The following
rules define a relation~$\leadsto$ on states, thus giving an
operational semantics for the core of the Boogie language.
\begin{equation}
\frac
  {p\;\sigma}
  {\langle\sigma,c:(\mathbf{assume}/\mathbf{assert}\;p)\rangle \leadsto 
    \langle\sigma,c+1\rangle}
  \label{eq:assume-assert-ok-opsem}
\end{equation}
\begin{equation}
\frac
  {\lnot(p\;\sigma)}
  {\langle\sigma,c:(\mathbf{assert}\;p)\rangle \leadsto \mathit{error}}
  \label{eq:assert-nok-opsem}
\end{equation}
\begin{equation}
\frac
  {}
  {\langle\sigma,c:(v\mathtt{:=}e)\rangle \leadsto 
    \langle(v\gets e)\;\sigma,c+1\rangle}
  \label{eq:assign-opsem}
\end{equation}
\begin{equation}
\frac
  {c' \in \ell}
  {\langle\sigma,c:(\mathbf{goto}\;\ell)\rangle \leadsto 
    \langle\sigma,c'\rangle}
  \label{eq:goto-opsem}
\end{equation}
A rule $\frac{h}{\langle\sigma,\,c:\mathscr{P}\rangle\leadsto
s}$ means that the program may evolve from
state~$\langle\sigma,c\rangle$ to the state~$s$ if the
hypothesis~$h$ holds and if the counter~$c$ corresponds to a
statement that matches the pattern~$\mathscr{P}$. For example,
rule~\eqref{eq:goto-opsem} says that $\langle\sigma,c\rangle$
may evolve into $\langle\sigma,c'\rangle$ if the statement
at counter~$c$ matches the pattern \textbf{goto}~$\ell$
and~$c'\in\ell$. The notation $p\;\sigma$ stands for the
application of function~$p$ to the argument~$\sigma$. Space (that
is, the function application operator) is left associative, has
the highest precedence, and, as is customary, it is omitted if
there is only one parenthesized argument that is not followed by
another function application. The notation $(v \gets e)$ used
in rule~\eqref{eq:assign-opsem} stands for a store transformer,
defined as follows.
\begin{align}
(v \gets e) &: \mathit{Store}\to\mathit{Store} \\
(v \gets e)\;\sigma\;w &=
  \begin{cases}
  \sigma\;w& \text{if $v\ne w$}\\
  e\;\sigma& \text{if $v=w$}
  \end{cases}
\end{align}

\begin{definition}\index{execution}
An \emph{execution} of a core Boogie program is a sequence
$s_0$,~$s_1$, \dots,~$s_n$ of states such that $s_{k-1}\leadsto
s_k$ for all $k\in1.\,.\>n$ and $s_0=\langle\sigma_0,0\rangle$
for some arbitrary \emph{initial store}~$\sigma_0$.
\label{def:execution}
\end{definition}

\begin{remark}
All executions are finite. This is an unusual simplification, but one that can
be made as long as we do not investigate whether programs terminate.
\end{remark}

The sources of nondeterminism\index{nondeterminism} in core Boogie are (1)~the
initial store~$\sigma_0$ and (2)~the \textbf{goto}
rule~\eqref{eq:goto-opsem} which allows multiple successors.

We say that an execution $s_0$,~$s_1$, \dots,~$s_n$ \emph{goes
wrong} if $s_n=\mathit{error}$. This can happen only if the last
statement that was executed was an assertion. We say that that
assertion was \emph{violated}\index{violated assertion}.

\goodbreak
\begin{definition}\index{correctness}
A core Boogie program is \emph{correct} when none of its executions
goes~wrong.
\label{def:correctness}
\end{definition}

Boogie does not facilitate reasoning about termination.
Throughout the dissertation the term ``correct'' will usually
mean what is traditionally referred to as ``partially correct.''
Because we are not interested in termination, we do not
distinguish between executions that reach an assumption that
is not satisfied and executions that reach a \textbf{return}
statement: both get stuck\index{stuck execution}.

Let us look at an example. If we turn the labels of
the program in Figure~\ref{lst:first-boogie} (on
page~\pageref{lst:first-boogie}) into counters we obtain
\begin{boogie}[numbers=none] 
0:$\quad$ i := 0; 
1:$\quad$   goto 2, 5; 
2:$\quad$   assume i < vl && v[i] != u; 
3:$\quad$   i := i + 1; 
4:$\quad$   goto 2, 5; 
5:$\quad$   assume !(i < vl && v[i] != u); 
6:$\quad$   return;
\end{boogie}
for which one possible execution is {\def\<{\langle}
\def\>{\rangle} \def\@{\displaybreak[0]\\}
\begin{align}
  &\<\sigma,0\> \@
  &\<(i\gets0)\;\sigma,1\>\@
  &\<(i\gets0)\;\sigma,2\>\@
  &\big(i<\mathit{vl}\land v[i]\ne u\big)\;\big((i\gets0)\;\sigma\big)  &(*)\@
  &\<(i\gets 0)\;\sigma,3\>\@
  &\<(i\gets i+1)\;((i\gets0)\;\sigma),4\>\@
  &\<(i\gets i+1)\;((i\gets0)\;\sigma),5\>\@
  &\lnot\big(i<\mathit{vl}\land v[i]\ne u\big)\;\big((i\gets i+1)\;((i\gets0)\;\sigma)\big)  &(*)\@
  &\<(i\gets i+1)\;((i\gets0)\;\sigma),6\>
\end{align}}
(The store~$\sigma$ is arbitrary but fixed.) The lines marked with~$(*)$ are not states but rather
conditions that are assumed to hold. In order to evaluate those
conditions we need to look inside the predicates. Every $n$ary
function $f:(\mathit{Value}\to)^n\mathit{Value}$ has a
corresponding function~$f'$ on expressions:
\begin{equation}
\begin{aligned}
f' &: (\mathit{Expression}\to)^n\mathit{Expression} \\
f'\;e_1\ldots e_n\;\sigma &= f\;(e_1\;\sigma)\ldots(e_n\;\sigma)
\end{aligned}
\label{eq:abuse-value-ops}
\end{equation}
In particular, boolean functions have corresponding
predicate combinators. By notation abuse, we write
$\land$,~$\lor$,~$\ldots$ 
between booleans as well as between
predicates. Also, the boolean constants $\tru$~and~$\fls$ are 
boolean functions with arity~$0$, so we shall abuse them too
and write $\tru$~and~$\fls$ for constant predicates. Similarly,
we will lift the other operators. An example evaluation of a
predicate~follows.
\begin{align*}
    &\big(i<\mathit{vl}\land v[i]\ne u\big)\;\big((i\gets0)\;\sigma\big) \\
=\quad &(i<\mathit{vl})\;\big((i\gets0)\;\sigma\big)
    \land (v[i]\ne u)\;\big((i\gets0)\;\sigma\big)\\
=\quad &i\;\big((i\gets0)\;\sigma\big) < \mathit{vl}\;\big((i\gets0)\;\sigma\big) \land\ldots
\end{align*}
A predicate that consists of a variable~$v$ is
evaluated by reading the variable's value from the store; a
predicate that consists of a constant~$c$ evaluates to that
(lifted) constant. (For example, the predicate~$0$ is lifted to
the integer~$0$.)
\begin{align}
  v\;\sigma&=\sigma\;v\\
  c\;\sigma&=c
\end{align}
The evaluation continues as follows.
\begin{align}
    &i\;\big((i\gets0)\;\sigma\big) < \mathit{vl}\;\big((i\gets0)\;\sigma\big)\\
=\quad &(i\gets0)\;\sigma\;i < (i\gets0)\;\sigma\;\mathit{vl} \\
=\quad &0<\sigma\;\mathit{vl}
\end{align}
And we conclude that for the previous example execution
the initial store~$\sigma$ must satisfy~$0<(\sigma\;\mathit{vl})$.

A special case of~\eqref{eq:abuse-value-ops} is
equality. Equality between values is a function
$=\;:\mathit{Value}\to\mathit{Value}\to\mathbb{B}$ that has a
corresponding expression combinator $=\;:\mathit{Expression}\to
\mathit{Expression}\to \mathit{Predicate}$. For example, we write
$v=e$ for a predicate that defines the set of stores in which the
variable~$v$ and the expression~$e$ evaluate to the same value.

We say that predicate~$p$ is \emph{valid} and we write $|p|$ when
it holds for all stores.
\begin{equation}\index{validity!of predicates}
|p|=(\forall\sigma,\;p\;\sigma) \label{eq:valid_notation}
\end{equation}
We will often need to say that two predicates delimit exactly
the same set of stores, so we introduce a shorthand notation for
it, which will also be useful when defining (syntactically) new
predicates.
\begin{equation}\index{equivalence of predicates}
(p\equiv q) = |p=q|
\end{equation}

Finally, we shall abuse notation and write $(v \gets e)\;p$ for
predicates~$p$, based on the fact that every store transformer
$f:\mathit{Store}\to\mathit{Store}$ has a corresponding
expression transformer~$f'$:
\begin{align}
f' &: \mathit{Expression}\to\mathit{Expression} \\
f'\;e\;\sigma &= e\;(f\;\sigma)
\end{align}

This concludes the presentation of the core of
Boogie that is used often in the next chapters. 

\paragraph{Prior Work}

The semantics given here for core Boogie correspond to the trace semantics
of Leino~\cite{leino2008boogie}. Simple statements are treated
essentially as input--output relations~\cite[Chapter~6]{hesselink1992}.
Unusually, this chapter does not define explicitly any way to compose
statements. Instead, a program counter and \textbf{goto} statements allow
any control flowgraph, in the style of Floyd~\cite{floyd1967}.

The full Boogie language~\cite{leino2010boogie} is more high-level and more
user-friendly.

\chapter{Design Overview}
\label{ch:design}

\chquote{One of the best indications that a program is the
result of the activity of design is the existence of a document
describing that design.} {Jim Waldo~\cite{waldo2006}}

\noindent This chapter is rather dense. A cursory read will show
how the later, more theoretical chapters fit together in the
context of a real program; a careful read will serve as a guide
to FreeBoogie's source code for new developers.

\section{An Example Run}

\bc
\begin{boogie}
type T;
procedure indexOf(x : T, a : [int] T, al : int) returns (i : int) {
  i := 0;
  while (i < al && a[i] != x) { i := i + 1; }
}
\end{boogie}
\ec{A high-level Boogie version of Figure~\ref{lst:first-boogie}}
{lst:boogie-indexof}

The best way to understand how FreeBoogie works is to run it on a few
examples and ask it to dump its data structures at intermediate stages.
Figure~\ref{lst:boogie-indexof} shows a Boogie program suitable for a first
run. Notice that the language is not restricted to the core defined in
Chapter~\ref{ch:boogie}. To peek at FreeBoogie's internals use the~command
\shell{fb --dump-intermediate-stages=log example.bpl} assuming that you
wrote the content of Figure~\ref{lst:first-boogie} in the file
\texttt{example.bpl} and that FreeBoogie is correctly installed on your
system.  This will create a directory named \texttt{log}.  The output of
each processing phase of FreeBoogie appears in a subdirectory
of~\texttt{log}.

Such transformation phases include desugaring the \textbf{while} statement,
desugaring the \textbf{if} statement, cutting cycles, eliminating
assignments (Chapter~\ref{ch:passive}), computing the VC
(Chapter~\ref{ch:spwp}). Let us look briefly at the state of FreeBoogie
after \textbf{while} and \textbf{if} statements are desugared. To do so we
could look at the pretty-printed Boogie code but we can also look at the
flowgraph that is dumped by FreeBoogie in the GraphViz~\cite{ellson2001}
format. A command like \shell{dot log/freeboogie.IfDesugarer/*.dot -Tpng >
fig3.2.png} produces Figure~\ref{fig:first-boogie}.  Such drawings of the
internal data structures are very helpful in understanding FreeBoogie and
in debugging it. For example, in Figure~\ref{fig:first-boogie} we see that FreeBoogie introduces labels prefixed by {\tt \$\$} and a string like
\texttt{if} or \texttt{while}, which hints to the origin of the label. 

\begin{figure}
\includegraphics[width=\textwidth]{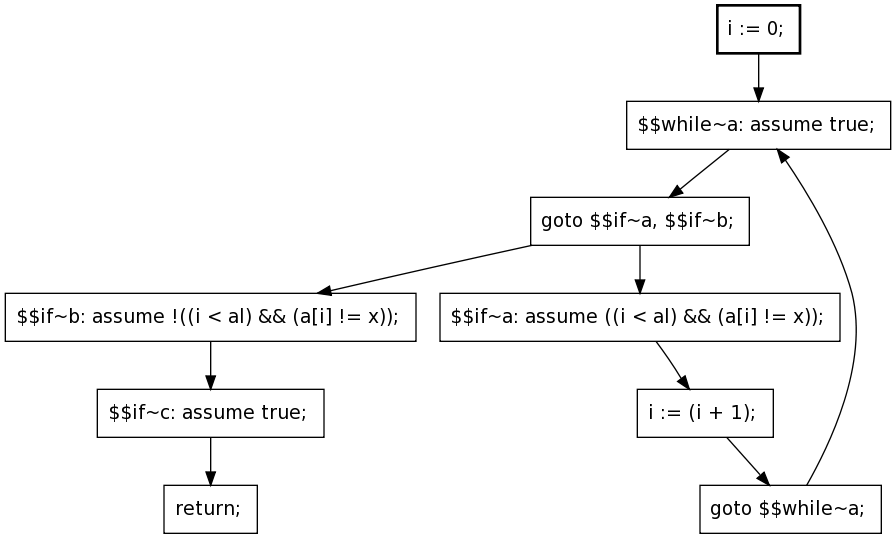}
\caption{Flowgraph of a desugared version of Figure~\ref{lst:first-boogie}}
\label{fig:first-boogie}
\end{figure}

The flowgraph is an example of \emph{auxiliary} information\index{auxiliary
information} that FreeBoogie computes after each transformation. The other
pieces of auxiliary information are the symbol table and the types. The
\emph{symbol table}\index{symbol table} is a one-to-many bidirectional map
between identifier definitions and identifier uses. The \emph{types} are
associated with expressions (and subexpressions).

To see the query that is sent to the theorem prover you must
run a different command, this time shown in abbreviated form:
\shell{fb -lf=example.log -ll=info -lc=prover example.bpl} The
\fb log \fb file \texttt{example.log} will contain everything sent to
the prover. FreeBoogie prints \shell{OK: indexOf at
example.bpl:2:11} indicating that the program is correct.

\section{Pipeline}
\label{sec:pipeline}

Figure~\ref{fig:architecture} shows that FreeBoogie has a
pipeline architecture. The \colorbox{green!50!white}{light green}
color stands for the Boogie AST (\fb abstract \fb syntax \fb
tree); the \colorbox{blue!50!black}{\textcolor{white}{dark blue}}
color stands for the SMT AST\null.

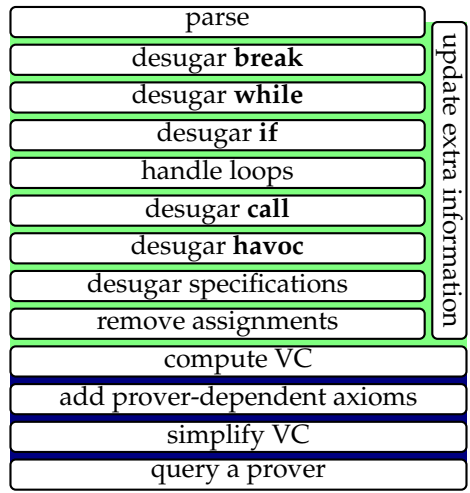
\begin{figure}
  \centering
  \begin{tikzpicture}[scale=1]
  \tikzstyle phaseS=[thick,rounded corners=2pt,draw=black,fill=white];
  \fill[color=green!50!white] (0,-0.25) rectangle +(6,-4.75);
  \fill[color=blue!50!black] (0,-4.75) rectangle +(6,-1.5);
  \foreach \py/\px/\ptext in {
    0/5.45/parse,
    1/5.45/desugar \textbf{break},
    2/5.45/desugar \textbf{while},
    3/5.45/desugar \textbf{if},
    4/5.45/handle loops,
    5/5.45/desugar \textbf{call},
    6/5.45/desugar \textbf{havoc},
    7/5.45/desugar specifications,
    8/5.45/remove assignments,
    9/6/compute VC,
    10/6/add prover-dependent axioms,
    11/6/simplify VC,
    12/6/query a prover}
  {
    \draw[phaseS]
     [yscale=.5](0,-\py-0.1) rectangle node {\ptext}  +(\px,-.8);
  }
  \draw[phaseS] (5.55,-0.25) rectangle node[rotate=-90] 
    {update extra information} (6,-4.45);

  \end{tikzpicture}
  \caption{FreeBoogie architecture.}
  \label{fig:architecture}
\end{figure}

Horizontal boxes, except for the first one (parse) and the last
one (query a prover), represent \emph{transformations}. Depending
on the type of input and on the type of output there are three
types of transformations: Boogie to Boogie, Boogie to SMT, and
SMT to SMT\null. For brevity, we say `Boogie transformations'
instead of `Boogie to Boogie transformations', and `SMT
transformations' instead of `SMT to SMT transformations'. All
these transformations are designed not to miss bugs, at the cost
of possible false positives.

\begin{definition}\index{soundness}
A Boogie transformation is \emph{sound} when it produces only
incorrect Boogie programs from incorrect Boogie programs. A
Boogie to SMT transformation is \emph{sound} when it produces
only invalid formulas from incorrect Boogie programs. An SMT
transformation is \emph{sound} when it produces only invalid
formulas from invalid formulas.
\label{def:sound-transform}
\end{definition}

\begin{remark}
This definition is in a way formal, but in a way it is not.
It makes use of the concept of `correct Boogie program'
and we only have semantics for \emph{core} Boogie programs
(Chapter~\ref{ch:boogie}). For example, we can say precisely what
it means for the assignment removal transformation to be sound,
because both the input and the output of that transformation are
core Boogie programs; however, we can only informally describe
the preceding transformations as sound.
\end{remark}

The symmetric notion is that of completeness.

\begin{definition}\index{completeness}
A Boogie transformation is \emph{complete} when it produces
only correct Boogie programs from correct Boogie programs. A
Boogie to SMT transformation is \emph{complete} when it produces
only valid formulas from correct Boogie programs. An SMT
transformation is \emph{complete} when it produces only valid
formulas from valid formulas.
\label{def:complete-transform}
\end{definition}

All transformations in FreeBoogie are sound; all transformations
in FreeBoogie are complete, except loop handling.

Full Boogie would not be user friendly without high-level
constructs like \textbf{while} statements and \textbf{break}
statements. Many phases in FreeBoogie perform syntactic
desugarings of these constructs. The desugaring is sometimes
local, in the sense that it can be done without keeping track of
an environment, and sometimes it is not. For example, to desugar
the \textbf{break} statement we must keep track of the enclosing
\textbf{while} and \textbf{if} statements; but the desugaring of
a \textbf{havoc} statement does not depend on any surrounding
code.

The most important transformation in FreeBoogie is the
transition from Boogie to SMT\null. The role of the preceding Boogie
transformations is to simplify the program to a form on which the
VC is easily computed; the role of subsequent SMT transformations
is to bring the VC to a form that is easily handled by an SMT solver.

The order of Boogie transformations depends on constraints
such as the following. The Boogie to SMT transformation
(`compute VC' in Figure~\ref{fig:architecture}) only handles
the \textbf{assert}, \textbf{assume}, and \textbf{goto}
statements. The Boogie transformation that removes assignments
only handles acyclic flowgraphs. Hence, the flowgraph must
first be transformed into an acyclic one (`handle loops' in
Figure~\ref{fig:architecture}).

The VC uses concepts such as arrays, which may or may not be
known to the prover. In the latter case, axioms that describe
the concept must be added to the VC\null. Finally, the VC is
simplified so that the communication with the prover is
more efficient.

The source code of FreeBoogie, written in Java~6, contains
four packages, which are in one-to-one correspondence with the
following sections.

\begin{itemize}
\item \textit{freeboogie.ast}: data structures to represent
  Boogie programs
\item \textit{freeboogie.tc}: computing auxiliary information
  from a Boogie AST
\item \textit{freeboogie.vcgen}: the Boogie transformations and
  the Boogie to SMT transformation
\item \textit{freeboogie.backend}: 
  the SMT transformations, 
  data structures to represent SMT formulas,
  communication with SMT solvers
\end{itemize}

\section{The Abstract Syntax Tree and its Visitors}
\label{sec:design.ast}

The Boogie AST data structures are described using a compact
notation. A subset, corresponding to core Boogie, appears in
Figure~\ref{fig:boogie-absgrm}. AstGen (a helper tool) reads this
description and a code template to produce Java classes. The
approach has advantages and disadvantages. The generated classes
are very similar to each other because they come from the same
template. This means that it is easy to learn their interface.
It also means that it is easier to change all the classes in a
consistent way by changing the template. The compact description
in Figure~\ref{fig:boogie-absgrm} is easier to read than the
corresponding $12$~Java classes. The overall structure of the AST
is easier to grasp. It is also easier to modify, since it takes
far less time to change one or two lines than one or two Java
classes. However, the programmer needs to learn a new language
(the one used in Figure~\ref{fig:boogie-absgrm}) and IDEs are
usually confused by code generators.

\begin{figure}\centering\footnotesize
\begin{verbatim}
Program = Signature! sig, Body! body;
Signature = String! name, [list] VariableDecl args, [list] VariableDecl results;
Body = [list] VariableDecl vars, Block! block;
VariableDecl = String! name, Type! type;
Block = [list] Statement statements; 
Type = enum(Ptype: BOOL, INT) ptype,
Statement :> AssertAssumeStmt, AssignmentStmt, GotoStmt;
AssertAssumeStmt = enum(StmtType: ASSERT, ASSUME) type, 
  [list] Identifier typeArgs, Expr! expr;
AssignmentStmt = Identifier lhs, Expr rhs;
GotoStmt = [list] String successors;
Identifier = String! id, 
\end{verbatim}
\caption{The abstract grammar of core Boogie}\label{fig:boogie-absgrm}
\end{figure}

Another consequence of this approach, which might be seen as a
disadvantage, is that there is no way to add specific code to
specific classes: We are forced to implement operations over the
AST using the visitor pattern~\cite{gamma1995}. In passing, note
that if the target language would have been \csharp, then one
\emph{could} add specific code to specific classes by using
partial classes.

\subsection{The Abstract Grammar Language}

AstGen reads a description of an abstract grammar and a template.
Therefore it understands two languages---the AstGen abstract
grammar language and the AstGen template language. This section
describes the AstGen abstract grammar language, which was already
used in Figure~\ref{fig:boogie-absgrm}. The syntax of the AstGen
abstract grammar language appears in Figure~\ref{grm:astgen}.
(Note that a formal language is used to describe the concrete
syntax of a language that is used to describe the abstract
grammar of a language whose concrete syntax was studied in
Chapter~\ref{ch:boogie} using the same formal language that
we use here in Figure~\ref{grm:astgen}: There is some opportunity
for confusion.)

\startgrammar
  grammar\is rule\*\\
  rule\is composition \| inheritance \| specification\\
  composition\is class \b= members\? \b;\\
  inheritance\is class {\boldmath$:>$} classes\? \b;\\
  specification\is class \b: \i{text} \P\\
  class\is \i{id}\\
  members\is member (\b, members)\*\\
  classes\is class (\b, classes)\*\\
  member\is tags type \b!\? name\\
  tags\is tag (\b, tags)\*\\
  type\is \i{id} \| \b{enum} \b( \i{id} \b: \i{id} (\b, \i{id})\* \b)\\
  name\is \i{id}\\
  tag\is \b[ \i{id} \b]\\
\stopgrammar{The syntax of the abstract grammar language}{grm:astgen}

The abstract grammar is described by a list of rules. Each
rule starts with the name of the class to which it pertains. A
composition rule continues with an equal sign (\textbf{=}) and a
list of members. An inheritance rule continues with a supertype
sign ({\boldmath$:>$}) and a list of subclasses. A specification rule
continues with a colon (\textbf{:}) and some arbitrary text. The
end of composition and inheritance rules is marked by a semicolon
(\textbf{;}) and the end of a specification rule is marked by the
end of the line (depicted as \P\ in Figure~\ref{grm:astgen}).
That is (almost) all.

To illustrate why this notation is beneficial, suppose that
initially the data structures for the Boogie AST contained only
public fields.
\begin{jml}
public class Program {
  public Signature sig;
  public Body body;
}
\end{jml}
This Java code is obviously not much longer and indeed very similar to the
first line in Figure~\ref{fig:boogie-absgrm}.  But it has a number or
problems.  First, we probably want the \textit{Program} class to be final.
Without using AstGen we must go and add the keyword \textbf{final} in each
class: \textit{Program}, \textit{Signature}, \textit{Body},
\textit{VariableDecl}, $\ldots$ With AstGen, we only need to add that
keyword in the template. Another problem is that there is no constructor.
Again, adding constructors is a repetitive job if we must do it in each and
every class. Finally, FreeBoogie's data structures are immutable. More
precisely, the members are private and final, they are set by the
constructor, and accessor methods only allow them to be read. Again, making
these changes in all classes is a repetitive job.  In summary, the main
advantage of using AstGen is that we separate the concern of defining the
shape of the abstract grammar from lower-level concerns such as whether we
allow subclassing or not, whether we allow mutations or not.

Specification rules, which refer to classes, and tags, which refer to
members (see Figure~\ref{grm:astgen}), allow for a little non-uniformity in
the generated code. The bang sign (\textbf{!}) is a shorthand for the tag
[nonnull]. As we will see, the code template may contain parts that are
used or not by AstGen depending on whether a tag is present. In particular,
FreeBoogie's code template says ``\textbf{assert}~$x\ne\mathbf{null}$''
when member~$x$ has the tag [nonnull].  The only other tag used in
Figure~\ref{fig:boogie-absgrm} is [list], which will be discussed briefly
in Section~\ref{sec:design.immutability}.

A specification rule associates some arbitrary text to a class.
Templates then instruct AstGen where in the output to insert
the arbitrary text. For example, in FreeBoogie the arbitrary
text is always a side-effect free Java boolean expression.
FreeBoogie's code template inserts these boolean expressions in
Java \textbf{assert} statements within constructors. In general,
the intended use of specification rules is to give object or
class invariants. However, there is nothing in AstGen to enforce
this use. Hence, specification rules could be abused to insert,
say, custom comments in the header of generated classes.

\subsection{AstGen Templates}

Figure~\ref{fig:astgen-template} illustrates the main
characteristics of an AstGen template and, at the same time,
gives some details on how the Boogie AST data structures
are implemented. The language for templates is influenced
by \TeX~\cite{knuth1986tex}. Macros start with a backslash
(\macro{}) and may take arguments. Some macros are primitive and
some are defined using \macro{def}. Before cataloging primitive
macros, let us analyze the high-level structure of the template
in Figure~\ref{fig:astgen-template}.

\begin{figure}\centering\footnotesize
\begin{verbatim}
\def{smt}{\if_primitive{\Membertype}{\MemberType}}
\def{mt}{\if_tagged{list}{ImmutableList<}{}\smt\if_tagged{list}{>}{}}
\def{mtn}{\mt \memberName}
\def{mtn_list}{\members[,]{\mtn}}

\classes{\file{\ClassName.java}
/* ... package specification and some imports ... */
public \if_terminal{final}{abstract} class \ClassName extends \BaseName {
\if_terminal{
  \members{private final \mtn;}
  private \ClassName(\mtn_list) {
    \members{this.\memberName = \memberName;}
    checkInvariant();
  }
  public static \ClassName mk(\mtn_list) {
    return new \ClassName(\members[,]{\memberName});
  }
  public void checkInvariant() {
    assert location != null;
    \members{\if_tagged{nonnull|list}{assert \memberName != null;}{}}
    \invariants{assert \inv;}
  }
  \members{public \mtn() { return \memberName; }}
  @Override public <R> R eval(Evaluator<R> evaluator) { 
    return evaluator.eval(this); 
  }
}{
  \selfmembers{public abstract \mtn();}
}} }
\end{verbatim}
\caption{Excerpt from the AstGen template for Boogie AST classes}
\label{fig:astgen-template}
\end{figure}

The first four lines define macros that are used later. AstGen
then sees the (primitive) \macro{classes} macro and processes its
argument once for each class in the abstract grammar. Terminal
classes, which are those without subclasses, have private fields, a private constructor, a static factory method \textit{mk}, a
method \textit{checkInvariant}, accessors for getting the values
of the fields, and a method \textit{eval}, which is typically
called \textit{accept} in most presentations of the visitor
pattern. Non-terminal classes only have abstract accessors for
getting the values of the fields.

The primitive macros can be grouped in four categories:
(1)~output selection, (2)~data, (3)~test, and (4)~iteration.

The macro \macro{file}\{\thinspace\textit{f}\} globally directs
the output from now on to the file~\textit{f}.

The data macros do not take any parameter. They expand to the
name of the current class (\macro{className}), the name of the
base class of the current class (\macro{baseName}), the type
of the current member (\macro{memberType}), the name of the
current member (\macro{memberName}), the name of the current
enumeration (\macro{enumName}), the current enumeration value
(\macro{valueName}), the current invariant (\macro{inv}).
The `current' class\slash member\slash enumeration\slash
value\slash invariant is determined by the enclosing iteration
macros. All data macros except \macro{inv} are made of two
words and they come in four case conventions (camelCase,
PascalCase, lower\_case, and UPPER\_CASE): The output is
formatted accordingly.

The test macros have the shape
\macro{if}\textit{condition}\{\textit{yes}\}\{\textit{no}\}.
If the condition holds then the \textit{yes} part is processed
and the \textit{no} part is ignored; if the condition does not
hold then the \textit{no} part is processed and the \textit{yes}
part is ignored. The braces in the \textit{yes} and \textit{no}
parts must be balanced. The condition \texttt{\_primitive}
holds when the type of the current member does not appear on
the left hand side of a composition rule. (In particular,
it holds for members whose type is an enumeration.) The
condition \texttt{\_enum} holds when the type of the current
member is an enumeration. The condition \texttt{\_terminal}
holds when the current class has no subclass. The condition
\texttt{\_tagged$\{$\textrm{\textit{tag\_expression}}$\}$}
is more interesting. A tag expression may contain tag names,
logical-and (\texttt{\&}), logical-or (\texttt{|}), and
parentheses. A tag name evaluates to~$\tru$ when the current
member has that tag.

The iteration macros have the shape
\macro{}\textit{macro}[\textit{separator}]\{\textit{argument}\}.
The argument is processed repeatedly and the optional separator
is copied to the output between two passes over the argument. The
macro \macro{classes} processes its argument for each class (and
hence each pass has a `current class'). The macro \macro{members}
processes its argument for each member of the current class,
including inherited members (and hence each pass has a `current
member'). The macro \macro{selfmembers} processes its argument
for each member of the current class, excluding inherited
members (and hence each pass has a `current member'). The macro
\macro{invariants} processes its argument for each invariant of
the current class, which appear in specification rules (and hence
each pass has a `current invariant'). The macro \macro{enums}
processes its argument for each enumeration used as a type in the
current class (and hence each pass has a `current enumeration').
The macro \macro{values} processes its argument for each value
of the current enumeration (and hence each pass has a `current
enumeration value').

It is an error for a macro~$x$ to appear in a context where
there should be a current~$y$, but there is none. For example,
it is an error for the macro \macro{enumName} to appear in a
context where there is no current enumeration. In other words,
the macro \macro{enumName} cannot appear outside of the argument
of \macro{enums}, which is the only macro that sets a current
enumeration.

\subsection{Visitors}
\label{sec:visitors}

The visitor pattern is widely used to implement compilers. It can
be seen as a workaround to a limitation of most object-oriented
languages. A reference~$u$ has the static type~$C_u$ when the
declaration of variable~$u$ is \jmlCode|$C_u$ u;| a reference~$u$
has the dynamic type~$C'_u$ when it points to an object whose
type is~$C'_u$; an object has the type~$C'_u$ when it was created
by the statement \jmlCode|new $C'_u$($\cdots$)|. The method call
\jmlCode|u.m(v)| is resolved based on the dynamic type~$C'_u$
and on the static type~$C_v$. In other words, the code that will
be executed is in a method named~$m$ that takes an argument
of type~$C_v$ (or a supertype of~$C_v$) and is defined in the
class~$C'_u$ (or a supertype of~$C'_u$). There is no way to
do the dispatch based on the dynamic type of two (or more)
references.

However, it is possible to do the dispatch based on the
dynamic types of~$n$ references \emph{one by one}, at the
cost of writing extra code. Say the references $u_1$,~$u_2$,
\dots,~$u_n$ have static types $C_1$,~$C_2$, \dots,~$C_n$ and
dynamic types $C'_1$,~$C'_2$, \dots,~$C'_n$. The initial call
\jmlCode|$u_1$.$m_1$($u_2$,$\ldots$,$u_n$)| will execute a method
\jmlCode|$m_1$($C_2,\ldots,C_n$)| from the class~$C'_1$, because
all the (proper) subclasses of $C_1$ implement such a method. The
body of all these methods will be identical: It will contain the
call \jmlCode|$u_2$.$m_2$(this, $u_3,\ldots,u_n$)|. Each subclass
of~$C_2$, including $C'_2$, is expected to have a \emph{set}
of methods \jmlCode|$m_2$($\mathscr{C}_1,C_3,C_4,\ldots,C_n$)|
for all possible (proper) subclasses~$\mathscr{C}_1$
of~$C_1$. The static type of \jmlCode|this| in the call
\jmlCode|$u_2$.$m_2$(this, $u_3,\ldots,u_n$)| was~$C'_1$, so the
method with $\mathscr{C}_1=C'_1$ will be chosen out of the whole
set. In general, all subclasses of~$C_k$ must implement a set of
methods \jmlCode|$m_k$.($\mathscr{C}_1,\ldots,\mathscr{C}_{k-1},
C_{k+1},\ldots,C_n$)|, for all subclasses~$\mathscr{C}_1$
of~$C_1$, all subclasses~$\mathscr{C}_2$ of~$C_2$, and so on. The
methods~$m_n$ will do the actual work. Let us estimate the number
of methods that only forward calls and were referred to in the
beginning of the paragraph as `extra code'. If the number of
possible types for $u_1$,~$u_2$, \dots,~$u_n$ is, respectively,
$w_1$,~$w_2$, \dots,~$w_n$ then the number of methods $m_k$ is
$\prod_{i\le k} w_i$. There are therefore $\sum_{k<n}\prod_{i\le
k} w_i$ methods whose only purpose is forwarding and $\prod_{i\le
n} w_i$ methods that do something interesting.

As it is traditionally presented, the visitor pattern is the case
$n=2$ with $C_1$ being the root of the AST class hierarchy and
$C_2$ being the root of the visitors class hierarchy. There is
exactly one forwarding method per AST class (and their total
number is $w_1$ with the previous notation). In this guise,
the visitor pattern can be seen as a way of grouping together
the code that achieves one conceptual operation. For example,
pretty printing an AST can be done by implementing a method
\textit{prettyPrint} in each AST class, but can also be done by
putting all the pretty printing code into one visitor called
\textit{PrettyPrinter}. (Note that AstGen makes it hard to use
the former approach, with a specific \textit{prettyPrint} method
in each class.)

FreeBoogie uses the traditional visitor pattern and
the root of the visitors' class hierarchy is the class
\jmlCode|Evaluator<R>|. The root of the Boogie AST
class hierarchy is the class \jmlCode|Ast|. A subclass
of \jmlCode|Evaluator<R>| is like a function of type
$\mathit{Ast}\to R$, in the sense that it associates a
value of type~$R$ (possibly \jmlCode|null|) to an AST
node. For example, the type checker is a subclass of
\jmlCode|Evaluator<Type>|. The base class \jmlCode|Evaluator|
declares one \jmlCode|eval($\mathscr{A}$)| method for each AST
class~$\mathscr{A}$. These are the methods called~$m_2$ in the
previous discussion of the general visitor pattern. These methods
are not only declared, but they are also implemented, so that
subclasses explicitly handle only the relevant types of AST
nodes. For all the other AST node types, the default behavior
implemented in \jmlCode|Evaluator| is to recursively evaluate all
children and to cache the results. Because the \jmlCode|eval|
methods of \jmlCode|Evaluator| are so similar, they are generated
from an AstGen~template.

An important type of evaluator is a transformer: The class
\jmlCode|Transformer| extends \jmlCode|Evaluator<Ast>|. The
main functionality implemented in \jmlCode|Transformer|, path
copying, is illustrated in Figure~\ref{fig:path-copying}. Empty
nodes (\tikz[baseline=-.5ex] \node[fgdraw,circle] {}; and
\tikz[baseline=-.5ex] \node[fgdraw]{};) represent AST nodes
that exist on the heap before a transformer~$T$ acts; filled
nodes (\tikz[baseline=-.5ex] \node[fgdraw,fgfill,circle]{};
and \tikz[baseline=-.5ex] \node[fgdraw,fgfill]{};) represent
AST nodes created by the transformer~$T$. Because the
transformer~$T$ is interested only in rectangle nodes,
it overrides only the \jmlCode|eval| method that takes
rectangles as parameters. That overriden method is responsible
for creating the filled rectangle (\tikz[baseline=-.5ex]
\node[fgdraw,fgfill]{};). All the other filled nodes
(\tikz[baseline=-.5ex] \node[fgdraw,fgfill,circle]{};) are
created by \textit{Transformer}, and need not be of any concern
to the particular transformer~$T$.

\begin{figure}\centering
\begin{tikzpicture}
  [yscale=1.8,
  level distance=7mm,
  level/.style={sibling distance=5cm/(#1^1.5)}]
\tikzset{tree/.style={fgdraw,circle}}
\node[tree] (A) {}
  child {node[tree] (B) {}
    child {node[tree] (B1) {}}
    child {node[tree,rectangle] (C) {}
      child {node[tree] (C1) {}}
      child {node[tree] (C2) {}}
    }
    child {node[tree] (B3) {}}
  }
  child {node[tree] (A2) {}
    child {node[tree] {}
      child {node[tree] {}}
      child {node[tree] {}}
    }
    child {node[tree] {}}
    child {node[tree] {}}
  };
\path
  (A) +(1cm,0) node[tree,fgfill] (A') {}
  (B) +(1cm,0) node[tree,fgfill] (B') {}
  (C) +(1cm,0) node[tree,fgfill,rectangle] (C') {};
\draw (A') -- (B') -- (C');
\draw (A') -- (A2);
\draw (B') -- (B1); \draw (B') -- (B3);
\draw (C') -- (C1); \draw (C') -- (C2);
\node[above left] at (A) {input};
\node[above right] at (A') {output};
\end{tikzpicture}
\caption{Path copying}\label{fig:path-copying}\index{path copying}
\end{figure}

The input and the output of a transformer usually share a large
number of nodes. Since \textit{Evaluator} caches the information
that various evaluators associate with AST nodes, there is no
need to repeat the computation of that auxiliary information for
the shared parts. For example, most of the type information is
already in the cache of the type checker.

Sometimes a transformer wants to `see' AST nodes of type~$A$
even if it computes no value for them. A typical example is
a pretty printer. In such cases a transformer may override
\jmlCode|eval(A)| and return \textbf{null}. A nicer solution is
to override \jmlCode|see(A)|, whose return type is \textbf{void}.
If both \jmlCode|eval(A)| and \jmlCode|see(A)| are overriden,
then the former will be called by the traversal code in
\textit{Transformer}.

\subsection{Immutability}
\label{sec:design.immutability}

In Java programming, it is unusual to constrain data structures
to be immutable. Since the resulting code may look awkward to
many programmers, there better be some good reasons for this
design decision. In fact, awkward code, such as copying all but
one of the fields in a new object instead of doing a simple
assignment, is only one of the apparent problems.

\bc
\begin{jml}
public class Renamer extends Transformer {
  @Override public Identifier eval(Identifier identifier) {
    if (!identifier.id().equals("u")) return identifier;
    else return Identifier.mk("v");
  }
}
\end{jml}
\ec{Changing all occurrences of variable~$u$ into variable~$v$}
{lst:example-transformer}

Immutability implies path copying, which is a
potential performance problem. Consider the task of
changing all occurrences of the variable~$u$ into
variable~$v$, which is achieved by the transformer in
Figure~\ref{lst:example-transformer}. Suppose an AST with
height~$h$ and $n$~nodes contains exactly one occurrence of
variable~$u$. If the class \textit{Identifier} would be mutable,
one assignment would be enough to achieve the substitution;
since the class \textit{Identifier} is immutable, about $h$ new
nodes must be created and initialized. However, if there are
two occurrences of variable~$u$, they share some ancestors,
meaning that less than about $2h$ new AST nodes must be created
and initialized. Even more, if we take into account the tree
traversal, then both implementations, with a mutable AST and with
an immutable AST, take $\Theta(n)$ time. In other words, there is
no asymptotic slowdown.

A Boogie block contains a \emph{list} of statements (see
Figure~\ref{fig:boogie-absgrm}). Such lists should be immutable,
but there are no immutable lists in the Java API (\fb application
\fb programming \fb interface), only immutable \emph{views} of
lists. Immutable collections can be implemented such that
immutability is enforced statically by the compiler or such
that immutability is enforced by runtime checks. Unfortunately,
the former is incompatible with implementing Java API
interfaces~\cite{javaCollectFaq}. For example, in order to use
the iteration statement
\jmlCode|for (T x : xs)|,
one must implement the interface \textit{Iterable} that
contains the method \textit{remove}. Obviously, calls to
the \textit{remove} method are not prevented statically by
the compiler. FreeBoogie uses the \textit{ImmutableList}
class from the Guava~\cite{guava-libraries}
library, which follows the approach with runtime checks.
(Figure~\ref{fig:astgen-template} shows that the
\textit{ImmutableList} is used whenever the [list] tag appears in
the abstract grammar.)

However, the advantages of immutability outweigh its
disadvantages.

First, immutability enables \textit{Evaluator} to cache the
results of previous computations, because only immutable data
structures can be used as keys in maps. A particular evaluator,
such as the type-checker, need not mention anywhere in its
implementation that caching is used. Yet, if the type-checker
is invoked twice on the same AST fragment, then the second call
will return immediately. This leads to cleaner code also because
AST transformers need not bother with updating the auxiliary
information---recomputing it is cheap. These advantages are
discussed further in Chapter~\ref{ch:ev}.

Second, immutability makes the code easier to understand, because
it frees the programmer from thinking about aliasing of AST nodes.
In Java, any mutation of \jmlCode|u.f| must be done only after
thinking how it will affect code that uses possible aliases of~$u$.
Because the AST is a central data structure in FreeBoogie, there
is a lot of potential aliasing that must be considered whenever
a mutation is done. It is much simpler to forbid mutations altogether.

Still, there are situations when the programmer must think about
aliasing of AST data structures. It is natural to think of an
AST reference as \emph{being} a piece of a Boogie program,
even if, strictly speaking, it only \emph{represents} a piece
of a Boogie program. To maintain this useful illusion the
programmer must ensure that no sharing occurs within one version
of the AST\null. More precisely, there should never be more
than one reference-path between two AST nodes. (There is a
\emph{reference-edge}~$u\to v$ from the object referred by~$u$
to the object referred by~$v$ when \jmlCode|u.f==v| for some
field~$f$.) For example, if the expression~$x+y$ appears multiple
times in a Boogie program, then the corresponding AST also
appears multiple times, instead of being shared. In practice,
this means that the programmer must occasionally clone pieces
of the AST when implementing transformers. (The \textit{clone}
method is implemented in the code template for AST~classes.)

\section{Auxiliary Information}

The package \textit{freeboogie.tc} derives extra information from
a Boogie AST---types, a symbol table, and a flowgraph.

The AST constructed by the parser is type-checked in order to
catch simple mistakes in the input. As a safeguard against bugs in FreeBoogie,
the AST is type-checked after each transformation. A side-effect
of type-checking is that the type of each expression is known.

The symbol table\index{symbol table} helps in navigating the AST\null. It
consists of one-to-many bidirectional maps that link identifier
declarations to places where the identifiers are used. The only such map
that is relevant to core Boogie is the one that links variable
declarations, including those in arguments, to uses of variables. The other
maps, relevant to full Boogie, link procedure declarations to procedure
calls, type declarations to uses of user defined types, function
declarations to uses of (uninterpreted) functions, and type variables to
uses of type variables. (Type variables are similar to generics in Java.)
All these maps are in the class~\textit{SymbolTable}.

Another bidirectional map is built by
\textit{ImplementationChecker}: In full Boogie a \emph{procedure}
may have zero, one, or multiple \emph{implementations}. (In core
Boogie, the whole program is one implementation.)

Finally, it is sometimes convenient to view one implementation as
a flowgraph whose nodes are statements. Such a flowgraph is built
by \textit{FlowGraphMaker}. Formally, a flowgraph is defined as
follows.

\begin{definition}\index{flowgraph}
A \emph{flowgraph} is a directed graph with a distinguished
\emph{initial node} from which all nodes are reachable.
\end{definition}

It seems natural that a flowgraph has an initial node, because
there is usually one entry point to a program. It seems less
natural that all nodes must be reachable, which means that
there is no obviously dead code. The reason for this standard
restriction is rather technical: It simplifies the study of
flowgraph properties. However, it does complicate slightly the
definition of what it means for a flowgraph to correspond to a
core Boogie program. A few terminology conventions will help. In
Chapter~\ref{ch:boogie} we noticed that we can attach counters
to statements because they are in a list. For concreteness, let
us use the counters $1$,~$2$ \dots, $n$, in this order, when the
list of statements has length~$n$. Each counter~$x$ in the range
$0.\,.\>n$ has an associated statement, named statement~$x$.
Statement~$0$ is the sentinel statement \textbf{assume true},
which is prepended for convenience. Label~$x$ is the label that
precedes statement~$x$, if there is one.

\begin{remark}
The sentinel statement~$0$ is not introduced by the FreeBoogie
implementation. It is merely a device that will simplify the
subsequent presentation, especially some proofs.
\end{remark}

The flowgraph of a Boogie program is constructed, conceptually,
in two phases.

\begin{definition}\index{pseudo-flowgraph}
The \emph{pseudo-flowgraph of a core Boogie program} with
$n$~statements has as nodes statement~$1$ up to statement~$n$
and the sentinel statement~$0$. It has an edge~$0\to1$
(from statement~$0$ to statement~$1$) and has edges~$x\to
y$ when (a)~statement~$x$ is \textbf{goto} label~$y$, or
(b)~statement~$x$ is \emph{not} \textbf{goto} and label~$y$ is
the successor of label~$x$.
\end{definition}

\begin{remark}
Compare with
\eqref{eq:assume-assert-ok-opsem}--\eqref{eq:goto-opsem}.
This definition roughly says that there is an edge where the
operational semantics rules would allow a transition \emph{if we
ignore the upper parts of the rules}.
\end{remark}

\begin{definition}
The \emph{flowgraph of a core Boogie program} is a graph that has
node~$0$ as its initial node. Its nodes~$V$ are those nodes of
the pseudo-flowgraph that are reachable from node~$0$ and that
are not \textbf{goto}~statements. It has an edge~$m\to n$ when
there is a path~$m\leadsto n$ in the pseudo-flowgraph that is
disjoint from~$V$, except at endpoints.
\label{def:boogie-flowgraph}
\end{definition}

\begin{example}
Figure~\ref{fig:boogie-flowgraph} shows a core Boogie program,
its pseudo-flowgraph, and its flowgraph. (Label~$k$ is $L_k$.)
Chapter~\ref{ch:reachability} discusses \emph{semantically}
unreachable nodes of the flowgraph, such as node~$4$ in this
example.
\end{example}

\begin{proposition}
The only nodes that have no outgoing edges in a flowgraph of a
core Boogie program are those that correspond to \textbf{return}
statements.
\end{proposition}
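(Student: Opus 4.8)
The plan is to prove the two inclusions separately: first, that every node corresponding to a \textbf{return} statement has out-degree zero, and second, that every other node of the flowgraph has out-degree at least one. Throughout I would argue in the pseudo-flowgraph and then transfer conclusions to the flowgraph using Definition~\ref{def:boogie-flowgraph}: a flowgraph edge $m\to n$ is precisely a pseudo-flowgraph path from $m$ to $n$ whose interior avoids $V$. Since the members of $V$ are exactly the reachable non-\textbf{goto} statements, and every node on a forward path from a reachable node is itself reachable, such an interior consists only of reachable \textbf{goto} statements.

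For the first inclusion, let $m$ be a \textbf{return} statement. It is not a \textbf{goto}, so clause~(a) of the pseudo-flowgraph definition never fires at $m$; and by the remark relating the pseudo-flowgraph to \eqref{eq:assume-assert-ok-opsem}--\eqref{eq:goto-opsem}, the outgoing edges of $m$ mirror the operational-semantics transitions, of which there are none for \textbf{return} (it gets stuck, exactly as an unsatisfied assumption does). Hence $m$ has no outgoing pseudo-flowgraph edge, so no pseudo-flowgraph path starts at $m$, and therefore $m$ has no outgoing flowgraph edge.

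For the second inclusion, let $m\in V$ be a non-\textbf{return} node; then $m$ is the sentinel \textbf{assume true} or an assignment, assumption, or assertion. Because the body ends with the mandatory \textbf{return}, the last statement is a \textbf{return}, so $m$ is not the last statement and its textual successor exists; clause~(b), or the edge $0\to1$ for the sentinel, thus gives $m$ an outgoing pseudo-flowgraph edge. I would then walk forward from $m$: at each step the current node is reachable, and as long as it is a \textbf{goto} it has a successor, since its target tuple is non-empty by the grammar, so the walk can continue; the moment it reaches a non-\textbf{goto} statement $n$ (possibly itself a \textbf{return}), that $n$ is a reachable non-\textbf{goto}, hence $n\in V$, and the traversed path has \textbf{goto}-only interior, so $m\to n$ is a flowgraph edge.

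The crux, and the step I expect to be the main obstacle, is showing that this forward walk actually reaches a non-\textbf{goto} statement rather than circling forever among \textbf{goto}s. In a finite graph an unending walk must enter a cycle, and a cycle consisting solely of \textbf{goto} statements with no exit to a non-\textbf{goto} (such as the body \texttt{L: goto L}) would leave $m$ with no flowgraph successor, contradicting the claim. The clean way to close this gap is to record, as a well-formedness property of the flowgraphs considered here, that from every reachable statement some non-\textbf{goto} statement is reachable in the pseudo-flowgraph (equivalently, no reachable \textbf{goto} cycle traps control flow). This holds for the flowgraphs produced by desugaring full Boogie, where branch targets are guarded \textbf{assume} statements rather than further \textbf{goto}s; under that hypothesis the walk terminates at a node of $V$, completing the second inclusion and hence the proposition.
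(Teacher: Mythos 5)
The paper states this proposition without proof, so there is no official argument to compare against; judged on its own terms, your proof is essentially right, and your two interpretive moves are both necessary and well justified. First, clause~(b) of the pseudo-flowgraph definition, read literally, would give a mid-program \textbf{return} an edge to its textual successor (which would falsify the proposition in the \emph{other} direction); your appeal to the remark tying pseudo-flowgraph edges to the operational-semantics transitions---\textbf{return} has no rule, so no outgoing edge---matches the paper's own Figure~\ref{fig:boogie-flowgraph}, where node~$5$ has no edge to node~$6$. Second, the gap you flag is genuine: for the core program \textbf{assume true}$;\;b:\textbf{goto}\;b;\;$\textbf{return}, the node carrying \textbf{assume true} is reachable and non-\textbf{goto}, hence in~$V$, yet every pseudo-flowgraph walk from it circles in the \textbf{goto} self-loop and never meets another node of~$V$, so it has no outgoing flowgraph edge despite not being a \textbf{return}. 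The proposition as literally stated is therefore false without your well-formedness hypothesis (or some equivalent exclusion the paper leaves implicit), and your observation that desugared full Boogie satisfies it is the right way to reconcile the statement with practice.

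Two refinements. Your hypothesis does suffice, but the argument closes more cleanly if you replace ``the walk eventually reaches'' by an explicit existence argument: take a \emph{shortest} pseudo-flowgraph path from a successor of~$m$ to any non-\textbf{goto} statement (one exists by your hypothesis); minimality forces its interior to consist only of \textbf{goto}s, since a non-\textbf{goto} interior node would yield a shorter such path, and the endpoint is reachable from~$0$, hence lies in~$V$, giving the flowgraph edge $m\to n$ directly---no claim that an arbitrarily chosen forward walk terminates is needed, only that some path of the required shape exists. Also note that the non-\textbf{goto} endpoint may itself be a mid-program \textbf{return}; that is harmless for the second inclusion, since it still lies in~$V$, but it is exactly why the first inclusion cannot be dispatched by ``\textbf{return} is the last statement'' and really does require the operational-semantics reading you adopted.
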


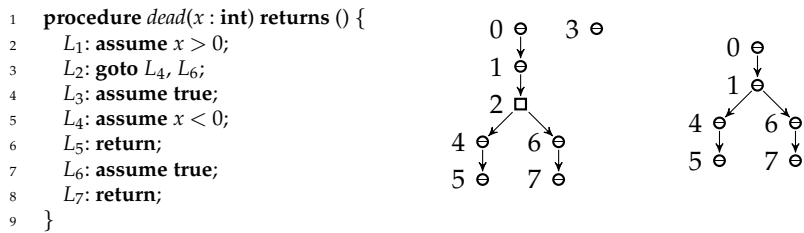
\begin{figure}
\centering
\begin{tabular}{ccc}
\begin{boogie}[boxpos=c]
procedure dead(x : int) returns () {
  $L_1$: assume x > 0;
  $L_2$: goto $L_4$, $L_6$;
  $L_3$: assume true;
  $L_4$: assume x < 0;
  $L_5$: return;
  $L_6$: assume true;
  $L_7$: return;
}
\end{boogie}&
\hspace{5mm}
\begin{tikzpicture}[scale=.5,baseline=0.75cm]
  \foreach \n/\x/\y in {0/1/4, 1/1/3, 3/3/4, 4/0/1, 5/0/0, 6/2/1, 7/2/0}
    \oonode (\n) at (\x,\y) [label=left:$\n$] {};
  \node[fgdraw] (2) at (1,2)  [label=left:$2$] {};

  \foreach \i/\j in {0/1, 1/2, 2/4, 2/6, 4/5, 6/7}
    \draw[arr] (\i) -- (\j);
\end{tikzpicture}&
\hspace{5mm}
\begin{tikzpicture}[scale=.5,baseline=0.5cm]
  \foreach \n/\x/\y in {0/1/3, 1/1/2, 4/0/1, 5/0/0, 6/2/1, 7/2/0}
    \oonode (\n) at (\x,\y) [label=left:$\n$] {};
  \foreach \i/\j in {0/1, 1/4, 1/6, 4/5, 6/7}
    \draw[arr] (\i) -- (\j);
\end{tikzpicture}
\\
\end{tabular}
\caption{Flowgraph of a core Boogie program}\label{fig:boogie-flowgraph}
\end{figure}

All auxiliary information is available through
\textit{TcInterface}, which is an implementation of the Facade
pattern.

\section{Verification Condition Generation}

The package \textit{freeboogie.vcgen} consists of Boogie
transformers and Boogie to SMT transformers. The facade of this
package is the class \textit{VcGenerator}.

Most Boogie transformers are responsible for small AST
modifications such as desugaring\index{desugaring} an \textbf{if} statement
into \textbf{assume} and \textbf{goto} statements. For speed,
it would be better to cluster many such simple transformers
into one, but the code is easier to maintain if they are kept
separate. A few helper classes are used by multiple Boogie
transformers: \textit{CommandDesugarer} is used as a base class
by transformers that change statements into lists of statements;
\textit{ReadWriteSetFinder} is an evaluator that associates with
each statement two sets---the set of variables that are read and
the set of variables that are written.

Boogie transformations do not update the auxiliary information
while they are building new AST nodes. Instead, at the very end,
they recompute all auxiliary information, and caches ensure that
no computation is repeated. This way, bugs that produce untypable
Boogie programs get caught at run-time. (Type information is
auxiliary information, so type-checking is repeated.)

The Boogie to SMT transformation is done by the
class \textit{WeakestPrecondition} or by the class
\textit{StrongestPostcondition}, depending on the command line
options. The theory behind these two classes is presented in
Chapter~\ref{ch:spwp}.

\section{The Prover Backend}
\label{sec:design.backend}

The package \textit{freeboogie.backend} contains (1)~SMT data
structures and (2)~code to communicate with provers. The design
is inspired by the sorted multi-prover backend in \escjava.

\subsection{Data Structures and Sort-Checking}
\label{sec:design.ds}

The main data structure is a rooted ordered tree whose nodes are labeled by
strings. Each node has a \emph{sort}\index{sort}, and there are
sort-checking rules, which say what combinations of sorts and labels are
valid. In effect, sorts are types---the only reason a different name is
used is to distinguish SMT sorts from Boogie types. In \escjava it is
impossible to construct a tree that has sort errors: Programs that try to
construct invalid terms fail Java type-checking. Such a strong static
guarantee is appealing, but increases the backend size significantly.  For
example, instead of a single factory method with the signature
  \jmlCode|SmtTree mk(String label, ImmutableList<SmtTree> children)|
there is a plethora of methods with various argument and return
types, such as the method with the signature
  \jmlCode|SmtFormula mkEq(SmtTerm left, SmtTerm right)|,
where both classes \textit{SmtFormula} and \textit{SmtTerm} are subclasses
of \textit{SmtTree}. Because of the size, the backend is hard to
adapt to changes.

FreeBoogie opts for a small backend, so that it easy to understand and
maintain. If an ill-sorted term is built, then most SMT solvers complain
and the problem is found. To help in finding the source of the
problem, the backend has built-in dynamic checks that should point to the
offending code, before the SMT term is shipped to the solver.

\begin{remark}
This is an instance of choosing dynamic checks over static checks, because
the latter involve too much work. The code is still organized in a way that
should allow static verification. It is the encoding in Java types that was
judged too complicated for its benefits.
\end{remark}

Before calling \jmlCode|mk(label, children)| the label must have
been defined. For example, after the call
  \jmlCode|def("eq", new Sort[]{Sort.TERM, Sort.TERM}, Sort.FORMULA)|
it is possible to call
  \jmlCode|mk("eq", children)|.
This second call will check (using Java assertions) that there are
two children and both are terms, and will mark the constructed
SMT tree as being a formula. All defined labels are grouped in
stack frames, such that the call \jmlCode|popDef()| discards
all definitions done after the corresponding call \jmlCode|pushDef()|.
Such grouping is useful because some labels refer to constructs
built into SMT solvers and other labels refer to uninterpreted
functions that are defined by the Boogie program. When
FreeBoogie moves from one input file to another it forgets about
labels corresponding to functions while not forgetting about
labels corresponding to solver built-ins by using the stack
mechanism.

The methods \textit{def}, \textit{mk}, \textit{pushDef}, and
\textit{popDef} are all defined in the class \textit{TreeBuilder}.
For convenience, the functions \textit{def} and \textit{mk} are
overloaded.

Let us first look at the method~\textit{mk}. It comes in three
varieties:
\begin{align}
&\text{\jmlCode|mk("and", children)|}\\
&\text{\jmlCode|mk("eq", $t_1$, $t_2$)|}\\
&\text{\jmlCode|mk("literal_int", new FbInteger(3))|}
\end{align}
The first form takes a list of children as the second
argument. When the number of children is fixed, as is the
case for the label \texttt{eq}, it is convenient to hide the
building of the list behind a helper overload. The second form
can be used when the number of children is one, two, or three.
The third form is special. Strictly speaking, the constants
$1$,~$2$, $3$,~$\ldots$ are distinct functions
that take no argument. This suggests that they should each
be defined separately, which is clearly a very bad idea from
the point of view of performance. So, instead of defining
labels \texttt{1},~\texttt{2}, \texttt{3},~\dots, we define
the meta-label \texttt{literal\_int}. A \emph{meta-label} has
an associated Java type (in this case \textit{FbInteger}) and
it is equivalent to multiple labels, one for each value of the
associated Java type. In other words, the meta-label 
\texttt{literal\_int} and the value \jmlCode|new FbInteger(3)|
determine the label, and there is no child.

Now let us look at the method~\textit{def}. It comes in three
varieties:
\begin{align}
&\text{\jmlCode|def("and", Sort.FORMULA, Sort.FORMULA)|}\\
&\text{\jmlCode|def("eq", new Sort[]\{Sort.TERM, Sort.TERM\}, Sort.FORMULA)|}\\
&\text{\jmlCode|def("literal_int", FbInteger.class, Sort.INT)|}
\end{align}
The order of the arguments is: label, sort of arguments, sort
of result. The example for the first form says that tree nodes
labeled with \texttt{and} may have any number of children, all of which
must be formulas, and the tree itself is a formula. The example
for the second form says that tree nodes labeled \texttt{eq} have
two children, the first one is a term, the second one is a term,
and the tree itself is a formula. The example for the third form
says that the meta-label \texttt{literal\_int} together with a
value of type \textit{FbInteger} constitutes a label, and trees
labeled in this way are integers.

(As a side note, \textit{FbInteger} is used because Boogie allows
arbitrarily large integers and has bit vector operations. No class
in the standard Java library supports both.)

The sorts include \textit{FORMULA} and $\mathit{TERM}:>\mathit{INT},
\mathit{BOOL}$. In some places, such as the first argument of a quantifier,
only variables are allowed. Those require the a sort of the form
\textit{VARx}, which is a subsort of some sort~$x$. Other sorts are easy to
add.

Any SMT trees $s$ and~$t$ have the property that \jmlCode|s.equals(t)|
implies \jmlCode|s==t|. This is implemented by maintaining a global set of
all SMT trees that were created, a technique sometimes known by the name
\emph{hash-consing}~\cite{filliatre2006hash}.

\subsection{The Translation of Boogie Expressions}

The methods \textit{mk} provide one way of building trees; the
method~\textit{of} provides another way of building trees.
For example, the class \textit{StrongestPostcondition} uses
the methods~\textit{mk} to connect formulas (using the labels
\texttt{and}, \texttt{implies}) and uses the method~\textit{of}
to obtain the formulas corresponding to individual assertions and
assumptions.

The method~\textit{of} converts from Boogie expressions
to SMT formulas. The actual work is done in two
classes---\textit{TermOfExpr} and \textit{FormulaOfExpr}. The
translation is almost one-to-one. Each Boogie operator has a
corresponding interpreted symbol in the SMT language; each
function declared in a (full) Boogie program behaves similarly
to an uninterpreted function symbol in the SMT language.
There is, however, an important deviation from the one-to-one
correspondence. As we have seen, the SMT language distinguishes
between terms and formulas. Roughly speaking these correspond,
respectively, to non-boolean Boogie expression and to boolean
Boogie expressions. For example, in the SMT language the
operands of logical-and must be formulas and in Boogie the
operands of logical-and must be booleans. On the other hand, an
SMT uninterpreted symbol is always a term, while in Boogie a
function may return a boolean. Also, in SMT the arguments of an
uninterpreted symbol must be terms, while in Boogie a function
might take booleans as arguments. Because of these reasons, a
one-to-one translation may produce ill-formed SMT trees, which
fail sort-checking.

In SMT the constants \textbf{true} and \textbf{false} are a
formulas. If we introduce two corresponding uninterpreted terms,
\textit{trueTerm} and \textit{falseTerm}, we can then try to fix
the ill-formed SMT tree using the following two rules.
\begin{enumerate}
\item If a term~$\tau$ appears where a formula is expected then we 
  replace the term by \smtCode|(= $\tau$ trueTerm)|. This compares
  for equality $\tau$ and \textit{trueTerm}.
\item If a formula~$\varphi$ appears where a term is expected
  then we replace the formula by 
  \smtCode|(ite $\varphi$ trueTerm falseTerm)|.
  This expression evaluates to \textit{trueTerm} for all 
  models in which $\varphi$~evaluates to~$\tru$;
  it evaluates to \textit{falseTerm} for all models
  in which $\varphi$~evaluates to~$\fls$.
\end{enumerate}
This, however, is not exactly what FreeBoogie does.
Simplify is an old but still competitive prover whose language is
similar to the SMT language. One difference is that in Simplify
a term never contains a formula. In particular, there is no
\textbf{ite}, so the rule~2 from above cannot be used.

To clarify these ideas, let us turn to an example.
\begin{boogie}
function f(x : bool) returns (bool);
axiom (forall x : bool :: x == f(x));
procedure p() returns () { assert f(true) != f(false); }
\end{boogie}
The assertion should hold. 

The following is what FreeBoogie sends to Simplify.
\begin{smt}
(BG_PUSH (NEQ trueTerm falseTerm))
(BG_PUSH (FORALL (xTerm) (EQ xTerm (f xTerm))))
(NOT (IFF (EQ trueTerm (f trueTerm) (EQ trueTerm (f falseTerm)))))
\end{smt}

In Simplify's language, interpreted symbols are written
in \textsc{capital} letters and their names are usually
self-explanatory. The command \textbf{BG\_PUSH} communicates a
hypothesis to the prover. Line~3 is a query. The Boogie constants
\textbf{true} and \textbf{false} that appear as arguments
of the function~$f$ were translated into terms directly,
without an intermediate application of rule~2. The comparison
between booleans~\boogieCode|$\bullet$ != $\bullet$|, which appears
in the assertion, was translated to \smtCode|(NOT (IFF $\bullet$ $\bullet$))|. 
Because \textbf{IFF} expects formulas as arguments and because
\smtCode|(f $\ldots$)| is a term, rule~1 was applied, which is
why \textbf{EQ} appears in the query.

Assuming hypothesis~2, the query is equivalent with hypothesis~1.
\begin{align}
&\phantom{\;=\;}
  \text{\smtCode|(NOT (IFF (EQ trueTerm (f trueTerm) (EQ trueTerm (f falseTerm)))))|}\\
&=\text{\smtCode|(NOT (IFF TRUE (EQ trueTerm (f falseTerm)))))|}
  \label{eq:term-tricks1}\\
&=\text{\smtCode|(NEQ trueTerm (f falseTerm))|}\\
&=\text{\smtCode|(NEQ trueTerm falseTerm)|}
  \label{eq:term-tricks2}
\end{align}
Equation~\eqref{eq:term-tricks1} follows by setting
$\mathit{xTerm}:=\mathit{trueTerm}$ in hypothesis~2;
equation~\eqref{eq:term-tricks2}, which is the same as hypothesis~1, follows
by setting $\mathit{xTerm}:=\mathit{falseTerm}$ in hypothesis~2.  Without
the two hypothesis, the query is not valid. For example, it could be that
$\mathit{trueTerm}=\mathit{falseTerm}$. In general, some extra hypotheses
are required.  If not enough hypotheses are introduced, then FreeBoogie is
incomplete, but should still be sound. Whenever \textit{TermOfExpr} or
\textit{FormulaOfExpr} produce an SMT tree, they may attach to it extra
hypotheses.  These are SMT trees themselves, and may have further
hypotheses attached. All hypotheses are collected and sent to the prover
before the query.

\subsection{Talking to the Prover}

The class \textit{Prover} defines the interface that is used
by the package \textit{freeboogie.vcgen} to talk to the
prover. It is a thin interface, consisting of the methods
\textit{assume}, \textit{retract}, \textit{push}, \textit{pop},
and \textit{isValid}.

The real prover does not have to have the notion of an
assumption (also known as hypothesis), but a class that extends
\textit{Prover} should take advantage of all facilities of a real
prover. For example, if Simplify is used as a prover, then a
sequence of calls \jmlCode|assume(h)|, \jmlCode|isValid($q_1$)|,
\jmlCode|isValid($q_2$)| may result in one of the following two
strings being sent to the prover:
\begin{align}
&(\mathbf{IMPLIES}\;h\;q_1)\;(\mathbf{IMPLIES}\;h\;q_2)\\
&(\mathbf{BG\_PUSH}\;h)\;q_1\;q_2
\end{align}
Both are OK, but the second is better, if only because
$h$ is communicated once.

Similarly, a class that extends \textit{Prover} may choose to
treat certain SMT tree labels specially to take advantage of
other facilities of the real prover.

\section{Other Generated Code}

The Boogie parser resides in the package
\textit{freeboogie.parser} and is generated by ANTLR
(\textbf{an}other \fb tool for \fb language \fb
recognition); the command line parser resides in the package
\textit{freeboogie.cli} and is generated by CLOPS (\fb command
\fb line~\textbf{op}tion\textbf{s}).

\section{Related Work}

The main goal of the previous sections is to anchor the
subsequent theoretical chapters in a concrete program,
FreeBoogie. A secondary goal is to serve as a guide to the code
and to make explicit the early design choices. This section is
for the reader who wants to understand the design in detail, but
feels that the previous sections are too shallow.

\subsection{Related Tools}

FreeBoogie is a Java clone of the Boogie
tool~\cite{barnett2005boogie} from Microsoft Research. The
internals differ but the input and the output interfaces
are the same. The input is a program written in the Boogie
language~\cite{leino2008boogie,leino2010boogie}; the output is a
formula written in the SMT language, or in a similar language.

The Why language~\cite{filliatre2007why} is another intermediate
verification language. Its syntax is similar to that of OCaml.  The
accompanying tool targets SMT solvers, but also proof assistants such as
Coq and Isabelle.

The Boogie tool is part of the \specsharp program
verifier~\cite{barnett2005spec}, which verifies programs written in a
superset of \csharp; the Why tool is part of \framac~\cite{framac}, which
verifies C programs.

\subsection{Design, Pipeline, and Correctness}

Because the input and the output are programs written in
well-defined languages, FreeBoogie is a compiler. The standard
text on compiler design is the Dragon Book~\cite{aho2007}. The
pipeline architecture is common to most compilers.

One of the oldest problems studied by the formal methods
community is the correctness of compilers. The early approaches
were focused on proving that a compiler is correct (see, for
example, \cite{moore1989cc}). The idea is to show that the
semantics of the input program are in a certain relationship
with the semantics of the output program. The relationship
usually ensures that the two programs have the same observable
behavior. Although there is recent research in the same
vein~\cite{leroy2009}, there are also attempts to go around the
problem and avoid the full verification of the compiler. One of
these alternative approaches is \emph{translation validation},
proposed in 1998 by Amir Pnueli and others~\cite{pnueli1998tv}.
The idea is to check the result of each particular compilation,
instead of proving that the compiler works for all possible
inputs. To make it even easier, instead of checking the whole
compilation, the idea can be applied to each compilation phase.
The technique was used to find bugs in GCC (the \fb GNU \fb
C \fb compiler)~\cite{necula2000tv}. A related technique,
\emph{credible compilation}~\cite{rinard1999credible}, consists
of modifying the compiler to output a proof for each compilation.
It is then possible to check the proof of equivalence using a
small trusted proof checker. In 2004, Benton~\cite{benton2004}
introduced a way of doing equivalence proofs that seems to
fit well with the Boogie language.

The problem of correctness of a program verifier has certain peculiarities. The
`observable behavior' of a Boogie program, which is not executable, is whether
it is correct or not, according to Definition~\ref{def:correctness} (on
page~\pageref{def:correctness}). It follows that two Boogie programs are
equivalent if they are both correct or both incorrect. As with normal
compilers, a transformation of a Boogie program is correct if the output is
equivalent to the input. Section~\ref{sec:pipeline} says that some
transformations in FreeBoogie are incorrect and it identifies soundness and
completeness as weaker guarantees.

The Dragon Book analyzes the problem of correctness from an
algorithmic point of view, but it does not address the problem
of correctness of an implementation; the Dragon Book discusses
the overall pipeline architecture of a compiler but does not
go into details of code organization. The Design Patterns
book~\cite{gamma1995} partly covers this area. For example, the
visitor pattern, the facade pattern, and the factory method
pattern, which were used to explain FreeBoogie's design, are all
presented in that book. Of these three patterns, only the visitor
pattern was briefly analyzed in Section~\ref{sec:visitors}.

``[A] \emph{facade} provide[s] a unified interface to a set
of interfaces in a subsystem. Facade defines a higher-level
interface that makes the subsystem easier to use.'' In this
dissertation a more specific meaning is used---a class (or
interface) that provides almost all the services of a package.
The non-facade classes are still visible from outside, in case
they are needed, but their use is discouraged.

A \emph{factory method} creates an object and returns it. The Design Patterns
book emphasizes that the method call may be dynamically dispatched to
subclasses. While this aspect is used by the backend, the factory methods for
Boogie AST are static.  Such methods lead to code that is easier to read
because of their descriptive names and because type inference for generics
works better for methods than for constructors in Java~6. Such methods also
make possible the implementation of more sophisticated creation patterns, such
as singleton and~hash-consing.

\subsection{The Expression Problem}

The visitor pattern is a way of achieving multiple dynamic
dispatch, but it is also a partial solution to the
expression problem. The term was coined by Philip Wadler in an
email~\cite{wadler1998ep} from 1998: ``The \emph{Expression
Problem} is a new name for an old problem. The goal is to define
a datatype by cases, where one can add new cases to the datatype
and new functions over the datatype, without recompiling existing
code, and while retaining type safety.'' Since most languages
have modular compilation, the restriction on recompilation
usually means that one is allowed to add new modules but not to
modify existing ones. Wadler continues: ``One can think of cases
as rows and functions as columns in a table. In a functional
language the rows are fixed [$\ldots$] but it is easy to add new
columns. In an object-oriented language, the columns are fixed
[$\ldots$] but it is easy to add new rows.''

The visitor pattern is a way of easily adding new columns in
an object-oriented language. In compilers, functionality tends
to evolve more than the AST. This is one reason why functional
languages are well-suited for implementing compilers and it is
why most compilers written in object oriented languages use the
visitor pattern. However, once the visitor pattern is used, it
becomes hard to modify the AST.

FreeBoogie is written in an object oriented language and uses
the visitor pattern. To add new functionality, one implements a
new evaluator or a new transformer, both being visitors. The
code implementing the new functionality goes in a new class so,
in Wadler's terminology, it is easy to add columns. If a new
type of node must be added to the AST then one new class must
be added to the AST data structures. So far, the existing code
was not touched and no recompilation is necessary. The next
step, however, is to add a methods to \textit{Evaluator}, the
root of the hierarchy of visitors. In Wadler's terminology, it
is hard to add rows. What is problematic in practice is not the
recompilation time, but the fact that the AST data structures
and the base classes for visitors must be kept in sync. Because
the AST data structures, the class \textit{Evaluator}, and the
class \textit{Transformer} are generated by AstGen from the same
description of the abstract grammar they are \emph{automatically}
in sync. In other words, AstGen can be seen as a patch that
brings the visitor pattern closer to the ideal solution for the
expression problem. (But recompilation is still necessary when
new AST nodes are added.)

Another approach to the expression problem is to
modify the programming language to support multiple
dispatch~\cite{chambers1994mm, clifton2006}.

\subsection{Provers}

Many languages understood by theorem provers (such as the SMT
language, the Simplify language, and the PVS prover language)
are based on S-expressions. Like XML~\cite{bray2006xml} (the
e\textbf{x}tensible \fb markup \fb language), S-expressions
provide a syntax that is easy to parse. Briefly, they are a fully
parenthesized preorder print of an AST\null. Where XML says
\texttt{<tag>$\,\cdots$</tag>}, S-expressions say \texttt{(tag
\dots)}. (There is no equivalent for XML attributes.) One year
before writing about the semantics of programs, John McCarthy
presented~\cite{mccarthy1960} in 1960 how S-expressions are used
in the programming system LISP (\textbf{lis}t \fb processing).
``S stands for symbolic.''

Internally, FreeBoogie uses a tree of strings to represent symbolic
expressions. To save memory and to speed up otherwise expensive structural
comparisons FreeBoogie uses hash-consing\index{hash-consing}.  The idea was
described by Andrei Ershov~\cite{ershov1958} in 1957, and one year later an
English translation was available.  Before creating a tree node, a global
hash table is searched to see if a structurally similar node already
exists. The creation of nodes takes a constant amount of time on average,
structural comparison of tree nodes is done then by a simple pointer
comparison, and much memory is saved because duplication of information is
avoided. A simple implementation of hash-consing is easy, but offering
proper library support is not trivial---a solution~\cite{filliatre2006hash}
for OCaml was published in~2006.

The tree of strings was chosen as the main data structure of
\textit{freeboogie.backend} because it is a natural representation of the
SMT language. The SMT community~\cite{barrett2010lib} produced a language,
a command language, theories, benchmarks; it also organizes an annual
competition between SMT solvers. The language defines, on top of
S-expressions, the meaning of about twenty keywords and about a dozen
`attributes'.  The theories are the `standard library' of the SMT language.
Each defines the meaning of a set of symbols. For example, the theory Ints,
defines the semantics of the symbols \texttt{0}, \texttt{1},
\texttt{\char`\~}, \texttt{-}, \texttt{+}, \texttt{*}, \texttt{<=},
\texttt{<}, \texttt{>=}, and \texttt{>}.  The benchmarks include
hand-crafted queries, random queries, queries produced from hardware
verification tasks, and queries produced from software verification tasks.
A random sample of the benchmarks is used in the annual SMT competition.
SMT solvers under active development with good support for quantifiers
include Z3~\cite{moura2008z3}, and CVC3~\cite{barrett2007}.

Since the SMT language is supported by several provers that compete each
year on software verification tasks, it makes a natural target for
FreeBoogie.  A better way to interact with provers is through an API, which
is unfortunately not standardized.  Z3 has a C API, a .NET API, and an
OCaml API; CVC3 has a C API and a C++ API\null, and is open source.

\subsection{Code Generators}

The Boogie AST is generated by AstGen; the Boogie parser is
generated by ANTLR; the command-line parser is generated by
CLOPS.

ANTLR~\cite{parr1995} is probably the most widely used parser
generator for Java. The grammar is $\mathrm{LL}(k)$, but
backtracking can be optionally activated. FreeBoogie sticks to
$\mathrm{LL}(k)$ so that the generated parser is as fast as
possible. ANTLR can construct AST trees. This facility is not
used. It is not easy to convince ANTLR to generate different
data structures for the AST\null. For example, it is not easy to
convince ANTLR to generate immutable ASTs. Also, it is not easy
to convince ANTLR to generate the root of the visitors' hierarchy
in sync with the AST data~structures.

CLOPS~\cite{janota2009clops} is a Java parser generator
specialized for command lines. FreeBoogie is one of its first
users.

\chapter{Optimal Passive Form}
\label{ch:passive}

\chquote{Active Evil is better than Passive Good.}{William Blake}

\noindent Flanagan and Saxe~\cite{flanagan2001passive} explain
how passivation avoids the exponential explosion of VCs. This
chapter (1)~gives a precise definition for passivation and
(2)~proceeds to examine its algorithmic difficulty.

\section{Background}
\label{sec:passive.background}

\subsection{VC Generation with Assignments}
\label{sec:boogiesem-wpsp}

The best way to understand why passivation is such a good idea,
especially from the point of view of performance, is to compare
it to a VC generation method that skips passivation. This means
that we have to look ahead at the next stage of the pipeline and
see how it would have to work if passivation is not performed.
For this reason, the methods presented in this \emph{subsection}
will only be fully justified in the next chapter.

\subsubsection{Weakest Precondition for Core Boogie}\index{weakest precondition}
\label{sec:pass-wpcb}

The weakest precondition transformer for the
three statements that can appear in a flowgraph
(see Definition~\ref{def:boogie-flowgraph} on
page~\pageref{def:boogie-flowgraph}) is
\begin{align}
\wpp{(\text{\textbf{assume}~$p$})}{b} &\,\equiv\, p\limp b\\
\wpp{(\text{\textbf{assert}~$p$})}{b} &\,\equiv\, p\land b  
  \label{eq:wp-assert}\\
\wpp{(u := e)}{b} &\,\equiv\, (u \gets e)\>b  \label{eq:wp-assign}
\end{align}
Here $b$ and $p$ are predicates as defined in
Chapter~\ref{ch:boogie}, where the predicate transformer
$(u \gets e)$ is also introduced. Symbolically, the effect of $(u
\gets e)$ is to substitute the expression~$e$ for each occurrence
of the variable~$u$.

Each node~$x$ of the flowgraph gets a precondition~$a_x$ and a
postcondition~$b_x$ according to the rules
\begin{align}
b_x &\equiv \bigwedge_{x\to y} a_y \label{eq:post-from-pre} \\
a_x &\equiv \wpp{x}{b_x}
\end{align}
The query sent to the prover is the predicate~$a_0$.

Figure~\ref{fig:diamonds-seq} shows a program that leads to a
very big query. Let us denote by $q_k(u)$ the postcondition
of the statement \textbf{assert}~$p_k(u)$. Then $q_n(u)\equiv\tru$
and $q_{n-1}(u)\equiv p_n(e_n(u))\land p_n(f_n(u))$. In general,
all $q_k$s will be conjunctions of function compositions,
so let us use a simplified notation just here, while we
evaluate the memory usage of the weakest precondition method.
\begin{equation}
q_{n-1}=\{p_ne_n,p_nf_n\}
\end{equation}
Here $pe$ stands for the function composition $p\circ
e$, and $\{p, q, r\}$ stands for the conjunction $p\land q\land
r$. In general, the rule is
\begin{equation}
q_{k-1}=\{p_ke_k,p_kf_k,q_ke_k,q_kf_k\}. \label{eq:q-rule}
\end{equation}
For example,
\begin{equation}
q_{n-2}=\{p_{n-1}e_{n-1},p_{n-1}f_{n-1},p_ne_ne_{n-1},p_nf_ne_{n-1},
  p_ne_nf_{n-1},p_nf_nf_{n-1}\}
\end{equation}

\begin{figure}\centering
\begin{tikzpicture}[scale=0.8]
  \newcount\vertpos\vertpos=9
  \def\n#1#2(#3,#4)#5{\node[draw,rectangle,inner sep=2pt] (#1#2) at (#3,#4) {#5};}
  \def\lr#1{\advance\vertpos by-1
    \n e#1(0,\the\vertpos){$u:=e_{#1}(u)$}
    \n f#1(3,\the\vertpos){$u:=f_{#1}(u)$}}
  \def\m#1{\advance\vertpos by-1 \n p#1(1.5,\the\vertpos)%
    {\textbf{assert}~$p_{#1}(u)$}}
  
  \m0\lr1\m1\lr2\m2\advance\vertpos by-1
  \m{n-1}\lr{n}\m{n}
  \node at (1.5,3) {$\vdots$};

  \draw[arr] (p0) -- (e1); \draw[arr] (p0) -- (f1);
  \foreach \x in {1, 2, n} {
    \draw[arr] (e\x) -- (p\x); \draw[arr] (f\x) -- (p\x);
  }
  \draw[arr] (p1) -- (e2); \draw[arr] (p1) -- (f2);
  \draw[arr] (pn-1) -- (en); \draw[arr] (pn-1) -- (fn);
\end{tikzpicture}
\caption{Exploding diamonds}
\label{fig:diamonds-seq}
\end{figure}

All these expressions are represented using SMT trees, which
are really dags because of hash-consing. In evaluating the
memory space necessary to represent a predicate we must take
into account subexpressions that appear more than once. One
such subexpression is~$u$, which appears quite often. Is there
any other sharing in, say, the predicate $q_{n-2}$? Well, it
contains $p_{n-1}e_{n-1}$ and $p_ne_ne_{n-1}$, which both end in
$e_{n-1}$, so $q_{n-2}$ contains $e_{n-1}$ at least twice.

In general, if we regard $e_k$,~$f_k$, and~$p_k$ as (distinct)
letters from some alphabet and $q_k$ as a set of strings, then
we are interested in the size of the trie that represents the
reversed strings of~$q_k$. Figure~\ref{fig:q-rule-trie} shows
the trie for~$q_{k-1}$. (This trie is, roughly, an upside-down
depiction of the SMT tree, except that SMT tree nodes correspond
to edges in the trie.) If we denote by $|q_k|$ the number of
edges in the trie necessary to represent~$q_k$, then
\begin{align}
|q_{n-1}| &= 4, \\
|q_{k-1}| &= 2 |q_k| + 4.
\end{align}
So $|q_0|=2^{n+2}-4$. Of course, the SMT tree for the whole query
will also contain $p_0$,~$u$, and a node for~$\land$. In any
case, the size of the query is~$\Theta(2^n)$. We will see later
that if passivation is applied first, then the size of the query
is~$\Theta(n)$. This is one reason why we passivate programs.

\begin{figure}\centering
\begin{tikzpicture}[auto]
  \foreach \x/\y in {0/0,1/1,2/2,3/1,4/0}
    \oonode (\x) at (\x,\y) {};
  \draw (0) to node {$p_k$} (1);
  \draw (1) to node {$e_k$} (2);
  \draw (2) to node {$f_k$} (3);
  \draw (3) to node {$p_k$} (4);
  \draw[thick] (1)--(0.5,0)--(1.5,0)--(1); \node at (1,0) [label=above:$q_k$] {};
  \draw[thick] (3)--(3.5,0)--(2.5,0)--(3); \node at (3,0) [label=above:$q_k$] {};
\end{tikzpicture}
\caption{The trie constructed by \eqref{eq:q-rule}}
\label{fig:q-rule-trie}
\end{figure}
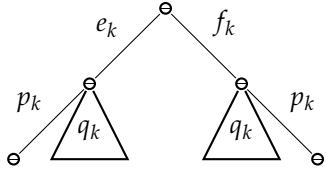

Flanagan and Saxe~\cite[Section~4]{flanagan2001passive} identify sequences
of assignments like $x:=x+x$ as another source of exponential explosion.
Such programs are problematic only in the absence of hash-consing.

\subsubsection{Strongest Postcondition of Core Boogie}
\index{strongest postcondition}

The strongest postcondition is computed as follows.
\begin{align}
\spp{(\text{\textbf{assert}/\!\textbf{assume}~$p$})}{a} &\equiv a\land p \\
\spp{(u:=e)}{a} &\equiv \exists v, \bigl((u\gets v)\> a\bigr) \land 
  \bigl(u=(u\gets v)\> e\bigr)
\label{eq:sp-of-assgn}
\end{align}
Again, each node~$y$ gets a precondition~$a_y$ and a postcondition~$b_y$.
\begin{align}
a_y &\equiv
  \begin{cases}
  \tru & \text{if the node~$y$ is initial}\\
  \bigvee_{x\to y} b_x & \text{otherwise}
  \end{cases} \\
b_y &\equiv \spp{y}{a_y}
\end{align}

Given a predicate~$f$ and a sub-predicate~$e$ such that $f\equiv p\;e$, we
say that $e$~occurs in a positive position in~$f$ when $|p\;\fls|\limp
|p\;\tru|$; similarly, $e$ occurs in a negative position in~$f$ when
$|p\;\tru|\limp |p\;\fls|$.  For validity queries, SMT solvers cope well
with $\forall$ in positive positions and with $\exists$ in negative
positions, but not the other way.  The quantifier in~\eqref{eq:sp-of-assgn}
will be on the left hand side of an implication when it is sent to the
prover (see Chapter~\ref{ch:spwp}), so it is not a problem.

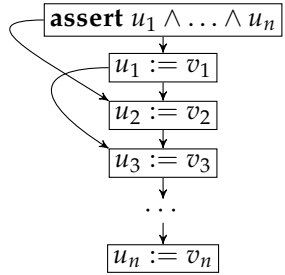
\begin{figure}\centering
\begin{tikzpicture}[y=-1.5\baselineskip,inner sep=2pt]
  \tikzstyle{r}=[draw,rectangle,inner sep=2pt];
  \node[r] (0) at (0,0) {$\mathbf{assert}\;u_1\land\ldots\land u_n$};
  \node[r] (1) at (0,1) {$u_1:=v_1$};
  \node[r] (2) at (0,2) {$u_2:=v_2$};
  \node[r] (3) at (0,3) {$u_3:=v_3$};
  \node[r] (n) at (0,5) {$u_n:=v_n$} ;
  \node (d) at (0,4) {$\cdots$};
  \foreach \i/\j in {0/2,1/3}
    \draw[arr] (\i.west) .. controls +(-1,0) and +(-1,-1) .. (\j.north west);
  \foreach \i/\j in {0/1, 1/2, 2/3, 3/d, d/n}
    \draw[arr] (\i) to (\j);
\end{tikzpicture}
\caption{Exploding predicates}\label{fig:sp-explosion}
\end{figure}

As it turns out, in the presence of hash-consing there is no exponential
explosion on the example in Figure~\ref{fig:diamonds-seq}. There is,
however, an exponential explosion for programs like the one in
Figure~\ref{fig:sp-explosion}. In this example, a control flow path goes
up--down through all statements, and for each assignment there is also an
edge that skips it. The postcondition of the last assignment includes $2^n$
conjunctions of the form $w_1\land\ldots\land w_n$, where
$w_k\in\{u_k,v_k\}$. Even with sharing, these conjunctions require
$\Omega(2^n)$ space.

\subsection{A Few Concepts from Computational Complexity}
\label{sec:cc-primer}

One of the main results of this chapter is that a certain
variation of passivation is NP-complete. This section is a brief
reminder of what NP-complete means.

Very roughly, The RAM (\fb random \fb access \fb memory) model of
computation consists of a processor and a memory whose words each
have a fixed number of bits and are indexed by integers that fit
in a word. The processor can execute binary operations between
two memory words and store the result in a (not necessarily
distinct) third word in one time step. Operations include modular
arithmetic and bitwise logic. The processor can also (1)~request
one word of input and (2)~produce one word of output.

\begin{definition}\index{problem!computational}
A \emph{computational problem} is a relation $P\subseteq
L\times L$, where $L$ is the set of finite sequences of
words~$w_1$,~$w_2$, \dots,~$w_n$, for all~$n$.
\end{definition}

We say that~$y$ is a \emph{valid solution}\index{valid solution} (or
\emph{valid output}) for the \emph{problem instance} (or \emph{input})~$x$
when~$xPy$. An \emph{algorithm} is the finite set of instructions that the
processor is hardwired to carry out. We say that an algorithm is a
\emph{solution} to a problem when it produces valid output for every input.
(For simplicity, let us assume that there is a valid output for every
input.) We write $|w|$ for the length~$n$ of a sequence $w_1$,~$w_2$,
\dots,~$w_n$.

\begin{definition}\index{complexity}
The \emph{time complexity} of an algorithm is a function
$t:L\to\Z_+$ that associates with each input the number of
steps executed by the algorithm before producing the output.
The \emph{space complexity} of an algorithm is a function
$s:L\to\Z_+$ that associates with each input the number of memory
words touched by the algorithm before producing the output.
\end{definition}

The \emph{worst case}\index{worst case} complexity~$T(n)$ is the maximum
complexity~$t(x)$ over inputs~$x$ of size~$n$. Similarly, one
can define an \emph{average complexity}, if probabilities are
associated to inputs.

\begin{definition}
A computational problem is in the complexity class~P when it
has a solution whose worst case time complexity has a polynomial
upper~bound.
\end{definition}

\begin{definition}\index{problem!decision}
A \emph{decision problem} is a set $D\subseteq L$.
\end{definition}

Equivalently, a decision problem is a computational problem
whose answer is 0~or~1, that is, a problem with at most two valid
outputs.

\begin{definition}\label{def:np}
A decision problem~$D$ is in the complexity class~NP when
there exists a subset $R\subseteq L\times L$ with the properties:
\begin{enumerate}
\item There is a polynomial~$p$ such that $|y|\le p(|x|)$
  whenever $xRy$.
\item For all $x\in D$ there is an $y$ such that $xRy$. There
  is no such~$y$ for~$x\notin D$.
\item The problem of deciding whether $xRy$ is in~P.
\end{enumerate}
\end{definition}

The element~$y$ is a \emph{proof} that $x\in
D$; a solution to problem~3 in the above definition is a
\emph{verification procedure}.

\begin{remark}
Definition~\ref{def:np} says that a problem is in NP when it may be solved in
polynomial time by a nondeterministic machine. If a problem is solved in
polynomial time by a nondeterministic machine, then we may record the choices
made as the proof~$y$. In the other direction, a nondeterministic machine may
try all possible proofs~$y$ in polynomial time, because their length is bounded
by a~polynomial.
\end{remark}

This definition applies to decision problems, but we will
also use it with respect to optimization problems.

\begin{definition}\index{problem!optimization}
Given a partial cost function $c:L\times L\rightharpoonup \mathbb{R}_+$, an
\emph{optimization problem} asks for an algorithm that for each input~$x$
produces an output~$y$, such that $c(x,y)$~is minimized. We say that $y$~is
a \emph{feasible} solution for input~$y$ if $c(x,y)$~is defined.
\end{definition}

Each optimization problem has an associated decision problem that
asks whether there exists a solution of cost $\le k$,
for some constant~$k$. We say that an optimization problem is in
NP when its associated decision problem is in NP\null. Note that
if we have a solution to the associated decision problem, then
we can find the cost of an optimal solution with a logarithmic
overhead by using binary search. This is an example of reducing a
problem to another.

\begin{definition}\index{oracle}
An \emph{oracle} for problem~$P$ solves any instance of~$P$ in
constant time and constant space.
\end{definition}

\begin{definition}\index{reduction}
A decision problem~$P$ \emph{reduces} to a decision problem~$Q$
when there is a polynomial time solution for~$P$ that uses
an oracle for~$Q$ as a subroutine. A decision problem~$P$
\emph{transforms} to a decision problem~$Q$ when there is a
function~$f:L\to L$ computable in polynomial time such that $x\in
P$ if and only if $f(x)\in Q$, for all inputs~$x$.
\end{definition}

\begin{remark}
If problem~$P$ transforms to problem~$Q$, then problem~$P$ also reduces to
problem~$Q$.
\end{remark}

\begin{definition}\index{NP-complete}\index{NP-hard}
A problem is \emph{NP-complete} when it is in~NP and all other
problems in~NP transform to it. A problem is \emph{NP-hard} when
all problems in~NP reduce to it.
\end{definition}

As before, we say that an optimization problem is
NP-complete\slash NP-hard when its associated decision problem is
NP-complete\slash NP-hard.

No polynomial solution is known for any NP-complete problem. The
only way to handle large instances of NP-complete problems in
practice is to settle for approximations or heuristics.

\subsection{Algorithms and Data Structures Reminder}
\label{sec:passive-algo-back}

One of the proofs given later is by transformation from the mins
problem (\fb maximum \fb independent \fb node \fb set). Then a
conjecture is supported by a heuristic argument that makes use of
the lis problem (\fb longest \fb increasing \fb subsequence) and
the maximum bipartite matching problem. This section serves as a
brief reminder of what these problems are and what are the best
algorithms known for them.

\subsubsection{Maximum Independent Node Set}

\begin{definition}\index{adjacent nodes}\index{independent nodes}
Two nodes of a graph are \emph{adjacent} when they are the
endpoints of some edge. A set of nodes is \emph{independent} when
its elements are pairwise non-adjacent.
\end{definition}

\begin{problem}[maximum independent node set]
\index{problem!maximum independent node set}
Given is a graph. Find a largest independent set of nodes.
\end{problem}

The associated decision problem asks whether there is
an independent set of size~$\ge k$.

{
  \def\style#1{\ifx#1x\fnode\else\enode\fi}
  \def\picture#1#2#3{
    \begin{tikzpicture}[baseline=-.5ex,scale=.3]
      \style#1 (0) at (0,0) {};
      \style#2 (1) at (1,0) {};
      \style#3 (2) at (2,0) {};
      \draw (0)--(1); \draw (1)--(2);
    \end{tikzpicture}
  }

\begin{example}
The graph \picture ... has one independent set of size~$0$
(\picture ...) three independent sets of size~$1$
(\picture x.., \picture .x., \picture ..x) and one of size~$2$
(\picture x.x). The latter is a solution to this instance of
the mins problem. There are also three sets of nodes that
are not independent (\picture xx., \picture .xx, \picture xxx).
\end{example}}

This problem is known to be NP-complete.

\subsubsection{Longest Increasing Subsequence}

\begin{definition}\index{subsequence}
A \emph{subsequence} is obtained from a sequence $w_1$,~$w_2$,
\dots,~$w_n$ by removing some of its elements and maintaining
the relative order of the others.
\end{definition}

\begin{problem}[lis]\index{problem!longest increasing subsequence}
Given is a sequence of integers. Find a subsequence of it that
has maximum length.
\end{problem}

The best known solution for this problem works in $\Theta(n
\lg\lg n)$~time. It uses a data structure that maps integers to
integers and supports the operations $\mathit{update}(k,v)$ that
binds the key~$k$ to the value~$v$ and $\mathit{predecessor}(k)$
that returns the value last bound to the largest key that
is~$\le k$. For example, after $\mathit{update}(2,1)$,
$\mathit{update}(1,2)$, $\mathit{update}(2,3)$, the query
$\mathit{predecessor}(1)$ returns~$2$ and the queries
$\mathit{predecessor}(2)$ and $\mathit{predecessor}(3)$
return~$3$. For simplicity, let us assume $v\in\R_+$, which
means that $\mathit{predecessor}(k)$ returns~$0$ if the value
of~$k$ filters out all previous updates. The \textit{update}
and \textit{predecessor} operations are supported by binary
search trees in $\Theta(\lg n)$~time and by van~Emde Boas trees
in $\Theta(\lg\lg n)$~time. Figure~\ref{fig:lis-alg} shows an
algorithm that uses such a data structure to solve the lis
problem in $\Theta(n\lg\lg n)$~time. Strictly speaking, the
algorithm only returns the length of a longest subsequence, but
it can be easily modified to return the subsequence.

\begin{figure}\centering
\leavevmode\vbox{
\begin{alg}
\^  $\proc{lis}(w_1,w_2,\ldots,w_n)$
\=  $r:=0$
\=  ~for~ $i$ ~from~ $1$ ~up to~ $n$
\+    $l:=1+\mathit{predecessor}(w_i-1)$
\=    $r:=\max(r,l)$
\=    $\mathit{update}(w_i,l)$
\-  ~return~ $r$
\end{alg}}
\caption{A solution for lis}\label{fig:lis-alg}
\end{figure}

\subsubsection{Maximum Bipartite Matching}
\label{sec:passive.bipartite}

\begin{definition}\index{bipartite graph}
A graph is \emph{bipartite} when its nodes can be partitioned
into two subsets, the left nodes and the right nodes, such that
all edges are between a left and a right node.
\end{definition}

\begin{definition}\index{matching}
Given a graph, a \emph{matching} is a subset of pairwise
non-adjacent edges. Two edges are \emph{adjacent} if they share
(at least) a node.
\end{definition}

\begin{problem}[maximum bipartite matching]
\index{problem!maximum bipartite matching}
Given is a bipartite graph. Find a largest matching.
\end{problem}

The old Hungarian algorithm~\cite{kuhn1955} solves the problem in
$O(n^3)$~time.  Other algorithms solve the problem in $O\bigl(\min(m\sqrt
n, n^{2.38})\bigr)$~time. Here, $m$~is the number of edges and $n$~is the
number of nodes.

\subsection{Examples, Terminology, and Notations}

The running example in Figure~\ref{fig:boogie-example} is the
simplest interesting case for passivation. Variable~$v$ is
the variable of interest. With respect to it, the nodes are
\emph{read-only}~(\tikz[baseline=-.5ex] \ronode() at (0,0){};),
\emph{write-only}~(\tikz[baseline=-.5ex] \wonode() at (0,0){};),
or \emph{read-write}~(\tikz[baseline=-.5ex] \rwnode() at
(0,0){};). A statement that does not involve the variable of
interest will be drawn as a white circle~(\tikz[baseline=-.5ex]
\oonode() at (0,0){};). There is an easy mnemonic rule for these
notations: Since execution flows downward and reading precedes
writing, the upper half corresponds to reading and the lower half
to writing. The terms \emph{read node}\index{read node} (for \tikz[baseline=-.5ex]
\ronode() at (0,0){}; or~\tikz[baseline=-.5ex] \rwnode() at
(0,0){};) and \emph{write node}\index{write node} (for \tikz[baseline=-.5ex]
\wonode() at (0,0){}; or~\tikz[baseline=-.5ex] \rwnode() at
(0,0){};) will also be used. In Figure~\ref{fig:boogie-example},
nodes~1,~2, 3, and~4 are read nodes, while nodes 0 and~3 are
write nodes.

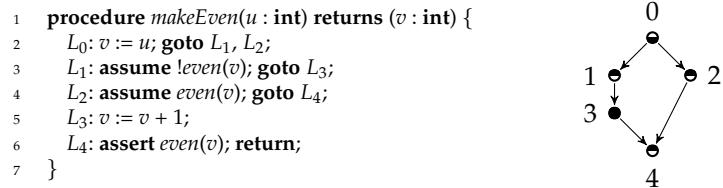
\begin{figure}\centering
\begin{minipage}{0.5\linewidth}\centering
\begin{boogie}
procedure makeEven(u : int) returns (v : int) {
  $L_0$:  v := u; goto $L_1$, $L_2$;
  $L_1$:  assume !even(v); goto $L_3$;
  $L_2$:  assume even(v); goto $L_4$;
  $L_3$:  v := v + 1;
  $L_4$:  assert even(v); return;
}
\end{boogie}
\end{minipage}
\begin{minipage}{0.3\linewidth}\centering
\begin{tikzpicture}
  [scale=0.5]
  \wonode (A) at (1,3) [label=above:$0$] {};
  \ronode (B) at (0,2) [label=left:$1$] {};
  \ronode (C) at (2,2) [label=right:$2$] {};
  \rwnode (D) at (0,1) [label=left:$3$] {};
  \ronode (E) at (1,0) [label=below:$4$] {};
  \foreach \x/\y in {A/B, B/D, D/E, A/C, C/E}
    \draw[arr] (\x)--(\y);
\end{tikzpicture}
\end{minipage}
\caption{An example of a Boogie program}
\label{fig:boogie-example}
\end{figure}

Figure~\ref{fig:ex-vo-vs-co} shows a flowgraph with a slightly
different drawing convention: This time the arrows are missing
and edges are understood to go downward. This convention is
not limiting because all inputs to the passivation phase are
dags. All nodes of this flowgraph read and write the variable of
interest.

Figure~\ref{fig:ex-iv-vs-dv} shows another flowgraph, using
an even more complicated drawing convention: A dotted edge
\begin{tikzpicture}[baseline=-.5ex, scale=0.5] \enode(A) at (0,0)
[label=left:$x$]{}; \enode(B) at (1,0) [label=right:$y$]{};
\draw[densely dotted] (A)--(B); \end{tikzpicture} stands
for a read-only node (\tikz[baseline=-.5ex]\ronode (A) at
(0,0){};) that has (normal) incoming edges from node~$x$
and from node~$y$. So Figure~\ref{fig:ex-iv-vs-dv} depicts
a flowgraph with $7$~nodes and $8$~edges. Similarly,
the third example in Figure~\ref{fig:ex-special-family}
represents a flowgraph with $14$~nodes ($9$~write-only and
$5$~read-only) and $18$~edges. Without the dotted edge
convention, Figure~\ref{fig:ex-special-family} would be much
bigger and harder to grasp.

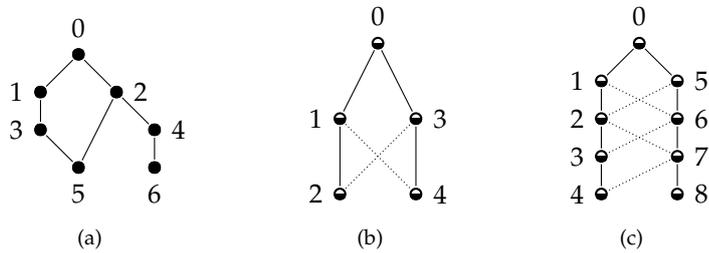
\begin{figure}\centering
\subfigure[]{\label{fig:ex-vo-vs-co}
\begin{tikzpicture}
  \rwnode (A) at (0.5,1.5) [label=above:$0$] {};
  \rwnode (B) at (0,1) [label=left:$1$] {};
  \rwnode (C) at (1,1) [label=right:$2$] {};
  \rwnode (D) at (0,0.5) [label=left:$3$] {};
  \rwnode (E) at (1.5,0.5) [label=right:$4$] {};
  \rwnode (F) at (0.5,0) [label=below:$5$] {};
  \rwnode (G) at (1.5,0) [label=below:$6$] {};
  \foreach \x/\y in {A/B,B/D,D/F,A/C,C/F,C/E,E/G}
    \draw (\x)--(\y);
\end{tikzpicture}}\hfil
\subfigure[]{\label{fig:ex-iv-vs-dv}
\begin{tikzpicture}
  \wonode (A) at (0.5,2) [label=above:$0$] {};
  \wonode (B) at (0,1) [label=left:$1$] {};
  \wonode (C) at (0,0) [label=left:$2$] {};
  \wonode (D) at (1,1) [label=right:$3$] {};
  \wonode (E) at (1,0) [label=right:$4$] {};
  \draw (A)--(B)--(C);
  \draw (A)--(D)--(E);
  \draw[densely dotted] (B)--(E);
  \draw[densely dotted] (C)--(D);
\end{tikzpicture}}\hfil
\subfigure[]{\label{fig:ex-special-family}
\begin{tikzpicture}[scale=0.5]
  \foreach \n/\x/\y/\p/\t in {
    0/1/4/above/,
    1/0/3/left/,
    2/0/2/left/,
    3/0/1/left/,
    4/0/0/left/,
    5/2/3/right/,
    6/2/2/right/,
    7/2/1/right/,
    8/2/0/right/}
    \wonode (\n) at (\x,\y) [label=\p:$\n\t$] {};
  \foreach \x/\y in {
    0/1,1/2,2/3,3/4,
    0/5,5/6,6/7,7/8}
    \draw (\x)--(\y);
  \foreach \x/\y in {
    5/2,
    6/3,6/1,
    7/4,7/2}
    \draw[densely dotted] (\x)--(\y);
\end{tikzpicture}}
\caption{Various interesting special cases}
\label{fig:special-cases}
\end{figure}

\section{The Definition of Passive Form}
\label{sec:passive}

The main purpose of this section is to give a precise formulation
for the problem of finding a good passive form.

\begin{example}\index{version of variable}
A passive form of the program in Figure~\ref{fig:boogie-example}
appears in Figure~\ref{fig:passive-form}. It is obtained by
introducing \emph{versions}~$0$ and~$1$ of the variable~$v$ and
by inserting the copy statement~$5$.
\label{ex:pass}
\end{example}

In general, a \emph{copy statement}\index{copy statement} (or \emph{copy
node}) has the shape $v_i:=v_j$ and is drawn as a filled
square~(\tikz[baseline=-.5ex]\cnode at (0,0){};). A copy node is
a read-write node.

\begin{figure}\centering
\begin{minipage}{0.5\linewidth}\centering
\begin{boogie}
procedure makeEven(u : int) returns (v : int) {
  $L_0$:  $v_0$ := u; goto $L_1$, $L_2$;
  $L_1$:  assume !even($v_0$); goto $L_3$;
  $L_2$:  assume even($v_0$); goto $L_5$;
  $L_3$:  $v_1$ := $v_0$ + 1;
  $L_4$:  assert even($v_1$); return;
  $L_5$:  $v_1$ := $v_0$; goto $L_4$;
}
\end{boogie}
\end{minipage}
\begin{minipage}{0.3\linewidth}\centering
\begin{tikzpicture}
  [scale=0.5]
  \wonode (A) at (1,3) [label=above:$0$] {};
  \ronode (B) at (0,2) [label=left:$1$] {};
  \ronode (C) at (2,2) [label=right:$2$] {};
  \rwnode (D) at (0,1) [label=left:$3$] {};
  \ronode (E) at (1,0) [label=below:$4$] {};
  \cnode (F) at (2,1) [label=right:$5$] {};
  \foreach \x/\y in {A/B, B/D, D/E, A/C, C/F, F/E}
    \draw (\x)--(\y);
\end{tikzpicture}
\end{minipage}
\caption{A passive form for the program in Figure~\ref{fig:boogie-example}}
\label{fig:passive-form}
\end{figure}
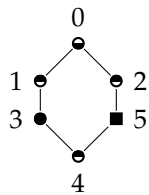

Example~\ref{ex:pass} contains only one variable and this is
the case we shall analyze in detail. Multiple variables do not
introduce any new complications: A program can be made passive
with respect to each of its variables in turn, while considering
the others to be constants.

\begin{definition}\index{passive program}
A program is \textsl{passive} when (1)~on all execution paths each variable
is written at most once, and (2)~no statement reads and writes the same
variable. 
\label{def:passive}
\end{definition}

\begin{remark}
Obtaining a passive form is similar to automatically deriving
a functional equivalent of a loop-less imperative program. The
remaining assignments, which are eventually transformed into
assumptions, can be seen as \textbf{let} bindings.
\end{remark}

The passive form of a program~$G'$ is an equivalent program~$G$
that is passive. If there are execution paths in program~$G'$
that write to variable~$v$ multiple times then those writes must
be changed to write to distinct variables in program~$G$. We
denote the variables of program~$G$ by $v_i$ where $i$ is some
integer and say that $v_i$ is version~$i$ (of variable~$v$).
To maintain the semantics, each read from variable~$v$ must
be replaced by a read from the latest written version. As
Example~\ref{ex:pass} illustrates, it is necessary sometimes
to alter the structure of the program, and in current
approaches~\cite{flanagan2001passive, barnett2005wpu} this is always
done by inserting copy statements of the form $v_i:=v_j$. The
following definition makes these notions precise.

\begin{definition}\index{passive form}
A program~$G$ whose flowgraph has nodes~$V$ is a 
\emph{passive form} of the program~$G'$ whose flowgraph has nodes~$V'$
when

\begin{enumerate}
\item program~$G$ is passive and
\item there exists a mapping $c:V'\to V$, a \emph{write-version}
  function\index{write-version function} $w:V\to\Z$, 
  and a \emph{read-version} function\index{read-version function}
  $r:V\to\Z$ such that
  \begin{enumerate}
  \item 
    \emph{statement structure is preserved}: the statement~$c(x)$
    is obtained from statement~$x$ by replacing each occurrence
    of variable~$v$ by some variable~$v_i$,
  \item 
    \emph{the new nodes are copy statements}: all statements in
    $C=V-c(V')$ have the form $v_i:=v_j$,
  \item 
    \emph{the flow structure is preserved}: there is an
    edge $x\to y$ in program~$G'$ if and only if there is
    a path~$c(x)\stackrel{Q}{\leadsto}c(y)$ in program~$G$
    that uses only copy nodes as intermediate nodes, that is,
    $\bigl(Q-\{x,y\}\bigr) \subseteq C$; also, all copy
    nodes have at least one outgoing edge,
  \item 
    \emph{the initial node is preserved}: $c(0) = 0$,
  \item
    \emph{writes and reads are confined}: each statement~$x$ in
    program~$G$ may only read version~$r(x)$ and may only write
    version~$w(x)$,
  \item
    \emph{the read version is always the latest written version}:
    $w(x)=r(y)$ if $x$~is a write node, $y$~is a read node,
    and there is a path~$x\leadsto y$ in~$G$ that contains no
    intermediate write nodes.
  \end{enumerate}
\end{enumerate}
\label{def:passive-form}
\end{definition}

We say that functions $c$, $w$, and $r$ \emph{witness}\index{witness} that
program~$G$ is a passive form of program~$G'$.

\begin{remark}
The definition is fairly straightforward but it is important
to have it written down. The definition naturally leads to two
notions of optimality (which have been previously missed) and to
a straightforward algorithm that is better than some previous
solutions.
\end{remark}

\begin{example}
If a program starts with \textbf{goto}~$l_1,\ldots,l_n$ and
continues with $l_k:\>S;\>\textbf{return}$ for all~$k$, then $S$
is a passive form of it. Thus, the passive form may be smaller
than the original program, although we will usually keep the
mapping~$c$ injective.
\end{example}

It is easy to see that the passive form of a program~$G$ is
correct if and only if program~$G$ is correct. The reason is
that every read of a version by the passive form reads the same
value as the corresponding read of the variable in program~$G$,
and every write of a version by the passive form writes the same
value as the corresponding write to the variable in program~$G$.

\subsection{Types of Passive Forms}
\label{sec:passive-types}

The requirement that program~$G$ is passive can be expressed as
a constraint on the write-version function~$w$.

\begin{proposition}
If the write-version function~$w$ is a witness that program~$G$
is a passive form, then the existence of a path $x\leadsto y$
where both $x$ and~$y$ are write nodes implies that $w(x)\ne
w(y)$.
\label{prop:distinct}
\end{proposition}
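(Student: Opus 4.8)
The plan is to read the claim as a restatement of passivity in terms of the write-version function, and to prove it by contradiction entirely inside the passive form~$G$. The two ingredients I need are already available. By the confinement requirement (item~(e)) of Definition~\ref{def:passive-form}, the write node~$x$ writes only version~$v_{w(x)}$ and the write node~$y$ writes only version~$v_{w(y)}$. By Definition~\ref{def:passive}, along any execution path of~$G$ each version is written at most once. The whole argument is just the observation that two write nodes lying on one path and carrying the same version would violate this bound.

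First I would produce a single path through both nodes. Since the flowgraph of~$G$ has an initial node~$0$ from which every node is reachable, there is a path $0\leadsto x$; prepending it to the given path $x\leadsto y$ yields a path $\pi\colon 0\leadsto x\leadsto y$, which is an execution path in the sense of Definition~\ref{def:passive}. I may assume $x\ne y$, since the asserted inequality is only meaningful for distinct nodes and, the flowgraph being acyclic (passivation acts only on acyclic flowgraphs), a nonempty directed path has distinct endpoints. Now suppose for contradiction that $w(x)=w(y)$. Then along~$\pi$ the single version $v_{w(x)}=v_{w(y)}$ is written at~$x$ and again at~$y$, hence at least twice; whatever other writes occur between them is irrelevant. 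This contradicts passivity, so $w(x)\ne w(y)$.

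The one point that needs care is what Definition~\ref{def:passive} means by an \emph{execution path}. For the argument above to go through, this must be the structural notion---every directed path from the initial node of the flowgraph---rather than a semantically feasible execution; otherwise one would have to argue that the intervening \textbf{assume} statements on~$\pi$ do not block it, which cannot be guaranteed and is surely not intended. I expect this to be the main (and only) subtlety: once passivity is understood as a purely structural property of the flowgraph, the contradiction with the confinement condition is immediate and the proposition follows.
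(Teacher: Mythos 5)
Your proof is correct and takes essentially the same route as the paper, whose own justification is just the two-sentence observation that confinement (item~(e) of Definition~\ref{def:passive-form}) makes $x$ and~$y$ write versions $w(x)$ and~$w(y)$, while Definition~\ref{def:passive} forbids two writes to the same version on an execution path containing both. Your closing caveat---that ``execution path'' must be read structurally, as a directed path in the flowgraph, since a semantically blocked path (e.g.\ one containing \textbf{assume}~$\fls$) would make the claim fail---is the intended reading, which the paper's terse proof silently assumes; spelling it out is a genuine improvement in rigor, not a divergence.
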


A write node~$x$ writes to version~$w(x)$. According to
Definition~\ref{def:passive}, the same version must not be
written by any other node on an execution path that includes
node~$x$.

Proposition~\ref{prop:distinct} suggests that a ``passive form''
may be called more explicitly ``distinct-version passive form.''
\index{passive form!distinct-version}
Definition~\ref{def:passive-form} is very general. The passive
forms obtained by previous approaches all satisfy a stronger
definition that corresponds to the intuition that versions
increase in time.

\begin{definition}\index{passive form!increasing-version}
An \emph{increasing-version passive form} is a passive form
witnessed by a write-version function~$w$ with the property that
$w(x)<w(y)$ whenever there is a path~$x\leadsto y$ from a write
node~$x$ to another write node~$y$.
\label{def:increasing-version}
\end{definition}

Programs have multiple passive forms, some better than others.
There are two natural notions of optimality.

\begin{definition}\index{passive form!version-optimal}
A passive form is \emph{version-optimal} when it uses as few
variable versions as possible.
\label{def:version-optimal}
\end{definition}

\begin{definition}\index{passive form!copy optimal}
A passive form is \emph{copy-optimal} when it uses as few copy
nodes as possible.
\label{def:copy-optimal}
\end{definition}

We say that \emph{a version~$i$ is used by a passive form} when
there is a write node whose write version is~$i$. This convention
simplifies the later analysis, but it requires some explanation.
Why is it OK to ignore the read version of read nodes? The read
version either is written by the program, in which case it was
already counted as the write version of another node, or it is
not written by the program. All versions that are not written
are essentially equivalent. In other words, by ignoring the read
versions of read nodes we undercount by one, if the uninitialized
variable is read by the original program. That is a small price
to pay for a reduction in the number of special cases that we
must consider in the following analysis.

Let us summarize the information in
Definitions~\ref{def:passive-form},~\ref{def:increasing-version},
\ref{def:version-optimal}, and~\ref{def:copy-optimal}. Each
program~$G$ determines a set of (distinct-version) passive forms
according to Definition~\ref{def:passive-form}; each program
determines a set of increasing-version passive forms according to
Definition~\ref{def:increasing-version}. Each increasing-version
passive form is a distinct-version passive form. Each passive
form has two associated costs---the number of versions and the
number of copy nodes. Therefore, for each program there are four
(not necessarily distinct) interesting classes of passive forms:

\begin{enumerate}
\item the version-optimal increasing-version passive forms,
\item the copy-optimal increasing-version passive forms,
\item the version-optimal distinct-version passive forms, and
\item the copy-optimal distinct-version passive forms.
\end{enumerate}

One interesting problem is whether these four classes always
overlap. In other words, is there always a passive form that is
optimal by all measures and also follows the intuitive rule that
versions do not decrease in time? But, before addressing this
question, let us first look at how good a version-optimal passive
form can be.

\begin{lemma}
The number of versions of any passive form is greater or equal to
the number of write nodes on any execution path in the original
program.
\label{lemma:min-versions}
\end{lemma}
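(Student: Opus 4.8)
The plan is to fix an execution path in the original program~$G'$, read off its write nodes, and argue that their images under the mapping~$c$ are forced to use at least that many distinct versions in the passive form~$G$. The workhorse will be Proposition~\ref{prop:distinct}, which says that two write nodes joined by a path in~$G$ must write to different versions. So the whole lemma reduces to producing, inside~$G$, a chain of write nodes that are pairwise connected by paths.

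First I would let $\pi$ be any execution path in~$G'$ and let $x_1, x_2, \ldots, x_k$ be its write nodes, listed in the order in which they occur along~$\pi$. Since the inputs to passivation are dags, no node repeats on~$\pi$, so these are $k$~distinct nodes; moreover for $i<j$ the node~$x_j$ occurs strictly after~$x_i$ on~$\pi$, hence there is at least one edge of~$G'$ between them. The goal is now to show that the $k$~numbers $w(c(x_1)), \ldots, w(c(x_k))$ are pairwise distinct.

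Next I would transport this structure into~$G$ using Definition~\ref{def:passive-form}. By clause~(2a), statement structure is preserved, each~$c(x_i)$ is again a write node of~$G$, writing version~$w(c(x_i))$. By clause~(2c), the flow structure is preserved, every edge~$n\to n'$ traversed by~$\pi$ lifts to a path~$c(n)\leadsto c(n')$ in~$G$; concatenating these lifted paths along the segment of~$\pi$ running from~$x_i$ to~$x_j$ yields a path~$c(x_i)\leadsto c(x_j)$ in~$G$ of positive length, for every pair $i<j$. Applying Proposition~\ref{prop:distinct} to this path, whose endpoints are both write nodes, gives $w(c(x_i))\ne w(c(x_j))$. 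Thus $w(c(x_1)), \ldots, w(c(x_k))$ are $k$~pairwise distinct versions, and each is used, being the write version of the write node~$c(x_i)$. Therefore the passive form uses at least~$k$ versions, which is exactly the number of write nodes on~$\pi$.

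I expect the only delicate point to be the bookkeeping in the lifting step: one must verify that stitching together the per-edge paths supplied by clause~(2c) produces a single genuine path from~$c(x_i)$ to~$c(x_j)$ (not merely pairwise reachability), and that this path has positive length so that Proposition~\ref{prop:distinct} applies even if~$c$ fails to be injective. Everything downstream of that is a direct reading of the definitions, so no further machinery should be needed.
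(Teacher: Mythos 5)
Your proof is correct and takes essentially the same approach as the paper's: fix an execution path, collect its write nodes, and apply Proposition~\ref{prop:distinct} to conclude that their write versions are pairwise distinct, hence at least that many versions are used. The only difference is that you make explicit the lifting of the path into the passive form via clause~(2c) of Definition~\ref{def:passive-form} and the non-injectivity caveat, details the paper's terse proof leaves implicit.
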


\begin{proof} 
Consider an execution path~$P$ in the original program and its
subset of write nodes~$Q\subseteq P$. For two nodes $x,y\in Q$,
we can assume that there is a path~$x\leadsto y$. According
to Proposition~\ref{prop:distinct}, their write versions are
distinct. Hence, $|w(Q)|=|Q|$. 
\end{proof}

We can now show two facts about how different types of passive
forms relate to each other---Proposition~\ref{prop:iv=>(vo!=co)}
and Proposition~\ref{prop:coDv<coIv}.

\begin{proposition}
There are programs for which an increasing-version passive
form cannot be both version-optimal and copy-optimal.
\label{prop:iv=>(vo!=co)}
\end{proposition}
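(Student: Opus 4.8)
The plan is to prove this existence statement by exhibiting a single witness program together with a genuine trade-off on it. I would take the flowgraph of Figure~\ref{fig:ex-vo-vs-co}, relabeling its seven nodes $0,\dots,6$, all of which are read--write, with edges $0\to1$, $1\to3$, $3\to5$, $0\to2$, $2\to5$, $2\to4$, $4\to6$. The decisive feature is the join at node~$5$, whose two predecessors $3$ and $2$ lie on branches $0\to1\to3$ and $0\to2$ that never meet again. I would track only the write-version function $w$, writing $w(x)$ for the version written by node~$x$.

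First I would pin down the two optima. By Lemma~\ref{lemma:min-versions}, since the path $0\to2\to4\to6$ (equally $0\to1\to3\to5$) carries four write nodes, every passive form uses at least four versions; so the version-optimum is~$4$. I would show it is attained by an \emph{increasing-version} form (Definition~\ref{def:increasing-version}) that reuses version numbers across the two branches: set $w(0)=0$ and assign $w(1),w(3),w(5)=1,2,3$ and $w(2),w(4),w(6)=1,2,3$. All chains then strictly increase, but at node~$5$ the predecessor~$3$ writes version~$2$ while predecessor~$2$ writes version~$1$, so by clause~(f) of Definition~\ref{def:passive-form} a single read version is impossible without reconciliation; inserting one copy node $v_2:=v_1$ on the edge $2\to5$ (a write node of version~$2$) fixes this, giving a valid increasing-version form with $4$ versions and exactly one copy. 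For copies, zero is trivially minimal, and I would exhibit a zero-copy increasing-version form by setting $w=0,1,2,2,3,4$ on nodes $0,1,2,\{3\},4,6$ with $w(5)=3$ and $w(3)=w(2)=2$; then node~$5$ reads version~$2$ from both predecessors, every condition of Definition~\ref{def:passive-form} holds, and it uses five versions. Hence the copy-optimum is~$0$, realized by an increasing-version form.

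The crux is the lower-bound direction: any zero-copy increasing-version form of this program needs at least five versions, so it cannot be version-optimal. Here is where the argument concentrates. With no copy nodes, the set $C=V-c(V')$ is empty, so by clause~(c) the flowgraphs of $G$ and $G'$ coincide and nodes $3$ and $2$ remain the immediate write-predecessors of~$5$; clause~(f) then forces $r(5)=w(3)$ and $r(5)=w(2)$, hence $w(2)=w(3)$. Calling this common value~$a$, the increasing-version condition applied to the two otherwise-independent chains gives $w(0)<w(1)<w(3)=a$ and $a=w(2)<w(4)<w(6)$, which splice into one strictly increasing chain $w(0)<w(1)<a<w(4)<w(6)$ of five distinct versions. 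Combining the three facts, every version-optimal ($4$-version) increasing form must use at least one copy, while every copy-optimal ($0$-copy) increasing form uses at least five versions; therefore no increasing-version passive form of this program is simultaneously version-optimal and copy-optimal, which proves the proposition. I expect the forcing step $w(2)=w(3)$ from $C=\varnothing$, and the splicing of the two increasing chains through their shared value into a length-five chain, to be the only genuinely delicate points; the two explicit constructions and the version lower bound from Lemma~\ref{lemma:min-versions} are routine verifications.
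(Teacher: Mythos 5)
Your proposal is correct and takes essentially the same route as the paper: the same witness flowgraph (Figure~\ref{fig:ex-vo-vs-co}), explicit forms that coincide exactly with Figures~\ref{fig:vo-vs-co-sol1} and~\ref{fig:vo-vs-co-sol2} (four versions with one copy $v_2:=v_1$ on the edge $2\to5$, and five versions with zero copies), the lower bound of four versions from Lemma~\ref{lemma:min-versions}, and the identical key step that a zero-copy increasing-version form forces $w(3)=r(5)=w(2)$ via clause~(f), splicing $w(0)<w(1)<w(3)$ and $w(2)<w(4)<w(6)$ into five distinct versions. The only (harmless) difference is that you make explicit the $C=V-c(V')=\varnothing$ justification for why nodes $2$ and $3$ remain the write-predecessors of node~$5$, which the paper leaves implicit.
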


\begin{proof} 
One such program has the flowgraph depicted in
Figure~\ref{fig:ex-vo-vs-co} on page~\pageref{fig:ex-vo-vs-co}.
A version-optimal passive form appears in
Figure~\ref{fig:vo-vs-co-sol1} and a copy-optimal passive form
appears in Figure~\ref{fig:vo-vs-co-sol2}. In these drawings,
each node~$x$ is labeled with~$x\sqatop{r(x)}{w(x)}$.
For example, the label $0\sqatop{-1}{0}$ is put on
node~$0$, which has read version~$-1$ and write version~$0$.

It is easy to see that these solutions are optimal. According
to Lemma~\ref{lemma:min-versions} the number of versions in any
passive form is $\ge4$ and Figure~\ref{fig:vo-vs-co-sol1} uses
$4$~versions. Figure~\ref{fig:vo-vs-co-sol2} uses $0$~copy nodes.

It remains to show that there is no optimal increasing-version
passive form for this program that uses $4$~versions and $0$~copy
nodes. We do this by showing that all increasing-version passive
forms that have no copy node must use $\ge5$ versions. According
to Definition~\ref{def:increasing-version}, $w(0)<w(1)<w(3)$
and $w(2)<w(4)<w(6)$. Because there are no copy operations, the
Definition~\ref{def:passive-form} gives us $w(3)=r(5)=w(2)$.
Hence, the $5$ used versions $w(0)$,~$w(1)$, $w(3)=w(2)$, $w(4)$,
and~$w(6)$ are distinct.
\end{proof}

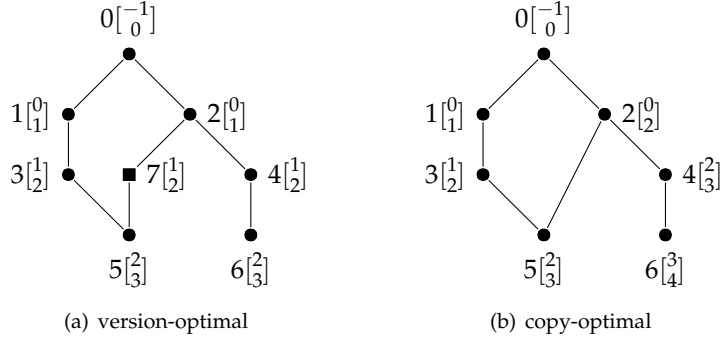
\begin{figure}\centering
\subfigure[version-optimal]{\label{fig:vo-vs-co-sol1}
\begin{tikzpicture}[scale=.8]
  \foreach \n/\x/\y/\p/\r/\w in {
    0/1/3/above/-1/0, 
    1/0/2/left/0/1,
    2/2/2/right/0/1,
    3/0/1/left/1/2,
    4/3/1/right/1/2,
    5/1/0/below/2/3,
    6/3/0/below/2/3}
  \rwnode (\n) at (\x,\y) [label=\p:$\n\sqatop{\r}{\w}$] {};
  \cnode (7) at (1,1) [label=right:$7\sqatop{1}{2}$] {};
  \foreach \x/\y in {
    0/1,1/3,3/5,
    0/2,2/7,7/5,
    2/4,4/6}
  \draw (\x)--(\y);
\end{tikzpicture}}\hfil
\subfigure[copy-optimal]{\label{fig:vo-vs-co-sol2}
\begin{tikzpicture}[scale=.8]
  \foreach \n/\x/\y/\p/\r/\w in {
    0/1/3/above/-1/0, 
    1/0/2/left/0/1,
    2/2/2/right/0/2,
    3/0/1/left/1/2,
    4/3/1/right/2/3,
    5/1/0/below/2/3,
    6/3/0/below/3/4}
  \rwnode (\n) at (\x,\y) [label=\p:{$\n\sqatop{\r}{\w}$}] {};
  \foreach \x/\y in {
    0/1,1/3,3/5,
    0/2,2/5,
    2/4,4/6}
  \draw (\x)--(\y);
\end{tikzpicture}}
\caption{Optimal increasing-version passive forms for Figure~\ref{fig:ex-vo-vs-co}}
\label{fig:vo-vs-co-sol}
\end{figure}

The second fact justifies the \emph{unintuitive} choice to allow
versions to decrease in time.

\begin{proposition}
There are programs for which copy-optimal distinct-version
passive forms use strictly fewer copy nodes than copy-optimal
increasing-version passive forms.
\label{prop:coDv<coIv}
\end{proposition}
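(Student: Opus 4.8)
The plan is to exhibit the single flowgraph drawn in Figure~\ref{fig:ex-iv-vs-dv} as a witness. Decoding the dotted-edge convention, this flowgraph has five write-only nodes $0,1,2,3,4$ and two read-only nodes, say $5$ and $6$; its edges are $0\to1$, $1\to2$, $0\to3$, $3\to4$ together with $1\to5$, $4\to5$ (feeding the read node $5$) and $2\to6$, $3\to6$ (feeding the read node $6$). The two branches $0\to1\to2$ and $0\to3\to4$ never meet, and each read node collects one write from each branch. I will show that a copy-optimal distinct-version passive form of this graph uses $0$ copy nodes while every increasing-version passive form needs at least one.

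First I would construct an explicit distinct-version passive form with no copy nodes. Take $w(0)=0$, $w(1)=w(4)=1$, and $w(2)=w(3)=2$, with read versions $r(5)=1$ and $r(6)=2$. Since nodes $1$ and $4$ lie on disjoint branches (no execution path contains both), and likewise $2$ and $3$, reusing a version number violates neither Definition~\ref{def:passive} nor Proposition~\ref{prop:distinct}: along $0\to1\to2$ the versions $0,1,2$ are distinct, and along $0\to3\to4$ the versions $0,2,1$ are distinct. Because $w(1)=w(4)=r(5)$ and $w(2)=w(3)=r(6)$, requirement~(f) of Definition~\ref{def:passive-form} is satisfied with the edges kept direct, so no copy statements are required. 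This establishes that the copy-optimal distinct-version passive form uses $0$ copies (and, incidentally, only $3$ versions, which is version-optimal by Lemma~\ref{lemma:min-versions}).

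Next I would prove the lower bound for increasing-version forms. Suppose toward a contradiction that some increasing-version passive form uses no copy nodes, so $C=\emptyset$. Then by requirement~(c) every original edge is realized as a direct edge, and in particular $1\to5$, $4\to5$, $2\to6$, $3\to6$ survive unmediated. Applying requirement~(f) to the immediate write predecessors of the two read nodes forces $w(1)=r(5)=w(4)$ and $w(2)=r(6)=w(3)$. But the write-to-write paths $1\leadsto2$ and $3\leadsto4$ impose $w(1)<w(2)$ and $w(3)<w(4)$ by Definition~\ref{def:increasing-version}, whence $w(1)<w(2)=w(3)<w(4)$ and therefore $w(1)<w(4)$, contradicting $w(1)=w(4)$. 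Hence every increasing-version passive form of this flowgraph has at least one copy node, and $0<1$ yields the strict inequality claimed.

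The main obstacle is the lower bound, specifically justifying that the absence of copy nodes really does force the requirement-(f) equalities at the merge nodes: I must lean on the flow-preservation requirement~(c) to conclude that, with $C=\emptyset$, the incoming edges of each read node are genuine edges of $G$ with no intervening writes, so that the ``latest written version'' seen along each incoming branch is exactly the write version of its immediate predecessor. Once that is pinned down the contradiction with the strict monotonicity of increasing versions is immediate. Allowing copies in the distinct-version form is precisely what buys the freedom to let the version \emph{decrease} from $2$ to $1$ along the right branch, which is the intuitive content this proposition is meant to expose.
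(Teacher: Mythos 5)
Your proof is correct and follows essentially the same route as the paper: the same witness flowgraph from Figure~\ref{fig:ex-iv-vs-dv}, the identical zero-copy distinct-version assignment ($w(1)=w(4)=1$, $w(2)=w(3)=2$, matching Figure~\ref{fig:iv-vs-dv-sol}), and the same contradiction $w(1)<w(2)=w(3)<w(4)=w(1)$ for the increasing-version case. Your only addition is to make explicit, via requirement~(c) of Definition~\ref{def:passive-form}, why $C=\emptyset$ forces the requirement-(f) equalities $w(1)=r(5)=w(4)$ and $w(2)=r(6)=w(3)$ at the read nodes---a step the paper states without elaboration.
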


\begin{proof}
One such program has the flowgraph depicted in
Figure~\ref{fig:ex-iv-vs-dv} on page~\pageref{fig:ex-iv-vs-dv}.
A distinct-version passive form with no copy nodes appears
in Figure~\ref{fig:iv-vs-dv-sol}. (Note that the two nodes
represented earlier with dotted lines are now explicit.)

It remains to show that there is no increasing-version
passive form than uses no copy node. Suppose that
there is one. Then, $w(1)=w(4)$ and $w(2)=w(3)$. But
Definition~\ref{def:increasing-version} tells us that $w(1)<w(2)$
and $w(3)<w(4)$. Contradiction.
\end{proof}

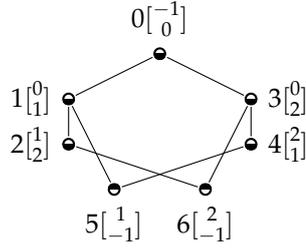
\begin{figure}\centering
\begin{tikzpicture}[scale=.6]
  \foreach \n/\x/\y/\p/\r/\w in {
    0/2/3/above/-1/0,
    1/0/2/left/0/1,
    2/0/1/left/1/2,
    3/4/2/right/0/2,
    4/4/1/right/2/1}
    \wonode (\n) at (\x,\y) [label=\p:$\n\sqatop{\r}{\w}$] {};
  \foreach \n/\x/\r in {
    5/1/1,
    6/3/2}
    \ronode (\n) at (\x,0) [label=below:$\n\sqatop{\r}{-1}$] {};
  \foreach \f/\t in {0/1,1/5,1/2,2/6,0/3,3/6,3/4,4/5}
    \draw (\f)--(\t);
\end{tikzpicture}
\caption{A copy-optimal passive form for Figure~\ref{fig:ex-iv-vs-dv}}
\label{fig:iv-vs-dv-sol}
\end{figure}

\section{The Version-Optimal Passive Form}
\label{sec:version-optimal}

\subsection{The Algorithm}

Proposition~\ref{prop:coDv<coIv} says that it is beneficial
to allow versions to decrease in time if we are looking for a
copy-optimal passive form. This section shows that this is not
the case if the objective is a version-optimal passive form. The
proof is constructive---an algorithm that finds a version-optimal
increasing-version passive form that is as good as any passive
form can be.

The algorithm from Figure~\ref{fig:pass-algo} is \emph{the
best practical solution for the passivation problem}.
It is very similar to the algorithm of Flanagan and
Saxe~\cite{flanagan2001passive}, being changed so that it
always finds a version-optimal passive form. The method
$\mathit{VersionOptimalPassiveForm}(G)$ modifies the program~$G$
into one of its passive forms. (The real implementation in
FreeBoogie transforms the program without mutating it, as was
explained in Section~\ref{sec:visitors}.)

\begin{figure}\centering
\leavevmode\vbox{
\begin{alg}
\^  $\proc{Read}(y)$  \comment memoized
\=  $r:=-1$ 
\=  ~for~ ~each~ predecessor $x$ of $y$
\+    $r:=\max\bigl(r,\mathit{Write}(x)\bigr)$
\0  ~return~ $r$
\end{alg}
\bigskip
\begin{alg}
\^  $\proc{Write}(y)$  \comment memoized
\=  $r:=\mathit{Read}(y)$
\=  ~if~ $y$ is a write node 
\+    $r:=r+1$
\0  ~return~ $r$
\end{alg}
\bigskip
\begin{alg}
\^  $\proc{VersionOptimalPassiveForm}(G)$
\=  ~for~ ~each~ node $x$ of $G$
\+    substitute reads of $v$ by reads from version $\mathit{Read}(x)$
\=    and writes of $v$ by writes to version $\mathit{Write}(x)$
\0  ~for~ ~each~ edge $x\to y$ of $G$
\+    ~if~ $\mathit{Write}(x)\ne\mathit{Read}(y)$
\+      create a new node $z:=\big(v_{\mathit{Read}(y)}:=v_{\mathit{Write}(x)}\big)$
\=      remove the edge $x\to y$
\=      add edges $x\to z$ and $z\to y$
\end{alg}}
\caption{A practical algorithm for finding a passive form}
\label{fig:pass-algo}
\end{figure}

Let us see, slowly, why this algorithm is correct. The
mapping~$c:V'\to V$ from nodes in the original program to nodes
in its passive form is $c(x)=x$. The method $\mathit{Read}(y)$
computes the read-version~$r(y)$, but only for nodes~$y$
that exist in the original graph; similarly, the method
$\mathit{Write}(y)$ computes the write-version~$w(y)$, but only
for nodes~$y$ that exist in the original graph. With a different
notation, the read-version function and the write-version
function are defined to be
\begin{align}
r(y) &= 
  \begin{cases}
  \max_{x\to y} w(x) & \text{if $y$ has predecessors}\\
  -1 & \text{otherwise}
  \end{cases}, \label{eq:rvo} \\
w(y) &= r(y) + [\text{node~$y$ is a write node}]. \label{eq:wvo}
\end{align}
If we unfold~$r$,
\begin{equation}
w(y) = [\text{$y$ writes}] +
  \begin{cases}
  \max_{x\to y} w(x) & \text{if $y\ne0$} \\
  -1 & \text{if $y=0$}
  \end{cases},
\end{equation}
it becomes clear that $w(y)+1$ is the maximum number
of write nodes on a path~$0\leadsto y$. It follows that $r(y)$
and~$w(y)$ computed by \eqref{eq:rvo} and~\eqref{eq:wvo} are in
the range~$-1\dts n-1$, where $n$~is the maximum number of write
nodes on an execution path. Moreover, $w(y)$ can be~$-1$ only if
$y$~is not a write node. So far we know that for write nodes~$y$
in the original graph, $w(y)$~is in the range~$0\dts n-1$.

The output graph also has copy nodes~$z$ created on line~6 of
\textit{VersionOptimalPassiveForm}. Each such node corresponds
to an edge~$x\to y$ in the original graph and has $w(z)=r(y)$.
Hence, the version written by copy nodes is also in the
range~$0\dts n-1$. (If $r(y)=-1$ then $w(x')=-1$ for all
predecessors~$x'$ of~$y$, including~$x$, so there would be no
copy node created.)

It follows immediately from Lemma~\ref{lemma:min-versions} that
no passive form has a number of versions~$<n$, so we proved that
if the output of \textit{VersionOptimalPassiveForm} is indeed a
passive form then it must be version-optimal. Most of the conditions
imposed by Definition~\ref{def:passive-form} on page~\pageref{def:passive-form}
are easy to check.
\begin{enumerate}
\item \emph{Passive:}
  The write version increases at every write node, so it cannot
  be the same as the write version of any preceding node.
\item
  \begin{enumerate}
  \item \emph{Statement structure is preserved}:
    The statement structure is only modified by lines~2--3.
  \item \emph{The new nodes are copy statements}:
    New nodes are only created on line~6.
  \item \emph{The flow structure is preserved}:
    Edges are modified only on lines~7--8. Replacing~$x\to y$
    by $x\to z\to y$ maintains paths, because node~$z$ has no
    other adjacent edges.
  \item \emph{The initial node is preserved}:
    We have $c(x)=x$ for all nodes~$x$ in the original graph.
  \item \emph{Writes and reads are confined}:
    For existing nodes, see lines~2--3. For each copy
    node we can simply choose the read version and the
    write version appropriately.
  \item \emph{The read version is always the latest written version}:
    For each edge~$x\to y$ we have $w(x)=r(y)$. This is true
    even for edges adjacent to copy nodes. Also $w(x)=w(y)$ if
    node~$y$ is not a write node. If there is a path $x_0\to
    x_1\to\cdots\to x_n\to x_{n+1}$, and $x_1,\ldots,x_n$ are not
    write nodes then $w(x_0)=w(x_1)=\cdots=w(x_n)=r(x_{n+1})$.
  \end{enumerate}
\end{enumerate}

We proved the following theorem.

\begin{theorem}
The algorithm in Figure~\ref{fig:pass-algo} constructs a
version-optimal (distinct-version) passive form, which is also an
increasing-version passive form.
\label{th:pass-algo}
\end{theorem}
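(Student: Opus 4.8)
The plan is to exhibit witnessing functions $c$, $w$, $r$ and then to check, in turn, the three separate assertions: that the output is a (distinct-version) passive form, that it is version-optimal, and that it is an increasing-version passive form. I would take $c(x)=x$ on the original nodes, let $r$ and $w$ be exactly the read- and write-version functions computed by $\mathit{Read}$ and $\mathit{Write}$, and extend them to each copy node $z=(v_{r(y)}:=v_{w(x)})$ inserted on an edge $x\to y$ by setting $r(z)=w(x)$ and $w(z)=r(y)$. Because every input to passivation is a dag, the memoized recursion in Figure~\ref{fig:pass-algo} terminates and $r$, $w$ are well defined by \eqref{eq:rvo} and \eqref{eq:wvo}.

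For version-optimality I would first unfold the recursion to show that $w(y)+1$ is the maximum number of write nodes on a path $0\leadsto y$; this is a routine induction along the dag. It follows that every version lies in $-1\dts n-1$, where $n$ is the maximum number of write nodes on an execution path, and that $w(y)=-1$ forces $y$ to be a non-write node, so the versions actually written range only over $0\dts n-1$: at most $n$ of them. Lemma~\ref{lemma:min-versions} supplies the matching lower bound---any passive form needs at least $n$ versions---so once the output is known to be a passive form it is automatically version-optimal.

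Checking the clauses of Definition~\ref{def:passive-form} is mostly bookkeeping keyed to specific lines of the algorithm: statement structure is touched only by lines 2--3, the only new nodes are the copies created on line 6, and replacing $x\to y$ by $x\to z\to y$ on lines 7--8 preserves the flow structure since $z$ has no other incident edges. We have $c(0)=0$, and confinement of reads and writes holds by construction. Passivity (condition 1) follows because the write version strictly increases at every write node, so two write nodes on a common path cannot share a version, which is exactly Proposition~\ref{prop:distinct}.

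The delicate condition is 2f, that a read always sees the latest written version. The key invariant is that $w(x)=r(y)$ on every edge $x\to y$ of the output graph: an untouched edge satisfies it by the very test on line 5, and the two edges $x\to z$ and $z\to y$ around a copy node satisfy it by our choices $r(z)=w(x)$ and $w(z)=r(y)$. Chaining this invariant through a run of non-write interior nodes, at which $w$ and $r$ coincide, gives $w(x)=r(y)$ for the endpoints, which is precisely 2f. The same invariant yields the increasing-version property: on each edge $w(x)=r(y)\le w(y)$, so $w$ is non-decreasing, and at a write node $b$ reached from a write node $a$ we have $w(b)=r(b)+1>r(b)=w(\text{predecessor of }b)\ge w(a)$, hence $w(a)<w(b)$ as Definition~\ref{def:increasing-version} requires. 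I expect condition 2f together with this monotonicity bookkeeping to be the main obstacle, since it is the only place where the copy-insertion step and the confinement rules genuinely interact.
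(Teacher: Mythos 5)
Your proposal is correct and follows essentially the same route as the paper's proof: identity mapping $c$, the read/write versions from the memoized recursions extended to copy nodes by $r(z)=w(x)$ and $w(z)=r(y)$, the observation that $w(y)+1$ is the maximum number of write nodes on a path $0\leadsto y$ combined with Lemma~\ref{lemma:min-versions} for version-optimality, line-by-line verification of Definition~\ref{def:passive-form} with the edge invariant $w(x)=r(y)$ doing the work for condition~(f). The only difference is cosmetic: you spell out the monotonicity argument for the increasing-version claim, which the paper leaves implicit in its remark that the write version strictly increases at every write node.
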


The algorithm is fast. The first loop of \textit{VersionOptimalPassiveForm}
(lines~2--3) is executed once for every node~$x$ in~$V$; the second loop
(line~5) is executed $|E|$~times, where $E$~is the set of edges in the
flowgraph. The substitutions performed on lines 2 and~3 can be done in time
proportional to the AST size~$|x|$ of the statement~$x$. Because of
memoization, the bodies of \textit{Read} and \textit{Write} are executed at
most $|V|$~times each. One way to memoize the two methods (that is, to
cache their results) is to have two integer fields in the data-structure
representing a flowgraph node. A sentinel value, say~$-2$, would represent
`not yet computed', while any other value would mean that the computation
was already done. These two memory cells per node are the only significant
memory used by the algorithm, apart from the memory used to represent the
input and the output.

\begin{proposition}
The algorithm in Figure~\ref{fig:pass-algo} on
page~\pageref{fig:pass-algo} uses $\Theta(|E|+\sum_{x\in V}|x|)$
time and $\Theta(|V|)$~temporary space.
\label{prop:pass-algo-complexity}
\end{proposition}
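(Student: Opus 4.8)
The plan is to bound time and temporary space separately, in each case establishing matching upper and lower bounds so as to obtain a $\Theta$.

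For the time upper bound I would first handle the two memoized routines \textit{Read} and \textit{Write}. Memoization guarantees that the body of each routine runs at most once per node, so there are at most $|V|$ executions of each body. The body of \textit{Write} does constant work apart from its single call to \textit{Read}. The body of \textit{Read}$(y)$ loops over the predecessors of $y$, doing constant work per predecessor (a comparison and a call to \textit{Write}$(x)$ that, after memoization, costs $O(1)$). Summing the predecessor loops over all nodes gives $\sum_{y}\mathrm{indeg}(y)=|E|$ iterations, so the total cost of all \textit{Read}/\textit{Write} activity is $O(|V|+|E|)$. I would then account for \textit{VersionOptimalPassiveForm} itself: its first loop runs once per node and performs a substitution on statement $x$ in time $\Theta(|x|)$, contributing $\Theta\bigl(\sum_{x\in V}|x|\bigr)$; its second loop runs once per edge and does $O(1)$ work (an integer comparison and, at most, the creation of one copy node with a constant number of edge updates), contributing $\Theta(|E|)$. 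Since every \textit{Read} and \textit{Write} value is already cached by the time the second loop runs, no recomputation happens there. Adding the three contributions, and using $\sum_{x\in V}|x|\ge|V|$, yields an upper bound of $O\bigl(|E|+\sum_{x\in V}|x|\bigr)$.

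For the time lower bound I would simply observe that the input itself has size $\Omega\bigl(|E|+\sum_{x\in V}|x|\bigr)$ — it lists every edge and the full text of every statement — so any algorithm that reads its input, as this one does, needs at least that much time. This matches the upper bound, giving time $\Theta\bigl(|E|+\sum_{x\in V}|x|\bigr)$. For the space claim, the temporary storage is dominated by the memoization: as the surrounding text suggests, two integer cells per node suffice, for $O(|V|)$ space. The only other transient storage is the recursion stack of the mutually recursive \textit{Read}/\textit{Write} calls; since the input flowgraph is a dag, the recursion follows edges backward toward node~$0$ and its depth is bounded by the length of the longest path ending at the current node, hence by $|V|$, so the stack is also $O(|V|)$. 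Because the first loop invokes \textit{Read} and \textit{Write} on every node, all $|V|$ memoization cells are actually touched, giving $\Omega(|V|)$ and therefore $\Theta(|V|)$ temporary space, with input and output excluded from this count.

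The main obstacle is the careful amortized accounting for the mutually recursive, memoized routines: one must argue both that memoization collapses the seemingly repeated recursive calls into $O(1)$ amortized cost and that the predecessor loops, summed over all nodes, contribute exactly $|E|$ rather than something larger. Establishing the $O(|V|)$ bound on the recursion depth likewise relies essentially on the acyclicity of the input flowgraph.
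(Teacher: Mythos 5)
Your proof is correct and follows essentially the same accounting as the paper's (which appears as prose immediately before the proposition): the first loop costs $\Theta\bigl(\sum_{x\in V}|x|\bigr)$ for substitutions, the second loop runs $|E|$ times, memoization ensures the bodies of \textit{Read} and \textit{Write} execute at most once per node, and two integer cells per node realize the cache. Your extra care --- summing the predecessor loops to exactly $|E|$, bounding the recursion stack depth via acyclicity, and supplying explicit lower bounds to justify the $\Theta$ --- fills in details the paper glosses over without changing the approach.
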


If the data-structure for representing a flowgraph
node is immutable, then memoization can be implemented
using a hashtable, in which case the time bound in
Proposition~\ref{prop:pass-algo-complexity} is true with high
probability, but not always. Also, the constant hidden by the
space bound is bigger.

The input occupies $\Theta(|E|+\sum_{x\in V}|x|)$ space and
only $O(|E|+\sum_{x\in V}|x|)$ extra space can be allocated in
$\Theta(|E|+\sum_{x\in V}|x|)$~time, so the output must occupy,
asymptotically, the same amount of space as the input. In fact,
the difference in size is accounted entirely by the copy nodes.

\begin{proposition}
The algorithm in Figure~\ref{fig:pass-algo} creates $O(|E|)$~copy
nodes.
\label{prop:pass-algo-result}
\end{proposition}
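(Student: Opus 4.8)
The plan is to read the bound directly off the structure of the algorithm in Figure~\ref{fig:pass-algo}, since copy nodes are manufactured at exactly one place in the code. First I would observe that neither the auxiliary routines \textit{Read} and \textit{Write} nor the first loop of \textit{VersionOptimalPassiveForm} (lines~2--3) ever create a graph node: \textit{Read} and \textit{Write} only compute integer versions, and the first loop only substitutes reads and writes inside existing statements. Hence every copy node in the output is created on line~6, inside the second loop.

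Next I would bound the number of times line~6 executes. The second loop (headed at line~5) iterates once for each edge $x\to y$ of the original flowgraph~$G$, so its body runs exactly $|E|$ times, where $E$ is the set of edges. Within a single iteration the guard $\mathit{Write}(x)\ne\mathit{Read}(y)$ is tested, and line~6 creates a new copy node~$z$ only when the guard holds; thus each iteration produces \emph{at most one} copy node. Summing over all iterations, the total number of copy nodes created is at most~$|E|$, which is $O(|E|)$ as claimed.

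There is essentially no hard part here; the statement is an immediate consequence of the fact that copy-node creation is driven by a single per-edge loop. The only point that deserves a one-line remark is that some edges contribute no copy node at all---precisely those for which $\mathit{Write}(x)=\mathit{Read}(y)$---so $|E|$ is an upper bound rather than an exact count. (Indeed, the surrounding text already notes that the size difference between input and output is accounted for entirely by the copy nodes, which is consistent with this linear bound.)
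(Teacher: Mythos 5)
Your proof is correct and matches the paper's own (largely implicit) justification: the paper likewise observes that the second loop of \textit{VersionOptimalPassiveForm} executes once per edge, so line~6 can create at most one copy node per edge, giving the $O(|E|)$ bound. Your added remark that edges with $\mathit{Write}(x)=\mathit{Read}(y)$ contribute no copy node is consistent with the paper's observation that the size difference between input and output is accounted for entirely by the copy nodes.
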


\begin{example}
If we apply this algorithm to the program in
Figure~\ref{fig:diamonds-seq} we obtain the result in
Figure~\ref{fig:diamonds-passive}, which is both copy-optimal
(since it has no copy nodes) and version-optimal (by
Theorem~\ref{th:pass-algo}). It is now trivial to get
rid of assignments: Just replace each assignment~$u:=e$
by the assumption \textbf{assume}~$u=e$ (see
Figure~\ref{fig:diamonds-noasgn}). The size of the weakest
precondition of the initial node is now linear in the size of the
program. The weakest precondition consists of (1)~the expressions
present in the program, (2)~at most one new $\land$ for each
assertion, (3)~at most one new $\limp$ for each assumption, and
(4)~at most one new $\land$ for each flowgraph node with multiple
successors. Figure~\ref{fig:diamonds-vc} shows the weakest
precondition SMT tree for the case~$n=2$.
\label{ex:diamonds-passive}
\end{example}

\begin{figure}\centering
\def\fig#1{
  \footnotesize
  \begin{tikzpicture}[yscale=0.6,xscale=#1]
    \newcount\vertpos\vertpos=9
    \def\n##1##2(##3,##4)##5{\node[draw,rectangle,inner sep=2pt] (##1##2) at (##3,##4) {##5};}
    \def\lr##1##2##3{\advance\vertpos by-1
      \n e##3(-2.5,\the\vertpos){\stmt{u_{##2}}{e_{##3}(u_{##1})}}
      \n f##3(2.5,\the\vertpos){\stmt{u_{##2}}{f_{##3}(u_{##1})}} }
    \def\m##1##2{\advance\vertpos by-1 \n p##2(0,\the\vertpos)%
      {\textbf{assert}~$p_{##2}(u_{##1})$}}
    
    \m{-1}0\lr{-1}01\m01\lr012\m12\advance\vertpos by-1
    \m{n-2}{n-1}\lr{n-2}{n-1}n\m{n-1}{n}
    \node at (0,3) {$\vdots$};

    \draw[arr] (p0) -- (e1); \draw[arr] (p0) -- (f1);
    \foreach \x in {1, 2, n} {
      \draw[arr] (e\x) -- (p\x); \draw[arr] (f\x) -- (p\x);
    }
    \draw[arr] (p1) -- (e2); \draw[arr] (p1) -- (f2);
    \draw[arr] (pn-1) -- (en); \draw[arr] (pn-1) -- (fn);
  \end{tikzpicture}
}
\subfigure[passive form]{
  \label{fig:diamonds-passive}
  \def\stmt#1#2{$#1:=#2$}\fig{0.5}}
\hfil
\subfigure[no assignments form]{
  \label{fig:diamonds-noasgn}
  \def\stmt#1#2{\textbf{assume}~$#1=#2$}\fig{0.7}}
\caption{Diamonds that do not explode}
\label{fig:nonexploding-diamonds}
\end{figure}

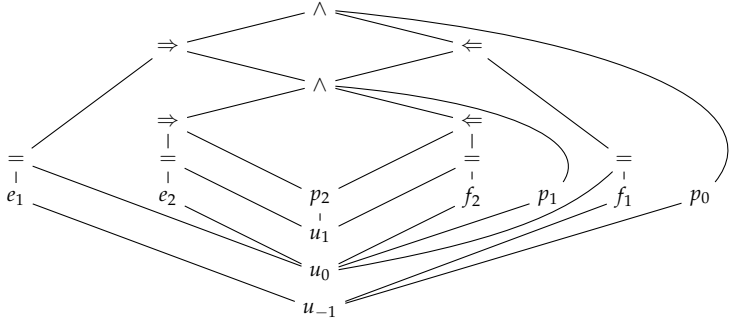
\begin{figure}\centering
\begin{tikzpicture}[xscale=2,yscale=.5]
  \footnotesize
  \node (e1) at (-2,0) {$e_1$};
  \node (e2) at (-1,0) {$e_2$};
  \node (p2) at (0,0) {$p_2$};
  \node (f2) at (1,0) {$f_2$};
  \node (f1) at (2,0) {$f_1$};
  \node (p1) at (1.5,0) {$p_1$};
  \node (p0) at (2.5,0) {$p_0$};
  \node (u1) at (0,-1) {$u_1$};
  \node (u0) at (0,-2) {$u_0$};
  \node (u-1) at (0,-3) {$u_{-1}$};
  \node (eq-2) at (-2,1) {$=$};
  \node (eq-1) at (-1,1) {$=$};
  \node (eq+1) at (1,1) {$=$};
  \node (eq+2) at (2,1) {$=$};
  \node (imp-2) at (-1,4) {$\limp$};
  \node (imp-1) at (-1,2) {$\limp$};
  \node (exp+1) at (1,2) {$\Leftarrow$};
  \node (exp+2) at (1,4) {$\Leftarrow$};
  \node (and3) at (0,3) {$\land$};
  \node (and5) at (0,5) {$\land$};
  \foreach \m/\n in {
    e1/u-1, f1/u-1, p0/u-1,
    eq-2/u0, e2/u0, f2/u0, p1/u0,
    eq-1/u1, p2/u1, eq+1/u1,
    eq-2/e1, eq-1/e2, imp-1/p2, exp+1/p2, eq+1/f2, eq+2/f1,
    imp-2/eq-2, imp-1/eq-1, exp+1/eq+1, exp+2/eq+2,
    and3/imp-1, and3/exp+1,
    imp-2/and3, exp+2/and3,
    and5/imp-2, and5/exp+2}
    \draw[-] (\m)--(\n);
  \draw[-] (and3) to [bend left=40] (p1);
  \draw[-] (and5) to [bend left=40] (p0);
  \draw[-] (eq+2) to [bend left=20] (u0);
\end{tikzpicture}
\caption{The weakest precondition of two diamonds 
  (see Figure~\ref{fig:diamonds-seq})}
\label{fig:diamonds-vc}
\end{figure}

The bounds given by Propositions
\ref{prop:pass-algo-complexity}~and~\ref{prop:pass-algo-result} describe
passivation with respect to one variable. If there are $k$ variables, then
the bounds need to be multiplied by~$k$, in general. In core Boogie,
however, statements write to at most one variable. Since copy nodes for
variable~$v$ are inserted only on edges going to a statement writing to
variable~$v$, we can conclude the following that the \emph{total} number of
copy nodes is~$O(|E|)$.

The passive form computed for Example~\ref{ex:diamonds-passive} happened to be
copy-optimal. In general, this is not the case. In fact, sometimes the
algorithm introduces clearly redundant copy nodes. For example, suppose that a
read node~$x$ has three predecessors with the write-versions~$0$,~$0$, and~$1$,
respectively. The algorithm presented so far will insert two copy nodes
$v_1:=v_0$ between the predecessors with write-version~$0$ and node~$x$, but
only one is enough---copy nodes can be fused as long as the flowgraph structure
is preserved (condition~(c) of Definition~\ref{def:passive-form} on
page~\pageref{def:passive-form}). It is easy to modify the algorithm such that
copy nodes inserted before a given node~$x$ do not repeat. For example, one
could keep a hashtable that maps the triple (node~$x$, version~$i$,
version~$j$) to a copy node~$v_i:=v_j$ that gets reused if it is already in the
set.  (The technique is similar to hash-consing.)
 
\subsection{Experiments}
\label{sec:passivate-empiric}

Flanagan and Saxe~\cite{flanagan2001passive} did not try to
produce an optimal passive form. Still, their algorithm is very
similar to the one presented here, so it is interesting to see
how it compares. 

Microsoft Research released a set of Boogie
programs~\cite{boogiebench} to serve as a basis for comparing
program verifiers. The algorithm of Flanagan and Saxe,
however, does not handle the \textbf{goto} statement, which
is used extensively in the Boogie benchmark. Luckily, it
turns out that the \textbf{goto} statement is used in an
interesting way only in $16\%$ of the implementations in
the Boogie benchmark. In the other $1070$ implementations,
the flowgraphs are series--parallel, which means that they
correspond directly to programs that use only sequential
composition and nondeterministic choice for flow control. The
FreeBoogie implementation of the algorithm of Flanagan and
Saxe simply refuses to give an answer if the flowgraph is not
series--parallel.

Figure~\ref{fig:exp-version} compares the results of the two
algorithms. A variable for which Flanagan and Saxe introduce
$m$~versions when $n$~versions are enough contributes to
the disc centered at point~$(m,n)$. The more variables are
in this situation, the bigger the disc. (Its radius varies
logarithmically.) The plot summarizes the result of passivating
$1177$~variables.

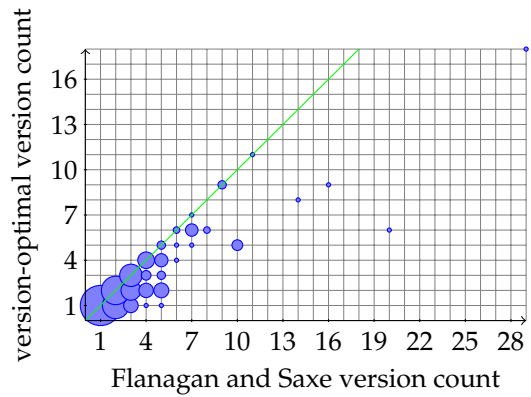
\begin{figure}\centering
  \begin{tikzpicture}[scale=0.2]
    \tikzstyle{datanode}=[circle,fill=blue!50,draw=blue]
    \draw[very thin,color=gray] (0,0) grid (29,18);
    \node[datanode,inner sep=5.356] at (1,1) {};
\node[datanode,inner sep=0.552] at (11,11) {};
\node[datanode,inner sep=3.516] at (2,1) {};
\node[datanode,inner sep=1.912] at (3,1) {};
\node[datanode,inner sep=0.552] at (4,1) {};
\node[datanode,inner sep=0.552] at (5,1) {};
\node[datanode,inner sep=0.552] at (29,18) {};
\node[datanode,inner sep=3.8] at (2,2) {};
\node[datanode,inner sep=2.628] at (3,2) {};
\node[datanode,inner sep=1.912] at (4,2) {};
\node[datanode,inner sep=1.98] at (5,2) {};
\node[datanode,inner sep=2.98] at (3,3) {};
\node[datanode,inner sep=1.284] at (4,3) {};
\node[datanode,inner sep=1.104] at (5,3) {};
\node[datanode,inner sep=2.212] at (4,4) {};
\node[datanode,inner sep=1.752] at (5,4) {};
\node[datanode,inner sep=0.552] at (6,4) {};
\node[datanode,inner sep=1.428] at (10,5) {};
\node[datanode,inner sep=1.104] at (5,5) {};
\node[datanode,inner sep=0.552] at (6,5) {};
\node[datanode,inner sep=0.552] at (7,5) {};
\node[datanode,inner sep=0.552] at (20,6) {};
\node[datanode,inner sep=0.876] at (6,6) {};
\node[datanode,inner sep=1.656] at (7,6) {};
\node[datanode,inner sep=0.876] at (8,6) {};
\node[datanode,inner sep=0.552] at (7,7) {};
\node[datanode,inner sep=0.552] at (14,8) {};
\node[datanode,inner sep=0.552] at (16,9) {};
\node[datanode,inner sep=1.104] at (9,9) {};
    \draw[green] (0,0) -- (18,18);
    \draw[->] (0,0) -- node[below,yshift=-.5cm] {Flanagan and Saxe version count} (29,0);
    \draw[->] (0,0) -- (0,18) node[sloped,above,pos=.5,yshift=.5cm] {version-optimal version count};
    \foreach \p in {1,4,...,16} {\draw (-.1,\p)--(.1,\p) node[left] {\p};}
    \foreach \p in {1,4,...,28} {\draw (\p,.1)--(\p,-.1) node[below] {\p};}
  \end{tikzpicture}
  \caption{Version count experimental comparison}
  \label{fig:exp-version}
\end{figure}

For $70\%$ of the variables in the Boogie benchmark both
algorithms say that one version is enough. For the other $30\%$
variables the algorithm of Flanagan and Saxe introduces on
average $46\%$ more versions than needed.

It is interesting to note that for randomly generated flowgraphs
the difference between the two algorithms is much bigger: Both
algorithms say that only one version is needed in less that $1\%$
of situations, while for the others the algorithm of Flanagan and
Saxe introduces on average $160\%$ more versions than needed. The
generator of random Boogie programs is part of FreeBoogie. It
works, roughly, by generating random series--parallel flowgraphs
by choosing graph grammar productions randomly and then deciding
for each node independently with some probability whether it is a
write node and whether it is a read node.

\section{The Copy-Optimal Passive Form}
\label{sec:copy-optimal}

We saw that finding a version-optimal passive form is rather
easy. In contrast, finding a copy-optimal passive form seems
rather hard. This is more evidence that the algorithm in
Section~\ref{sec:version-optimal} should be the algorithm of
choice in practice.

\subsection{The Increasing-Version Case}
\label{sec:passive.iv}

This section proves that finding a copy-optimal
increasing-version passive form is NP-complete. The proof is
by transformation from the mins problem (which is defined
in Section~\ref{sec:passive-algo-back}). Let us see, on the
example in Figure~\ref{fig:mins-to-copy}, how to find a maximum
independent set:

\begin{enumerate}
\item \emph{Transform the mins instance into a flowgraph.}
  A node~$x$ is transformed into the write only nodes
  $x_i$~and~$x_o$, the read only node~$x_r$, and edges $0\to x_o$
  and $x_o\to x_r$ and $x_i\to x_r$. (The two incoming edges of
  node~$x_r$ are drawn in Figure~\ref{fig:mins-to-copy} as
  \begin{tikzpicture}[baseline=-.5ex]
    \wonode (xo) at (0,0) [label=left:$x_o$] {};
    \wonode (xi) at (.5,0) [label=right:$x_i$] {};
    \draw[densely dotted] (xo) -- (xi);
  \end{tikzpicture}.) 
  An edge
  \begin{tikzpicture}
    [every node/.style={fgdraw},baseline=-.5ex]
    \node (x) at (0,0) [label=left:$x$] {};
    \node (y) at (.5,0) [label=right:$y$] {};
    \draw (x) -- (y);
  \end{tikzpicture}
  is transformed into edges $x_o\to y_i$ and~$y_o\to x_i$.

\item \emph{Find a copy-optimal increasing-version passive form
  of the flowgraph}. We do this by invoking an oracle.

\item \emph{Transform the passive form into a set~$I$ of nodes 
  in the original graph}. The set~$I$ contains nodes~$x$ for
  which the corresponding node~$c(x_r)$ is \emph{not} preceded by
  a copy node in the passive form.
\end{enumerate}

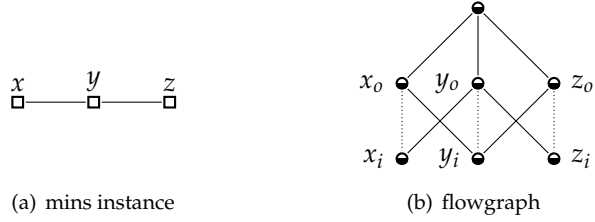
\begin{figure}\centering
\subfigure[mins instance]{
  \label{fig:mins-to-copy-graph}
  \begin{tikzpicture}
    [every node/.style={fgdraw},baseline=0]
    \node (A) at (-5,1) [label=above:$x$] {};
    \node (B) at (-4,1) [label=above:$y$] {};
    \node (C) at (-3,1) [label=above:$z$] {};
    \draw (A)--(B)--(C);
  \end{tikzpicture}
}\hfil
\subfigure[flowgraph]{
  \label{fig:mins-to-copy-flowgraph}
  \begin{tikzpicture}
    \wonode (S) at (1,2) {};
    \wonode (AO) at (0,1) [label=left:$x_o$] {};
    \wonode (AI) at (0,0) [label=left:$x_i$] {};
    \wonode (BO) at (1,1) [label=left:$y_o$] {};
    \wonode (BI) at (1,0) [label=left:$y_i$] {};
    \wonode (CO) at (2,1) [label=right:$z_o$] {};
    \wonode (CI) at (2,0) [label=right:$z_i$] {};
    \draw (S)--(AO)--(BI);
    \draw (S)--(BO)--(AI); \draw (BO)--(CI);
    \draw (S)--(CO)--(BI);
    \draw[densely dotted] (AO)--(AI);
    \draw[densely dotted] (BO)--(BI);
    \draw[densely dotted] (CO)--(CI);
  \end{tikzpicture}}
\caption{Transformation from mins}
\label{fig:mins-to-copy}
\end{figure}

To understand why this procedure indeed finds a maximal
independent set let us focus on step~3. Clearly, it
defines a function from passive forms to sets of
nodes in the original graph. Since the definition of
a passive form (Definition~\ref{def:passive-form} on
page~\pageref{def:passive-form}) does not prevent redundant
copy nodes, all nodes~$c(x_r)$ might be preceded by copy nodes
and the corresponding set~$I$ needs not be independent. The
optimal passive form, however, will be among those passive
forms without redundant copy nodes. We say that a passive form
witnessed by a write version function~$w$ is \emph{non-redundant}
or that it \emph{has no redundant copy nodes} when it
contains exactly one copy node before each node~$c(x_r)$ with
$w(c(x_i))\ne w(c(x_o))$. It is easy to see that this implies
$r(c(x_r))=\max(w(c(x_i)),w(c(x_o)))$, and that the copy nodes
are necessary and sufficient.

The image of a non-redundant passive form is an independent node
set~$I$. Assume, by contradiction, that the set~$I$ contains
the adjacent nodes $x$~and~$y$. Then $w(c(x_i))=w(c(x_o))$
and $w(c(y_i))=w(c(y_o))$. The edges $x_o\to y_i$ and $y_o\to
x_i$ imply paths $c(x_o)\leadsto c(y_i)$ and $c(y_o)\leadsto
c(x_i)$ which in turn imply that $w(c(x_o))<w(c(y_i))$ and
$w(c(y_o))<w(c(x_i))$. We reached a contradiction, as~desired.

Conversely, every independent node set~$I$ is the image of some
increasing-version non-redundant passive form. We construct
such a non-redundant passive form as follows. Use the mapping
$c(x)=x$. For $x\in I$ set $w(x_o)=w(x_i)=2$; for $x\notin I$ set
$w(x_o)=1$ and $w(x_i)=3$. If two nodes $x$~and~$y$ are adjacent
then we must have $w(x_o)<w(y_i)$ and $w(y_o)<w(x_i)$, which are
true because at least one of the nodes $x$~and~$y$ is not in the
set~$I$.

Moreover, a non-redundant passive form with $n-k$ copy nodes
corresponds to an independent set of size~$k$, where $n$~is the
number of nodes in the original graph. Hence, the copy-optimal
increasing-version passive form corresponds to the maximum
independent node set, which suggests the following theorem. 

\begin{theorem}\index{NP-complete}
The problem of finding a copy-optimal increasing-version passive
form is NP-complete.
\label{th:coiv-is-np-complete}
\end{theorem}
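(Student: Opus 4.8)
The plan is to prove the two halves of NP-completeness separately: membership in NP, which is routine, and NP-hardness, which I get almost for free from the transformation from the mins problem that has already been constructed and verified in the paragraphs immediately above the theorem.

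First I would settle membership in NP. I work with the associated decision problem: given a flowgraph $G'$ and an integer $k$, does $G'$ admit an increasing-version passive form with at most $k$ copy nodes? A certificate is a passive form $G$ together with the witnessing functions $c$, $w$, and $r$. By Proposition~\ref{prop:pass-algo-result} there is always a passive form with $O(|E|)$ copy nodes, so a minimum-copy passive form has at most that many, and the certificate stays polynomially bounded in the input size. Verification consists of checking conditions (a)--(f) of Definition~\ref{def:passive-form}, the inequality of Definition~\ref{def:increasing-version}, and that the number of copy nodes is at most $k$. Each of these is a single polynomial-time traversal of the graph, so the decision problem is in NP.

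Next I would establish NP-hardness by invoking the transformation from mins described just before the theorem. That transformation sends each mins node $x$ to the nodes $x_i$, $x_o$, $x_r$ with edges $0\to x_o$, $x_o\to x_r$, $x_i\to x_r$, and each mins edge $(x,y)$ to the edges $x_o\to y_i$ and $y_o\to x_i$; since every node and edge of the input produces only a constant number of new nodes and edges, the map is computable in time linear in the input and is therefore polynomial. Its correctness is exactly the chain of equivalences already proved: the image of every non-redundant increasing-version passive form is an independent set, every independent set is realized by such a passive form, and a non-redundant passive form with $n-k$ copy nodes corresponds to an independent set of size $k$, where $n$ is the number of mins nodes. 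Because redundant copy nodes only inflate the count, an optimal passive form may be taken non-redundant, so the minimum number of copy nodes equals $n$ minus the size of a maximum independent set. This yields the transformation condition: the graph has an independent set of size $\ge k$ if and only if the flowgraph has an increasing-version passive form with $\le n-k$ copy nodes. Since mins is NP-complete, every problem in NP transforms to mins and hence, by transitivity of polynomial transformations, to the copy-optimal increasing-version passive form problem; together with membership in NP this gives NP-completeness.

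The hard part will be the single step that the preceding text asserts rather than fully justifies: that an optimal solution may be assumed \emph{non-redundant}, i.e. that redundant copy nodes can always be excised while preserving every clause of Definition~\ref{def:passive-form}---especially clause (c), that the flow structure is preserved and every copy node retains an outgoing edge---and without destroying the increasing-version inequality of Definition~\ref{def:increasing-version}. Once that reduction to non-redundant forms is secured, the bijective correspondence and the $n-k \leftrightarrow k$ counting close the argument, and the remaining pieces (the polynomiality of the encoding and membership in NP) are mechanical.
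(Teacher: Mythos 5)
Your proposal is correct and follows essentially the same route as the paper's own proof: membership in NP via the certificate $(G,c,w,r)$ checked against Definitions \ref{def:passive-form} and~\ref{def:increasing-version}, and NP-hardness via the already-established transformation from mins with the $n-k$ copy-node counting. You are in fact slightly more explicit than the paper on two points it glosses over---the polynomial bound on certificate size (via Proposition~\ref{prop:pass-algo-result}) and the reduction to non-redundant passive forms, which the paper also only asserts (``It is easy to see\dots'')---so flagging the latter as the step still needing a careful argument is a fair and accurate reading rather than a gap of your own.
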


To complete the proof of this theorem we must formulate the
transformation in terms of the corresponding decision problems
and we must show that the decision problem corresponding to
finding copy-optimal increasing-version passive forms is in~NP.

The decision problem corresponding to the mins problem
asks whether there is an independent set of $\ge k$ nodes.
The decision problem corresponding to the problem of
finding a copy-optimal increasing-version passive form
for the restricted family of flowgraphs illustrated in
Figure~\ref{fig:mins-to-copy-flowgraph} asks whether there is a
copy-optimal increasing-version passive form with $\le n-k$ copy
nodes. To transform one into the other we simply make sure we use
the same~$k$.

Also, the problem is in NP because we can check quickly if a
flowgraph is a passive form of another, provided we are also
given the mapping~$c$, the write function~$w$ and the read
function~$r$.

\subsection{The Distinct-Version Case}

The problem seems difficult even if we drop the
increasing-version restriction.

\begin{conjecture}\index{NP-complete}\index{problem!open}
The problem of finding a copy-optimal distinct-version passive
form is NP-complete.
\label{conj:codv-is-np-complete}
\end{conjecture}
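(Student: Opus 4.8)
The plan is to split the claim, exactly as in Theorem~\ref{th:coiv-is-np-complete}, into membership in NP and NP-hardness. Membership is immediate and reuses the earlier argument verbatim: given a candidate flowgraph together with the witnessing functions $c$, $w$, and $r$, one checks the six conditions of Definition~\ref{def:passive-form} in polynomial time and reads off the number of copy nodes directly. Dropping the increasing-version requirement only removes a check, so the verification procedure is if anything simpler. Hence the entire difficulty lies in proving NP-hardness.

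First I would try to recycle the transformation from the mins problem in Figure~\ref{fig:mins-to-copy}, since it is already tuned to copy-optimality. The obstruction is precisely the phenomenon isolated by Proposition~\ref{prop:coDv<coIv}. In the increasing-version proof, the paths $c(x_o)\leadsto c(y_i)$ and $c(y_o)\leadsto c(x_i)$ forced the strict inequalities $w(c(x_o))<w(c(y_i))$ and $w(c(y_o))<w(c(x_i))$, which made $w(c(x_i))=w(c(x_o))$ and $w(c(y_i))=w(c(y_o))$ jointly impossible for adjacent $x,y$. Under the distinct-version model, Proposition~\ref{prop:distinct} yields only $w(c(x_o))\ne w(c(y_i))$, so a solver may reorder versions and avoid a copy node before both $x_r$ and $y_r$ even when $x$ and $y$ are adjacent. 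The induced set $I$ then need not be independent, and the correspondence with maximum independent sets collapses.

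Accordingly, the real work is to design a gadget whose optimal distinct-version passive forms are pinned down using only the \emph{symmetric} ``distinct along a path'' constraint rather than an ordering. Two routes seem plausible. One is to augment each mins node with auxiliary write chains long enough that Lemma~\ref{lemma:min-versions} makes the version budget tight, so that any copy-free assignment is forced to behave monotonically on the relevant sub-paths, restoring the ordering argument indirectly. The other is to abandon mins and reduce from a genuinely symmetric NP-complete problem; graph colouring is the natural candidate, because assigning versions to write nodes subject to distinctness along paths is itself a colouring of the comparability graph of the flowgraph (its minimum matching the longest write-chain of Lemma~\ref{lemma:min-versions}), with copy minimisation layered on top as a secondary objective. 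In either case I would (i)~specify the construction $f$, (ii)~prove that a non-redundant optimal passive form induces a solution of the source instance, and (iii)~prove the converse by exhibiting an explicit version assignment, mirroring the two directions of the mins argument.

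The main obstacle---and the reason the statement is only a conjecture---is step (ii): controlling the combinatorics of copy minimisation when versions are allowed to decrease in time. A copy node before a read node is avoidable precisely when its predecessors already agree on a version it can read, and the distinct-version freedom makes the family of achievable agreements far larger and harder to characterise than in the monotone case. Proving a matching \emph{lower} bound on the number of copy nodes for the gadget, robust against all such reorderings, is where I expect the argument to stall; it is quite possible that a correct reduction needs a fundamentally new gadget rather than a patch of Figure~\ref{fig:mins-to-copy}.
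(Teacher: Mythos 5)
There is no proof in the paper for you to match: the statement is stated as a conjecture precisely because the NP-hardness half is open, and the paper even lists ``settle Conjecture~\ref{conj:codv-is-np-complete}'' as an open problem in Section~\ref{sec:passivation-pbs}. Your membership argument is fine and coincides with the paper's remark at the end of Section~\ref{sec:passive.iv} (a witness $c$, $w$, $r$ is checkable in polynomial time, and dropping the increasing-version condition only removes a check). Your diagnosis of why the mins gadget of Figure~\ref{fig:mins-to-copy} breaks is also exactly right and is the content of Proposition~\ref{prop:coDv<coIv}: the reduction's lower-bound direction needs the strict inequalities $w(c(x_o))<w(c(y_i))$ and $w(c(y_o))<w(c(x_i))$, and the symmetric constraint of Proposition~\ref{prop:distinct} gives only $\ne$, so adjacent nodes can share versions by reordering and the induced set need not be independent. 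The paper's own ``evidence'' is weaker than anything you propose: it shows that on the restricted two-chain family the distinct-version problem is \emph{equivalent to maximum bipartite matching} --- a polynomial-time problem, just one not known to be solvable as fast as lis --- and it explicitly cautions that best-known upper bounds are not lower bounds. So the gap you predict at your step (ii) is genuine, and it is the same gap the paper leaves open.

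One concrete warning about your second proposed route. Assigning write versions subject to distinctness along paths is a colouring not of an arbitrary graph but of the \emph{comparability graph} of the flowgraph's reachability order (two write nodes conflict exactly when one reaches the other), and comparability graphs can be optimally coloured in polynomial time --- this is why version-optimality is easy (Lemma~\ref{lemma:min-versions}, Theorem~\ref{th:pass-algo}; the paper points to the textbook comparability-colouring algorithm in Section~\ref{sec:passive.related}). Consequently a reduction from graph colouring cannot encode hardness into the distinctness constraints themselves: any colouring-feasibility structure you build will be perfectly solvable. Hardness, if it exists, must be smuggled entirely into the secondary objective of minimising copy nodes, i.e.\ into which pairs of write versions can be made to \emph{agree} across the dotted read nodes --- and characterising the achievable agreement patterns under arbitrary (non-monotone) version assignments is precisely the combinatorial question neither you nor the paper resolves. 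Your first route (padding with write chains so the version budget of Lemma~\ref{lemma:min-versions} is tight, forcing local monotonicity) is the more promising of the two, but note that a tight budget forces only the \emph{number} of versions on each maximal chain, not their order along it, so additional gadgetry would be needed to rule out order-reversing assignments of the kind exploited in Figure~\ref{fig:iv-vs-dv-sol}.
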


The conjecture is that if we enlarge the search space (from
increasing-version passive forms to distinct-version passive
forms) the problem remains just as hard. In general, enlarging
the set of feasible solutions can make an optimization problem
easier or harder. As a trivial example, consider the problem of
finding the maximum of each of these sets:
\begin{align}
S_1(n) &= \{\,2\,\}\\
S_2(n) &= \{\,k\mid \text{$0<k<n$ and $k$~is prime}\,\}\\
S_3(n) &= \{\,k\mid 0<k<n\,\}
\end{align}
It is clear that $S_1(n)\subset S_2(n)\subset S_3(n)$ when $n>2$,
yet finding the maximum of the set~$S_2(n)$ is hardest. On the
other hand, if the input is restricted, then the problem cannot
become harder.

Figure~\ref{fig:ex-special-family-lists} illustrates a family
of flowgraphs for which finding a copy-optimal passive form is
equivalent to certain known classic problems. It is the example
in Figure~\ref{fig:ex-special-family} plus decreasing adjacency
lists added on the right. The flowgraphs in this family consist
of two chains of write only nodes plus some read only nodes
(depicted as dotted edges) that have one parent from the left
chain and one parent from the right chain. The adjacency lists
represent the dotted edges. For simplicity, let us also assume
that all the statements have distinct structures so that the
mapping~$c$ must be injective. Since~$c$ is injective we can be
lazy and write $x$ instead of~$c(x)$.
We shall refer to flowgraphs in this restricted class as
\emph{two-chain flowgraphs}\index{flowgraph!two-chain}.

\begin{figure}\centering
\begin{tikzpicture}[scale=0.5]
  \foreach \n/\x/\y/\p/\t in {
    0/1/4/above/,
    1/0/3/left/,
    2/0/2/left/,
    3/0/1/left/,
    4/0/0/left/,
    5/2/3/right/{\>[2]},
    6/2/2/right/{\>[3,1]},
    7/2/1/right/{\>[4,2]},
    8/2/0/right/{\>[]}}
    \wonode (\n) at (\x,\y) [label=\p:$\n\t$] {};
  \foreach \x/\y in {
    0/1,1/2,2/3,3/4,
    0/5,5/6,6/7,7/8}
    \draw (\x)--(\y);
  \foreach \x/\y in {
    5/2,
    6/3,6/1,
    7/4,7/2}
    \draw[densely dotted] (\x)--(\y);
\end{tikzpicture}
\caption{A two-chain flowgraph represented with adjacency lists}
\label{fig:ex-special-family-lists}
\end{figure}
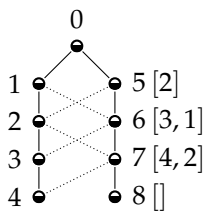

Finding a copy-optimal distinct-version passive form in such a
flowgraph is equivalent to the bipartite matching problem. A
dotted edge is selected if and only if the write versions of its
endpoints are equal. As before, we focus on non-redundant passive
forms, whose copy nodes are in one-to-one correspondence with
non-selected dotted edges. Hence, selecting a maximum of edges
corresponds to inserting a minimum of copy nodes.

{\newcommand{\dotedge}[2]{$#1\cdots#2$}

In the bipartite matching problem, two edges can be selected
simultaneously if and only if they do not share an endpoint.
This is equivalent to the condition that write versions for the
nodes in one chain are distinct. If, by way of contradiction, two
edges \dotedge{x}{z} and \dotedge{y}{z}, which share an endpoint,
are both selected, then $w(x)$ must equal~$w(y)$, which cannot
happen since nodes $x$~and~$y$ are in the same chain. Conversely,
if two edges \dotedge{x}{x'} and \dotedge{y}{y'} do not share
any endpoint then nodes $x$~and~$x'$ can be assigned one write
version and nodes $y$~and~$y'$ a different version. So they can
be both selected.}

According to Section~\ref{sec:passive-algo-back}, the best known algorithms for
the bipartite matching problem work in $O(\min(m\sqrt n, n^{2.38}))$ time,
where $m$~is the number of dotted edges and $n$~is the number of nodes.  A
lower bound would be more useful towards justifying
Conjecture~\ref{conj:codv-is-np-complete}. Unfortunately, none is known. The
equivalence to the bipartite matching problem does tell us, however, that
without finding a better algorithm for this classic problem we cannot hope to
find a copy-optimal distinct-version passive form as fast as we can find a
version-optimal one.

It is also interesting to note that solving two-chain flowgraphs
under the increasing-version restriction is equivalent to
another classic problem, that of finding the longest increasing
subsequence. Again, a dotted edge is selected if and only if
its endpoints have the same write version. This time, the write
versions in the left chain have to be increasing, and so do those
in the right chain. This is equivalent to saying that two dotted
edges can be selected if and only if they do not intersect,
endpoints included.

Now take a look at Figure~\ref{fig:ex-special-family-lists}.
The nodes in the left chain are renumbered so they are also
increasing (like their write versions). The nodes in the right
chain are annotated with the list of nodes that they can reach
through one dotted edge. Each number in those lists represents
a dotted edge. For example, the number~$3$ in the list next
to node~$6$ represents the edge~$3\cdots6$. The task is now
to select at most one number from each adjacency list such
that their sequence increases. Two numbers from the same list
cannot be simultaneously selected because they correspond
to dotted edges that intersect at their right endpoints.
Similarly, non-increasing numbers from different adjacency lists
correspond to intersecting dotted edges. See for example the
edges represented by number~$3$ in the adjacency list of node~$6$
and by number~$2$ in the adjacency list of node~$7$.

If we now concatenate the lists and then extract a longest
increasing subsequence we risk selecting more than one number
from an adjacency list. To make sure this does not happen, before
concatenation, we sort each adjacency list in decreasing order.

Note that all operations used to transform the two-chain
flowgraph into a sequence of numbers (renumbering nodes, sorting
adjacency lists, concatenating them) can be done in linear time.
Note also that computing the inverse transformation can also be
done in linear time. We conclude that the problem of finding a
copy-optimal increasing-version passive form is equivalent to
lis, in the strong sense that asymptotic bounds on the running
time apply to both.

We managed to solve the two-chain family of flowgraphs
faster in the increasing-version case than in the distinct
version case. This is a small piece of evidence supporting
Conjecture~\ref{conj:codv-is-np-complete}. However, keep in mind
that (1)~restricting the search space may make an optimization problem
harder \emph{or} easier and (2)~in a proof we should be looking
at lower bounds instead of best known upper~bounds.

\section{Conclusions}
\label{sec:passivation-pbs}

Passivation follows loop cutting and precedes the computation of
the VC\null. It increases the size of the program but its output
will never lead to exponentially large VCs.

The precise definition of passivation (in
Section~\ref{sec:passive}) enables a formal study of
its properties. The definition leads to four classes
of passive forms that are good, each in its own way
(Section~\ref{sec:passive-types}). The subsequent sections
are concerned with the complexity of finding a passive form:
Version-optimal passive forms are easy to find, copy-optimal
increasing-version passive forms are hard to find. A few open
problems remain.

\begin{problem}
What is the complexity of finding a copy-optimal distinct-version
passive form?
\end{problem}

In other words, settle Conjecture~\ref{conj:codv-is-np-complete}.

\begin{problem}
Find an approximation algorithm for the problem of finding a
copy-optimal increasing-version passive form. It should run in
$o(n^2)$~time to be~practical.
\end{problem}

\begin{problem}\index{problem!open}
Is it always possible to find a distinct-version passive form
that is both version-optimal and copy-optimal? If yes, then how?
If no, then what kind of trade-off can be achieved?
\end{problem}

These problems are clearly inspired by the practice of developing
a program verifier, yet they are very algorithmic and puzzle-like
in nature.

\section{Related Work}
\label{sec:passive.related}

Flanagan and Saxe~\cite{flanagan2001passive} observed that
programs without assignments never lead to exponential VCs.
They proposed passivation to avoid exponential explosion
altogether and implemented it in \escjava. The intermediate
language of \escjava, inspired by Dijkstra's guarded
commands~\cite{dijkstra1975}, does not have a \textbf{goto}
statement, which makes the task of finding a passive form easier.
Their method for VC generation was proved correct in Coq by
Vogels et al.~\cite{vogels2010}.

The passive form used for program verification resembles the DSA
(\fb dynamic \fb single \fb assignment) form used for a long
time in the compiler community to ease optimizations. Sadly, the
literature on compiler optimizations is largely disconnected from
the literature on program verification. For example, in~2007
Vanbroekhoven et al.~\cite{vanbroekhoven2007} identified copy
operations as the key to reducing the size of the DSA\null.
Flanagan and Saxe used copy operations six years earlier. Another
example is the treatment of arrays. In the program verification
area it is standard to model them using a pair of uninterpreted
functions, usually named \textit{update} and \textit{select}.
Once this is done, arrays are handled like all other variables
during passivation.

Barnett and Leino~\cite{barnett2005wpu} describe the VC
generation done by the Boogie tool. The
passivation stage is described informally. Since Boogie has
\textbf{goto} statements, arbitrary flowgraphs are handled.

Flanagan and Saxe~\cite{flanagan2001passive}, Vanbroekhoven
et al.~\cite{vanbroekhoven2007} as well as Barnett and
Leino~\cite{barnett2005wpu} do not give a formal definition
of what it means for a program to be a passive form of
another; Vogels at al.~\cite{vogels2010} does give such
a definition, which was developed independently from the
one used in this dissertation~\cite{grigore2009spu}.
Flanagan and Saxe~\cite{flanagan2001passive}, Barnett and
Leino~\cite{barnett2005wpu}, and Vogels et al.~\cite{vogels2010}
do not define what it means for a passive form to be optimal.
Barnett and Leino~\cite{barnett2005wpu} do discuss optimality,
briefly. Vanbroekhoven et al.~\cite{vanbroekhoven2007} employs
a postprocessing heuristic that reduces the number of copy
operations introduced by their passivation algorithm.

The algorithm presented in Section~\ref{sec:version-optimal}
should be compared with a textbook algorithm for coloring
comparability graphs~\cite{golumbic2004}. Tries, used in
Section~\ref{sec:pass-wpcb} when analyzing the size of the
VC, were introduced by Fredkin~\cite{fredkin1960trie}.
Section~\ref{sec:cc-primer} briefly enumerates a few
basic definitions from computational complexity, adapted
mostly from Goldreich~\cite{goldreich2008}. The importance
of the class of NP-complete problems was established by
Cook~\cite{cook1971}, who proved that there exist an NP-complete
problem (boolean satisfiability), and by Karp~\cite{karp1972},
who proved that $21$ other problems are NP-complete.
Section~\ref{sec:cc-primer} uses the terms `transform' and
`reduce' like Knuth~\cite{knuth1974np}.

The solution to the longest increasing subsequence was
presented in 1977 by Hunt and Szymanski~\cite{hunt1977} and it
exploits a fast data structure presented by van Emde Boas et
al.~\cite{vEB1977} the same year. An $O(m\sqrt n)$ algorithm
for bipartite matching was given by Hopcroft and Karp~\cite{hopcroft1973matching}
in 1973. Faster algorithms for dense graphs are much 
newer (see, for example, Harvey~\cite{harvey2009matching}).

\chapter{Strongest Postcondition versus Weakest Precondition}
\label{ch:spwp}

\chquote{Given a propositional theory~$T$ and a proposition~$q$, a
sufficient condition of~$q$ is one that will make~$q$ true under~$T$,
and a necessary condition of~$q$ is one that has to be true for~$q$
to be true under~$T$. In this paper, we propose a notion of
strongest necessary and weakest sufficient conditions.} {
Fangzhen Lin~\cite{lin2000}}

\noindent The main contributions of this chapter are (1)~a
comparison of the weakest precondition method to the strongest
postcondition method, (2)~links between different ways to define
the semantics of subsets of Boogie (operational semantics, Hoare
logic, weakest precondition, strongest postcondition), and (3)~an
algorithm for unsharing expressions.

\section{Hoare Logic for Core Boogie}
\label{sec:hoare-logic}

After passivation, FreeBoogie generates a VC using either a
method based on the weakest precondition predicate transformer
or on the strongest postcondition predicate transformer. The
relation between these two methods is illuminated by their
relation to a Hoare logic for Boogie.

\subsection{Relation to Operational Semantics}

The \emph{Hoare triple}\index{Hoare triple} $\hoare{p}{$x$}{r}$ means that
``if the store before executing statement~$x$ satisfies predicate~$p$ then
(1)~the execution of statement~$x$ does not go wrong and (2)~if
statement~$x$ is executed then the resulting store satisfies
predicate~$r$.'' (As elsewhere in this dissertation, by `statement' we mean
`simple statement'.)
\begin{equation}
\begin{split}
\hoare{p}{$x$}{r} &= 
  \forall \sigma,\; p\;\sigma \limp \\
  &
  \biggl(
  \forall \frac{h}{\langle\sigma,x\rangle\leadsto\mathit{error}},\; \lnot h
  \biggr) \land \biggl(
  \forall \frac{h}{\langle\sigma,x\rangle\leadsto\langle\sigma',\_\rangle},\;
    h\limp r\;\sigma'
  \biggr)
\end{split}
\label{eq:hoare-def}
\end{equation}
The quantifications over rules are perhaps a little unusual. The first one
$\forall \frac{h}{\langle\sigma,\,x\rangle\leadsto\mathit{error}}$ goes over
all rules that apply to the state~$\langle\sigma,x\rangle$ and go to the
\textit{error} state. The hypothesis~$h$ is bound by the quantifier. The
second quantification $\forall
\frac{h}{\langle\sigma,\,x\rangle\leadsto\langle\sigma',\,\_\rangle}$ is
similar, except that it ranges only over rules that go to
non-\textit{error} states. Both the hypothesis~$h$ and the store~$\sigma$
are bound by the quantifier.  For example, if statement~$x$ is
\textbf{assert}~$q$, then the first quantification is instantiated only for
rule~\eqref{eq:assert-nok-opsem} (on page~\pageref{eq:assert-nok-opsem})
and the second quantification is instantiated only for
rule~\eqref{eq:assume-assert-ok-opsem}.
\begin{equation}
\begin{split}
\hoare{p}{\textbf{assert}~$q$}{r}
  &= \forall\sigma,\; p\;\sigma \limp (q\;\sigma \land (q\;\sigma \limp r\;\sigma))\\
  &= \forall\sigma,\; p\;\sigma \limp (q\;\sigma \land r\;\sigma)\\
  &= \forall\sigma,\; \bigl(p \limp (q\land r)\bigr)\;\sigma\\
  &= |p\limp(q\land r)|
\end{split}
\end{equation}
For the \textbf{assume}
statement the first quantifier over rules evaluates to~$\tru$
since no rule for \textbf{assume} statements goes to the
\textit{error} state.
\begin{equation}
\hoare{p}{\textbf{assume}~$q$}{r} = |(p\land q)\limp r|
\end{equation}

The Hoare triple $\hoare{p}{\textbf{assert}~$q$}{r}$ holds
when the predicate $p\limp(q\land r)$ is valid; the Hoare
triple $\hoare{p}{\textbf{assume}~$q$}{r}$ holds when the
predicate $(p\land q)\limp r$ is valid; and the Hoare triple
$\hoare{p}{$v:=e$}{r}$ holds when the predicate $p\limp(v\gets
e)\;r$ is valid. There is an appealing similarity between the
predicates corresponding to \textbf{assert} and, respectively,
\textbf{assume}.

\subsection{Correctness of a Flowgraph}

A program is correct when none of its executions goes wrong (see
Definition~\ref{def:correctness} on page~\pageref{def:correctness}).  For
flowgraphs we may use the same definition, since each execution of a program
corresponds to an execution of the associated flowgraph. Therefore, one may
check that each execution ends in a non-error state and conclude that the
flowchart is correct.  The task is akin writing an exhaustive test suite.
Unlike automated testers, program verifiers rely on an alternative
characterization of correctness, first described by Floyd~\cite{floyd1967}.

An \emph{annotation}\index{annotation} of a core Boogie flowchart
associates two predicates to each node~$x$, a \emph{precondition}~$a_x$ and
a \emph{postcondition}~$b_x$.  We say that an annotation \emph{explains}
the flowchart when
\begin{enumerate}
\item the precondition~$a_0$ of the initial node is valid,
\item the triple~$\hoare{a_x}{$x$}{b_x}$ holds for all nodes~$x$, and
\item the predicate~$b_x\limp a_y$ is valid for all edges $x\to y$.
\end{enumerate}
We will refer to these as conditions 1,~2, and~3.  A flowgraph may have
zero, one, or more annotations that explain it.

\begin{theorem}\label{th:correctness}
\index{correctness}
Annotations explain only correct Boogie flowgraphs.
\end{theorem}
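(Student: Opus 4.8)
The plan is to prove soundness of Floyd's inductive-assertion method by induction on the length of an execution, using as the induction hypothesis the invariant that \emph{the current store satisfies the precondition attached to the current node}. Concretely, I would fix an annotation $(a_x, b_x)$ that explains the flowgraph and consider an arbitrary execution $s_0, \dots, s_n$ in the sense of Definition~\ref{def:execution}. Because gotos are contracted into edges by Definition~\ref{def:boogie-flowgraph}, I would first isolate the subsequence $t_0, t_1, \dots, t_m$ of those states $t_j = \langle\tau_j, x_j\rangle$ whose counter $x_j$ names a flowgraph node (a non-\textbf{goto} statement); note that $t_0 = s_0 = \langle\sigma_0, 0\rangle$, since the sentinel node $0$ is \textbf{assume true}. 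The claim to establish by induction on $j$ is that $a_{x_j}\;\tau_j$ holds and that no state strictly before $t_j$ is \textit{error}.

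For the base case, condition~1 says $a_0$ is valid, so $a_{x_0}\;\tau_0 = a_0\;\sigma_0$ holds. For the inductive step, suppose $a_{x_j}\;\tau_j$ holds and let $x$ be the statement at $x_j$. Condition~2 gives $\hoare{a_{x_j}}{$x$}{b_{x_j}}$, so I can unfold the definition~\eqref{eq:hoare-def} at the store $\tau_j$. Its first conjunct forces the hypothesis of every error-producing rule to fail; for $x = \mathbf{assert}\;q$ this is exactly $\lnot\lnot(q\;\tau_j)$, so rule~\eqref{eq:assert-nok-opsem} cannot fire and no \textit{error} is reached. Its second conjunct shows that whichever non-error rule actually fires (rule~\eqref{eq:assume-assert-ok-opsem} for \textbf{assert}/\textbf{assume}, or rule~\eqref{eq:assign-opsem} for an assignment) lands in a store $\tau'$ with $b_{x_j}\;\tau'$ holding. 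Control then threads through zero or more \textbf{goto} statements to the next flowgraph node $x_{j+1}$; since rule~\eqref{eq:goto-opsem} leaves the store untouched, $\tau_{j+1} = \tau'$, and by Definition~\ref{def:boogie-flowgraph} there is an edge $x_j \to x_{j+1}$. Condition~3 then yields $b_{x_j} \limp a_{x_{j+1}}$, whence $a_{x_{j+1}}\;\tau_{j+1}$ holds, closing the induction. (If instead the execution gets stuck at an assumption or ends at a \textbf{return}, there is no successor flowgraph node and nothing further to prove; crucially, no \textit{error} arose either.)

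Since the argument shows that \textit{error} is never produced along any execution, no execution goes wrong, so the flowgraph is correct in the sense of Definition~\ref{def:correctness}. The main obstacle I expect is the bookkeeping at the seam between the counter-based operational semantics of Definition~\ref{def:execution} and the contracted flowgraph of Definition~\ref{def:boogie-flowgraph}: one must check that a single flowgraph edge corresponds to a store-preserving path through \textbf{goto} nodes, which is precisely what legitimizes projecting an execution onto its flowgraph-node subsequence and applying condition~3 exactly once per edge. The Hoare-triple manipulations themselves are routine once~\eqref{eq:hoare-def} is specialized to each statement kind, as already carried out for \textbf{assert} and \textbf{assume} in the text.
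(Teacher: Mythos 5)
Your proof is correct and takes essentially the same route as the paper's: an induction on the length of executions with the invariant that the store at each flowgraph-node state satisfies its precondition, using condition~2 via~\eqref{eq:hoare-def} both to rule out transitions to \textit{error} and to land in a store satisfying the postcondition, and condition~3 to cross each edge. The only difference is presentational---the paper works directly with executions of the flowgraph (having noted just before the theorem that program executions correspond to flowgraph executions), whereas you make the projection of a counter-based execution through the store-preserving \textbf{goto} states explicit, a bookkeeping step the paper glosses over.
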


\begin{proof}
Suppose a Boogie program is explained by preconditions~$a_y$ and
postconditions~$b_y$. We can prove by induction on the length of executions
that $a_y\;\sigma_y$ holds whenever $\langle\sigma_y,y\rangle$ belongs to
some execution.  If $y$ is the initial statement~$0$, then $a_y\;\sigma$
holds because $a_0$~is valid (see condition~1).  Otherwise,
$\langle\sigma_y,y\rangle$ comes after some state $\langle\sigma_x,x\rangle$.
There are multiple such predecessors possible, but for all of them we know,
by induction, that $a_x\;\sigma_x$ holds. The step
$\langle\sigma_x,x\rangle\leadsto \langle\sigma_y,y\rangle$ must be
justified by rule~\eqref{eq:assume-assert-ok-opsem} or by
rule~\eqref{eq:assign-opsem} (see
page~\pageref{eq:assume-assert-ok-opsem}).  In either case, the hypothesis
of the rule is satisfied, so we conclude from~\eqref{eq:hoare-def} that
$b_x\;\sigma_y$ holds (see condition~2).  Hence, $a_x\;\sigma_x$ holds (see
condition~3).

Because $a_x\;\sigma_x$ always holds, condition~2 implies,
via~\eqref{eq:hoare-def}, that no rule leading to the \textit{error} state
applies.
\end{proof}

\section{Predicate Transformers}
\label{sec:predicate-transf}

We now discuss two methods for finding annotations. Both methods are
complete\index{completeness}, in the sense that they find explaining
annotations whenever the Boogie flowgraph is correct.

The strongest postcondition method proceeds forwards: The
precondition~$a_x$ is found before the postcondition~$b_x$ and,
when there is an edge~$x\to y$, the postcondition~$b_x$ is found
before the precondition~$a_y$. The weakest precondition method
proceeds backwards: The postcondition~$b_x$ is found before the
precondition~$a_x$ and, when there is an edge~$x\to y$, the
precondition~$a_y$ is found before the postcondition~$b_x$.

The strongest postcondition method always finds the strongest predicate
that could possibly explain the flowgraph; the weakest precondition method
always finds the weakest predicate that could possibly explain the program.
We say that predicate~$p$ is stronger than predicate~$q$ (and that
predicate~$q$ is weaker than predicate~$p$) when the predicate~$p\limp q$
is valid.

For the rest of this section it helps to think of predicates as
sets of stores. The predicate~$p\land q$ denotes the intersection
set~$p\cap q$; the predicate~$p\lor q$ denotes the union
set~$p\cup q$; the predicate~$\lnot p$ denotes the complement
set~$\bar p$; the predicate~$p\limp q$ denotes the set~$\bar
p\cup q$. The predicate $p\limp q$ is valid when $p\subseteq q$.
Weakest means largest; strongest means smallest.

\subsection{Dealing with Edges}

In the strongest postcondition method, the precondition~$a_y$
is calculated after all postconditions~$b_x$ of nodes~$x$
with edges~$x\to y$. For all nodes~$x$ we must have $b_x\limp
a_y$. The strongest predicate~$a_y$ that is implied by all
predicates~$b_x$ is their~disjunction.
\begin{equation}
a_y \equiv \bigvee_{x\to y} b_x
\label{eq:sp.a_y}
\end{equation}
The precondition of the initial node is not constrained
by condition~3 (on flowgraph edges) but by condition~1.
\begin{equation}
a_0 \equiv \tru
\label{eq:sp.a_0}
\end{equation}
Note that~\eqref{eq:sp.a_0} is not a special case
of~\eqref{eq:sp.a_y}.

In the weakest precondition method, the postcondition~$b_x$ is
calculated after all preconditions~$a_y$ of nodes~$y$ with
edges~$x\to y$. For all nodes~$y$ we must have $b_x\limp
a_y$. The weakest predicate~$b_x$ that that implies all
predicates~$a_y$ is their~conjunction.
\begin{equation}
b_x \equiv \bigwedge_{x\to y} a_y
\label{eq:wp.b_x}
\end{equation}
The postconditions of nodes with no outgoing edge, which correspond to
\textbf{return} statements, are not constrained by condition~3. In fact
they are not constrained by any of the conditions 1, 2, and 3, so for them
$b_x\equiv\tru$, the weakest possible predicate. Fortunately, that is
what~\eqref{eq:wp.b_x} reduces to for \textbf{return} nodes.

\subsection{Dealing with Statements}

\paragraph{Assumptions} If the statement~$x$ is
\textbf{assume}~$q_x$, then according to condition~2
the~predicate
\begin{equation}
(a_x \land q_x) \limp b_x
\end{equation}
must be valid.

In the strongest postcondition method we must find the strongest
predicate~$b_x$, given the predicates $a_x$~and~$q_x$. Clearly,
\begin{equation}
b_x \equiv (a_x\land q_x).
\end{equation}

In the weakest precondition method we must find the weakest
predicate~$a_x$, given the predicates $q_x$~and~$b_x$. In terms
of sets, we must find the biggest set~$a_x$ such that its
intersection with the set~$q_x$ is included in the set~$b_x$.
\begin{equation}
a_x \equiv (q_x \limp b_x)
\end{equation}
The situation is depicted in Figure~\ref{fig:wp.assume}.

\begin{figure}\centering
\begin{tikzpicture}[scale=2,fill=green!30,thick]
\def\circ#1#2{(#1,0.3) #2 circle (0.23)}
\def\ca#1{\circ{0.3}{#1}}
\def\cb#1{\circ{0.7}{#1}}
\def\rect{(0,0) rectangle (1,0.6)}
\fill[even odd rule] \rect \ca{}; \fill \cb{};
\draw \rect; \draw \ca{node{$q_x$}}; \draw \cb{node{$b_x$}};
\end{tikzpicture}
\caption{Weakest precondition of \textbf{assume} statements}
\label{fig:wp.assume}
\end{figure}

\paragraph{Assertions} If the statement~$x$ is \textbf{assert}~$q_x$, then
according to condition~2 we see that the~predicate
\begin{equation}
a_x \limp (q_x \land b_x)
\label{eq:spwp.assertcond}
\end{equation}
must be valid.

In the strongest postcondition method we must find the strongest
predicate~$b_x$, given the predicates $a_x$~and~$q_x$. No matter
what predicate~$b_x$ we choose, the predicate in~\eqref{eq:spwp.assertcond}
is invalid unless
\begin{equation}
a_x \limp q_x
\end{equation}
is valid. If this is the case, then the strongest
predicate~$b_x$ that satisfies~\eqref{eq:spwp.assertcond}~is
\begin{equation}
b_x \equiv (a_x \land q_x).
\end{equation}
In the weakest precondition method we must find the weakest
predicate~$a_x$, given predicates $q_x$~and~$b_x$. 
\begin{equation}
a_x \equiv (q_x \land b_x)
\end{equation}

\paragraph{Assignment} If statement~$x$ is $v:=e$, then according to
condition~2 we see that the predicate
\begin{equation}
a_x \limp (v \gets e)\;b_x
\label{eq:spwp.assgncond}
\end{equation}
must be valid.

In the strongest postcondition method we must find the strongest
predicate~$b_x$, given the predicate~$a_x$, the variable~$v$, and
the expression~$e$. We rewrite condition~\eqref{eq:spwp.assgncond}.
\begin{equation}
\forall\sigma,\; a_x\;\sigma \limp b_x\;((v\gets e)\;\sigma)
\label{eq:spwp.assgncond2}
\end{equation}
Note that $(v \gets e)\;\sigma=(v\gets e)\;\sigma'$
does not imply $\sigma=\sigma'$. That is, there might be multiple
stores~$\sigma$ corresponding to the same right hand side
in~\eqref{eq:spwp.assgncond2}. Because we want the set~$b_x$ to
be as small as possible we want $b_x\;\sigma$ to hold only if it
must hold, that is, only if there is some corresponding left hand
side in~\eqref{eq:spwp.assgncond2} that~holds.
\begin{equation}
b_x\;\sigma = \exists\sigma',\;a\;\sigma' \land
  \bigl( \sigma=(v\gets e)\;\sigma' \bigr)
\end{equation}

In the weakest precondition method we must find the weakest
predicate~$a_x$, given the variable~$v$, the expression~$e$,
and the predicate~$b_x$. This is trivial.
\begin{equation}
a_x \equiv (v\gets e)\;b_x
\end{equation}

\subsection{Summary}
\label{sec:spwp.summary}
\index{weakest precondition}
\index{strongest postcondition}

In the strongest postcondition method we compute
\begin{align}
a_y &\equiv 
  \begin{cases}
  \tru & \text{if $y$ is the initial statement} \\
  \bigvee_{x \to y} b_x & \text{otherwise}
  \end{cases} \label{eq:spwp.sp.pre}\\
b_x &\equiv
  \begin{cases}
  a_x \land q_x & \text{if $x$ is \textbf{assert}/\textbf{assume}~$q_x$} \\
  \lambda\sigma.\; \exists\sigma',\;a\;\sigma' \land
    \bigl( \sigma=(v\gets e)\;\sigma' \bigr)
    & \text{if $x$ is $v:=e$}
  \end{cases}
  \label{eq:spwp.sp.post}
\end{align}
and we check the validity of
\begin{equation}
\mathit{vc}_\mathit{sp} \equiv \bigwedge_{
  \genfrac{}{}{0pt}{}{\text{$x$ is an}}{\text{assertion}} }
  (a_x \limp b_x). \label{eq:spwp.sp.vc}
\end{equation}

In the weakest precondition method we compute
\begin{align}
b_x &\equiv \bigwedge_{x \to y} a_y &  \label{eq:spwp.wp.post}\\
a_x &\equiv 
  \begin{cases}
  q_x \limp b_x &\text{if $x$ is \textbf{assume}~$q_x$}\\
  q_x \land b_x &\text{if $x$ is \textbf{assert}~$q_x$}\\
  (v\gets e)\;b_x &\text{if $x$ is $v:=e$}
  \end{cases}
  \label{eq:spwp.wp.pre}
\end{align}
and we check the validity of
\begin{equation}
\mathit{vc}_\mathit{wp}\equiv a_0.  \label{eq:spwp.wp.vc}
\end{equation}

Equations \eqref{eq:spwp.sp.post}~and~\eqref{eq:spwp.wp.pre}
can be seen as defining the predicate transformer~\textit{sp} and,
respectively, the predicate transformer~\textit{wp}. Both take two
arguments, a statement and a predicate, and return a predicate.

\begin{example}
Let us look again at the program in Figure~\ref{fig:passive-form}
on page~\pageref{fig:passive-form}. The two methods described
above yield equivalent but structurally different VCs:
\begin{equation}
\begin{split}
\mathit{vc}_\mathit{sp} \equiv
  &\big((\tru\land(v_0=u)\land\lnot\mathit{even}(v_0)\land(v_1=v_0+1))\\
  &\lor(\tru\land(v_0=u)\land\mathit{even}(v_0)\land(v_1=v_0))\big)\\
  &\limp\mathit{even}(v_1)
\end{split}\label{eq:vcsp}
\end{equation}
\begin{equation}
\begin{split}
\mathit{vc}_\mathit{wp} \equiv
  &(v_0=u)\limp\\
  &\big((\lnot\mathit{even}(v_0)\limp(v_1=v_0+1)\limp(\mathit{even}(v_1)\land\tru))\\
  &\land(\mathit{even}(v_0)\limp(v_1=v_0)\limp(\mathit{even}(v_1)\land\tru))\big)
\end{split}
\end{equation}
\end{example}

\section{Replacing Assignments by Assumptions}

After passivation all assignments $v:=e$ are transformed into
assumptions \textbf{assume}~$v=e$. This section argues briefly
why this transformation is sound and complete. The key idea is
that there is a one-to-one correspondence between the executions
of the program without assignments and executions of the program
with assignments where the initial store is chosen so that all
writes are~\hbox{no-operations}.

It is complete: If the program without assignments has
an execution that goes wrong then the program with
assignments has an execution that goes wrong. Say
$\langle\sigma,x_0\rangle$,~$\langle\sigma,x_1\rangle$,
\dots,~$\langle\sigma,x_n\rangle$, \textit{error} is an execution
of the program without assignments. If statement~$x_k$ is
\textbf{assert}/\textbf{assume}~$q_k$ then $q_k\;\sigma$~holds.
In particular, if the statement corresponds to an assignment
$v:=e$ then it is \textbf{assume}~$v=e$, which means that
$\sigma\;v=e\;\sigma$. It is easy to see now that a similar
execution that goes wrong exists in the program with assignments,
because $\sigma=(v\gets e)\;\sigma$.

The proof of the converse is very similar and we omit it.
The idea is to take an execution that goes wrong in the
program with assignments and show that the final store in this
execution can serve as the (unique) store in an execution of the
assignment-free program.

The proofs above essentially rely on the existence of
executions in which no assignment changes the store. Such
executions exist because the program is passive. The
advantages of the assignment-free form are presented in
Section~\ref{sec:boogiesem-wpsp}.

\section{Verification Condition Size}
\label{sec:wpsp.vcsize}

\subsection{VCs with Sharing}

Figure~\ref{fig:sp-algo} shows \eqref{eq:spwp.sp.pre},
\eqref{eq:spwp.sp.post}, and~\eqref{eq:spwp.sp.vc} as an algorithm. It is
assumed that the program contains only \textbf{assert} and \textbf{assume}
statements.  The call $\mathit{Predicate}(x)$ returns the predicate in
statement~$x$. The calls $\mathit{Or}(\mathit{ps})$,
$\mathit{And}(\mathit{ps})$, and $\mathit{Implies}(p,q)$ build predicates
out of simpler ones. They are builders of SMT terms, which implement
hash-consing as described in Section~\ref{sec:design.backend}.  Hence, the
hash-table that implements the set~\textit{bs} may use reference equality
(see method~\textit{Pre}).

\begin{figure}\centering\leavevmode\vbox{
\begin{alg}
\^  $\proc{Pre}(y)$ \comment memoized
\0  ~if~ $y$ is initial
\+    ~return~ $\mathit{True}()$
\0  $\mathit{bs}:=\emptyset$    \comment collects all $b_x$
\=  ~for~ ~each~ predecessor $x$ of $y$
\+    insert $\mathit{Post}(x)$ in the set \textit{bs}
\0  ~return~ $\mathit{Or}(\mathit{bs})$
\end{alg}
\bigskip
\begin{alg}
\^  $\proc{Post}(x)$ \comment memoized
\0  ~return~ $\mathit{And}(\mathit{Pre}(x), \mathit{Predicate}(x))$
\end{alg}
\bigskip
\begin{alg}
\^  $\proc{Vc}()$
\0  $\mathit{vs}:=\emptyset$ \comment collects local VCs
\=  ~for~ ~each~ statement $x$ that is an assertion
\+    insert $\mathit{Implies}(\mathit{Pre}(x), \mathit{Predicate}(x))$ in the set \textit{vs}
\0  ~return~ $\mathit{And}(\mathit{vs})$
\end{alg}}
\caption{VC computation using the strongest postcondition method}
\label{fig:sp-algo}
\end{figure}

\begin{example} 
The VC in~\eqref{eq:vcsp} is represented by the SMT term
in Figure~\ref{fig:vcsp}. Notice the sharing of a subtree.
Enclosed in rectangles are expressions that appear in the
Boogie program (from Figure~\ref{fig:passive-form} on
page~\pageref{fig:passive-form}). These are translated from
the Boogie AST representation to the SMT representation as
explained in Section~\ref{sec:design.backend}. The circled nodes
are created by the algorithm in Figure~\ref{fig:sp-algo} by
calling the builders \textit{True}, \textit{And}, \textit{Or},
and~\textit{Implies}.
\end{example}

\begin{figure}\centering
\begin{tikzpicture}
  [xscale=.4,yscale=.6,auto,thick,
    every node/.style={predcirc},
    level 2/.style={sibling distance=150mm},
    level 3/.style={sibling distance=100mm},
    level 4/.style={sibling distance=50mm},
    level 5/.style={sibling distance=25mm},
    ]
  \node {$\land$}
    child {node {$\limp$}
      child {node {$\lor$}
        child {node {$\land$}
          child {node (A) {$\land$}
            child {node[predrect] {$\lnot\mathit{even}(v_0)$}}
            child[missing] {}
          }
          child {node[predrect] {$v_1=v_0+1$}}
        }
        child {node {$\land$}
          child {node {$\land$}
            child {node[xshift=-11mm] (B) {$\land$}
              child {node {$\tru$}}
              child {node[predrect] {$v=u$}}
            }
            child {node[predrect] {$\mathit{even}(v_0)$}}
          }
          child {node[predrect] {$v_1=v_0$}}
        }
      }
      child {node[predrect] {$\mathit{even}(v_1)$}}
    };
    \draw (A) -- (B);
\end{tikzpicture}
\caption{Data structure for the VC in~(\ref{eq:vcsp})}
\label{fig:vcsp}
\index{term}
\end{figure}
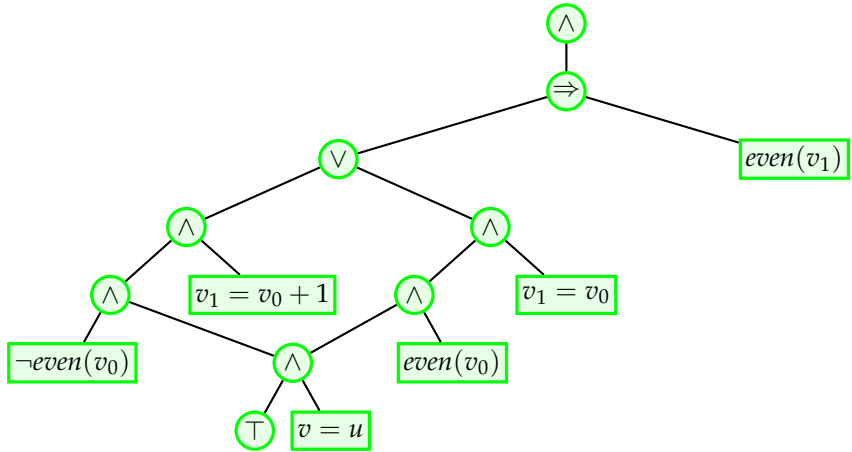

\begin{remark}
FreeBoogie builds a slightly simpler SMT tree because
builders carry out simplifications. For example, the call
$\mathit{And}(\{p,\tru\})$ returns~$p$.
\label{remark:wpsp.simpl}
\end{remark}

\begin{proposition}
The algorithm in Figure~\ref{fig:sp-algo} computes the VC in
$O(|V|+|E|)$ time. If the program contains no assertions then the
lower bound $\Omega(|V|)$ is~attained.
\label{prop:sptime}
\end{proposition}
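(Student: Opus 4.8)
The plan is to bound separately the work done by each of the three procedures in Figure~\ref{fig:sp-algo}, exploiting two facts: that \textit{Pre} and \textit{Post} are memoized, so each node's body runs at most once, and that the hash-consed builders \textit{True}, \textit{And}, \textit{Or}, and \textit{Implies} from Section~\ref{sec:design.backend} cost time proportional to their \emph{number of arguments}. The second fact is the crux of the accounting: since an already-constructed subterm is referenced by a single pointer, assembling a node out of $k$ previously built predicates touches only $O(k)$ cells plus one probe of the global hash table.

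First I would prove the upper bound. By memoization the body of \textit{Pre}$(y)$ executes at most once per node~$y$, and likewise for \textit{Post}$(x)$. The body of \textit{Post}$(x)$ makes one call \textit{Pre}$(x)$ (charged to \textit{Pre}) and builds one \textit{And} of two arguments, so it costs $O(1)$; over all nodes this is $O(|V|)$. The body of \textit{Pre}$(y)$ loops over the predecessors of~$y$, inserting one element into \textit{bs} per predecessor, and then calls \textit{Or}$(\mathit{bs})$, whose cost is proportional to the number of predecessors of~$y$; because \textit{bs} uses reference equality each insertion is $O(1)$ on average. Summing the per-node constant and the per-predecessor work over all nodes gives $\sum_y \bigl(O(1) + O(|\mathrm{pred}(y)|)\bigr) = O(|V| + |E|)$, using $\sum_y |\mathrm{pred}(y)| = |E|$. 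Finally \textit{Vc}() scans every statement once to locate the assertions ($O(|V|)$), performs one \textit{Implies} of two arguments per assertion ($O(1)$ each, at most $|V|$ of them), and builds a single \textit{And} of at most $|V|$ arguments. Adding the three contributions yields the claimed $O(|V| + |E|)$ bound.

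For the lower bound I would exhibit the degenerate family of inputs with no assertions. Then the loop in \textit{Vc}() triggers no call to \textit{Pre}, so neither \textit{Pre} nor \textit{Post} is ever invoked and no edge is examined—the $|E|$ term vanishes entirely. Nevertheless, to discover that there are no assertions the loop header ``for each statement~$x$ that is an assertion'' must inspect all $|V|$ statements, forcing $\Omega(|V|)$ work. Hence on this family the running time is exactly $\Theta(|V|)$, so the $\Omega(|V|)$ part of the bound is attained and cannot be improved in general.

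The main obstacle is the accounting for the builder calls: one must be sure that the hash-consed \textit{Or}, \textit{And}, and \textit{Implies} cost time proportional to their \emph{arity} rather than to the sizes of the subterms they combine. This is precisely what hash-consing buys, but the bound is only an average-case one, since a long chain of hash-table collisions could in principle degrade a single node creation; making the argument fully rigorous therefore requires either the expected-time reading of the builders or an amortized charging scheme, and I would state the bound as holding in that (average) sense, consistent with the treatment of memoization elsewhere in this chapter.
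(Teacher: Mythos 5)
Your proof is correct and follows the same route as the paper's: memoization bounds \textit{Pre}/\textit{Post} to one execution per node, the per-predecessor work in \textit{Pre} sums to $O(|E|)$, and the $\Omega(|V|)$ bound is attained because with no assertions the loop in \textit{Vc} still scans every statement while never invoking \textit{Pre} or \textit{Post}. Your closing caveat about the hash-consed builders giving only an average-case bound matches the paper's own remark that the upper bound would be slightly worse if randomized structures such as hashtables were forbidden.
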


\begin{proof}
Because of memoization the methods \textit{Pre} and \textit{Post}
are called at most once for each node. Inserting a term in the
set~\textit{bs} takes $O(1)$~time if the set is implemented as
a hashtable. The method \textit{Pre} analyzes all the incoming
edges of the node it was called for, so in the worst case
it looks once at each edge in the flowgraph. If there are
no assertions then the runtime is dominated by the loop in
\textit{Vc}, which checks there are no assertions.
\end{proof}

Note that the upper bound on the execution time would be
slightly worse if randomized algorithms (such as hashtables) are
forbidden.

\begin{proposition}
The algorithm in Figure~\ref{fig:sp-algo} computes a VC with
size $O(|V|+|E|+n)$, where $n$~is the space needed to represent
all the expressions that appear in the program. If the program
contains no assertions then the lower bound $\Omega(1)$ is
attained.
\label{prop:spspace}
\end{proposition}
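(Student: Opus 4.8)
The plan is to measure the size of the VC as the total number of nodes plus edges in the hash-consed SMT dag returned by $\mathit{Vc}$, and to charge each of them to either a program expression, a flowgraph node, or a flowgraph edge. Because the builders $\mathit{True}$, $\mathit{And}$, $\mathit{Or}$, and $\mathit{Implies}$ implement hash-consing (Section~\ref{sec:design.backend}), equal subterms are represented only once, so it suffices to bound the number of \emph{distinct} nodes and their outgoing edges. First I would split the dag into the subdags produced by $\mathit{Predicate}$ and the nodes freshly introduced by the builders. The calls $\mathit{Predicate}(x)$ return the already-translated SMT terms for the expressions $q_x$ occurring in the program; by hash-consing these share all common structure and together occupy $O(n)$ space, which is exactly the $n$~summand. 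It then remains to count the builder-created nodes.

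Next I would exploit memoization to bound the builder calls. Since $\mathit{Pre}$ and $\mathit{Post}$ are memoized, each is evaluated at most once per node, hence at most $|V|$ times each. Every evaluation of $\mathit{Post}(x)$ creates one binary $\mathit{And}$ node (joining $\mathit{Pre}(x)$ with $\mathit{Predicate}(x)$), contributing $O(|V|)$ nodes and edges. Every evaluation of $\mathit{Pre}(y)$ creates one $\mathit{Or}$ node whose children are the postconditions of the predecessors of~$y$; this node has out-degree equal to the in-degree of~$y$, so the $\mathit{Or}$ nodes contribute $|V|$ nodes and $\sum_{y}\mathrm{indeg}(y)=|E|$ edges in total. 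Finally, the loop in $\mathit{Vc}$ creates at most one $\mathit{Implies}$ node per assertion, hence $O(|V|)$ nodes and edges, plus a single top-level $\mathit{And}$ with at most $|V|$ children. Summing the three contributions with the $O(n)$ from the expressions yields $O(|V|+|E|+n)$, establishing the upper bound. By Remark~\ref{remark:wpsp.simpl} the builders may simplify, which can only shrink the dag, so the bound is unaffected.

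For the lower bound, observe that if the program has no assertions then the loop in $\mathit{Vc}$ never iterates, so $\mathit{Pre}$ and $\mathit{Post}$ are never invoked and $\mathit{Vc}$ returns $\mathit{And}(\emptyset)$, which is the single node~$\tru$. The VC then has size $\Theta(1)$, so the lower bound $\Omega(1)$ is attained.

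The step I expect to be the main obstacle is the amortized accounting for the variable-arity $\mathit{Or}$ combinators: the argument hinges on charging each child edge of the $\mathit{Or}$ at node~$y$ to a distinct incoming flowgraph edge of~$y$, so that the grand total telescopes to $|E|$ rather than to something quadratic in the degrees. Care is also needed to fix the convention that ``size'' counts nodes \emph{together with} edges, since an $n$-ary combinator is one node but may carry many edges, and to confirm that memoization really does prevent the same $a_x$ or $b_x$ from being materialized more than once.
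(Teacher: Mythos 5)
Your proof is correct and follows essentially the same route as the paper's: each memoized execution of \textit{Pre} creates one node whose child links are charged to the incoming flowgraph edges (totaling $|E|$), each execution of \textit{Post} creates one binary node pointing to a program expression, and the expressions themselves account for the $O(n)$ term. The only divergence is cosmetic, at the lower bound: you correctly observe that an assertion-free program yields the single node $\mathit{And}(\emptyset)$, i.e.\ $\tru$, whereas the paper's proof states $\mathit{vc}_\mathit{sp}\equiv\fls$---either way a constant-size term, so $\Omega(1)$ is attained.
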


\begin{proof}
Each execution of the method \textit{Pre} creates one new node
that contains as many links to children as there are predecessors
in the flowgraph. Each execution of the method \textit{Post}
creates one new node with two children, one of which is an
expression from the program. If there are no assertions 
then $\mathit{vc}_\mathit{sp}\equiv\fls$.
\end{proof}

\begin{remark}
Results similar to Proposition~\ref{prop:sptime} and
Proposition~\ref{prop:spspace} hold for the analogous
implementation of the weakest precondition method.
\end{remark}

\subsection{VCs without Sharing}
\label{sec:wpsp.unshare}

The next step is to send the VC to a theorem prover. One way
is to implement the builder methods (like \textit{True},
\textit{Or},~\dots) in terms of calls to the prover API\null.
Another way is to communicate with the prover through the SMT
language. In the second approach, a data structure is built
inside the VC generator, then it is serialized as a string,
and then the prover parses the string and builds its own data
structures. Hence, the second approach is slower and uses more
memory. Its main advantage is that multiple provers understand
the SMT language. If the VC is printed in the SMT language, then
it could be used as a benchmark for multiple provers. Of course,
another advantage is that the backend of the VC generator does
not have to deal with multiple APIs, some of which might not be
for Java.

SMT terms cannot be printed recursively in the naive way, for
this may result in strings that are exponentially large in the
size of the data structure. Extra work is needed to identify the
shared parts and to eliminate the sharing. Finding which parts
are shared is useful also for deciding how to split the VC into
multiple queries, in case it is too big.

Figure~\ref{fig:elim-sharing} shows a simplified
version of the algorithm used in FreeBoogie. The method
\textit{EliminateSharing} transforms an SMT term, which
represents a predicate, into an equivalent one that has less
sharing. The algorithm plucks sub-predicates repeatedly.
Figure~\ref{fig:plucking} illustrates the plucking process:
It replaces the predicate $(v\gets q)\;p$ by the predicate
$(v=q)\limp p$. Each plucking preserves~validity.
\begin{equation}
|(v\gets q)\;p|=|(v=q)\limp p|
\label{eq:plucking.ok}
\end{equation}
Suppose $(v\gets q)\;p$ is valid and pick a
store~$\sigma$ that satisfies the predicate~$v=q$.
\begin{align}
  &\phantom{\;=\;} (v\gets q)\;\sigma\;u\\
  &= 
    \begin{cases}
    \sigma\;u & \text{if $u$ is $v$} \\
    q\;\sigma & \text{if $u$ is not $v$}
    \end{cases}  && \text{definition of $(v\gets q)$} \\
  &= \sigma\;u && \text{we assumed $\sigma\;v=q\;\sigma$}
\end{align}
So $(v\gets q)\;\sigma=\sigma$.
\begin{align}
  &\phantom{\;=\;} (v\gets q)\;p\;\sigma && \text{we assume $|(v\gets q)\;p|$}\\
  &= p\;((v\gets q)\;\sigma) && \text{definition of $(v\gets q)$} \\
  &= p\;\sigma && \text{because $(v\gets q)\;\sigma=\sigma$}
\end{align}
We have proved
\begin{equation}
|(v\gets q)\;p|\limp|(v=q)\limp p|
\end{equation}
The key to prove the other implication is to note that
$(v\gets q)\;(v=q)$ is valid. Therefore, plucking preserves
validity, even though the predicates $(v\gets q)\;p$ and
$(v=q)\limp p$ do not always describe the same set of stores.
Note also that there is no need to constrain the predicate~$q$ to
be independent of variable~$v$, although this is the case in the
unsharing algorithm.

\begin{example}
\label{ex:plucking}
If $p\equiv (v\limp p_2)$ and $q\equiv p_1$ then the claim is
\begin{equation}
|p_1\limp p_2|=|(v=p_1)\limp(v\limp p_2)|.
\label{eq:plucking-ex}
\end{equation}
\end{example}

The method \textit{Unshare} collects the definitions for all
fresh variables and the method \textit{EliminateSharing} 
uses them to form the result. In other words we need to prove
\begin{equation}
\bigl|(v_1\gets q_1)\;((v_2\gets q_2)\;p)\bigr|
  = \bigl|\bigl((v_1=q_1)\land(v_2=q_2)\bigr)\limp p\bigr|
\end{equation}
which generalizes easily to more fresh variables. Here
is a proof.
\begin{align}
  &\phantom{\;=\;} \bigl|(v_1\gets q_1)\;((v_2\gets q_2)\;p)\bigr| \\
  &= \bigl|(v_1=q_1)\limp((v_2\gets q_2)\;p)\bigr|
    && \text{by \eqref{eq:plucking.ok}} \\
  &= \bigl|(v_2\gets q_2)\;((v_1=q_1)\limp p)\bigr|
    && \text{see below} \\
  &= \bigl|(v_2=q_2)\limp(v_1=q_1)\limp p\bigr|
    && \text{by \eqref{eq:plucking.ok}} \\
  &= \bigl|\bigl((v_1=q_1)\land(v_2=q_2)\bigr)\limp p\bigr|
    && \text{boolean algebra}
\end{align}
The second step of this proof holds only if the
predicate~$q_1$ does not depend on the variable~$v_2$, which is
true if the variable~$v_2$ does not syntactically appear within
the predicate~$q_1$. With $n$~variables, the condition is that
predicate~$q_i$ does not contain variable~$v_j$ if~$i<j$. It is
always possible to find such an ordering since variables, which
are created on line~10 of the method \textit{Unshare}, may only
syntactically contain in their definition variables that were
created earlier.

\begin{figure}\centering\leavevmode\vbox{
\begin{alg}
\0  ~global~ \textit{newDefinitions}  \comment set of $(v\iff t)$
\=  ~global~ \textit{parentCount} \comment term $t$ has $\mathit{parentCount}[t]$ parents
\=  ~global~ \textit{seen}  
\end{alg}
\bigskip
\begin{alg}
\^  $\proc{EliminateSharing}(t)$
\0  clear globals
\=  $\mathit{CountParents}(t)$
\=  $t:=\mathit{Unshare}(t)$
\=  ~return~ $\mathit{Implies}(\mathit{And}(\mathit{newDefinitions}), t)$
\end{alg}
\bigskip
\begin{alg}
\^  $\proc{CountParents}(t)$
\0  ~if~ $t\in\mathit{seen}$
\+    ~return~
\0  insert $t$ in the set \textit{seen}
\=  ~for~ ~each~ child $c$ of $t$
\+    increment $\mathit{parentCount}[c]$
\=    $\mathit{CountParents}(c)$
\end{alg}
\bigskip
\begin{alg}
\^  $\proc{PrintSize}(t)$   \comment memoized
\=  $s:=1$
\=  ~for~ ~each~ child $c$ of $t$
\+    $s:=s+\mathit{PrintSize}(c)$
\0  ~return~ $s$
\end{alg}
\bigskip
\begin{alg}
\^  $\proc{Unshare}(t)$   \comment memoized
\0  ~if~ $t$ or one of its children is not a formula
\+    ~return~ $t$
\0  $\mathit{newChildren}:=\text{empty list}$
\=  ~for~ ~each~ child $c$ of $t$
\+    append $\mathit{Unshare}(c)$ to \textit{newChildren}
\0  $\mathit{t'}:=\mathit{mk}(\mathit{type}(t), \mathit{newChildren})$
\=  $\mathit{ps} := \mathit{PrintSize}(t')$
\=  $\mathit{pc} := \mathit{parentCount}[t]$
\=  ~if~ $\mathit{ps}\times\mathit{pc}-(\mathit{ps}+\mathit{pc})>k$ \comment $k$ is a constant
\+    $v:=\text{fresh variable}$
\=    insert $\mathit{Iff}(v, t')$ in the set \textit{newDefinitions}
\=    $t':=v$
\0  ~return~ $t'$
\end{alg}
}
\caption{(Almost) eliminating sharing}
\label{fig:elim-sharing}
\end{figure}

\begin{figure}\centering
\begin{tikzpicture}[thick]
  \tikzstyle{grn}=[draw=green,fill=green!30]
  \tikzstyle{a}=[arr,shorten >=2pt]
  \tikzstyle{circ}=[circle,minimum size=14pt,inner sep=0pt]

  \def\trit#1{\filldraw[grn] (#1,0)--+(-1,-1)--+(1,-1)--cycle; \node at (#1,-.75) {$q$}}
  \trit{0}; \trit{7};
  \node[grn,circ] (v) at (5,0) {$v$};
  \node[grn,circ] (iff) at (6,1) {$=$};
  \draw[a] (iff) -- (v); \draw[a] (iff) -- (7,0);

  \draw[a] (-.3,.75) -- (0,0); \node at (0,.75) {$\cdots$}; \draw[a] (.3,.75) -- (0,0);
  \draw[a] (4.7,.75) -- (v); \node at (5,.75) {$\cdots$}; \draw[a] (5.3,.75) -- (v);
  \draw[arr,line width=2pt] (1,0)--(4,0);

\end{tikzpicture}
\caption{Plucking}
\label{fig:plucking}
\end{figure}

Apart from being correct, the algorithm in
Figure~\ref{fig:elim-sharing} is also good because it does not
use much time or space.

The method \textit{CountParents} is executed once for each node
in the input (except for line~1, which is executed once for each
edge in the input, and line~2, for which a similar bound holds).
Because it is memoized, the method \textit{Unshare} is also
executed once for each node in the input. For each execution, it
creates at most three new nodes (lines 6,~10, and~11). The method
\textit{PrintSize} is executed at most once for each new node.
Therefore the algorithm runs in linear time and creates at most
three times as many nodes as there are in the input. (Some of the
new nodes `created' on line~6 of the method \textit{Unshare} are
taken from the hash-consing cache.)

The constant~$k$ on line~9 of the method \textit{Unshare}
controls how much plucking is done. A big value reduces the
number of plucks; a small value increases the number of plucks.
The value $\mathit{ps}\times\mathit{pc}$ estimates the length
of printing without plucking, and $\mathit{ps}+\mathit{pc}$
estimates the length of printing with plucking. In particular,
if we pick $k=-1$, then plucking is done if $\mathit{ps}>1$
and $\mathit{pc}>1$. In this case, only nodes without children
(leaves) may have multiple parents in the result. The print size
of such a dag is linear in its size, which in turn is linear
in the size of the input. In other words, if $k=-1$ then the
printing size of $\textit{EliminateSharing}(t)$ is linear in the
size of~$t$.

We have shown the following.

\begin{theorem}
The call $\textit{EliminateSharing}(t)$ returns in $O(n)$~time
and uses $O(n)$ space, where $n$~is the size of the dag~$t$. The
resulting dag represents a predicate that is valid if and only if
the predicate represented by the dag~$t$ is valid. Moreover, when
$k=-1$, the print size of the result is $O(n)$.
\end{theorem}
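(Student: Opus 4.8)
The plan is to treat the three assertions separately: the validity claim is essentially a bookkeeping consequence of the plucking identity already established, whereas the time, space, and print-size bounds follow from counting node visits and analysing the shape of the output dag. Throughout I write $U(t)$ for the value that \textit{Unshare} returns on a node~$t$; since \textit{Unshare} is memoised this is well defined, and the call $\textit{EliminateSharing}(t)$ returns $\textit{Implies}\bigl(\textit{And}(\textit{newDefinitions}),U(t)\bigr)$.

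First I would dispatch correctness. The method \textit{Unshare} traverses the input dag bottom-up and, whenever it plucks, replaces a formula subterm $t'$ by a fresh variable~$v$ and records the definition $v\iff t'$ in \textit{newDefinitions}; since every plucked term is a formula (line~1 of \textit{Unshare}), the equivalence $v\iff t'$ is the same as the equality $v=t'$. Hence the returned term has exactly the shape $\bigl((v_1=q_1)\land\dots\land(v_m=q_m)\bigr)\limp p$ analysed just before the theorem, and its validity equals that of the fully back-substituted predicate—namely the original term—by iterating the plucking identity~\eqref{eq:plucking.ok}. The one hypothesis needed for that iteration, that $q_i$ does not mention $v_j$ for $i<j$, holds because \textit{Unshare} is bottom-up: the definition $t'$ of a freshly introduced variable can only contain variables introduced for strictly earlier (descendant) subterms, so a safe substitution order always exists.

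Next I would handle the time and space bounds, which are routine once memoisation is invoked. Because \textit{seen} memoises \textit{CountParents}, that routine touches each node once and does work proportional to its out-degree, giving $O(n)$; the same argument applies to \textit{Unshare} and to \textit{PrintSize}. Each call of \textit{Unshare} creates at most three new nodes (lines~6,~10, and~11), so the output, the globals, and all scratch structures occupy $O(n)$ space, and all per-node work is $O(1)$ amortised, giving $O(n)$ time overall.

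The main obstacle, and the step I would treat most carefully, is the print-size bound for $k=-1$. Here I would first rewrite the guard on line~9: $ps\cdot pc-(ps+pc)>-1$ is equivalent to $(ps-1)(pc-1)\ge1$, that is, to $ps>1$ and $pc>1$. Thus \emph{every} reconstructed node that is internal ($ps>1$) and shared in the input ($pc>1$) is plucked and replaced by a variable, which is a leaf. Consequently, in the output dag every node of print-size greater than one has in-degree at most one: such a node is $U(t)$ for a non-plucked~$t$ whose reconstruction is non-trivial, and non-plucking of an internal node forces $\textit{parentCount}[t]\le1$, which bounds the number of references to $U(t)$. A dag in which all internal nodes have in-degree at most one expands to a tree in which each internal node occurs exactly once—it has a unique path from the root—while the leaf occurrences are bounded by the total number of edges leaving internal nodes; since the output dag has $O(n)$ nodes and $O(n)$ edges, its print size is $O(n)$. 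I would close by noting that the hash-consing on line~6 can only merge nodes and hence only shrink this expansion, so it cannot threaten the bound.
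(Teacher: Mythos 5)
Your proof is correct and follows essentially the same route as the paper's: the validity claim by iterating the plucking identity~\eqref{eq:plucking.ok} with the observation that fresh variables can only mention earlier-created ones, the time and space bounds by memoization counting with at most three new nodes per \textit{Unshare} call, and the print-size bound by noting that with $k=-1$ the guard reduces to $\mathit{ps}>1$ and $\mathit{pc}>1$, so only leaves can retain multiple parents in the output. Your explicit algebraic rewriting of the guard as $(\mathit{ps}-1)(\mathit{pc}-1)\ge1$ and the spelled-out tree-expansion argument are merely more detailed renderings of what the paper states in one line.
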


The implementation in FreeBoogie is slightly more complicated,
but produces queries that the theorem prover Simplify answers
using $30\%$ less time than it needs for the queries produced
by the algorithm described so far. This reduction in time was
measured on the benchmark used in Chapter~\ref{ch:reachability}.

Look again at Example~\ref{ex:plucking} on
page~\pageref{ex:plucking}. Equation~\eqref{eq:plucking-ex} is a
special case of the ``proof by indirect inequality law:''
\begin{equation}
|p_1\limp p_2|=|(v\limp p_1)\limp(v\limp p_2)|.
\end{equation}
It turns out that
\begin{equation}
|(v\gets q)\;p|=|(v\limp q)\limp p|
\label{eq:plucking.ok1}
\end{equation}
holds when the predicate~$p$ is monotonic in the
variable~$v$
\begin{equation}
\bigl|(q_1\limp q_2) \limp \bigl((v\gets q_2)\;p\limp(v\gets q_1)\;p\bigr)\bigr|
\end{equation}
and also
\begin{equation}
|(v\gets q)\;p|=|(q\limp v)\limp p|
\label{eq:plucking.ok2}
\end{equation}
holds when
\begin{equation}
\bigl|(q_1\limp q_2) \limp \bigl((v\gets q_1)\;p\limp(v\gets q_2)\;p\bigr)\bigr|.
\label{eq:plucking.mono}
\end{equation}

FreeBoogie detects monotonicity by examining a syntactic
condition. A predicate~$p$ satisfies~\eqref{eq:plucking.mono} if
variable~$v$ syntactically occurs only in positive positions.
Roughly speaking, a position is positive if it is under an even
number of negations.

\section{Experiments}
\label{sec:wpsp-empiric}

Figure~\ref{fig:exp-time} shows the ratios between the
proving times required for weakest precondition and strongest
postcondition. The tests were ran using the Z3~v1.2
prover on a single core of a dual-core AMD Opteron 2218
($2\times2.6\,\mathrm{GHz}$) with $16\,\mathrm{GiB}$ of DDR2-667
RAM\null. Both methods perform similarly, as can be seen by the
relative symmetry of the graph. Weakest precondition does mildly
better however, as evidenced by the slight shift to the right.
It is interesting to note that in over 40\% of the test cases
the time ratio between the worse method and the better method
is~$>2$. These are the cases outside of the shaded area.

\begin{figure}\centering
 \begin{tikzpicture}[yscale=0.01,xscale=0.4]
   \path[fill=black!20] (-0.6931,0) rectangle (0.6931,300);
   \draw[very thick,draw=blue!70]
(-5.979774,10)--
(-5.551245,6)--
(-5.122715,5)--
(-4.694185,2)--
(-4.265655,8)--
(-3.837125,12)--
(-3.408595,2)--
(-2.980065,5)--
(-2.551535,9)--
(-2.123005,11)--
(-1.694475,5)--
(-1.265945,15)--
(-0.837415,34)--
(-0.408885,128)--
(0.019645,272)--
(0.448175,107)--
(0.876705,54)--
(1.305235,31)--
(1.733765,16)--
(2.162295,16)--
(2.590825,10)--
(3.019355,16)--
(3.447885,15)--
(3.876415,17)--
(4.304945,14)--
(4.733475,4)--
(5.162005,7)--
(5.590534,3)--
+(0,0);
   \draw[->] (-6,0) -- node[yshift=-4mm,below] {$\ln\frac{\text{\textit{sp} time}}{\text{\textit{wp} time}}$} (6,0);
   \draw[->] (0,0) -- (0,300) node[above] {cases};
   \foreach \p in {50,100,...,250} 
     {\draw (-.1,\p)--(.1,\p) node[xshift=-3mm,left] {\p};}
   \foreach \p in {-6,-4,...,6} 
     {\draw (\p,.1)--(\p,-.1) node[below] {\p};}
  \end{tikzpicture}
  \caption{Experimental comparison of proving times}
  \label{fig:exp-time}
\end{figure}
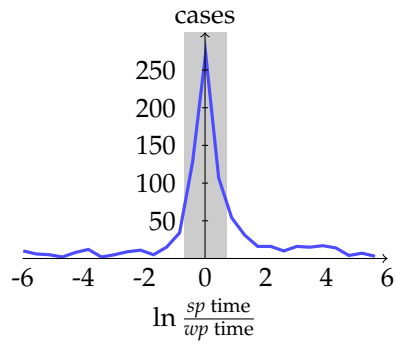

\section{Conclusions}

To reason about core Boogie programs we can use operational
semantics (Chapter~\ref{ch:boogie}), Hoare logic
(Section~\ref{sec:hoare-logic}), and predicate transformers
(Section~\ref{sec:predicate-transf}). Hoare logic plays a central
role, as it is easy to relate both to operational semantics
and to predicate transformers. Hoare logic is also useful to
characterize correctness of Boogie programs in a very intuitive
way (Theorem~\ref{th:correctness}).

The weakest precondition predicate transformer and the strongest
postcondition predicate transformer power two different methods
for generating VCs. The resulting VCs are equivalent (in the
sense that one is valid if and only if the other is), but may
have very different structures. The structure influences the
theorem proving time in unexpected ways and it turns out to be
worth trying both approaches in parallel if two processing units
are available (Section~\ref{sec:wpsp-empiric}).

The generated VC must be sent to a theorem prover. There is a
simple and efficient algorithm for translating the in-memory
representation of the VC into a standard language like SMT
(Section~\ref{sec:wpsp.vcsize}). However, the in-memory
representation is sometimes already too large.

\begin{problem}
Exploit the structure of~$\mathit{vc}_\mathit{wp}$
and~$\mathit{vc}_\mathit{sp}$ in order to split big VCs into
smaller ones.
\end{problem}

The VC produced by the strongest postcondition method seems to be
especially amenable to splitting since it already has the proper
structure---a big conjunction. The main problem is how to best
handle the shared parts of the~conjuncts.

\section{Related Work}
\label{sec:spvswp-related}

\escjava can use both the strongest
postcondition~\cite{flanagan2001passive} and the weakest
precondition method~\cite{leino2005ewp}, but not for arbitrary
acyclic flowgraphs. Note that~\cite{flanagan2001passive}
describes a method analogous to our strongest postcondition
method but only uses the term ``outcome predicate''.
Boogie uses only the weakest precondition
method and can treat unstructured programs~\cite{barnett2005wpu}.

Strongest postcondition was discussed before in the context of a simple
language similar to ours in relation to predicate
abstraction~\cite{flanagan2002pa} (but only for structured programs) and in
relation to proof reuse~\cite{grigore2007ev} (but not shown sound). Tools
that do abstract interpretation~\cite{cousot1977}, software model checkers,
and, in general, tools using some form of symbolic
execution~\cite{king1976} tend to be based on the strongest postcondition.

\chapter{Edit and Verify}
\label{ch:ev}

\chquote{No battle plan survives contact with the enemy.}
{Helmuth Karl Bernhard Graf von Moltke}

\noindent Previous chapters cover various stages of FreeBoogie's
pipeline (Section~\ref{sec:pipeline}). The problem addressed by
this chapter---speeding up the verification condition generation
by exploiting the incremental nature of software development
and verification---requires us to look again at the pipeline as
a whole. The main contributions are (1)~a proof technique for
algorithms that simplify a VC by using an old VC as a reference,
(2)~a heuristic for detecting common parts of two expression
trees, such as two VCs, and (3)~a prototype implementation, which
is part of the SMT solver~\fx.

\section{Motivation}
\label{sec:ev.motivation}

Humans prefer swift tools: A response time~$\gtrsim0.1$~seconds
is noticeable, and a response time~$\gtrsim8$~seconds entices
context switches~\cite{hoover2006}. For example, during a long
compilation a programmer might start browsing the Internet. The
context switch from programming to browsing and back induces a
significant loss of~productivity.

Three techniques are especially useful in improving the
performance of compilers. The first technique, background
compilation, makes the perceived response time smaller than the
processing time. The second technique, modular compilation, makes
sure that time is spent mostly on the modified modules. A module
is typically a file or a class. Modules still need to be linked,
which is done by a second (faster) step for languages like~C and
by the virtual machine at run-time for languages like Java. The
third technique, incremental compilation, is similar to modular
compilation, but with a finer granularity. Typically individual
methods are recompiled separately.

Eclipse uses all three techniques. Background compilation is
implemented in most IDEs; modular compilation is implemented in
most compilers; incremental compilation is not yet implemented in
some widely used compilers, such as~GCC\null.

A core Boogie program corresponds to a procedure implementation
in full Boogie. The correctness of a Boogie procedure
implementation depends only on its preconditions and
postconditions, and on those of the called procedures. It is
easy and natural to verify a Boogie program one procedure
implementation at a time. If we would mirror the compiler
terminology, then we would say that Boogie was designed
for incremental verification. However, the term `modular
verification' is more often used.

Java was designed to be compiled one class at a time; Boogie was
designed to be verified one procedure implementation at a time.
Java now has incremental compilers, which compile parts of a
class separately. Is it possible to verify only parts of a Boogie
procedure implementation? Could this improve the performance of
a program verifier? These are the questions that motivate the
investigations of this chapter.

\section{Overview}
\label{sec:ev.overview}

Figure~\ref{fig:java_evol} shows $16$~annotated programs, one for each
possible choice on whether to include each of the four shaded lines. The
program that includes no shaded line is the original program. Inserting one
shaded line represents an edit operation. In general, an \emph{edit
operation}\index{edit operation} is any change from a type-correct Boogie
program to another type-correct Boogie program. Programmers and IDEs invoke
program verifiers periodically. For each invocation, the frontend
transforms Java into Boogie, the backend transforms Boogie into a VC, and
the theorem prover tries to determine if the VC is valid. Between two
subsequent invocations (on type-correct programs) there is usually not much
change, especially when the program verifier is invoked continuously in the
background by an IDE\null.  Moreover, many transformations are local.
Therefore, much work is duplicated. As an example, consider the following
scenario: A program verifier is run on the original program of
Figure~\ref{fig:java_evol}, then line~1 is added, and then the program
verifier is run again. By adding line~1 nothing essential changes, so the
first and the second run of the program verifier take the same amount of
time. It should be possible to do much better.

{\def\hl{\rlap{\hbox{\color{verylightgray}\strut\vrule width12em}}%
}
\bc
\begin{jml}[escapechar=\"]
"\hl"// comment
abstract class Day {
  public abstract int getMonth();
    ensures 1 <= result && result <= 12;

  public abstract int getYear();
    ensures 1970 <= result;
    "\hl"ensures result <= 2038;

  public abstract int getDay();
    ensures 1 <= result && result <= 31;

  public int dayOfYear() 
    ensures 1 <= result;
    "\hl"ensures result <= 366;
  {
    int offset = 0;
    if (getMonth() > 1) offset += 31;
    if (getMonth() > 2) offset += 28;
    if (getMonth() > 3) offset += 31;
    if (getMonth() > 4) offset += 30;
    if (getMonth() > 5) offset += 31;
    if (getMonth() > 6) offset += 30;
    if (getMonth() > 7) offset += 31;
    if (getMonth() > 8) offset += 31;
    if (getMonth() > 9) offset += 30;
    if (getMonth() > 10) offset += 31;
    if (getMonth() > 11) offset += 30;
    boolean isLeap = getYear() 
             (getYear() 
    "\hl"assert offset <= 334;
    if (isLeap && getMonth() > 2) offset++;
    return offset + getDay();
  }
}
\end{jml}
\ec{Typical evolution of annotated Java code}{fig:java_evol}

FreeBoogie consists almost entirely of evaluators, some of which
are transformers (see Section~\ref{sec:design.ast}). The input
of an evaluator is a Boogie AST\null. Because Boogie ASTs are
immutable, they can be used as keys in a map (dictionary) data
structure. That is why the class \textit{Evaluator}, the base
class of all evaluators, can cache the results of all evaluations
done in FreeBoogie. For example, if the type-checker is invoked
twice on the AST of a procedure implementation, then the second
invocation returns quickly after one hashtable lookup.

Two problems remain. First, if the input file is parsed twice,
then two distinct ASTs are created by the parser, thus rendering
the caches unusable. Second, even if the VC is obtained faster,
the component that takes too much time is usually the prover. Let
us consider these problems in turn.

To not produce the same AST twice when line~1 is inserted we
must not parse the input file, in its entirety, twice. We could
either re-parse only the parts that change or we could not parse
at all. Both solutions work. The easiest and fastest way to
detect what parts changed is to integrate FreeBoogie's parser in
a text editor, such as the text editor of Eclipse. Parsing is
unnecessary when the frontend uses FreeBoogie's API\null.

To reduce the prover time without re-engineering it, we must
simplify the VC\null. For example, if the VC turns out to be
the same as a previous one, for which the prover was called,
then there is nothing more to do. This is the case for inserting
line~1. In particular, if the VC before change is valid, then the
VC after change simplifies to the predicate~$\tru$. Any prover
should return quickly when the query is the predicate~$\tru$.

\section{Simplifying SMT Formulas}
\label{sec:ev.simplify}

We want to simplify a VC, given a similar VC known to be valid.
\begin{problem}\label{pb:ev}
Find a predicate transformer \textit{prune}\index{prune} such that
\[ |\lnot p| \limp \bigl(|\lnot q|=|\lnot(\mathit{prune}\;p\;q)|\bigr)\]
for all predicates $p$~and~$q$.
\end{problem}
Here $\lnot p$~is the VC known to be valid, $\lnot q$ is the current VC,
and $\lnot(\mathit{prune}\;p\;q)$ is the simplified form of the current
VC\null.  It is not essential to work with the negated form of VCs, but it
is closer to the implementation. SMT solvers answer the validity query
$\lnot p$ by reducing it to the satisfiability query~$p$.
\begin{example}
For the program 
\boogieCode|var y:int;|
\boogieCode+assume (forall x:int :: even(x) || odd(x));+
\boogieCode|assume !odd(y);|
\boogieCode|assert even(x);|
we have
\begin{equation}
\lnot\mathit{vc}_\mathit{sp}\equiv
  \bigl(\forall x,\; \mathit{even}(x)\lor\mathit{odd}(x)\bigr)
  \land\lnot\mathit{odd}(y)
  \land\lnot\mathit{even}(y)
\end{equation}
The prover sees that
\begin{align}
&\phantom{\;\equiv\;}
  \lnot\mathit{vc}_\mathit{sp} \\
&\limp
  \bigl(\mathit{even}(y)\lor\mathit{odd}(y)\bigr)
  \land\lnot\mathit{odd}(y)
  \land\lnot\mathit{even}(y)
  &&\text{by setting $x:=y$} \\
&=
  \fls
\end{align}
\end{example}

\begin{remark}
We also want the satisfiability query $\mathit{prune}\;p\;q$ to be easier
to handle than the query~$q$, by typical SMT solvers. Alas, this
requirement is hard to formalize. The size of the predicate representation
is one possible proxy: It should be as small as possible.
\end{remark}

\begin{example}
If we restrict $p\equiv q$ in Problem~\ref{pb:ev}, then
$(\mathit{prune}\;p\;q)\equiv\fls$ is a good~solution.
\end{example}

\begin{example}
Table~\ref{tbl:simpl_ex} shows how \textit{prune} could simplify a
particular VC\null.  To make this example more concrete, the reader might
wish to plug in $p_k\equiv(v>-k)$. Note that $p_1\limp p_3$ is a smaller
feasible simplified VC, but likely harder to prove than $(p_1\land
p_2)\limp p_3$.
\end{example}

\begin{table}\centering
\caption{Example of simplification}
\label{tbl:simpl_ex}
\begin{tabular}{rccc}
\toprule
  & Initial & Modified & Simplified \\ \midrule
&
\vtop{
  \hbox{\strut\textbf{assume} $p_1$}
  \hbox{\strut\textbf{assert} $p_2$}} &
\vtop{
  \hbox{\strut\textbf{assume} $p_1$}
  \hbox{\strut\textbf{assert} $p_2$}
  \hbox{\strut\textbf{assert} $p_3$}} &
\vtop{
  \hbox{\strut\textbf{assume} $p_1$}
  \hbox{\strut\textbf{assume} $p_2$}
  \hbox{\strut\textbf{assert} $p_3$}} \\

$\mathit{vc}_\mathit{sp}$ &
$p_1\limp p_2$ &
$(p_1\limp p_2)\land\bigl((p_1\land p_2) \limp p_3\bigr)$ &
$(p_1\land p_2)\limp p_3$ \\

$\lnot\mathit{vc}_\mathit{sp}$ &
$p_1\land\lnot p_2$ &
$(p_1\land\lnot p_2) \lor(p_1\land p_2\land\lnot p_3)$ &
$p_1\land p_2\land\lnot p_3$ \\
\bottomrule
\end{tabular}
\end{table}

To characterize a wide class of predicate transformers that satisfy the
requirements of Problem~\ref{pb:ev}, it helps to introduce the notion of
interpolant.
\begin{definition}[interpolant]\index{interpolant}
We say that predicate~$r$ is an \emph{interpolant} of predicates
$p$~and~$q$ when both $p\limp r$ and $r\limp q$ hold.
\end{definition}
\begin{remark}
In alternative definitions, predicate~$r$ is said to be an interpolant 
of predicates $\lnot p$~and~$q$, and it is constrained to contain only
symbols that appear in both predicates $p$~and~$q$.
\end{remark}

\begin{theorem}\label{th:ev.gen}
If predicate~$r$ is an interpolant of predicates $\lnot p\land q$
and~$q$, then $|\lnot p|\limp\bigl(|\lnot q|=|\lnot r|\bigr)$.
\end{theorem}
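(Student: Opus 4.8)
The plan is to reduce the whole statement to a single observation: under the hypothesis $|\lnot p|$, the conjunction $\lnot p\land q$ collapses to $q$, after which the two interpolant implications pinch $q$ and $r$ together into equivalence. Everything then follows by manipulating validity.

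First I would unfold the definition of interpolant. Since $r$ is an interpolant of $\lnot p\land q$ and $q$, the hypothesis already gives us both $|(\lnot p\land q)\limp r|$ and $|r\limp q|$. These two facts hold unconditionally and will be the workhorses of the argument.

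Next I would bring in the extra assumption $|\lnot p|$. Validity of $\lnot p$ means $\lnot p\;\sigma$ holds for every store~$\sigma$, so $\lnot p\equiv\tru$ and hence $\lnot p\land q\equiv q$. Substituting this equivalence into the first interpolant implication turns $|(\lnot p\land q)\limp r|$ into $|q\limp r|$. Together with $|r\limp q|$, this yields $q\equiv r$. Finally, from $q\equiv r$ I get $\lnot q\equiv\lnot r$, and equivalent predicates are valid under exactly the same stores, so $|\lnot q|=|\lnot r|$, which is precisely the conclusion.

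The proof is short, so there is no deep obstacle; the only thing to be careful about is keeping the two levels of reasoning separate. The symbol $|\cdot|$ is a meta-level proposition (validity), and the outer $\limp$ in $|\lnot p|\limp\bigl(|\lnot q|=|\lnot r|\bigr)$ is a meta-level implication, whereas the inner $\limp$ occurring inside the interpolant conditions lives within predicates. Once this bookkeeping is made explicit, each step reduces to a one-line rewriting of a validity claim, and no appeal to the operational semantics or to the structure of $p$, $q$, $r$ is needed.
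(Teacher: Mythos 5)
Your proof is correct and takes essentially the same route as the paper's: both arguments hinge on the observation that under $|\lnot p|$ the first interpolant condition $\bigl|(\lnot p\land q)\limp r\bigr|$ collapses to $|q\limp r|$, which together with $|r\limp q|$ pinches $q$ and $r$ into equivalence and hence yields $|\lnot q|=|\lnot r|$. The only difference is presentational --- you unfold validity pointwise over stores, whereas the paper carries out the same steps calculationally via three small validity lemmas (and your closing remark about keeping meta-level implication distinct from the predicate-level $\limp$ is exactly the right bookkeeping).
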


\begin{proof}
We will need three simple facts about predicates, which follow
from~\eqref{eq:valid_notation} on page~\pageref{eq:valid_notation}.
\begin{align}
&\bigl|a=b\bigr|\limp\bigl(|a|=|b|\bigr) \label{eq:valid_eq} \\
&\bigl(|a|\land|b|\bigr)=\bigl|a\land b\bigr| \label{eq:valid_conj} \\
&\bigl(|a|\land|a\limp b|\bigr)\limp\bigl|b\bigr| \label{eq:valid_deduction}
\end{align}
The hypotheses are
\begin{align}
&\phantom{\;=\;}
  \bigl|(\lnot p\land q)\limp r\bigr| \\
&=
  \bigl|\lnot p\limp(q\limp r)\bigr| \label{eq:ev.h1} \\
\intertext{and}
&\phantom{\;=\;}
  \bigl|r\limp q\bigr|. \label{eq:ev.h2}
\end{align}
Now we calculate.
\begin{align}
&\phantom{\;=\;}
  |\lnot q| = |\lnot r| \\
&\Leftarrow
  |\lnot q= \lnot r| 
  &&\text{by \eqref{eq:valid_eq}} \\
&=
  \bigl|(q\limp r)\land(r\limp q)\bigr| \\
&=
  |q\limp r| \land |r\limp q|
  &&\text{by \eqref{eq:valid_conj}} \\
&=
  |q\limp r|
  &&\text{by \eqref{eq:ev.h2}} \\
&\Leftarrow
  |\lnot p|
  &&\text{by \eqref{eq:ev.h1} and \eqref{eq:valid_deduction}}
\end{align}
\end{proof}

Theorem~\ref{th:ev.gen} tells us that any algorithm for computing
interpolants is potentially useful for speeding up incremental
verification. This is significant because interpolants have other
uses~\cite{jain2008,mcmillan2005,jhala2006,mcmillan2008}.  It is
interesting, for example, that any technique developed for inferring loop
invariants is potentially useful for speeding up incremental verification.

\subsection{Pruning of Predicates}
\label{sec:ev.pruning}

Problem~\ref{pb:ev} has the simple solution $\mathit{prune}\;p\;q\equiv q$,
which means ``forget the old VC\null.'' Better solutions explore the
syntactic structure of $p$~and~$q$. Remember that the strongest
postcondition method produces a conjunction of implications
(see~\eqref{eq:spwp.sp.vc} on page~\pageref{eq:spwp.sp.vc}). In this case,
the following equations work well in~practice.
\begin{align}
&\mathit{prune}\;(p\lor r)\;(r\land q)\equiv\fls \label{eq:ev.prune1} \\
&
\begin{aligned}
&\mathit{prune}\;
  \bigl((p_1\land r_1)\lor\ldots\lor(p_m\land r_m)\bigr)\;
  \bigl((r_1\land\ldots\land r_m)\land(q_1\land\ldots\land q_n)\bigr) \\
&\quad\equiv(r_1\land\ldots\land r_m) \\
&\quad\quad\land(\mathit{prune}\;(p_1\lor\ldots\lor p_m)\;q_1)
  \land\ldots\land(\mathit{prune}\;(p_1\lor\ldots\lor p_m)\;q_n)
\end{aligned} \label{eq:ev.prune2} \\
&\mathit{prune}\;p\;(q_1\lor\ldots\lor q_n) 
  \equiv
  (\mathit{prune}\;p\;q_1) \lor\ldots\lor
  (\mathit{prune}\;p\;q_n) 
  \label{eq:ev.prune3} \\
&\mathit{prune}\;p\;q\equiv q \label{eq:ev.prune4}
\end{align}
Algorithms use these equations as left-to-right rewrite rules.
Roughly, they are tried in order, but
\eqref{eq:ev.prune2}~and~\eqref{eq:ev.prune3} are skipped when they reduce
to identity. For example, \eqref{eq:ev.prune3} is applied only if $n\ne 1$.

\begin{example}
The insertion of an assertion, which is illustrated in
Table~\ref{tbl:simpl_ex}, is handled as follows.
\begin{align}
&\phantom{\;\equiv\;}
  \mathit{prune}
    \;\lnot(p_1\limp p_2)
    \;\lnot\bigl((p_1\limp p_2)\land((p_1\land p_2)\limp p_3)\bigr) \\
&\equiv
  \mathit{prune}
    \;(p_1\land\lnot p_2)
    \;\bigl((p_1\land p_2)\lor(p_1\land p_2\land\lnot p_3)\bigr) \\
&\begin{aligned}\equiv
  &\bigl(\mathit{prune}\;(p_1\land\lnot p_2)\;(p_1\land\lnot p_2)\bigr)\\
  &\quad\lor
  \bigl(\mathit{prune}
    \;(p_1\land\lnot p_2)
    \;(p_1\land p_2\land\lnot p_3)
  \bigr)
  \end{aligned}
  &&\text{by \eqref{eq:ev.prune3}} \\
&\equiv
  \mathit{prune}\;(p_1\land\lnot p_2)\;(p_1\land p_2\land\lnot p_3)
  &&\text{by \eqref{eq:ev.prune1}} \\
&\equiv
  p_1\land\bigl(\mathit{prune}\;\lnot p_2\;(p_2\land\lnot p_3)\bigr)
  &&\text{by \eqref{eq:ev.prune2}} \label{eq:ev.prune.useless} \\
&\equiv
  p_1\land p_2\land\lnot p_3
  &&\text{by \eqref{eq:ev.prune4}} \\
&\equiv
  \lnot\bigl((p_1\land p_2)\limp p_3\bigr)
\end{align}
Step~\eqref{eq:ev.prune.useless} is useless, but \eqref{eq:ev.prune2} is
useful in other examples.
\end{example}

\begin{example}
Consider now the case of a program with two assertions out of which one is
modified.
\begin{align}
&\begin{aligned}\phantom{\;\equiv\;}
  \mathit{prune}\;
    &\lnot\bigl((p_1\limp q_1)\land(p_2\limp q_2\land q_3)\bigr) \\
    &\lnot\bigl((p_1\limp q_1)\land(p_2\limp q_2\land q_4)\bigr) 
  \end{aligned}
  \\
&\begin{aligned}\equiv
  \mathit{prune}\;
    &\bigl((p_1\land\lnot q_1)\lor(p_2\land(\lnot q_2\lor\lnot q_3))\bigr)\\
    &\bigl((p_1\land\lnot q_1)\lor(p_2\land(\lnot q_2\lor\lnot q_4))\bigr)
  \end{aligned}
  \\
&\begin{aligned}\equiv\;
  &\Bigl(
  \mathit{prune}\;
    \bigl((p_1\land\lnot q_1)\lor(p_2\land(\lnot q_2\lor\lnot q_3))\bigr)\;
    (p_1\land\lnot q_1)
  \Bigr)\\
  &\quad\begin{aligned}\lor
  \Bigl(
  \mathit{prune}\;
    &\bigl((p_1\land\lnot q_1)\lor(p_2\land(\lnot q_2\lor\lnot q_3))\bigr)\\
    &(p_2\land(\lnot q_2\lor\lnot q_4))
  \Bigr)
  \end{aligned}
  \end{aligned}
  &&\text{by \eqref{eq:ev.prune3}} \\
&\begin{aligned}\equiv
  \mathit{prune}\;
    &\bigl((p_1\land\lnot q_1)\lor(p_2\land(\lnot q_2\lor\lnot q_3))\bigr)\\
    &\bigl(p_2\land(\lnot q_2\lor\lnot q_4)\bigr)
  \end{aligned}
  &&\text{by \eqref{eq:ev.prune1}} \\
&\equiv
  p_2\land\Bigl(
    \mathit{prune}\;
      \bigl((p_1\land\lnot q_1)\lor\lnot q_2\lor\lnot q_3\bigr)\;
      \bigl(\lnot q_2\lor\lnot q_4\bigr)
  \Bigr)
  && \text{by \eqref{eq:ev.prune2}} \\
&\begin{aligned}\equiv
  p_2\land\Bigl(
    &\Bigl(
      \mathit{prune}\;
        \bigl((p_1\land\lnot q_1)\lor\lnot q_2\lor\lnot q_3\bigr)\;
        \lnot q_2
    \Bigr) \\
    \lor&\Bigl(
      \mathit{prune}\;
        \bigl((p_1\land\lnot q_1)\lor\lnot q_2\lor\lnot q_3\bigr)\;
        \lnot q_4
    \Bigr)
  \Bigr)\end{aligned}
  && \text{by \eqref{eq:ev.prune3}} \\
&\equiv
  p_2\land\Bigl(
    \mathit{prune}\;
      \bigl((p_1\land\lnot q_1)\lor\lnot q_2\lor\lnot q_3\bigr)\;
      \lnot q_4
  \Bigr)
  && \text{by \eqref{eq:ev.prune1}} \\
&\equiv
  p_2\land\lnot q_4
  && \text{by \eqref{eq:ev.prune4}} \\
&\equiv
  \lnot(p_2\limp q_4)
\end{align}
In this calculation \eqref{eq:ev.prune2} plays an essential role.
\end{example}

For the proof that \eqref{eq:ev.prune1}--\eqref{eq:ev.prune4} solve
Problem~\ref{pb:ev} it is convenient to introduce some shorthand notations.
The range of~$i$ is implicitly $1.\,.\,m$ and $(\circ i,\;p_i)$ stands for
$p_1\circ\cdots\circ p_m$; similarly, the range of~$j$ is implicitly
$1.\,.\,n$ and $(\circ j,\;p_j)$ stands for $p_1\circ\cdots\circ p_n$.
Equations \eqref{eq:ev.prune2}~and~\eqref{eq:ev.prune3} become shorter.
{\def\x{\mathit{prune}\;p\;(\lor j,\;q_j)}
\begin{align}
&\begin{aligned}
&\mathit{prune}\;
  (\lor i,\;p_i\land r_i)\;((\land i,\;r_i)\land(\land j,\;q_j)) \\
  &\phantom{\x}\equiv 
  (\land i,\;r_i)\land(\land j,\;\mathit{prune}\;(\lor i,\;p_i)\;q_j)
\end{aligned} \label{eq:ev.prune2.short} \\
&\x\equiv(\lor j,\;\mathit{prune}\;p\;q_j) \label{eq:ev.prune3.short}
\end{align}}

\begin{lemma}
\label{lemma:ev1}
The predicate $\mathit{prune}\;p\;q$ given
by~\eqref{eq:ev.prune1}--\eqref{eq:ev.prune4} is weaker than the
predicate~$\lnot p\land q$.
\begin{equation}
|(\lnot p\land q)\limp(\mathit{prune}\;p\;q)| \label{eq:ev.lemma2}
\end{equation}
\end{lemma}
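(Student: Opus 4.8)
The plan is to prove the validity \eqref{eq:ev.lemma2} by induction on the structure of the computation of $\mathit{prune}\;p\;q$, equivalently on the size of the second argument~$q$, which strictly decreases in every recursive subcall appearing in \eqref{eq:ev.prune2.short} and \eqref{eq:ev.prune3.short}. There is one case per defining equation, according to which rule is matched. In each case I would substitute the right-hand side of the matched equation for $\mathit{prune}\;p\;q$ and then argue purely by boolean algebra on predicates, viewed as sets of stores as in Section~\ref{sec:predicate-transf}, invoking the induction hypothesis on each recursive subcall.

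The two base cases are immediate. For \eqref{eq:ev.prune4} the claim is $|(\lnot p\land q)\limp q|$, which holds because $\lnot p\land q$ is stronger than~$q$. For \eqref{eq:ev.prune1} I would observe that $\lnot(p\lor r)\equiv\lnot p\land\lnot r$, so the antecedent $\lnot(p\lor r)\land(r\land q)$ contains the contradictory pair $\lnot r\land r$ and is therefore equivalent to~$\fls$; the implication into~$\fls$ is then vacuously valid.

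Case \eqref{eq:ev.prune3.short} is where the induction hypothesis first enters. Distributing, $\lnot p\land(\lor j,\;q_j)\equiv(\lor j,\;\lnot p\land q_j)$. By the induction hypothesis each disjunct satisfies $|(\lnot p\land q_j)\limp\mathit{prune}\;p\;q_j|$, and each $\mathit{prune}\;p\;q_j$ implies the full disjunction $(\lor j,\;\mathit{prune}\;p\;q_j)$, so the implication between the two disjunctions is valid.

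The main obstacle is case \eqref{eq:ev.prune2.short}. Here the negated antecedent is $\lnot(\lor i,\;p_i\land r_i)\land(\land i,\;r_i)\land(\land j,\;q_j)$, and the key boolean simplification is that $\lnot(\lor i,\;p_i\land r_i)\land(\land i,\;r_i)$ equals $\lnot(\lor i,\;p_i)\land(\land i,\;r_i)$: distributing the negation yields $(\land i,\;\lnot p_i\lor\lnot r_i)$, and conjoining with each~$r_i$ kills the $\lnot r_i$ branch, leaving $(\land i,\;\lnot p_i\land r_i)$. Writing $P$ for $(\lor i,\;p_i)$, the antecedent thus becomes $\lnot P\land(\land i,\;r_i)\land(\land j,\;q_j)$. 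The conjunct $(\land i,\;r_i)$ matches the corresponding conjunct of the claimed right-hand side directly. For the remaining conjuncts, the induction hypothesis gives $|(\lnot P\land q_j)\limp\mathit{prune}\;P\;q_j|$ for each~$j$, and since $\lnot P\land(\land j,\;q_j)$ implies $\lnot P\land q_j$ for every~$j$, it implies each $\mathit{prune}\;P\;q_j$ and hence their conjunction. Combining the two parts establishes \eqref{eq:ev.lemma2}, completing the induction.
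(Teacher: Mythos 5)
Your proof is correct and essentially identical to the paper's: the same induction with one case per rule \eqref{eq:ev.prune1}--\eqref{eq:ev.prune4}, the same key simplification $\lnot(\lor i,\;p_i\land r_i)\land(\land i,\;r_i)\equiv\lnot(\lor i,\;p_i)\land(\land i,\;r_i)$ in the \eqref{eq:ev.prune2} branch (the paper uses only the implication direction, which suffices), and the same distribute-then-apply-induction step in the \eqref{eq:ev.prune3} branch. The only immaterial difference is the induction measure---you use the size of the second argument, the paper the total size of both---and both strictly decrease across the recursive calls, so either works.
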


\begin{proof}
The proof is by induction on the total size of the two arguments.

\noindent Branch~\eqref{eq:ev.prune1}:
\begin{align}
&\phantom{\;=\;}
  \lnot(p\lor r) \land (r\land q) \\
&= 
  \fls \\
&\limp
  \mathit{prune}\;(p\lor r)\;(r\land q)
\end{align}
Branch~\eqref{eq:ev.prune2} is more interesting:
\begin{align}
&\phantom{\;=\;}
    \lnot(\lor i,\; p_i\land r_i) \land
    (\land i,\; r_i) \land
    (\land j,\; q_j) \\
&\limp
    (\land i,\; r_i) \land
    \lnot(\lor i,\; p_i) \land
    (\land j, q_j) \\
&=
    (\land i,\; r_i) \land
    (\land j,\; \lnot(\lor i,\; p_i) \land q_j) \\
&\limp
    (\land i,\; r_i) \land
    (\land j,\; \mathit{prune}\;(\lor i,\; p_i)\;q_j)
  &&\text{by induction} \\
&=
  \mathit{prune}\;
    (\lor i,\; p_i\land r_i)\;
    ((\land i,\; r_i)\land(\land j,\; q_j))
  &&\text{by \eqref{eq:ev.prune2.short}}
\end{align}
Branch~\eqref{eq:ev.prune3}:
\begin{align}
&\phantom{\;=\;}
  \lnot p \land (\lor j,\; q_j) \\
&=
  (\lor j,\; \lnot p\land q_j) \\
&\limp
  (\lor j,\; \mathit{prune}\;p\;q_j)
  &&\text{by induction} \\
&=
  \mathit{prune}\;p\;(\lor j,\; q_j)
  &&\text{by \eqref{eq:ev.prune3.short}}
\end{align}
Branch~\eqref{eq:ev.prune4} is immediate.
\end{proof}

\begin{lemma}
\label{lemma:ev2}
The predicate $\mathit{prune}\;p\;q$ given
by~\eqref{eq:ev.prune1}--\eqref{eq:ev.prune4} is stronger
than the predicate~$q$.
\begin{equation}
|(\mathit{prune}\;p\;q)\limp q| \label{eq:ev.lemma1}
\end{equation}
\end{lemma}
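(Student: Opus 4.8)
The plan is to prove \eqref{eq:ev.lemma1} by induction on the total size of the two arguments $p$ and~$q$, exactly mirroring the structure of the proof of Lemma~\ref{lemma:ev1}. Since \textit{prune} is defined by the case analysis \eqref{eq:ev.prune1}--\eqref{eq:ev.prune4}, the induction splits into the same four branches, and in each branch I simply check that the right-hand side of the defining equation implies~$q$.

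Three of the four branches are immediate. In branch~\eqref{eq:ev.prune1} the result is $\fls$, and $\fls$ implies anything, so $\mathit{prune}\;(p\lor r)\;(r\land q)\limp(r\land q)$ holds trivially. In branch~\eqref{eq:ev.prune4} we have $\mathit{prune}\;p\;q\equiv q$, so the claim is just $|q\limp q|$. In branch~\eqref{eq:ev.prune3}, using the shorthand~\eqref{eq:ev.prune3.short}, the inductive hypothesis gives $\mathit{prune}\;p\;q_j\limp q_j$ for each~$j$; since disjunction is monotone, $(\lor j,\;\mathit{prune}\;p\;q_j)\limp(\lor j,\;q_j)$, which is exactly what is needed.

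The only branch that requires care is~\eqref{eq:ev.prune2}. Using the shorthand~\eqref{eq:ev.prune2.short}, I must show
\[
  \bigl((\land i,\;r_i)\land(\land j,\;\mathit{prune}\;(\lor i,\;p_i)\;q_j)\bigr)
  \limp \bigl((\land i,\;r_i)\land(\land j,\;q_j)\bigr).
\]
The conjunct $(\land i,\;r_i)$ is common to both sides, so it suffices to push the implication through the remaining conjunction: by the inductive hypothesis applied to the strictly smaller arguments $(\lor i,\;p_i)$ and~$q_j$, we have $\mathit{prune}\;(\lor i,\;p_i)\;q_j\limp q_j$ for every~$j$, and conjunction is monotone, so $(\land j,\;\mathit{prune}\;(\lor i,\;p_i)\;q_j)\limp(\land j,\;q_j)$. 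This is the crux, and it is where the recursive structure of~\eqref{eq:ev.prune2} pays off; the step is dual to the corresponding one in Lemma~\ref{lemma:ev1}, where monotonicity of conjunction was used in the opposite direction.

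I expect no serious obstacle: this is the ``easy'' direction, since the right-hand sides of \eqref{eq:ev.prune1}--\eqref{eq:ev.prune4} are built only from subterms of~$q$ (or from $\fls$), hence each is manifestly at least as strong as~$q$. Combined with Lemma~\ref{lemma:ev1}, this lemma shows that $\mathit{prune}\;p\;q$ is an interpolant of $\lnot p\land q$ and~$q$, so Theorem~\ref{th:ev.gen} yields $|\lnot p|\limp\bigl(|\lnot q|=|\lnot(\mathit{prune}\;p\;q)|\bigr)$, which is precisely the requirement of Problem~\ref{pb:ev}.
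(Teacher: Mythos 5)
Your proposal is correct and takes essentially the same route as the paper's own proof: induction on the total size of the two arguments, with the four branches of \eqref{eq:ev.prune1}--\eqref{eq:ev.prune4} discharged via the shorthands \eqref{eq:ev.prune2.short} and \eqref{eq:ev.prune3.short}, monotonicity of $\land$ and $\lor$, and the inductive hypothesis. Your added remarks (the duality with Lemma~\ref{lemma:ev1} and the link to Theorem~\ref{th:ev.gen}) are accurate but not needed for the lemma itself.
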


\begin{proof}
The proof is by induction on the total size of the two arguments.

\noindent Branch~\eqref{eq:ev.prune1}:
\begin{align}
&\phantom{\;=\;}
  \mathit{prune}\;(p\lor r)\;(r\land q) \\
&=
  \fls
  &&\text{by \eqref{eq:ev.prune1}} \\
&\limp 
  p\land r
\end{align}
Branch~\eqref{eq:ev.prune2}:
\begin{align}
&\phantom{\;=\;}
  \mathit{prune}\;
    (\lor i,\; p_i\land r_i)\;
    \bigl((\land i,\; r_i)\land (\land j,\; q_j) \bigr) \\
&=
  (\land i,\;r_i)\land(\land j,\;\mathit{prune}\;(\lor i,\;p_i)\;q_j)
  &&\text{by \eqref{eq:ev.prune2.short}}  \\
&\limp
  (\land i,\; r_i) \land (\land j,\; q_j)
  &&\text{by induction}
\end{align}
Branch~\eqref{eq:ev.prune3}:
\begin{align}
&\phantom{\;=\;}
  \mathit{prune}\;p\;(\lor j,\;q_j) \\
&=
  (\lor j,\; \mathit{prune}\;p\;q_j)
  &&\text{by \eqref{eq:ev.prune3.short}} \\
&\limp
  (\lor j,\; q_j)
  &&\text{by induction}
\end{align}
Branch~\eqref{eq:ev.prune4} is immediate.
\end{proof}

\begin{theorem}\label{th:ev}
The predicate transformer \textit{prune} defined
by~\eqref{eq:ev.prune1}--\eqref{eq:ev.prune2} solves Problem~\ref{pb:ev} on
page~\pageref{pb:ev}.
\end{theorem}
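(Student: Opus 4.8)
The plan is to observe that the two preceding lemmas have been arranged precisely so that $\mathit{prune}\;p\;q$ is an interpolant, and then to close the argument in a single line using Theorem~\ref{th:ev.gen}. Recall that Theorem~\ref{th:ev.gen} delivers exactly the conclusion required by Problem~\ref{pb:ev}, namely $|\lnot p|\limp\bigl(|\lnot q|=|\lnot r|\bigr)$, whenever $r$ is an interpolant of $\lnot p\land q$ and~$q$. So the only task is to exhibit such an interpolant, and the claim is that $r=\mathit{prune}\;p\;q$ works.

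First I would unfold the definition of interpolant: the predicate $r$ is an interpolant of $\lnot p\land q$ and~$q$ when both $(\lnot p\land q)\limp r$ and $r\limp q$ hold. Lemma~\ref{lemma:ev1} states $|(\lnot p\land q)\limp(\mathit{prune}\;p\;q)|$, which is the first of these conditions with $r=\mathit{prune}\;p\;q$; Lemma~\ref{lemma:ev2} states $|(\mathit{prune}\;p\;q)\limp q|$, which is the second. Together they say that $\mathit{prune}\;p\;q$ is sandwiched between $\lnot p\land q$ and~$q$, hence is an interpolant of that pair.

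It then remains only to instantiate Theorem~\ref{th:ev.gen} with $r=\mathit{prune}\;p\;q$, which yields $|\lnot p|\limp\bigl(|\lnot q|=|\lnot(\mathit{prune}\;p\;q)|\bigr)$, the defining property of a solution to Problem~\ref{pb:ev}. This finishes the proof.

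There is essentially no obstacle at the level of this theorem itself: all the genuine content has been pushed into the two structural inductions that prove Lemma~\ref{lemma:ev1} and Lemma~\ref{lemma:ev2}, where each rewrite branch~\eqref{eq:ev.prune1}--\eqref{eq:ev.prune4} is checked in turn. The only point that demands any care is the bookkeeping of which predicate plays the role of the interpolant and which pair it interpolates; once one notices that the two lemmas bracket $\mathit{prune}\;p\;q$ between $\lnot p\land q$ and~$q$ in exactly the combination that Theorem~\ref{th:ev.gen} consumes, the result follows immediately.
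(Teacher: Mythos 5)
Your proposal is correct and is exactly the paper's argument: the paper's proof reads ``Immediate from Lemmas \ref{lemma:ev1}~and~\ref{lemma:ev2} and Theorem~\ref{th:ev.gen},'' and you have simply spelled out the instantiation $r=\mathit{prune}\;p\;q$ that those lemmas make available. Your observation that all the real work lives in the two inductive lemmas matches the paper's structure precisely.
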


\begin{proof}
Immediate from Lemmas \ref{lemma:ev1}~and~\ref{lemma:ev2} and 
Theorem~\ref{th:ev.gen}.
\end{proof}

\subsection{Algorithm}

To analyze the algorithm implied
by~\eqref{eq:ev.prune1}--\eqref{eq:ev.prune4} we need a more precise
description; to give a more precise description it helps to define a few
basic operations. We assume that SMT terms are dags. Given a term~$t$ we
may ask for its label $\mathit{Label}(t)$ and for the list
$\mathit{Children}(t)$ of its children.  Possible labels include
$\land$~and~$\lor$.  Given a set~$T$ of terms, we may obtain a term
$\mathit{And}(t)$ and a term $\mathit{Or}(t)$ representing the conjunction
and, respectively, the disjunction of $T$'s elements. Also,
$\mathit{False}()$ returns a representation of~$\fls$.  We assume that all
SMT terms are created using hash-consing, including those returned by
$\mathit{And}(\cdot)$, $\mathit{Or}(\cdot)$, and $\mathit{False}()$. In
particular, $\mathit{False}()$ returns the same reference every time.
Moreover, we assume that the hash-consing mechanism is aware that
$\land$~and~$\lor$ are commutative: $\mathit{And}(S)$ and $\mathit{And}(T)$
return the same reference when the sets $S$~and~$T$ are equal; similarly
for~$\mathit{Or}(\cdot)$. Instead of relying on the hash-consing mechanism
to handle associativity, we treat it explicitly using the basic operation
$\mathit{Flatten}(\cdot,\cdot)$.
\begin{equation*}
\vbox{\footnotesize
\begin{alg}
\^  $\proc{Flatten}(\mathit{op}, t)$ \comment memoized
\=  ~if~ $\mathit{Label}(t)\ne\mathit{op}$
\+    ~return~ $\{t\}$
\-  ~else~
\+    ~return~ $\bigcup\{\mathit{Flatten}(\mathit{op},u)\mid u\in\mathit{Children}(t)\}$
\end{alg}}
\end{equation*}
(This approach leads to fewer special cases, essentially because we can
treat any term~$t$ as $\mathit{And}(\{t\})$ or as $\mathit{Or}(\{t\})$,
without being explicit about it.) The last line of \textit{Flatten} uses
somewhat non-standard notations for set operations. It is important to
understand what they mean and what is their run-time overhead. An easy way
to do so is to show a possible implementation in Java. The comprehension
$\{e(u)\mid u\in s\}$ stands for the call $\mathit{mapE}(s)$, where
\textit{mapE} is defined as follows.
\begin{jml}
static HashSet<T> mapE(Iterable<T> p) {
  HashSet<T> r = new HashSet<T>();
  for (T u : S) r.add(e(u));
  return r;
}
\end{jml}
Note that the argument may be a list or a set. Here $e(u)$ is some
expression not involving side-effects.  Because sets are represented with
\textit{HashSet}s, insertion takes constant time. Similarly, $\cup s$
stands for the call $\mathit{union}(s)$, where \textit{union} is defined
as follows.
\begin{jml}
static HashSet<T> union(HashSet<HashSet<T>> s) {
  HashSet<T> r = new HashSet<T>();
  for (HashSet<T> u : s) for (T v : u) r.add(v);
  return r;
}
\end{jml}

\begin{figure}\centering\leavevmode\vbox{
\begin{alg}
\^  $\proc{Prune}(p,q)$ \comment memoized
\0  $q_\lor := \mathit{Flatten}(\lor, q)$ \comment $\{q_1,\ldots,q_n\}$ in \eqref{eq:ev.prune3}
\=  $q_\land := \mathit{Flatten}(\land, q)$
\=  $p_\lor := \mathit{Flatten}(\lor, p)$
\=  $p_{\lor\land} := \{ \mathit{Flatten}(\land, u) \mid u\in p_\lor\}$
\=  ~if~ $p_\lor\cap q_\land\ne\emptyset$ \comment apply \eqref{eq:ev.prune1}
\+    ~return~ $\mathit{False}()$
\-  $r:=\cup\{ u\cap q_\land \mid u\in p_{\lor\land} \}$ \comment $\{r_1,\ldots,r_m\}$ in \eqref{eq:ev.prune2}
\=  ~if~ $r\ne\emptyset$ \comment apply \eqref{eq:ev.prune2}
\+    $p':=\mathit{Or}(\{\mathit{And}(u-r)\mid u\in p_{\lor\land}\})$ \comment $p_1\lor\ldots\lor p_m$ in \eqref{eq:ev.prune2}
\=    ~return~ $\mathit{And}(r\cup\{\mathit{Prune}(p', u)\mid u\in q_\land-r\})$
\-  ~if~ $|q_\lor|>1$ \comment apply \eqref{eq:ev.prune3}
\+    ~return~ $\mathit{Or}(\{\mathit{Prune}(p,u)\mid u\in q_\lor\})$
\-  ~return~ $q$ \comment apply \eqref{eq:ev.prune4}
\end{alg}}
\caption{Algorithm for pruning SMT trees}\label{fig:ev.alg}
\end{figure}

It is now easy to convert the algorithm in Figure~\ref{fig:ev.alg} into a
Java program; it is yet unclear for how long the program will run.

\begin{proposition}
In the worst case, the algorithm in Figure~\ref{fig:ev.alg} takes at least
exponential time, when the size of the input is defined to be the space
occupied by the representation of predicates $p$~and~$q$.
\end{proposition}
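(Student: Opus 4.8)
The plan is to exhibit an explicit family of inputs of size $\Theta(n)$ on which the memoized recursion in Figure~\ref{fig:ev.alg} spawns $2^{\Omega(n)}$ pairwise-distinct subproblems. The first step is to pin down where super-polynomial cost could possibly come from. Each individual call does only polynomial work (the calls to $\mathit{Flatten}$, the set intersections and unions, and the builders $\mathit{And}$, $\mathit{Or}$ are all polynomial in the current DAG sizes, which are bounded by the number of hash-consed nodes created so far), and every term produced is hash-consed. Hence the total running time is polynomial unless the number of \emph{distinct} argument pairs $(p,q)$ passed to $\mathit{Prune}$ is super-polynomial. Since memoization collapses any two calls with identical arguments, the whole difficulty is to defeat memoization.

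The mechanism to defeat it is branch~\eqref{eq:ev.prune2}. Branch~\eqref{eq:ev.prune3} leaves the first argument $p$ unchanged while splitting $q$ over its disjuncts, so on its own it exploits all sharing in $q$ and causes no blowup; but branch~\eqref{eq:ev.prune2} replaces $p$ by $p' = \mathit{Or}(\{\mathit{And}(u-r)\mid u\in p_{\lor\land}\})$, that is, by a copy of $p$ from which the matched conjuncts $r$ have been stripped. The first argument therefore records \emph{which} conjuncts of $q$ have already been descended through. I would construct a hash-consed DAG $q_n$ of size $\Theta(n)$ that alternates $\land$ and $\lor$ and shares subterms so that its AND/OR tree-unfolding has $2^{\Omega(n)}$ leaves, together with a $p_n$ engineered so that descending along different root-to-leaf paths of $q_n$ triggers branch~\eqref{eq:ev.prune2} with path-dependent removal sets $r$. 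The literals of $p_n$ and $q_n$ would be chosen fresh so that the stripped predicate $p'$ reached at a given node depends on the entire path, not merely on the node.

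With such a family in hand, the proof concludes by an induction on $n$ (equivalently on recursion depth) showing that $\mathit{Prune}(p_n,q_n)$ reaches at least $2^{\Omega(n)}$ pairwise-distinct pairs $(p',q')$. Because these first components are syntactically distinct DAGs, hash-consing cannot identify them and memoization provides no saving: each such pair forces a fresh entry in the memo table and at least constant work. Consequently the running time is $2^{\Omega(n)}$ while the input size is $\Theta(n)$, which is the claimed exponential lower bound.

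The main obstacle is the second step: proving that the first arguments $p'$ arising along distinct paths really are distinct despite the aggressive, commutativity-aware hash-consing of $\mathit{And}$ and $\mathit{Or}$. This requires tracking, at every node of $q_n$, exactly which guard fires (in particular that the branch~\eqref{eq:ev.prune2} test $r\neq\emptyset$ holds where intended and that branch~\eqref{eq:ev.prune1} does not short-circuit to $\mathit{False}()$), and verifying that the set subtraction $u-r$ in the definition of $p'$ yields non-collapsing, path-distinguishable terms. Choosing the fresh ``labelling'' literals and checking these guard conditions case by case is the delicate part; once distinctness is established, the counting argument is routine.
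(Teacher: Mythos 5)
You have correctly isolated the mechanism---branch~\eqref{eq:ev.prune2} rewrites the \emph{first} argument in a path-dependent way, so memoization is defeated by sharing in the \emph{second} argument---and this is exactly the route the paper takes, down to the observation that per-call work is polynomial and the only possible source of blowup is the number of distinct argument pairs. But your proposal stops where the proof begins: for a lower-bound proposition the witness family \emph{is} the proof, and you leave it at ``I would construct a hash-consed DAG $q_n$ \dots\ together with a $p_n$ engineered so that \dots'', explicitly deferring both the construction and the guard verification. The paper's proof consists almost entirely of that construction. Concretely, it takes
\begin{align}
p_0 &\equiv (u_1\land\cdots\land u_n)\lor(v_1\land\cdots\land v_n), \\
p_{k+1} &\equiv (u_{k+1}\land p_k)\lor(p_k\land v_{k+1}),
\end{align}
so that $p_n$ is a dag of size $\Theta(n)$ (each $p_k$ shared by both disjuncts of $p_{k+1}$) with exponentially many root-to-leaf unfoldings---your alternating AND/OR dag---and considers the call $\mathit{Prune}(p_0,p_n)$. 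A short induction then shows the body executes on all pairs $(q_{k,S},p_k)$ with $S\subseteq\{k+1,\ldots,n\}$, where $q_{k,S}$ is $p_0$ with its left conjunct thinned to $\{u_i \mid i\le k \text{ or } i\in S\}$ and its right conjunct to $\{v_i \mid i\notin S\}$: at level $k+1$, branch~\eqref{eq:ev.prune3} splits over the two disjuncts of $p_{k+1}$, and in each inner call the set $r$ is the singleton $\{u_{k+1}\}$ or $\{v_{k+1}\}$, so branch~\eqref{eq:ev.prune2} strips exactly that variable from the first argument and recurses on $q_\land-r=\{p_k\}$. Counting $2^{n-k}$ sets $S$ per level gives at least $2^{n+1}-1$ executions of the body on an input of size $13n+5$.

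With this family in hand, the obstacle you flag as delicate dissolves: the predicates $q_{k,S}$ have pairwise different underlying \emph{sets} of variables, so no commutativity- and associativity-aware hash-consing can identify any two of them, and checking which branch fires is a two-case computation per level (for the interesting levels, $p_\lor\cap q_\land=\emptyset$ because no conjunct of $q_{k,S}$ is a single variable or the $\lor$-rooted term $p_k$, so branch~\eqref{eq:ev.prune1} cannot short-circuit). So your plan is the right plan---indeed it is the paper's plan---but as submitted it proves nothing yet: the engineered $(p_n,q_n)$ and the induction establishing that the path-dependent stripped arguments are reached and are distinct must actually be exhibited, and that is the entire substance of the paper's proof.
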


\begin{proof}
Consider the predicates $p_0$, \dots, $p_n$ defined on variables $u_1$,
\dots,~$u_n$ and $v_1$, \dots,~$v_n$ as follows.
\begin{align}
p_0 &\;\equiv\;
  (u_1\land\ldots\land u_n)\lor(v_1\land\ldots\land v_n) \\
p_{k+1} &\;\equiv\; 
  (u_{k+1}\land p_k)\lor(p_k\land v_{k+1})
  &&\text{for $k\in[\,0.\,.\,n)$}
\end{align}
These formulas describe a dag in which $p_k$~is shared by the two disjuncts
of~$p_{k+1}$. The dag has $3(n+1)$~nodes for operators, $2n$~nodes for
variables, $2+2n$~edges for predicate~$p_0$, and $6n$~edges for the other
predicates.  The total size of the input is~$13n+5$.

The algorithm will construct extra predicates that have a shape similar to
predicate~$p_0$, but lack some variables at the tail of the conjuncts.
\begin{multline}
q_{k,S} = 
  (\land i \in [1.\,.\,k]\cup S,\; u_i) \lor 
  (\land i \in [1.\,.\,n]-S,\; v_i) \\
  \text{for $k\in[1.\,.\,n]$ and $S\subseteq(k.\,.\,n]$}
\end{multline}
In particular, $q_{n,\emptyset}$ is $p_0$. The call
$\mathit{Prune}(p_0,p_n)$ will trigger the execution of the body of method
\textit{Prune} with arguments $(q_{k,S},p_k)$ for all $k\in[0.\,.\,n]$
and $S\subseteq(k.\,.\,n]$.

This can be seen by induction. When $k=n$ the set~$S$ must be empty and the
arguments correspond to the initial call. When $k\in[\,0.\,.\,n)$ we have the
following call tree, for all $S\subseteq(k+1.\,.\,n]$.
\begin{align}
&\mathit{Prune}(q_{k+1,S},p_{k+1}) &&\text{takes branch \eqref{eq:ev.prune3}} \\
&\quad\mathit{Prune}(q_{k+1,S},u_{k+1}\land p_k) 
  &&\text{takes branch \eqref{eq:ev.prune2}}\\
&\quad\quad\mathit{Prune}(q_{k,S},p_k) \\
&\quad\mathit{Prune}(q_{k+1,S},p_k\land v_{k+1})
  &&\text{takes branch \eqref{eq:ev.prune2}}\\
&\quad\quad\mathit{Prune}(q_{k,{S\cup\{k+1\}} },p_k)
\end{align}
Hence, \textit{Prune} is recursively called with arguments 
$(q_{k,S},p_k)$ by the previous call with arguments 
$(q_{k+1,{S-\{k+1\}} },p_{k+1})$.

Since for each $k\in[0.\,.\,n]$ there are $2^{n-k}$ sets~$S$,
the body of method \textit{Prune} is executed at least $2^{n+1}-1$
times, which is exponential in the size of the input.
Because predicates~$q_{k,S}$ are newly created, the auxiliary
space used by the algorithm is also at least exponential.
\end{proof}

The proof relies on sharing in the second argument of \textit{Prune}: If
$(u_{k+1}\land p_k)\lor(p_k\land v_{k+1})$ would be simplified to
$p_k\land(u_{k+1}\lor v_{k+1})$, then the run-time would not be exponential.

\begin{proposition}
In the worst case, the algorithm in Figure~\ref{fig:ev.alg} takes at most
linear time, when the size of the input is defined to be the number of
paths starting at the root of dag~$q$.
\end{proposition}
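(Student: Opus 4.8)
The plan is to measure progress by the number of root-to-leaf paths in the second argument. For a term~$t$, let $\pi(t)$ be $1$ when $t$ is a leaf and $\sum_{c}\pi(c)$ over the children~$c$ otherwise. The crucial feature of this measure is that it is additive over children for \emph{both} $\land$-nodes and $\lor$-nodes, which is exactly what the two recursive branches of \textit{Prune} exploit: branch~\eqref{eq:ev.prune3} descends into the $\lor$-children $q_\lor$ and branch~\eqref{eq:ev.prune2} descends into the $\land$-children in $q_\land-r$. In either case the paths of~$q$ are split among the recursive subproblems, so no path is charged twice.

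First I would bound the number of times the body of \textit{Prune} executes. Writing $W(p,q)$ for the number of body executions below the call $\mathit{Prune}(p,q)$ in the (un-memoized) recursion tree, the recurrence is $W=1$ in the leaf branches~\eqref{eq:ev.prune1} and~\eqref{eq:ev.prune4}; $W(p,q)=1+\sum_{u\in q_\lor}W(p,u)$ in branch~\eqref{eq:ev.prune3}; and $W(p,q)=1+\sum_j W(p',q_j)$ in branch~\eqref{eq:ev.prune2}, where the $q_j$ range over $q_\land-r$. I would then prove $W(p,q)\le 2\pi(q)-1$ by induction on the size of~$q$. The $\lor$-case uses the guard $|q_\lor|>1$ together with $\pi(q)=\sum_u\pi(u)$; the $\land$-case uses $\pi(q)=\sum_i\pi(r_i)+\sum_j\pi(q_j)$ together with the guard $r\ne\emptyset$, which forces $\sum_i\pi(r_i)\ge 1$ and supplies exactly the slack needed to absorb the discarded conjuncts~$r_i$. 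Since memoization can only merge identical calls, the memoized count is no larger, so the body runs $O(\pi(q))$ times.

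It then remains to charge the per-call work. Each execution performs a constant number of flattenings and hash-set operations (\textit{Flatten}, $\cap$, $\cup$, and the builders $\mathit{And}$ and $\mathit{Or}$), each of whose cost at a node is proportional to the number of children processed there. Because \textit{Flatten} is memoized and the $\land/\lor$-skeleton of~$q$---whose internal nodes each have at least two children---has $O(\pi(q))$ nodes, these costs telescope to $O(\pi(q))$, and this is the step I would write out carefully, since it is where the bound could quietly degrade. The genuinely delicate point, and the main obstacle, is the work that touches the first argument: the construction of $p'$ in branch~\eqref{eq:ev.prune2} and the intersections $p_\lor\cap q_\land$ and $u\cap q_\land$, none of which is visible in the measure~$\pi(q)$. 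To keep the total linear in $\pi(q)$ alone I would rely on hash-consing to make the builder calls amortized~$O(1)$ and charge each intersection against the $q_\land$-element it inspects, so that the $p$-side cost is folded into the already-established $O(\pi(q))$ bound on the number of $q$-children examined across the whole recursion.
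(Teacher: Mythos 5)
Your counting argument is correct, but it takes a genuinely different route from the paper. The paper's entire proof is a one-line injection: the body of $\mathit{Prune}(p,q)$ recursively calls $\mathit{Prune}(p',q')$ only if there is a path in the dag from~$q$ to~$q'$, so every node of the recursion tree traces a distinct path starting at the root of~$q$, and the number of body executions is at most the number of such paths. You instead set up an explicit recurrence and prove $W(p,q)\le 2\pi(q)-1$ by induction, where $\pi$ counts only \emph{root-to-leaf} paths. Your induction goes through: in branch~\eqref{eq:ev.prune3} the guard $|q_\lor|>1$ supplies the needed slack, in branch~\eqref{eq:ev.prune2} the guard $r\ne\emptyset$ does (note $r\subseteq q_\land$, so $\sum_{u\in q_\land-r}\pi(u)\le\pi(q)-1$), and the flattened sets satisfy $\sum_{u\in\mathit{Flatten}(\mathit{op},q)}\pi(u)\le\pi(q)$ because set union can only collapse shared subterms. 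Since root-to-leaf paths are a subset of all paths starting at the root, your bound on the number of body executions is strictly sharper than the paper's. What the paper's coarser measure buys, and what your sharpening costs, shows up in exactly the two places you flag as delicate: every node reachable from the root of~$q$ contributes a distinct root-originating path, so under the paper's measure the cost of merely reading or flattening~$q$ is dominated automatically, with no structural assumption; under your measure, a unary chain of $\land$-nodes has $\pi=1$ but linear flattening cost, so your claim that the $\land/\lor$-skeleton has $O(\pi(q))$ nodes genuinely needs the ``every internal node has at least two children'' assumption, which the paper never states (though hash-consing builders plausibly enforce it).

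The one place where your sketch would not survive being written out is the $p$-side accounting. Charging the intersections $u\cap q_\land$ to the $q_\land$-elements inspected only works if you iterate over $q_\land$ probing~$u$, which costs $|p_{\lor\land}|\cdot|q_\land|$ per call, and constructing $p'$ requires materializing $\mathit{And}(u-r)$ for every $u\in p_{\lor\land}$, a cost proportional to $\sum_u|u|$ that no hash-consing can reduce to amortized $O(1)$ and that is not bounded by any function of~$q$ alone (take $q$ a single leaf and $p$ enormous). So a literal bound of the total running time by $O(\pi(q))$, or even by the paper's path count, is unattainable without a $p$-dependent factor. You should not be troubled by this: the paper's proof has the same gap under a literal reading, since it bounds only the number of body executions and silently treats per-call work as constant. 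The honest content of the proposition is the recursion count, which both you and the paper establish; your version just establishes a tighter constant for a smaller measure, at the price of the extra structural hypothesis above.
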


\begin{proof}
The body of $\mathit{Prune}(p,q)$ recursively calls $\mathit{Prune}(p',q')$
only if there is a path from~$q$ to~$q'$.
\end{proof}

The pruning algorithm works when predicates are represented by dags, but is
guaranteed to be efficient only if the dags have little sharing.

\section{Correspondence between Trees}
\label{sec:ev.correspondence}

Experiments showed that the algorithm of Figure~\ref{fig:ev.alg}
is useless by itself, because the SMT dags that represent
the predicates $p$~and~$q$ seldom share much. In particular,
if \escjava is used to produce a VC from the code in
Figure~\ref{fig:java_evol} (on page~\pageref{fig:java_evol}) with
and without the comment line~1, then there is almost no sharing,
despite hash-consing. The problem will still exist if instead of
\escjava we will use FreeBoogie with a Java frontend.

The issue is better understood if we take a step back and
look at the big picture. A frontend transforms the code in
Figure~\ref{fig:java_evol} into a Boogie program; a VC generator
transforms the Boogie program into an SMT query; an SMT solver
answers the query. The interfaces between these software
components are standardized at the language level---multiple
static verifiers understand the Boogie language and multiple
provers understand the SMT language. But, of course, information
must flow in the other direction too, because the user expects to
see a result.

How do we easily substitute an SMT solver for another if they
report results in different ways? And how do we substitute a
VC generator for another if they report results in different
ways? Leino et~al.~\cite{leino2005errors} describe an elegant
solution. The key insight is that no matter how an SMT solver
presents a counter-example, it probably includes a set of
identifiers. Similarly, no matter how a VC generator reports
errors, it probably includes a set of identifiers. Therefore,
extra information, like location, can be recovered from ornate
identifiers. For example, \escjava decorates identifiers by their
location, so that it knows to which occurrence of a certain
identifier in the Java code the counter-example refers to.
(Note that, since different versions are needed for passivation
anyway, they can be obtained by attaching strings that are
useful for error reporting too, instead of 1,~2, 3, \dots)

If the frontend that produces Boogie from the example in
Figure~\ref{fig:java_evol} decorates identifiers with location
information, then inserting a comment on line~1 results in a
Boogie program with completely different identifiers. Almost all
comparisons done by the algorithm of Figure~\ref{fig:ev.alg}
return false, and no simplification is~done.

\subsection{Matching SMT dags}
\label{sec:ev.match_dags}

To solve this problem, \fx improves the sharing in SMT dags
before pruning. The nodes in an SMT dag are labeled with a
string. Some labels, such as ``and'', ``or'', and ``41'', have
a special meaning for the prover while others, such as those
representing variables in predicates, are uninterpreted. The
uninterpreted labels are also called identifiers, because the
only semantic information they carry follows from whether two
of them are equal as strings or not. It is safe to rename
identifiers as long as their semantic information is preserved.
To explain formally why this is the case we would need to dig
more into the structure of predicates. However, there is an
important special case that is easy to prove already. Note that
\begin{equation}
|\lnot p|\limp|\lnot((v\gets e)\;p)|
\label{eq:ev.safe_subst}
\end{equation}
for any substitution~$(v\gets e)$, because, for an arbitrary
store~$\sigma$,
\begin{align}
&\phantom{\;=\;}
  \lnot\;((v\gets e)\;p)\;\sigma \\
&=\lnot((v\gets e)\;p\;\sigma) \\
&=\lnot(p\;((v\gets e)\;\sigma)) \\
&=(\lnot p)\;((v\gets e)\;\sigma) \\
&\Leftarrow |\lnot p|.
\end{align}
In other words, the correctness of pruning is not affected if
the \emph{variables} in predicate~$p$ are renamed beforehand, no
matter what renaming is done.

The heuristic used by \fx improves sharing by renaming only
variables, and is easier to understand in a slightly more abstract
setting.

\begin{problem}
Given are two unordered rooted dags $p$~and~$q$. Their nodes and
edges are labeled with strings. Find a one-to-one correspondence
between some leaves of the dag~$p$ and some leaves of the dag~$q$.
\label{pb:ev.match_dags}
\end{problem}

\begin{remark}
Such labeled dags may represent both SMT dags and Boogie ASTs.
Nodes could be labeled with strings such as ``not'', ``or'', and
``for all''. The edges below ``not'' and ``or'' would be labeled
by the empty string; The edges below ``for all'' would be labeled
with ``bound variable'' and, respectively, ``body''. However, the
exact encoding is not important. What is important is that dags
with labels on nodes and edges can easily encode both SMT dags
and Boogie~ASTs.
\end{remark}

Problem~\ref{pb:ev.match_dags} is an incompletely specified
optimization problem: It is clear what a feasible solution
is, but it is not clear which solutions are better and which
are worse. A cost function that evaluates correspondences is
missing. One possible measure is the size of the hash-consed
data structure that represents both dags $p$~and~$q$ after
corresponding leaves are glued. A cost function is a prerequisite
to any analytic performance analysis. However, analytic
performance analyses are likely to be very difficult to carry
out here and likely to be not very relevant in practice. To
name just one obstacle, an average case analysis requires VCs
to live in a probability space, and it is not at all clear what
probabilities accurately describe real VCs. It is therefore not
very useful to fix a cost function. Instead, pragmatism points
towards experimental algorithmics. In this case, objective
experimental measures include the overall proving time (with and
without improving sharing) and the size of the pruned VC (with
and without improving~sharing).

Intuitively, we expect any reasonable heuristic to work well
in simple cases. For example, if two big VCs can be obtained
one from the other by a one-to-one renaming of variables, then
we expect a good heuristic to recover the necessary one-to-one
renaming. This is the case with VCs generated from the program
in Figure~\ref{fig:java_evol} (on page~\pageref{fig:java_evol})
with and without the comment on line~1, if the frontend decorates
variables with location information. In general, a good heuristic
matches what a human would do.

\begin{example}
The dags from Figure~\ref{fig:ev.match1} represent two fairly
big predicates. All edge labels are empty. Most humans asked to
match the variables of dag~$p$ with the variables of dag~$q$
would match $u_1\edge v_3$ and $u_2\edge v_2$ and perhaps others.
If the sharing in dags $p$~and~$q$ would be eliminated (see
Section~\ref{sec:wpsp.vcsize}) then these correspondences are
harder to notice. \label{ex:ev.match1}
\end{example}

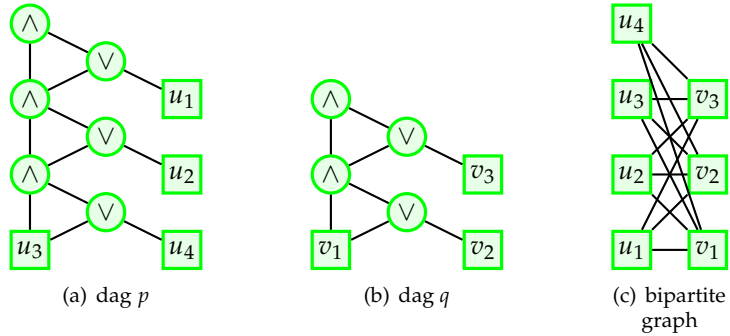
\begin{figure}\centering
\subfigure[dag $p$]{
\begin{tikzpicture}[xscale=2]
  \foreach \n/\y in {0/3,1/2,2/1}
    \node[predcirc] (a\n) at (0,\y) {$\land$};
  \foreach \n/\y in {0/2.5,1/1.5,2/.5}
    \node[predcirc] (o\n) at (0.5,\y) {$\lor$};
  \foreach \n/\y in {1/2,2/1,4/0}
    \node[predrect] (u\n) at (1,\y) {$u_\n$};
  \node[predrect] (u3) at (0,0) {$u_3$};
  \foreach \s/\t in {a0/a1,a0/o0,a1/a2,a1/o1,a2/u3,a2/o2,o0/a1,o0/u1,o1/a2,o1/u2,o2/u3,o2/u4}
    \draw[thick] (\s)--(\t);
\end{tikzpicture}
}\hfil
\subfigure[dag $q$]{
\begin{tikzpicture}[xscale=2]
  \foreach \n/\y in {0/3,1/2}
    \node[predcirc] (a\n) at (0,\y) {$\land$};
  \foreach \n/\y in {0/2.5,1/1.5}
    \node[predcirc] (o\n) at (0.5,\y) {$\lor$};
  \foreach \n/\y in {3/2,2/1}
    \node[predrect] (v\n) at (1,\y) {$v_\n$};
  \node[predrect] (v1) at (0,1) {$v_1$};
  \foreach \s/\t in {a0/a1,a0/o0,a1/o1,a1/v1,o0/a1,o0/u1,o1/v1,o1/u2}
    \draw[thick] (\s)--(\t);
\end{tikzpicture}
}\hfil
\subfigure[bipartite graph]{
\label{fig:ev.bipartite}
\begin{tikzpicture}
  \foreach \n in {1,2,3,4}
    \node[predrect] (u\n) at (0,\n) {$u_\n$};
  \foreach \n in {1,2,3}
    \node[predrect] (v\n) at (1,\n) {$v_\n$};
  \foreach \x in {1,2,3,4} \foreach \y in {1,2,3}
    \draw[thick] (u\x)--(v\y);
\end{tikzpicture}
}
\caption{Dags and bipartite graph for Example~\ref{ex:ev.match1}}
\label{fig:ev.match1}
\end{figure}

Because sharing benefits the search for identifier correspondences
but hurts pruning, an implementation in FreeBoogie should
\begin{enumerate}
\item find correspondences between dags,
\item then eliminate sharing (Section~\ref{sec:wpsp.unshare}),
\item and then prune the VCs (Section~\ref{sec:ev.pruning}).
\end{enumerate}

\paragraph{Approach}
We begin by constructing a complete bipartite graph---its left
nodes are the leaves of the dag~$p$ and its right nodes are the
leaves of the dag~$q$. For each node~$u$ on the left and each
node~$v$ on the right there is an edge $u\edge v$ whose weight
indicates how similar nodes $u$~and~$v$ are. Then we find a
maximum weight matching in this bipartite graph.

Any matching in a complete bipartite graph formed by the leaves of
dag~$p$ and, respectively, the leaves of dag~$q$ is a one-to-one
correspondence between some leaves of dag~$p$ and some leaves
of dag~$q$ as Problem~\ref{pb:ev.match_dags} requires. Which
matching is chosen depends on the heuristic used to compute the
edge weights. This heuristic is the knob we use to improve the
results, while we keep the overall structure of the solution
unchanged.

\begin{example}
Continuing Example~\ref{ex:ev.match1},
Figure~\ref{fig:ev.bipartite} shows the complete bipartite graph
that we build in the first step. Intuitively, we want the weights
on edges $u_1\edge v_3$ and $u_2\edge v_2$ to be relatively big,
such that the second step selects those edges.
\end{example}

\paragraph{Similarity of Leafs}

The remaining problem is simpler. Given a leaf~$u$ in the
dag~$p$ and a leaf~$v$ in the dag~$q$, we must find a
weight~$w_{p,q}(u,v)$ which corresponds to the intuitive
notion of similarity. The weight might be computed,
for example, by looking at $\mathit{label}(u)$ and
$\mathit{label}(v)$, the strings that label leaves $u$~and~$v$.
But Example~\ref{ex:ev.match1} suggests that the location of a
leaf in the dag is important. The location of a leaf~$u$ in a
labeled unordered dag~$p$, denoted $\mathit{location}_p(u)$, is a
multiset of string sequences, each string sequence corresponding
to a path from leaf~$u$ to the root of dag~$p$.

\begin{definition}
The multiset $\mathit{location}_p(u)$ contains the string
sequence $\mathit{label}(v_1)$, $\mathit{label}(e_1)$,
$\mathit{label}(v_2)$, $\mathit{label}(e_2)$,
\dots,~$\mathit{label}(v_n)$ once for each path
$v_1\stackrel{e_1}{\gets}v_2\stackrel{e_2}{\gets}\cdots v_n$ from
the leaf $u=v_1$ to the $\mathit{root}(p)=v_n$.
\end{definition}

Note that $\mathit{location}_p(u)$ is a set if the dag~$p$ was
built using hash-consing, that is, if structural equality is
equivalent to reference equality.

It is intuitive to use the information from leaf labels
independently from the information about the location of leaves.
\begin{equation}
w_{p,q}(u,v)=c(w_1(\mathit{label}(u),\mathit{label}(v)),
         w_2(\mathit{location}_p(u),\mathit{location}_q(v))
\end{equation}
Function~$w_1$ measures the similarity of two strings,
function~$w_2$ measures the similarity of two multisets, and
function~$c$ combines two similarities. Function~$c$ should
be increasing with respect to each of its arguments. Possible
choices for function~$c$ include
\begin{align}
c(w_1,w_2) &= \alpha w_1 + \beta w_2
  &&\text{for positive $\alpha$~and~$\beta$} \label{eq:ev.combin_sim} \\
c(w_1,w_2) &= w_1 w_2
  &&\text{for positive $w_1$~and~$w_2$}
\end{align}
Possible choices for function~$w_1$ include
\begin{itemize}
\item the longest common subsequence~\cite{hunt1977},
  $w_1(s,t)=\mathit{lcs}(s,t)$,
\item the Jaro--Winkler~\cite{winkler1990} similarity,
\item the inverse of a string distance, such as the edit 
  distance~\cite{levenshtein1966}, and
\item the similarity of the multisets of labels' $n$grams, 
  using various alternatives available for the function~$w_2$.
\end{itemize}
Possible choices for function~$w_2$ include
\begin{itemize}
\item the Jaccard similarity~\cite[Chapter~1]{markov2007}, 
  $w_2(A,B)=1-|A\cap B|/|A\cup B|$,
\item Dice's coefficient~\cite{dice1945}, 
  $w_2(A,B)=2|A\cap B|/(|A|+|B|)$, and
\item the opposite of a set distance, such as the Hamming 
  distance~\cite{hamming1950}.
\end{itemize}
(All these are easily adapted to multisets.)

\fx uses
\begin{align}
c(w_1,w_2)&=w_1+w_2 \\
w_1(s,t)&=\mathit{lcs}(s,t) \\
w_2(A,B)&=2|A\cap B| - \bigl| |A|-|B| \bigr|
\end{align}
No experiment revealed a situation where a correspondence found
by a human was better than the one found by \fx. However, it is
comforting to know that there is ample room for tuning the 
heuristic, in case better results are needed.

\paragraph{Locations within Dags}

In the worst case, the location of a leaf~$u$ in a dag~$p$
has size exponential in the number of nodes in the dag~$p$.

\begin{example}
For dag~$p$ in Figure~\ref{fig:ev.match1} (on
page~\pageref{fig:ev.match1}) we have
\begin{align}
|\mathit{location}_p(u_1)| &= 1\\
|\mathit{location}_p(u_2)| &= 2\\
|\mathit{location}_p(u_4)| &= 4\\
|\mathit{location}_p(u_3)| &= 8 
\end{align}
\end{example}

A naive implementation of the function~$w_2$ is very slow in the
worst case. To compute it quickly, locations must be represented
in a compact form.

\begin{example}
The locations of the leaves of dags $p$~and~$q$ from
Figure~\ref{fig:ev.match1} can be represented by the dag in
Figure~\ref{fig:ev.match2}.
\end{example}

\begin{figure}\centering
\begin{tikzpicture}
  \foreach \n/\y in {0/-3,1/-2,2/-1}
    \node[predcirc] (a\n) at (0,\y) {$\land$};
  \foreach \n/\y in {0/-2.5,1/-1.5,2/-.5}
    \node[predcirc] (o\n) at (1,\y) {$\lor$};
  \foreach \n/\y in {1/-2,2/-1,4/0}
    \node[predrect] (u\n) at (2,\y) {$u_\n$};
  \node[predrect] (u3) at (0,0) {$u_3$};
  \node[predrect] (v2) at (4,-1) {$v_2$};
  \node[predrect] (v3) at (4,-2) {$v_3$};
  \node[predrect] (v1) at (-1.5,-1) {$v_1$};
  \foreach \s/\t in {a0/a1,a0/o0,a1/a2,a1/o1,a2/u3,a2/o2,o0/a1,o0/u1,o1/a2,o1/u2,o2/u3,o2/u4,o0/v3,o1/v2,a1/v1,o1/v1}
    \draw[thick] (\s)--(\t);
\end{tikzpicture}
\caption{The trie of paths from Figure~\ref{fig:ev.match1}}
\label{fig:ev.match2}
\end{figure}
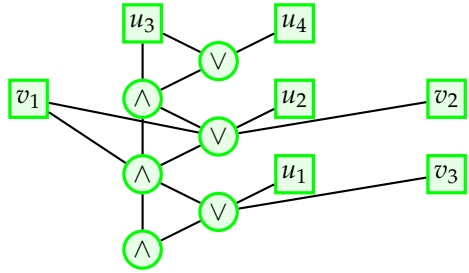

Hash-consing is used again. Notice that once the data structure
in Figure~\ref{fig:ev.match2} is built, a single reference
comparison is enough to conclude that $u_1$~and~$v_3$ have
the same location, and another reference comparison is enough
to conclude that $u_2$~and~$v_2$ have the same location.
Figure~\ref{fig:ev.loc_alg} shows the algorithm that computes the
data structure for representing locations, in the general case.
On line~2 it is assumed that the incoming edges of the original
dag are available. This is usually not the case for SMT dags and
for Boogie ASTs, which is why a simple preprocessing step is
necessary. Hash-consing is invoked on the last line: The call
to \textit{hashCons} returns an old data structure, if there
is one with the same content. Finally, note that the algorithm
in Figure~\ref{fig:ev.loc_alg} assumes that there is a way to
extract labels from nodes and edges of the original dag (the
method~\textit{label}).

\begin{figure}\centering\leavevmode\vbox{
\begin{alg}
\^  $\proc{Location}(x)$ \comment memoized
\=  $\mathit{children}:=[]$ \comment empty list
\=  ~for~ ~each~ incoming edge $e$
\+    append $\bigl(\mathit{label}(e),\mathit{Location}(\mathit{source}(e))\bigr)$ to \textit{children}
\0  ~return~ $\mathit{hashCons}(\mathit{label}(x), \mathit{children})$
\end{alg}}
\caption{Algorithm for computing locations}
\label{fig:ev.loc_alg}
\end{figure}

\subsection{Matching Boogie ASTs}

If sharing is improved at the level of SMT dags, then the caches
of FreeBoogie's transformers are not used. The new Boogie program
is read, a new VC is computed repeating all the work, and only
then are commonalities observed. Unfortunately, an implementation
that improves sharing at the level of Boogie ASTs is much harder
to get right than one at the level of SMT dags, although there
does not seem to be any insurmountable conceptual issue.

One problem is that the substitution must be applied on the
new Boogie AST, which corresponds to dag~$q$. The proof
that any variable substitution in the old VC is sound (see
\eqref{eq:ev.safe_subst} on page~\pageref{eq:ev.safe_subst}) does
not help here. Instead, it must be argued that any one-to-one
renaming does not change the semantics of Boogie programs.
Another problem is that variables with a limited scope are much
more common, so some special treatment for them may be desirable.
(In fact, the implementation in \fx does treat quantifiers
specially because they limit the scope of variables.) On the
bright side, some issues disappear. For example, since there is
no sharing within one Boogie AST (Section~\ref{sec:design.ast}),
locations within a Boogie AST may be represented in the naive
way.

In general though, any solution to Problem~\ref{pb:ev.match_dags}
should be adaptable to Boogie ASTs. Nodes' labels are
the class names; edges' names are the field names. (See
Section~\ref{sec:design.ast}.)

\section{Example}
\label{sec:ev.example}

Let us go back to the example in Figure~\ref{fig:java_evol} on
page~\pageref{fig:java_evol}. The method \textit{dayOfYear}
is the method of interest. The other methods may or may not
have bodies, but that is irrelevant because the verification is
modular. Three types of changes are illustrated. First, adding
line~1 is a trivial change. Second, adding line~8 is a change in
the specification of another method. In general, whenever the
contract of a method is changed, all the methods that depend on
this contract must be re-verified. In particular, if a contract's
postcondition is strengthened, then all dependent methods that
were correct remain correct. Third, adding line~15 or line~30
instructs the verifier to check extra properties. Note that
adding line~15 also entails re-verification of all the methods
that depend on the method \textit{dayOfYear}, which are not
of interest in this example.

Proving the VC generated by \escjava for the program of
Figure~\ref{fig:java_evol} takes about $17$~seconds with Simplify
in all cases. \fx spends about $2$~seconds to prune the VC in
all cases. This could be improved in a careful implementation in
FreeBoogie. For the original program plus line~1 or plus line~8,
the pruned VC is handled by Simplify in $<0.1$~seconds. For the
original program plus line~15, the pruned VC is not handled
faster by Simplify, so verification is slowed down overall. For
the original program plus line~30, the pruned VC is handled in
about $8$~seconds by Simplify, so the proving time is approximatively
cut in half. These numbers show that, in some cases, a proper
implementation of pruning may be useful. Proving the same
method with Z3~2.0 takes $0.5$~seconds and with \fx takes about
$2000$~seconds. This suggests that pruning might be worthwhile
only for methods that are difficult to verify.

It is also possible to obtain the VC with FreeBoogie,
although not completely automatically. The program in
Figure~\ref{fig:java_evol} without annotations can be compiled
with javac ($0.7$~seconds) and then transformed into Boogie
with B2BPL ($0.5$~seconds). Then, annotations must be added
manually in the Boogie code. FreeBoogie then generates a VC
($1.4$~seconds) and then a prover is used (times are as before).
Keep in mind that all these times include repeatedly parsing
and writing to disk around $10$~KB of data. The time taken by
FreeBoogie is quite big by comparison with the other stages of
the pipeline, which suggests that improving sharing at the Boogie
AST level might be worthwhile even for relatively easy problems
like this one.

\section{Conclusions}

The previous sections of this chapter present a design for
a future component of FreeBoogie. The design is explored
theoretically and also practically, through the prototype
implementation in \fx. The most appealing property of the design
is its \emph{flexibility}.

The two important ideas are
\begin{enumerate}
\item 
  finding common parts of two trees under renaming of (some)
  leaves and
\item
  pruning a VC based on an old valid VC.
\end{enumerate}
The solution to the first problem is flexible because it depends
on a similarity measure for which there are many possible
choices. The solution for the second problem is flexible because
of the generic proof technique, which can be used to quickly
check whether other simplification rules are sound.

The main drawback is the lack of substantial evidence that the particular
pruning method proposed in Section~\ref{sec:ev.pruning} does indeed reduce
the size of typical queries.  All we know is that \emph{if} it reduces the
size, then it does so in a sound~way.

A previously correct version of the program should almost always be
available if the verification tool is used from the beginning.  As a future
development, it would be interesting to see how to speed up verification
when all known previous VCs are invalid.

\section{Related Work}
\label{sec:related}

The basic idea, that of taking advantage of the
results of old runs to speed up new runs, appears in
incremental compilation~\cite{schwartz1984}, extreme
model checking~\cite{henzinger2003extreme}, proof
reuse~\cite{beckert2004reuse}, and in many other places. The work
on proof reuse of Beckert and Klebanov~\cite{beckert2004reuse}
is most similar to the work presented in this chapter. However,
they focus on interactive theorem proving, and their similarity
heuristic is quite different.

Detecting differences between trees is an area rich in research results.
However, it seems that the goal of improving sharing by the renaming of
leaves was not addressed before. Usually the goal is to compute the tree
edit distance~\cite{tai1979} and the accompanying edit script that
completely transforms a tree into the other. For unordered trees, such as
those with associative--commutative operators, finding the tree edit
distance is NP-hard~\cite{bille2005}, while for ordered trees cubic time
was achieved~\cite{demaine2009tree}. The cubic time matches that of the
heuristic described in Section~\ref{sec:ev.correspondence}, where the cubic
time Hungarian algorithm~\cite{kuhn1955} is used. (The implementation of
the Hungarian algorithm follows Knuth~\cite[program
\textit{assign\_lisa}]{knuth1993sgb}.) However, profiling shows that
computing the matching takes far less time than computing similarity
weights between all pairs of leaves.

The idea of automatically pruning an SMT query based on the
information in a similar known query seems to be new.

There are many other approaches for improving the speed of
program verifiers. James and Chalin~\cite{james2010esc4} run
multiple different provers in parallel. They also cache the text
of VCs so that they never send the exact same query to a prover
twice. Although they speculate that normalizing identifiers would
reduce the brittleness of the cache, it seems that the techniques
of Section~\ref{sec:ev.correspondence} fit perfectly their
requirements.

Verifast~\cite{verifast} and jStar~\cite{distefano2008jstar}
are Java program verifiers that use symbolic execution as
an alternative to VC generation and their good performance
seems to owe to this design choice. It is interesting to note
that Moore~\cite{moore2006} shows that interpreting a program
according to some operational semantics leads to exactly the same
proof obligations as the subgoals necessary to prove a VC\null.

Babi\'c and Hu~\cite{babic2008calysto} present a set of
techniques for improving efficiency, including caching results,
combining a rough symbolic execution with VC generation, and
using application-specific theorem provers. They also use
``maximally shared graphs,'' which are almost the same as
hash-consed expressions. (Their definition is more general
because it allows cycles, but it seems that all their graphs
are actually dags.) Hash-consing was described in 1958 by
Ershov~\cite{ershov1958}, and is a concept that keeps being
rediscovered, so it probably deserves to be better known.

The SMT community~\cite{barrett2010lib} standardized a
language~\cite{barrett2010lang} for expressing predicates
and a command language for communicating with a solver.

In the theorem proving community, the string
sequences that are part of locations, as defined in
Section~\ref{sec:ev.match_dags}, are known as \emph{path
strings}\index{path string}~\cite[Chapter~26]{robinson2001v2}.

Craig~\cite{craig1957} proved that interpolants always exist:
\begin{quote}\footnotesize\index{interpolant}
\textsc{Lemma 1}. \textit{If $\vdash A\supset A'$ and if $A$~and~$A'$
have a predicate parameter in common, then there is an ``intermediate''
formula~$B$ such that $\vdash A\supset B$,\quad $\vdash B\supset A'$,
and all parameters of $B$ are parameters of both $A$~and~$A'$. Also,
if $\vdash A\supset A'$ and if $A$~and~$A'$ have no predicate parameter
in common, then either $\vdash\lnot A$ or~$\vdash A'$.}
\end{quote}
Interpolants are used in model checking~\cite{mcmillan2005}, predicate
refinement~\cite{jhala2006}, loop invariant inference~\cite{mcmillan2008}.
Interpolation algorithms may, in general, exploit more than the boolean
structure of formulas: For example, Jain~\cite{jain2008} gives algorithms
for finding interpolants for fragments of integer linear arithmetic.

\chapter{Semantic Reachability Analysis}
\label{ch:reachability}

\chquote{To be successful you have to be selfish, or else you
never achieve. And once you get to your highest level, then
you have to be unselfish. Stay reachable. Stay in touch. Don't
isolate.} {Michael Jordan}

\noindent Program verifiers typically check for partial
correctness or termination, but there are other interesting
analyses. This chapter (1)~explains when and why reachability
analysis is useful, (2)~presents its theoretical underpinnings,
and then (3)~presents and analyzes algorithms that make it
practical.

\section{Motivation}

Semantic reachability analysis finds four seemingly unrelated
types of problems: dead code, doomed code, inconsistent
specifications, and bugs in the frontend of the program verifier.

\subsection{Dead Code}

Figure~\ref{fig:ra-dead-code-ex} illustrates two problems.
Line~10 is dead in the standard sense; line~5 is dead only
if annotations are taken into account. 
  
\bc
\begin{jml}
static void m(int x) 
  requires x >= 0;
{
  if (x < 0)
    throw new IllegalArgumentException("x must be nonnegative");
  $\cdots$
  if (y < 1) {
    $\cdots$
    if (y > 1) {
      $\cdots$
  }}
}
\end{jml}
\ec{Dead code}{fig:ra-dead-code-ex}

Dead code analysis is standard in compilers. Java, for example,
forbids any statement following the statement \textbf{return},
because such situations are usually bugs. The novelty
here is that we take into account annotations, which may be
non-executable. When the precondition holds (line~2) the first
\textbf{if} condition (line~4) does not, so no exception is
thrown.

In current practice, programmers check arguments at the beginning of the
method body with a series of guarded \textbf{throw} statements.  The
pattern $\mathbf{if}\ldots\mathbf{throw}\ldots$ is short, but so pervasive
that it still pays of to encapsulate in a library (see, for example,
\cite[\textit{com.google.common.base.Preconditions}]{guava-libraries}).

In the presence of annotations, however, the code that checks if arguments
are legal is mostly redundant.

If all calls to method~$m$ are checked statically against its
specification, then no runtime check is necessary. A runtime
check remains desirable when unverified code may call method~$m$.
There are Java compilers~\cite{burdy2005jml} that generate
bytecode from JML annotations. The translation is not perfect.
One technical problem is that JML preconditions do not have
descriptions. If the expression is particularly simple (such as
$x\ge0$ in Figure~\ref{fig:ra-dead-code-ex}) then it is possible
to construct the description automatically (``$x$~must be
nonnegative''). For longer expressions, though, the automatically
generated description is unwieldy and unuseful. The problem is
just technical since it has the simple solution of modifying
JML to have descriptions for preconditions, postconditions, and
assertions. The second problem is more serious. If the expression
contains quantifiers, which in JML are typically over very large
domains like the integers, then it is not always possible to
check it efficiently at runtime. However, in such situations it
is likely that the programmer would have not written a runtime
check.

Microsoft's Code Contracts~\cite{code-contracts} uses an annotation
language that is designed to work both with run-time checking and
with static analysis tools.

\subsection{Doomed Code}

Figure~\ref{fig:ra-doomed-code-ex} illustrates two problems, one
on line~7 and one on line~14. These are the most common issues
in practice (Section~\ref{sec:ra.case_study}).

\bc
\begin{jml}
abstract class C {
  int f, g;
  static void m1(C x) {
    if (x != null)
      x.f = 0;
    else
      System.out.println(x.f);
  }
  static void m2();
    modifies f, g;
  static void m3()
    modifies f;
  {
    m2();
    $\cdots$
  }
}
\end{jml}
\ec{Doomed code}{fig:ra-doomed-code-ex}

Execution always crashes at line~7, before printing. Any test
that covers line~7 would detect the problem, but test suites
rarely have complete coverage. The example is adapted from
Hoenicke et al.~\cite{hoenicke2009}, whose example is in turn
inspired by an old bug in Eclipse. Such bugs do occur in real
software even if they are easy to detect. It would therefore
be beneficial to find such problems automatically and without
writing any test. (But this is true about any kind of bug.)
Another reason given by Hoenicke et al.~\cite{hoenicke2009} is
compelling: \emph{The lack of annotations does not lead to false
positives}. Most frequent complaints from practicians about
formal methods tools are that
\begin{enumerate}
\item it is too much work to add annotations and
\item there are too many false positives.
\end{enumerate}
To be more precise, the lack of \emph{assumptions} (stemming,
for example, from the lack of preconditions and object
invariants) does not lead to false positives. A program point is
\emph{doomed} if it crashes in every execution. If an assumption
is removed (which may mean, for example, that more values are
allowed for an argument), then all the old executions are still
possible. In other words, removing an assumption never turns a
non-doomed program point into a doomed program point.

Line~14 is also doomed, this time because of annotations, not
because of the code. The \textbf{modifies} clause lists the
fields that a method is allowed to assign to. From the point of
view of the program verifier the execution never proceeds past
line~14, so all potential bugs that follow are hidden.

\subsection{Inconsistent Specifications}

Figure~\ref{fig:ra-bad-spec-ex} illustrates two problems caused
entirely by specifications.

\bc
\begin{jml}
pure static native int m1()
  ensures result == result + 1;
static int m2(int x)
  requires x < 0;
  requires x > 0;
{
  return x / 0;
}
\end{jml}
\ec{Inconsistent specifications}{fig:ra-bad-spec-ex}

The contract of method~\textit{m1} cannot possibly correspond to
a correct implementation, yet there is no implementation to check
against the contract. In this case the implementation is written
in a different language, but sometimes it is simply unavailable,
for example if it comes from a proprietary library. The example
may seem silly since no one would make the mistake to think that
some integer equals its successor. However, inconsistencies
may arise from other causes. One example is specification
inheritance: Since the conflicting annotations are in different
files it is hard for a human to notice. Another example is the
use of pure methods in specifications. The literature concerned
with tackling this latter source of inconsistencies is reviewed
in Section~\ref{sec:ra.conclusions}.

Lines 4~and~5 illustrate another type of bug caused by
inconsistent specifications. The body of method~\textit{m2} is
available and the program verifier will always report it is
correct. Without reachability analysis, the bug is caught only
if method~\textit{m2} is called. Such behavior is somewhat akin
testing, which catches bugs (only) by calling the (potentially)
buggy methods.

In practice, inconsistent specifications are common.

\subsection{Unsoundness and Bugs}

The code in Figure~\ref{fig:ra-loop-unroll}, apart from being
more than a little silly, will always crash at line~4. That is
not the issue relevant here, however. The issue is that the
problem may be missed by an unsound program verifier.

\bc
\begin{jml}
int x;
for (int i = 5; --i >= 0;)
  x = i;
x = 1 / x;
\end{jml}
\ec{Unsoundness of the frontend}{fig:ra-loop-unroll}

Again, the example might seem contrived. Why not get rid of unsoundness in
the first place instead? And why would we expect this problem to appear
often? Before answering these questions let us see why this is indeed a
problem in \escjava (with default options). \escjava was designed primarily
to be useful to programmers and it tackled the problem of false positives
by choosing to be unsound. One example of unsoundness is that, by default,
loops are unrolled three times. This means that all executions for which
the body of the loop is executed more than three times are not analyzed.
All executions in Figure~\ref{fig:ra-loop-unroll} go through the loop body
five times so, just before executing the loop body for the fourth time,
\escjava will top analyzing and miss the problem on line~4. Note that if
the loop would be executed $n$~times instead of $5$~times, where $n$~is a
variable possibly less than four, then the problem on line~4 would be
caught~\cite{detlefs1998}.

We are now in a better position to answer the two questions.

First, a tool may choose to be unsound for engineering reasons.
This is not as bad as it may sound. We saw that a loop with
a variable bound instead of a constant bound does not cause
problems, and variable bounds are probably more common. Also,
this is only the default behavior. A novice user is supposed to
get rid of the errors signaled in this unsound default mode of
execution. An expert user is supposed to ask \escjava to treat
loops in a safe way. In this case, the tool usually requires more
guidance in the form of annotations. 

Second, the unsoundness might not be intended, but rather a bug.
In this guise, reachability analysis is useful as a safeguard
against buggy implementations of the program verifier itself.  It
is common for good programmers to write routines that check for
complex invariants at runtime and use these routines during
development. Reachability analysis has been used in this way in
the Boogie tool from Microsoft Research under
the name of ``smoke testing.''

\subsection{An Unexpected Benefit}

Finally, in conjunction with loop invariant inference techniques,
reachability analysis detects certain non-terminating programs.
Consider the following code fragment.
\begin{jml}
for (int i = 0; i < 10; ++j) 
  sum += i;
$\cdots$
\end{jml}
It is easy to infer the loop invariant $i=0$, which contradicts
the negation $i\ge10$ of the loop condition. The code after the
loop is unreachable because the loop never terminates.

\section{Theory}
\label{sec:ra.theory}

Semantic reachability analysis is performed after passivation
(Chapter~\ref{ch:passive}), so the program it analyzes is an
acyclic flowgraph without assignments.

\begin{definition}\index{semantic reachability}
A node~$x$ in a flowgraph is \emph{semantically reachable}
when the flowgraph has an execution containing the
state~$\langle\sigma,x\rangle$ for some store~$\sigma$.
\end{definition}

For a correct flowgraph (see Theorem~\ref{th:correctness} on
page~\pageref{th:correctness}), a node~$x$ is semantically
reachable when the flowgraph becomes incorrect after replacing
the statement of node~$x$ with \textbf{assert false}. (This
is because according to~\eqref{eq:assert-nok-opsem} on
page~\pageref{eq:assert-nok-opsem} there is a transition
$\langle\sigma,x:\mathbf{assert}\;\mathbf{false}\rangle\leadsto
\mathit{error}$ for all stores~$\sigma$.) A naive algorithm for
semantic reachability analysis replaces each statement in turn by
\textbf{assert false} and reverifies the program. However, this
algorithm handles only correct flowgraphs and is slow.

There is a very easy way to make any flowgraph correct: Transform
all assertions into assumptions! Besides fixing all bugs, this
transformation has the useful property that it maintains semantic
reachability.

\begin{proposition}
Consider a flowgraph~$G$ with no assignments and the
flowgraph~$H$ obtained by replacing each statement
\textbf{assert}~$q$ with the statement \textbf{assume}~$q$.
Node~$y$ is semantically reachable in flowgraph~$G$ if and only
if it is semantically reachable in flowgraph~$H$.
\end{proposition}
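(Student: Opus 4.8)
The plan is to exploit the fact that both $G$ and $H$ contain no assignments, so along any execution the store never changes: if $s_0 = \langle\sigma, 0\rangle$, then every non-\textit{error} state of the execution has the form $\langle\sigma, x\rangle$ with the \emph{same} $\sigma$. Consequently the only rules that can fire are the non-error rule~\eqref{eq:assume-assert-ok-opsem} and the error rule~\eqref{eq:assert-nok-opsem}; the assignment rule~\eqref{eq:assign-opsem} never applies, and \textbf{goto} nodes are absorbed into flowgraph edges. The crucial observation is that rule~\eqref{eq:assume-assert-ok-opsem} has exactly the same hypothesis $p\,\sigma$ whether the statement is $\mathbf{assume}\ p$ or $\mathbf{assert}\ p$, and that replacing $\mathbf{assert}\ q$ by $\mathbf{assume}\ q$ leaves the predicate $q$ and every flowgraph edge untouched. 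The two programs differ only in the error rule, which governs what happens \emph{when a predicate fails}, not how control proceeds when it holds.

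First I would reduce semantic reachability to a statement about non-error prefixes. Since there is no transition out of \textit{error}, the state \textit{error} can occur only as the last element of an execution; hence for any genuine node~$y$, the state $\langle\sigma, y\rangle$ lies in some execution iff there is a sequence of non-error states $\langle\sigma, 0\rangle \leadsto \langle\sigma, x_1\rangle \leadsto \cdots \leadsto \langle\sigma, y\rangle$. Each step $\langle\sigma, x_i\rangle \leadsto \langle\sigma, x_{i+1}\rangle$ is the flowgraph instance of rule~\eqref{eq:assume-assert-ok-opsem}, so it is available precisely when $x_{i+1}$ is a successor of $x_i$ and the predicate $q_{x_i}$ holds at $\sigma$ (with $q_0 \equiv \top$ for the sentinel). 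This yields a purely combinatorial characterization: $y$ is semantically reachable iff there exist a store~$\sigma$ and a path $0 = x_0 \to x_1 \to \cdots \to x_k = y$ with $q_{x_i}\,\sigma$ holding for every $i \in 0\dts k-1$.

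The characterization refers only to the flowgraph edges and to the node predicates $q_x$, both of which are identical in $G$ and $H$, since turning an assertion into an assumption changes neither the edge set nor the carried predicate. Therefore the condition holds in $G$ exactly when it holds in $H$, and reachability of $y$ transfers in both directions. Nothing in the argument distinguishes the two families of flowgraphs beyond the error rule, which the reduction has already discarded.

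The one point deserving care --- and the main potential pitfall --- is the boundary node $y$ itself: reaching $\langle\sigma, y\rangle$ does \emph{not} require $y$'s own predicate to hold. In $G$, if $y$ is a failing assertion, the execution still contains $\langle\sigma, y\rangle$ before stepping to \textit{error} via rule~\eqref{eq:assert-nok-opsem}; in $H$, the corresponding assumption merely leaves the execution stuck at $\langle\sigma, y\rangle$. Either way $\langle\sigma, y\rangle$ is present, so the path condition must constrain only the predicates at $x_0, \dots, x_{k-1}$ and never $q_y$. Getting this asymmetry right in the characterization is exactly what makes the equivalence symmetric in \textbf{assert} and \textbf{assume}, and hence makes the proposition hold.
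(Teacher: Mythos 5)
Your proposal is correct and takes essentially the same route as the paper's proof: both reduce semantic reachability of~$y$ to the existence of a flowgraph path $x_1\to\cdots\to x_n\to y$ together with a store satisfying the conjunction of the predicates at the nodes \emph{before}~$y$, and both rest on the observation that rule~\eqref{eq:assume-assert-ok-opsem} has the identical hypothesis for \textbf{assert} and \textbf{assume}, so the characterization is unaffected by the replacement. Your explicit treatment of the boundary node~$y$ (whose own predicate need not hold, since the paper's execution likewise ends at $\langle\sigma,y\rangle$ without evaluating $q_y$) spells out a detail the paper leaves implicit, but it is the same argument.
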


\begin{proof}
According the operational semantics of core Boogie 
(see Chapter~\ref{ch:boogie},
especially~\eqref{eq:assume-assert-ok-opsem}),
a flowgraph has an execution \[
  \langle\sigma,x_1:\mathbf{assert}/\mathbf{assume}\;p_1\rangle,\ldots,
  \langle\sigma,x_n:\mathbf{assert}/\mathbf{assume}\;p_n\rangle,
  \langle\sigma,y\rangle
\] when there is a path $x_1\to\cdots\to x_n\to y$ in the
flowgraph and there is a store~$\sigma$ that satisfies 
$p_1\land\ldots\land p_n$. It is irrelevant whether 
intermediate statements are assertions or assumptions
as there is only one rule in the operational semantics
for both.
\end{proof}

There is still the issue of efficiency.
Chapter~\ref{ch:spwp} showed that both
$\mathit{vc}_\mathit{wp}$~and~$\mathit{vc}_\mathit{sp}$ can be
computed in time linear in the size of the flowgraph. Since we
need to reverify the program once for each node in the flowgraph,
the total time to compute the prover queries is quadratic in
the size of the program. Most flowgraphs in practice have at
most hundreds of nodes and hundreds of edges, which means that a
quadratic algorithm will be very fast and the real problem is the
time spent in the prover.

The next section is concerned with reducing the number of calls
to the prover. Before that, let us briefly see how to compute
\emph{all} prover queries at once in linear~time.

If there is an execution that contains the
state~$\langle\sigma,y\rangle$, then the previous
states~$\langle\sigma,x\rangle$ correspond to nodes~$x$ that have
a path~$x\leadsto y$ to node~$y$. Hence, to determine whether
such executions exist we can look at the sub-flowgraph induced by
nodes~$x$ that can reach node~$y$. This observation speeds the
computation of VCs by a factor of two, for both methods---weakest
precondition and strongest postcondition.

Observe now that the precondition~$a_y$ computed by the strongest
postcondition method (Section~\ref{sec:spwp.summary}) depends only on
nodes~$x$ that can reach node~$y$. In other words, the precondition~$a_y$
does not change when we select a sub-flowgraph corresponding to some other
node~$y'$ that is reachable from node~$y$, so it needs not be recomputed.
We simply compute all preconditions~$a_x$ in linear time according to the
equations for the strongest postcondition method in
Section~\ref{sec:spwp.summary}. When we replace statement~$x$ by
\textbf{assert false} it becomes the only assertion in the flowgraph and
the VC according to~\eqref{eq:spwp.sp.vc} is $\lnot a_x$. We proved 
the following proposition.

\begin{proposition}
A node~$y$ is semantically reachable if and only if its
precondition~$a_y$ computed using the strongest postcondition
method (Chapter~\ref{ch:spwp}) is satisfiable, that is, when
$|\lnot a_y|$ holds.
\end{proposition}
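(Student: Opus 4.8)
The final proposition reads:

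\begin{quote}
A node~$y$ is semantically reachable if and only if its precondition~$a_y$ computed using the strongest postcondition method is satisfiable, that is, when $|\lnot a_y|$ holds.
\end{quote}

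The plan is to connect the operational definition of semantic reachability directly to the algebraic meaning of the strongest postcondition~$a_y$, using the facts already established in the preceding paragraphs. The crucial observation, stated just before the proposition, is that $a_y$ depends only on the sub-flowgraph induced by the nodes that can reach~$y$; combined with the soundness and completeness of the strongest postcondition method, this should give the equivalence almost immediately. First I would unfold what $|\lnot a_y|$ means: by the definition of validity, $|\lnot a_y|$ holds exactly when $a_y\;\sigma$ fails for every store~$\sigma$, so its \emph{negation}, ``$a_y$ is satisfiable,'' means there exists a store~$\sigma$ with $a_y\;\sigma$. (I note the proposition's phrasing has a sign subtlety: reachability corresponds to $a_y$ being satisfiable, whereas $|\lnot a_y|$ is the condition for \emph{un}reachability; I would state the equivalence cleanly as ``$y$ is semantically reachable iff $a_y$ is satisfiable'' and flag that the final clause describes the unreachable case.)

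The forward direction is the easy half. If $y$ is semantically reachable, then by definition there is an execution $\langle\sigma_0,0\rangle\leadsto\cdots\leadsto\langle\sigma,y\rangle$. Since the program is passive and assignment-free after the transformations of Chapters~\ref{ch:passive} and~\ref{ch:spwp}, the store is constant along the execution, so this is a path $0\to\cdots\to y$ all of whose intermediate predicates are satisfied by the single store~$\sigma$. I would then argue by induction along this path that $a_{x}\;\sigma$ holds at each node~$x$ on it, using equations~\eqref{eq:spwp.sp.pre} and~\eqref{eq:spwp.sp.post}: the initial node has $a_0\equiv\tru$, and each step from $x$ to the next node preserves the precondition because $b_x\equiv a_x\land q_x$ feeds into the disjunction defining the successor's precondition. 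Hence $a_y\;\sigma$ holds and $a_y$ is satisfiable.

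The converse is the direction that needs the completeness result and is where the main care is required. Suppose $a_y$ is satisfiable, witnessed by a store~$\sigma$. I want to reconstruct an actual execution reaching~$\langle\sigma,y\rangle$. The strategy is to read the disjunctive structure of~$a_y$ backwards: because $a_y\equiv\bigvee_{x\to y}b_x$ and each $b_x\equiv a_x\land q_x$, the fact that $a_y\;\sigma$ holds forces some predecessor~$x$ with $a_x\;\sigma$ and $q_x\;\sigma$ both true, and one descends inductively until reaching the initial node, assembling a path $0\to\cdots\to y$ along which every statement's predicate holds at~$\sigma$. That path, together with the rule~\eqref{eq:assume-assert-ok-opsem} (which applies uniformly to assumptions and assertions whose predicate holds), yields a genuine execution through~$\langle\sigma,y\rangle$. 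The main obstacle will be handling this backward descent rigorously when the flowgraph has joins and shared subterms: I must confirm that $a_x$ is \emph{exactly} the set of stores that appear at~$x$ in some execution---i.e. that the strongest postcondition is genuinely strongest, not merely sound---so that satisfiability of $a_y$ cannot be an artifact overapproximating the reachable stores. I would justify this by appealing to the characterization already sketched in Section~\ref{sec:predicate-transf}, that the strongest postcondition method computes the strongest predicate explaining the flowgraph, and to the earlier remark that since all assertions have been turned into assumptions the flowgraph is correct, so no node's postcondition is pruned by an intervening failed assertion.

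Finally I would remark that this proposition is exactly what makes the analysis efficient: one linear-time pass computes every $a_x$ simultaneously (the preconditions are shared across all the ``replace by \textbf{assert false}'' queries), so semantic reachability of all nodes reduces to $|\lnot a_x|$ satisfiability checks, one per node, with no recomputation---setting up the reduction in the number of prover calls pursued in the next section.
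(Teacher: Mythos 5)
Your proof is correct, but it takes a genuinely different route from the paper. The paper does not argue from operational semantics at this point; it composes three facts it has just set up: (i)~for a correct flowgraph, node~$x$ is semantically reachable iff replacing its statement with \textbf{assert false} makes the flowgraph incorrect; (ii)~the preceding proposition, that turning every assertion into an assumption (making the flowgraph correct) preserves semantic reachability; and (iii)~when \textbf{assert false} sits at node~$x$ it is the only assertion, so by~\eqref{eq:spwp.sp.vc} the VC of the modified flowgraph is exactly $\lnot a_x$ --- and since $a_x$ depends only on the (unchanged) predecessors of~$x$, it is the same $a_x$ as in the original program. Soundness and completeness of the strongest postcondition method then convert ``incorrect'' into ``$\lnot a_x$ invalid,'' i.e.\ ``$a_x$ satisfiable.'' You instead prove the equivalence directly: forward by induction along an execution (essentially re-running the proof of Theorem~\ref{th:correctness} on an assignment-free program, where the store is constant), and backward by descending through the disjunctions of~\eqref{eq:spwp.sp.pre} and conjunctions of~\eqref{eq:spwp.sp.post} to reconstruct a path, then an execution, from a satisfying store. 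Your route is more self-contained and shows the sharper fact that $a_y$ is \emph{exactly} the set of stores occurring at~$y$ in executions; the paper's route is shorter and reuses machinery it needs anyway. Two remarks on your own hedges: the ``main obstacle'' you flag (needing a separate strongest-ness lemma) dissolves under your backward descent, which \emph{is} the proof of that direction --- termination is guaranteed because the flowgraph is acyclic and every non-initial node has a predecessor, so the descent ends at node~$0$ with $a_0\equiv\tru$; and the assertion-to-assumption conversion you invoke is unnecessary in your argument, since \eqref{eq:spwp.sp.post} treats \textbf{assert} and \textbf{assume} identically and rule~\eqref{eq:assume-assert-ok-opsem} applies to both when the predicate holds --- a point the paper itself makes immediately after the proposition. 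Finally, your sign observation is a genuine catch: $|\lnot a_y|$ holding is the condition for \emph{un}reachability, so the proposition's final clause is misphrased, and your cleaned-up statement (``reachable iff $a_y$ is satisfiable, i.e.\ iff $|\lnot a_y|$ fails'') is the right one.
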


According to~\eqref{eq:spwp.sp.post} (on
page~\pageref{eq:spwp.sp.post}) there is no difference
between \textbf{assume} and \textbf{assert} when computing
preconditions~$a_x$ and postconditions~$b_x$ using the strongest
postcondition method. So one advantage of the strongest
postcondition method is that we do not have to transform
assertions into assumptions at all. Another clear advantage of
the strongest postcondition method is that it produces all prover
queries in linear time, as opposed to the weakest precondition
method which requires quadratic time. It is intuitive that `going
forward' is better suited for analyzing semantic reachability.

\paragraph{Types of Problems} 

Nodes that are semantically unreachable are likely
to be caused by bugs. For example, the contradictory
preconditions in Figure~\ref{fig:ra-bad-spec-ex} (on
page~\pageref{fig:ra-bad-spec-ex}) cause the body of
method~\textit{m2} to be semantically unreachable. It is
therefore interesting to find potential causes of unreachable
code.

A node~$x$ is a \emph{blocker} when it is reachable but lets no execution
pass through it. That is, there is no execution containing
$\langle\_,x\rangle,\langle\_,\_\rangle$. Yet in other words, the only
possible successor of state~$\langle\_,x\rangle$ (in an execution) is the
\textit{error} state, which happens exactly when $\lnot b_x$~is valid.  A
node is semantically unreachable if all its predecessors are blockers,
according to~\eqref{eq:spwp.sp.pre} (on page~\pageref{eq:spwp.sp.pre}).

Semantic reachability analysis detects three types of problems:
\begin{enumerate}
\item 
  Semantically unreachable nodes. This includes all dead code.
\item
  Blocker assumptions. These are likely root causes for
  semantically unreachable nodes, so they point closer to
  the actual bug.
\item
  Blocker assertions, also known as doomed program points.
  These are bugs that manifest in every execution.
\end{enumerate}

Formally, there can be no error after a blocker. That is, there
is no execution that goes wrong after it passed through a
blocker, for the simple reason that there is no execution passing
through a blocker. However, intuitively it is helpful to think of
blockers as potentially hiding subsequent bugs.

\section{Algorithm}

Figure~\ref{fig:ra.naive_alg} summarizes the algorithm
suggested by the previous section. The methods \textit{Pre}
and \textit{Post} are those from Figure~\ref{fig:sp-algo} (on
page~\pageref{fig:sp-algo}); the method $\mathit{Not}(p)$
constructs the SMT dag that represents the predicate~$\lnot
p$, perhaps carrying out basic simplifications; the method
\textit{Valid} checks whether its argument represents a valid
predicate by querying the prover. The $|V|$ calls to the method
\textit{Pre} take $O(|V|+|E|)$ time and similarly for method
\textit{Post} (see the proof of Proposition~\ref{prop:sptime} on
page~\pageref{prop:sptime}).

\begin{figure}\centering\leavevmode\vbox{
\begin{alg}
\^  $\proc{NaiveReachabilityAnalysis}(G)$
\=  ~for~ ~each~ statement $x$ of $G$
\+    ~if~ $\mathit{Valid}(\mathit{Not}(\mathit{Pre}(x)))$
\+      report that node $x$ is semantically unreachable
\-    ~else if~ statement $x$ is an assertion ~and~ $\mathit{Valid}(\mathit{Not}(\mathit{Post}(x)))$
\+        report that node $x$ is doomed
\end{alg}}
\caption{A simple algorithm for semantic reachability analysis}
\label{fig:ra.naive_alg}
\end{figure}

This algorithm reports too many errors. For the Boogie program
\begin{boogie}
assume false;
assert $p_1$;
$\cdots$
assert $p_n$;
\end{boogie}
it prints $n$~error messages, one for each assertion. Surely,
the user would prefer one error message that points at the
assumption, which is the cause of all the other errors.

More seriously, it turns out that this algorithm, even if it
takes just linear time to build the prover queries, is unusable
in practice because it is too slow. It is simply not acceptable
to call the prover $|V|$~times for one method. A typical method
has $|V|\approx100$ nodes, and one call to the prover typically
takes between a tenth of a second and one second.

\begin{remark}
All the times are given for an implementation inside \escjava,
which is activated by the option~\texttt{-era}. \escjava has a
robust frontend for Java with JML annotations, so it is easier
to perform meaningful experiments. The implementation first
constructs a flowgraph essentially equivalent to those built by
FreeBoogie, so the implementation should be easy to adapt.
\end{remark}

\subsection{The Propagation Rules}

Figure~\ref{fig:fg.chain.a} shows a path flowgraph. (Node~$1$ is
the initial node.) Node~$4$ is semantically reachable when there
is an execution that contains the state~$\langle\sigma,4\rangle$,
for some store~$\sigma$. Such an execution must correspond
to some path from the initial node~$1$ to node~$4$, and the
only such path is $1\to2\to3\to4$. Hence, the execution
is $\langle\sigma,1\rangle$,~$\langle\sigma,2\rangle$,
$\langle\sigma,3\rangle$, $\langle\sigma,4\rangle$, which means
that the other nodes are semantically reachable too. (The store
is not modified by assertions or assumptions.) This motivates the
study of the following problem, which is more~abstract.

\begin{figure}
\def\sf#1#2#3{
  \subfigure[]{\label{fig:fg.chain.#1}
    \begin{tikzpicture}
    \foreach \n in {#2}
      \enode (\n) at (0,-\n) [label=right:$\n$] {};
    \foreach \n in {#3}
      \fnode (\n) at (0,-\n) [label=right:$\n$] {};
    \draw (1)--(2)--(3)--(4);
    \end{tikzpicture}
  }
}
\centering
\sf a{1,2,3,4}{}\hfil
\sf b{1,2,3}{4}\hfil
\sf c{1,2}{3,4}\hfil
\sf d{1}{2,3,4}\hfil
\sf e{}{1,2,3,4}
\caption{A path flowgraph with $4$ nodes}
\label{fig:fg.chain}
\end{figure}

\begin{problem}\index{problem!semantic reachability}
Alice and Bob play a game. They start by sharing the
flowgraph~$G$. Then Alice secretly colors the nodes with white
and black such that each white node is connected by a white path
with the initial node. Bob's goal is to reconstruct the coloring
by asking Alice as few questions as possible. He is allowed to
ask ``what is the color of node~$x$?''\ for any node~$x$ and
Alice will answer~truthfully.
\label{pb:ra.abstract}
\end{problem}

\begin{remark}
Alice is the theorem prover, Bob is FreeBoogie (or \escjava),
white nodes are semantically reachable nodes, and black nodes are
semantically unreachable nodes.
\end{remark}

Bob's flowgraph initially has only gray nodes and, by the end,
he colors each with either white or black. At some intermediary
stage, he has some white nodes, some black nodes, and the rest
are gray. We say that two colorings are \emph{consistent} when
there is no node that is white in one coloring and black in
the other. Bob maintains his coloring consistent with Alice's
coloring. (Also, because it is not useful, he never changes the
color of a node into gray.) A \emph{valid coloring} has no gray
node and satisfies the condition that
\begin{equation}
\label{eq:ra.fg-cond}
\text{every white node is reachable from the initial node 
by a white path.}
\end{equation}
For example, Figure~\ref{fig:fg.chain} shows all the valid
colorings of a path flowgraph. A \emph{possible coloring} (at
some intermediate stage) is a valid coloring that is consistent
with Bob's coloring. The rule which Bob uses to paint nodes
white or black is simple: If node~$x$ is white in all possible
colorings, then Bob colors it white; if node~$x$ is black in
all possible colorings, then Bob colors it black. Obviously,
this maintains the invariant that Bob's coloring is consistent
with Alice's coloring. To ensure progress, Bob asks Alice from
time to time about the color of a node that is still gray in his
coloring.

\begin{example}
Consider the flowgraph in Figure~\ref{fig:ra.fg.worst}. The
initial node is always reachable. Assuming that Bob starts
with the initial node white and the others gray, he must ask
$5$~questions. After asking $k$~questions he must consider
$2^{5-k}$ possible colorings. However, the preconditions of
all non-initial nodes are the same as the postcondition of
the initial node. Therefore, they are either all semantically
reachable or all semantically unreachable, and this can be
determined with one call to the prover.
\label{ex:ra.bad-model}
\end{example}

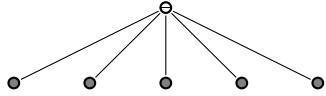
\begin{figure}\centering
\begin{tikzpicture}
\enode (r) at (0,1) {};
\foreach \x in {-2,-1,...,2} {
  \gnode (\x) at (\x,0) {};
  \draw (r)--(\x);
}
\end{tikzpicture}
\caption{Worst case flowgraph}
\label{fig:ra.fg.worst}
\end{figure}

Example~\ref{ex:ra.bad-model} seems to suggest that performance
is hurt because Problem~\ref{pb:ra.abstract} is \emph{too}
abstract. Related to this, it seems that the theorem
prover is never queried about doomed code (unsatisfiable
postconditions), but only about semantically unreachable code
(unsatisfiable preconditions). Both these issues have an
easy fix. Instead of playing the color game on the original
flowgraph, Alice and Bob play it on a modified flowgraph
that has two nodes $x_i$~and~$x_o$ for each node~$x$ in the
original flowgraph. Node~$x_i$ takes over the incoming edges
of node~$x$; node~$x_o$ takes over the outgoing edges of
node~$x$. Also, there is an edge~$x_i\to x_o$. When Bob asks
Alice about the color of node~$x_i$, this means that \escjava
(or FreeBoogie) asks the theorem prover about the satisfiability
of the precondition~$a_x$; when Bob asks Alice about the
color of node~$x_o$, this means that \escjava (or FreeBoogie)
asks the theorem prover about the satisfiability of the
postcondition~$b_x$.

Bob does not want to look at each possible coloring, because
there might be an exponential number of them. Hence, he must
characterize the nodes whose color he can infer without referring
to all possible colorings.

\begin{definition}\index{dominator}
Node $x$ of a flowgraph \emph{dominates} node~$y$ when all
paths from the initial node to node~$y$ contain node~$x$.
Node~$x$ \emph{immediately dominates} node~$y$ if $x\ne y$
and all nodes that dominate node~$y$ also dominate node~$x$.
\label{def:ra.dominator}
\end{definition}

\begin{proposition}
A gray node is white in all possible colorings if and only if
it dominates a white node in the non-black subgraph (in Bob's
coloring). \label{prop:ra.fix-white}
\end{proposition}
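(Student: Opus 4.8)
The plan is to prove both directions of the equivalence by reasoning directly about the set of possible colorings, using the characterization of valid colorings in condition~\eqref{eq:ra.fg-cond}. Throughout, let $H$ denote the \emph{non-black subgraph} of Bob's current coloring---the subgraph of~$G$ induced by the nodes Bob has painted white or left gray. Since the initial node is always white, it belongs to~$H$, so domination inside~$H$ is well defined (Definition~\ref{def:ra.dominator}). The fact I will use repeatedly is that a possible coloring keeps every one of Bob's white nodes white and every one of Bob's black nodes black, and by definition satisfies~\eqref{eq:ra.fg-cond}; consequently any node that is white in a possible coloring is non-black in Bob's coloring and hence lies in~$H$.

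First I would handle the easy direction ($\Leftarrow$). Suppose the gray node~$x$ dominates some white node~$y$ in~$H$. Fix an arbitrary possible coloring. In it $y$ is white, so by~\eqref{eq:ra.fg-cond} there is a white path~$P$ from the initial node to~$y$. Every node of~$P$ is white in this coloring, hence non-black in Bob's coloring, so $P$ is a path in~$H$. Because $x$ dominates~$y$ in~$H$, the path~$P$ must contain~$x$, so~$x$ is white in this possible coloring. As the coloring was arbitrary, $x$ is white in all of them. (Here $y$ cannot be the initial node, since the trivial path witnesses that $x$ does not dominate it.)

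The substance of the argument is the other direction, which I would prove by contraposition: assuming $x$ dominates no white node of~$H$, I will exhibit a single possible coloring in which $x$ is black. Delete~$x$ from~$H$ to obtain $H'=H-x$, and let $R$ be the set of nodes reachable from the initial node within~$H'$. Define the candidate coloring by painting the nodes of~$R$ white and all remaining nodes black. I then have to verify two things. For \emph{consistency} with Bob's coloring: the initial node lies in~$R$, so it stays white; each of Bob's white nodes~$y$ has, by the no-domination hypothesis, a path from the initial node in~$H$ avoiding~$x$ (indeed ``$x$ does not dominate~$y$'' means such an $x$-avoiding path exists), i.e.\ a path in~$H'$, so $y\in R$ and stays white; Bob's black nodes and~$x$ itself lie outside~$H'$, hence are black. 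For \emph{validity}: I must check~\eqref{eq:ra.fg-cond}, namely that every white node has a white \emph{path}, not merely a path. The point is that $R$ is closed under prefixes of reachability paths: if $v$ lies on a path from the initial node to a node of~$R$ inside~$H'$, then $v$ is itself reachable in~$H'$, so $v\in R$ and is white. Hence any $H'$-path witnessing membership in~$R$ is automatically a white path, and~\eqref{eq:ra.fg-cond} holds.

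I expect the only real obstacle to be this validity check in the backward direction: it is tempting to settle for mere reachability of the white nodes, whereas~\eqref{eq:ra.fg-cond} demands a genuinely white path. The prefix-closure observation about~$R$ is exactly what bridges this gap, and verifying it carefully---together with confirming that the constructed coloring is consistent with Bob's white and black nodes---is the crux of the proof. The forward direction and the bookkeeping about~$H$ containing the initial node are routine by comparison.
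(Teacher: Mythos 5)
Your proof is correct and takes essentially the same route as the paper: the easy direction is the identical white-path-through-the-dominator argument, and the hard direction likewise exhibits a possible coloring in which $x$ is black, exploiting the $x$-avoiding paths that non-domination guarantees. The only difference is the witness coloring---you paint the entire reachable set of $H-x$ white (with the prefix-closure observation supplying validity), whereas the paper paints one chosen $x$-avoiding non-black path per white node and blackens the remaining gray nodes---an immaterial variation, though yours is somewhat tidier in that it makes the validity check explicit and does not need the paper's appeal to consistency with Alice's coloring.
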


\begin{proof}
Because Bob's coloring is consistent with a valid coloring it is
possible, for each white node~$y$, to find a non-black path from
the initial node to node~$y$. Consider the coloring constructed
by picking such a path for each white node and then painting all
these paths white; then paint the remaining gray nodes black.
This coloring is possible by construction. Also, whenever white
node~$y$ is processed, choose a path that avoids node~$x$, if
possible. This shows that if node~$x$ does not dominate a white
node, then there are some possible colorings in which it is
black. 

For the converse, consider a coloring in which node~$x$ is black
and dominates white node~$y$. Then \eqref{eq:ra.fg-cond}~is
violated and the coloring is invalid.
\end{proof}

Similarly, one can prove the following.

\begin{proposition}
A gray node~$x$ is black in all possible colorings if and only if
all the paths from the initial node to node~$x$ contain a black
node (in Bob's coloring). \label{prop:ra.fix-black}
\end{proposition}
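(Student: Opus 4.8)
The plan is to prove Proposition~\ref{prop:ra.fix-black} by mirroring the argument already given for Proposition~\ref{prop:ra.fix-white}, exploiting the duality between being \emph{forced white} and being \emph{forced black}. The key reformulation I would make at the outset is that the hypothesis ``all paths from the initial node to node~$x$ contain a black node'' says exactly that node~$x$ is unreachable from the initial node within the non-black subgraph of Bob's coloring. With the condition phrased this way, the two implications split cleanly, and the roles of white and black (and of domination versus plain reachability) are interchanged relative to the proof of Proposition~\ref{prop:ra.fix-white}.

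For the direction where the path condition forces $x$ black, I would take an arbitrary possible coloring~$c$ and use consistency with Bob's coloring: every Bob-black node is black in~$c$. If $x$ were white in~$c$, then validity~\eqref{eq:ra.fg-cond} would supply a white path from the initial node to~$x$; but that path, being a path to~$x$ in the full flowgraph, must by hypothesis contain a Bob-black node, which is black in~$c$. A path cannot be simultaneously white and contain a black node, so this is a contradiction, and $x$ must be black in~$c$. Since $c$ was arbitrary, $x$ is black in every possible coloring.

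For the converse I would argue by contraposition, reusing the explicit construction from the proof of Proposition~\ref{prop:ra.fix-white}. Assuming some path~$P$ from the initial node to~$x$ contains no Bob-black node, I would build a possible coloring in which $x$ is white: for each Bob-white node choose a non-black path back to the initial node (available because Bob's coloring is consistent with a valid coloring), then paint white all nodes lying on~$P$ or on these chosen paths and paint the remaining gray nodes black. One then checks that the result is consistent with Bob's coloring and valid, and that $x$ is white in it, so $x$ is not black in all possible colorings.

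The main obstacle is the simultaneous verification of consistency and validity for this constructed coloring, and it is exactly here that the two structural assumptions are consumed. Consistency requires that no Bob-black node is recolored white, which holds because $P$ avoids Bob-black nodes and the chosen paths are non-black; validity requires that every white node retains a white path to the initial node, which holds by construction since each white node lies on~$P$ or on one of the chosen paths. Everything else is a direct transcription of the fix-white argument, so I expect the proof to be short once the reachability reformulation and the reuse of the earlier construction are in place.
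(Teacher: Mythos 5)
Your proof is correct and is exactly what the paper intends: the paper gives no explicit proof of Proposition~\ref{prop:ra.fix-black}, dismissing it with ``Similarly, one can prove the following,'' and your argument is precisely that dualization of the proof of Proposition~\ref{prop:ra.fix-white} --- consistency plus validity~\eqref{eq:ra.fg-cond} for the forced-black direction, and the same path-painting construction (with the extra path~$P$ avoiding Bob-black nodes) for the converse. The reformulation of the path condition as unreachability in the non-black subgraph and the explicit consistency/validity checks are sound and fill in the details the paper leaves implicit.
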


An immediate consequence of the last two proposition will prove
useful.

\begin{corollary}
A gray node is sometimes white and sometimes black in possible
colorings if and only if in the non-black subgraph (a)~it does
not dominate a white node and (b)~it is connected to the initial
node. (Again, in Bob's coloring.)
\label{co:ra.propagate}
\end{corollary}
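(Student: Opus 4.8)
The plan is to obtain the Corollary purely as the Boolean combination of Proposition~\ref{prop:ra.fix-white} and Proposition~\ref{prop:ra.fix-black}, with no further reasoning about flowgraphs. The key observation is that the set of possible colorings is never empty: Bob's coloring is by construction consistent with Alice's valid coloring, so Alice's coloring itself is a possible coloring. Consequently, for any gray node~$x$, exactly one of three situations occurs: $x$~is white in all possible colorings, $x$~is black in all possible colorings, or $x$~is white in some possible coloring and black in another. These cases are mutually exclusive---if there is at least one possible coloring, a node cannot be simultaneously ``always white'' and ``always black''---and they are exhaustive, because in any single possible coloring~$x$ is assigned exactly one of the two colors. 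Hence ``$x$ is sometimes white and sometimes black'' is equivalent to the conjunction of ``$x$ is not white in all possible colorings'' and ``$x$ is not black in all possible colorings''.

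First I would rewrite each conjunct using the two propositions. By Proposition~\ref{prop:ra.fix-white}, $x$~is white in all possible colorings exactly when it dominates a white node in the non-black subgraph; negating this yields condition~(a), that $x$~does \emph{not} dominate a white node in the non-black subgraph. By Proposition~\ref{prop:ra.fix-black}, $x$~is black in all possible colorings exactly when every path from the initial node to~$x$ contains a black node; negating this yields the existence of a path from the initial node to~$x$ that avoids all black nodes.

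Next I would translate that surviving existential statement into condition~(b). A path from the initial node to~$x$ that contains no black node is, by definition, a path lying entirely in the non-black subgraph, and conversely; so ``there is a black-free path from the initial node to~$x$'' is literally ``$x$ is connected to the initial node in the non-black subgraph''. This uses the standing fact that the initial node is white (it is always semantically reachable), hence itself belongs to the non-black subgraph, so such a connecting path can indeed start there. Combining the two rewritten conjuncts gives exactly (a)~and~(b), which is the claim.

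The reasoning is short, so the only point that needs care---and the main obstacle---is the clean handling of the trichotomy together with the two negations. In particular I must make sure that the negation of the condition in Proposition~\ref{prop:ra.fix-black} is stated positively as connectivity in the non-black subgraph, rather than left as ``not every path is blocked'', and that the non-emptiness of the possible colorings is invoked so that the three cases genuinely partition. Once these bookkeeping points are pinned down, the Corollary follows immediately, which is precisely why it is labelled an immediate consequence of the preceding two propositions.
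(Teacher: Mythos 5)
Your proof is correct and takes exactly the route the paper intends: the corollary is stated there without proof as ``an immediate consequence'' of Propositions \ref{prop:ra.fix-white} and \ref{prop:ra.fix-black}, obtained precisely by negating both and conjoining them, as you do. Your explicit treatment of the two points the paper leaves implicit---non-emptiness of the set of possible colorings (via Bob's consistency with Alice's valid coloring, which makes the trichotomy genuine) and the rewriting of ``some path from the initial node avoids all black nodes'' as connectivity in the non-black subgraph---is sound and fills in the details correctly.
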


Figure~\ref{fig:alg_propagate} shows how Bob paints his
flowgraph. The method \textit{ComputeDominators} computes the
dominator tree of a graph (Section~\ref{sec:ra.dominators}).
The method \textit{PickQueryNode} returns some gray node
(Section~\ref{sec:ra.heuristic}). The main loop of the method
\textit{RecoverColoring} maintains the invariants:
\begin{enumerate}
\item The coloring is consistent with Alice's coloring.
\item 
  For each gray node, there are possible colorings in which
  it is white and possible colorings in which it is black.
\end{enumerate}
It is easy to see that both invariants hold when all the nodes
are gray. Let us see why they are maintained, first in the case
when Alice says that node~$x$ is white and then in the case when
Alice says that node~$x$ is black.

\begin{figure}\centering\leavevmode\vbox{
\begin{alg}
\^  $\proc{PropagateWhite}(y, T)$
\=  paint node $y$ in white
\=  \textit{PropagateWhite}(parent of $y$ in tree $T$)
\end{alg}
\vskip 3pt
\begin{alg}
\^  $\proc{PropagateBlack}(x,G)$
\=  paint node $x$ in black
\=  ~for~ ~each~ gray successor $y$ of node $x$ in $G$
\+    ~if~ all predecessors of node $y$ are black
\+      $\mathit{PropagateBlack}(y, G)$
\end{alg}
\vskip 3pt
\begin{alg}
\^  $\proc{RecoverColoring}(G)$
\=  paint in gray all the nodes of the flowgraph $G$
\=  ~while~ there are gray nodes
\+    $T:=\mathit{ComputeDominators}(\text{non-black subgraph of $G$})$
\=    $x:=\mathit{PickQueryNode}(G,T)$
\=    ~if~ $\mathit{AskAlice}(x)=\textit{white}$
\+      $\mathit{PropagateWhite}(x,T)$
\-    ~else~
\+      $\mathit{PropagateBlack}(x,G)$
\end{alg}}
\caption{Solution outline for Problem~\ref{pb:ra.abstract}}
\label{fig:alg_propagate}
\end{figure}

If Alice says that node~$x$ is white then there are no possible
colorings in which it is black, so it must be painted white
to maintain invariant~2. This is done on line~1 of the method
\textit{PropagateWhite} when it is first called from line~6
of the method \textit{RecoverColoring}. However, painting
node~$x$ in white might break invariant~2 with respect to
another gray node, because gray nodes must not dominate white
nodes (Corollary~\ref{co:ra.propagate}). This problem is
fixed by painting all the dominators of node~$x$ in white,
which is allowed by Proposition~\ref{prop:ra.fix-white}.
It follows from the definition of an immediate dominator
(Definition~\ref{def:ra.dominator}) that method
\textit{PropagateWhite} visits exactly the dominators of
node~$x$.

If Alice says that node~$x$ is black then, for similar reasons as
in the previous case, it must be painted black. This may break
the other part of invariant~2, namely, it may disconnect
gray nodes from the initial node in the non-black subgraph.
Should this happen, Proposition~\ref{prop:ra.fix-black} says
that all the disconnected gray nodes must be painted black.
This is the purpose of the method \textit{PropagateBlack}. Let
us say that a node is immediately disconnected when all its
parents are black. Initially no node is immediately disconnected
(invariant~2) but one may become so when its parent is painted
black (on line~1 of the method \textit{PropagateBlack}).
Whenever this happens it will be visited. Therefore, the
method \textit{PropagateBlack} visits and paints in black all
immediately blocked nodes. Now it is easy to see that if all
immediately blocked nodes are black, then there is no gray node
that is disconnected from the initial node in the non-black
subgraph. Pick some gray node. Because it is not black, it is
not immediately blocked, so it has a non-black predecessor. Then
repeat.

We have proved that the algorithm is correct. It is also fast.

\begin{proposition}
If calls $\mathit{ComputeDominators}()$, $\mathit{PickQueryNode}()$,
and $\mathit{AskAlice}()$ each takes constant time, then the 
execution of method \textit{RecoverColoring} takes $O(|V|+|E|)$~time.
\end{proposition}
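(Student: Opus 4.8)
The plan is to combine a monotonicity argument, which bounds the number of iterations and the number of recolorings, with an amortized analysis of the two propagation routines. The guiding observation is that the coloring is monotone: a node passes from gray to white or from gray to black and never changes again (as noted in the text preceding the proposition, Bob never repaints a node gray). Hence each node is painted white at most once and painted black at most once, and these two budgets are what I will charge the running time against.

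First I would bound the main loop of \textit{RecoverColoring}. The call $\mathit{PickQueryNode}(G,T)$ returns a gray node~$x$, and after the loop body the node~$x$ is no longer gray, since it is painted by \textit{PropagateWhite} or by \textit{PropagateBlack}. Therefore the number of gray nodes strictly decreases at every iteration, so the loop runs at most $|V|$ times. Maintaining an integer counter of gray nodes makes the guard ``there are gray nodes'' cost $O(1)$ per test, and by hypothesis each of $\mathit{ComputeDominators}$, $\mathit{PickQueryNode}$, and $\mathit{AskAlice}$ also costs $O(1)$. Thus the loop contributes $O(|V|)$ time, not counting the work done inside the propagation routines, which I account for separately.

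Next I would analyze \textit{PropagateWhite}. As written it walks up the dominator tree following parent pointers, painting each visited node white. For the bound to hold, the recursion must stop as soon as it reaches a node that is already white---which is sound because, once the dominator chain from~$x$ to the root is white, nothing remains to be done. With this early exit, the total number of node-paintings performed by all calls to \textit{PropagateWhite} is at most the number of nodes ever painted white, which is $\le|V|$ by monotonicity; each painting and each parent-pointer step is $O(1)$, so all white propagation costs $O(|V|)$.

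The main obstacle is \textit{PropagateBlack}, where a naive reading gives a superlinear bound. When node~$x$ is painted black, the routine inspects each successor~$y$ and tests whether \emph{all} predecessors of~$y$ are black; implemented literally this test costs time proportional to the in-degree of~$y$ and may be repeated once per black predecessor, yielding $\sum_y \deg^-(y)^2$ work. To recover linearity I would maintain, for every node~$y$, a counter holding the number of its predecessors that are not yet black, initialized to the in-degree of~$y$. Painting a node black then amounts to decrementing the counter of each of its successors, and the predicate ``all predecessors of~$y$ are black'' becomes the $O(1)$ test ``the counter of~$y$ is zero.'' Each directed edge $x\to y$ is then touched exactly once, when its source~$x$ is painted black, so the counter bookkeeping costs $O(|E|)$ in total; together with the $\le|V|$ black paintings this gives $O(|V|+|E|)$ for all black propagation. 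Summing the three contributions---$O(|V|)$ for the loop, $O(|V|)$ for white propagation, and $O(|V|+|E|)$ for black propagation---yields the claimed $O(|V|+|E|)$ bound.
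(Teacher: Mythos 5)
Your proof is correct and takes essentially the same route as the paper's: each node is painted at most once, the work of \textit{PropagateBlack} is charged to outgoing edges examined once each, and the test ``all predecessors of~$y$ are black'' is made $O(1)$ by keeping a per-node count of non-black parents---exactly the device the paper's proof invokes. Your explicit early exit in \textit{PropagateWhite} at an already-white node simply makes precise the paper's implicit claim that the propagation routines are invoked only on gray nodes, so nothing of substance differs.
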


\begin{proof}
The methods \textit{PropagateWhite} and \textit{PropagateBlack}
are called only for gray nodes and the first action paints the
given node in black or white. Therefore, they are called at most
once per node.

The method~\textit{PropagateBlack} examines the outgoing edges of
node~$x$, so it might examine all edges by the end. The condition
on line~3 of the method \textit{PropagateBlack} can be evaluated
in constant time if each node keeps a count of its non-black
parents.
\end{proof}

\subsection{Dominators Tree for Dags}
\label{sec:ra.dominators}

This section is a brief reminder of basic algorithms related to
dominators. It explains how the method \textit{ComputeDominators}
in Figure~\ref{fig:alg_propagate} is implemented.

The dominators tree~$T$ of a flowgraph~$G$ is a tree in which the
parent of node~$x$, denoted $\mathit{idom}(x)$, is the immediate
dominator of node~$x$ in the flowgraph~$G$. The root of the
dominators tree is the initial node of the flowgraph, which is
the only node without an immediate dominator. All other nodes~$y$
obey the equation
\begin{equation}
\mathit{idom}(y)=\mathit{LCA}(\{\,x\mid x\to y\,\}),
\end{equation}
where $\mathit{LCA}(S)$ is the root of the smallest dominators
subtree that contains the set~$S$ of nodes.

The method \textit{ComputeDominators} examines nodes in a
topological order of the flowgraph and inserts them as leaves in
the proper place in the dominators tree that it builds. When
node~$y$ is processed all its predecessors~$x$ have taken their
place in the dominators tree so \textit{LCA} can be computed.

\subsection{Choosing the Query}
\label{sec:ra.heuristic}

A greedy algorithm always asks the query whose answer provides
most information. If the probability of receiving the answer
\textit{black} is~$p$, then the information provided by the
answer is $-p\lg p - (1-p)\lg(1-p)$, which has a maximum at
$p=1/2$ and is symmetric around that point. In other words, the
greedy information theoretic approach says that Bob should always
inquire about the node whose probability of being black is as
close as possible to~$1/2$.

\paragraph{An Example}

To estimate the probability of various nodes being black we
need a model. Let us go back to the example flowgraph in
Figure~\ref{fig:fg.chain} (on page~\pageref{fig:fg.chain}).
The simplest possible model is the one in which each node has
an independent probability~$p$ of having a \emph{bug} and
probability $q=1-p$ of not having a bug. A node with a bug
is black, and so are all that follow it. In this model, the
five possible colorings of Figure~\ref{fig:fg.chain} have
probabilities (a)~$q^4$, (b)~$q^3p$, (c)~$q^2p$, (d)~$qp$, and
(e)~$p$. The probability that node~$x$ is black is the sum of
probabilities of possible colorings in which node~$x$ is black.
For the nodes in Figure~\ref{fig:fg.chain} we have (1)~$1-q$,
(2)~$1-q^2$, (3)~$1-q^3$, and (4)~$1-q^4$.

In general, for a path flowgraph with nodes $1\to2\to\cdots\to
n$, the probability that node~$k$ is black is
\begin{equation}
p_k = 1 - (1 - p)^k
\end{equation}
where $p$ is the probability that a node has a bug. In the common
case, we expect the analysis to be run on code that is mostly
correct. Indeed, bugs are usually clustered in the region of the
code that is actively developed, while the program verifier is
run on the whole code base. Hence, in a first approximation, it
is reasonable to expect $p$ to be very small. Then $p_k\approx
kp$ and, if $p$~is really small, the probability that is closest
to~$1/2$ is~$p_n$. In other words, if we are fairly confident
that the code is correct, then we should query the last node in
the path. If the code is indeed correct, then we are done after
one query.

However, if the first query returns \textit{black} then there is
at least one bug. The expected number of bugs for the model we
use is $b=pn$. (The number of bugs has the probability generating
function
\begin{equation}
B(z)=\sum_k \binom{n}{k} (1-p)^{n-k} p^k z^k = (1-p+pz)^n,
\end{equation}
so the average is $B'(1)=pn$.) Therefore, if we expect the
program to have $b$~bugs, then it makes sense to choose
\begin{equation}
p=b/n \label{eq:ra.bug-estim}
\end{equation}
in our model. Solving $p_k=1/2$ we obtain
\begin{equation}
k=-1/\lg(1-p)\approx(\ln2)/p.
\end{equation}
Plugging in the estimate~\eqref{eq:ra.bug-estim} we finally
arrive at
\begin{equation}
k\approx 0.7\cdot n/b
\end{equation}
In other words, the information theoretic greedy strategy is
a binary search that splits the interval~$i.\,.\>j$ not at
$i+0.5(j-i)$, but at $i+(0.7/b)(j-i)$, where $b$ is the expected
number of bugs in the interval~$i.\,.\>j$.

\paragraph{The General Case}

Most observations for path flowgraphs generalize easily. We will
have two modes of work, one in which we expect no bug and one in
which we expect one bug. When we expect no bug we will pick a
gray node that is far from the initial node. When we expect one
bug we will perform a binary search on a path of nodes in the
dominator tree. The key here is that color propagation rules on
paths in the dominator tree are exactly the same as the color
propagation rules for a path flowgraph.

Figure~\ref{fig:ra.alg} summarizes all the observations made
so far. Line~1 splits each node~$x$ of the flowgraph into a
node~$x_i$ carrying the precondition~$a_x$ and a node~$x_o$
carrying the postcondition~$b_x$. Later these predicates are
accessed using the method \textit{Formula}(). Line~2 colors the
initial node white to establish the invariant of the loop on
line~10.

\begin{figure}\centering\leavevmode\vbox{
\begin{alg}
\^  $\proc{ReachabilityAnalysis}(G)$
\=  construct $H$ by splitting nodes of $G$
\=  paint the initial node of $H$ white and the others gray
\=  $T:=\mathit{ComputeDominators}(H)$
\=  ~while~ $H$ has gray nodes
\+    $x:=\text{a leaf from the deapest level of $T$}$
\=    ~if~ $\mathit{Valid}(\mathit{Not}(\mathit{Formula}(x)))$
\+      $\mathit{PropagateBlack}(x,H)$
\=      let $[y_1,\ldots,y_n]$ be the path in $T$ from $y_1=\mathit{root}(T)$ to $y_n=x$
\=      $i:=1,\quad j:=n$
\=      ~while~ $i+1<j$ \comment $y_i$ is white, $y_j$ is black
\+        $k:=\mathit{round}(i+0.7(j-i))$
\=        ~if~ $\mathit{Valid}(\mathit{Not}(\mathit{Formula}(y_k)))$
\+          $\mathit{PropagateBlack}(y_k, H)$
\=          $j:=k$
\-        ~else~
\+          $\mathit{PropagateWhite}(y_k, T)$
\=          $i:=k$
\2      $T:=\mathit{ComputeDominators}(\text{non-black subgraph of $H$})$
\-    ~else~
\+      $\mathit{PropagateWhite}(x,T)$
\end{alg}
}
\caption{Algorithm for semantic reachability analysis}
\label{fig:ra.alg}
\end{figure}

\subsection{Performance}

If the verified flowgraph has no bugs (no semantically
unreachable statements and no doomed code) then $l$~queries are
necessary and sufficient, where $l$~is the number of leaves in
the flowgraph. The algorithm reaches this bound. In other words,
if there are no bugs, then the prover is queried once for each
\textbf{return} statement in the program. For each bug, there are
$\sim\lg h$ extra queries, where $h$~is the (average) length of a
path in the flowgraph from the initial node to a \textbf{return}
node. In total, there are roughly $l+b\lg h$ queries for a
program with $b$~bugs.

\paragraph{Experiments}

The frontend of \escjava (JavaFE) contains $1890$~methods and is
one of the largest coherent JML-annotated code base. Verifying
it takes $31589$~seconds (almost $9$~hours), out of which
$34.8\%$~is spent in semantic reachability analysis, out of
which $99.8\%$~is spent in the prover. The total number of leaves
in flowgraphs is~$3256$ and the total number of prover queries
is~$3351$. In other words, on average per method
\begin{itemize}
\item 
  semantic reachability analysis takes $5.82$~seconds,
\item 
  out of which $5.81$~seconds are spent in $1.77$ prover queries
  (not much more than the number of \textbf{return} statements, which
  is~$1.72$)
\item
  and the other $0.01$~seconds are spent computing prover
  queries, deciding which queries to perform, and propagating
  semantic reachability information in the flowgraph.
\end{itemize}
(This benchmark was run using the default treatment of loops in
\escjava---unrolling once.)

\section{Case Study}
\label{sec:ra.case_study}

The frontend of \escjava (known as JavaFE) consists of
$217$~annotated classes with $1890$~methods. Semantic
reachability analysis revealed $\approx50$~issues. Short
descriptions of these problems follow, each description preceded
by the number of cases to which it applies.

\subsection{Dead Code}

\begin{itemize}
\item[1] 
  A \textbf{catch} block was unreachable because it was catching
  a \textit{RuntimeException} and the specifications of methods
  called in the \textbf{try} block did not signal that exception type.
\item[9]
  Informal comments indicated that the dead code is intended.
  JML has an equivalent annotation, which \escjava understands
  and should be used instead of the informal comments.
\item[1]
  The code was semantically unreachable in the classic sense (no
  annotation involved.)
\end{itemize}

\subsection{Doomed Code}

\begin{itemize}
\item[6] A few assertions were bound to fail on any input.
\item[9]
  The scenario illustrated by methods \textit{m2}~and~\textit{m3}
  in Figure~\ref{fig:ra-doomed-code-ex} on
  page~\pageref{fig:ra-doomed-code-ex}. The method call
  was bound to fail.
\end{itemize}

\subsection{Inconsistent Specifications}

\begin{itemize}
\item[5] 
  Inconsistencies in the JDK specifications that ship with
  \escjava. They were filed as \escjava bugs \#595, \#550, \#568,
  \#549, and \#545. These are bugs whose cause was hard to track.
  Before, they have escaped for years the \escjava's test-suite 
  that was designed to catch them.
\end{itemize}

\subsection{Unsoundness and Bugs}

\begin{itemize}
\item[1]
  The JDK specification used a JML \emph{informal comment}.
  According to the semantics of JML, an informal comment should
  not cause warning messages. Roughly, this means that, depending
  on the context, an informal comment should be treated sometimes
  as \textbf{true} and sometimes as \textbf{false}. However,
  \escjava always treats is as \textbf{true}. See \escjava bug
  \#547 for details.
\item[4]
  Loops with a constant bound that is bigger than the loop 
  unrolling limit.
\item[2]
  If the \textbf{modifies} clause is missing on a method~$m$,
  \escjava considers that method~$m$ is allowed to modify any
  global variable while checking it, but assumes that method~$m$
  does not modify any global state while checking methods that
  call it. This is obviously unsound (and motivated by the desire
  to not flood new users with irrelevant warnings).
\end{itemize}

\subsection{Others}

\begin{itemize}
\item[12] 
  The cause of these warnings provided by the semantic
  reachability analysis is unclear. Given the experience of
  tracking down the cause of some of the other warnings, it is
  likely that a complex interaction between annotations located
  in different files is involved.
\end{itemize}

\section{Conclusions and Related Work}
\label{sec:ra.conclusions}

This chapter presents the theoretical underpinnings of semantic
reachability analysis for annotated code and an efficient
algorithm for finding 
\begin{enumerate}
\item dead code,
\item doomed code,
\item inconsistent specifications, and
\item unsoundness bugs.
\end{enumerate}


Abstract interpretation~\cite{cousot1977} and symbolic
execution~\cite{king1976} are program verification techniques
that can easily detect semantically unreachable code as a
by-product, although they seldom seem to be used for this
purpose. To ensure termination, these techniques usually
over-approximate the set of possible states, which means that
they might conclude that some code is semantically reachable,
although it is not.

Lermer et al.~\cite{lermer2003} give semantics for execution
paths using predicate transformers. In particular, they define
\emph{dead paths} and \emph{dead commands}. A dead path is one
that is taken by no execution; a dead command is a program whose
VC computed by the weakest precondition method is unsatisfiable.


Hoenicke et al.~\cite{hoenicke2009} explain why it is desirable
to detect doomed code. Their implementation uses the weakest
precondition method to compute prover queries associated with
each node. The irrelevant portions of the flowgraph are turned
off using auxiliary variables rather than explicitly manipulating
the flowgraph (see Section~\ref{sec:ra.theory}).


Chalin~\cite{chalin2006} explains why it is desirable to have
automatic sanity checks for specifications and designs an
automated analysis that finds bugs in JML annotations, other than
plain logic inconsistencies.

Cok and Kiniry~\cite{cok2005methods,cok2004} explain how
\escjava handles method calls that appear in annotations. Darvas
and M\"uller~\cite{darvas2006methods} point out that that
treatment is unsound because some annotations are translated
by \escjava's frontend into inconsistent assumptions. Their
solution is a set of syntactic constraints on the use of
method calls in annotations. Rudich et al.~\cite{rudich2008}
relax these constraints while maintaining soundness. Leino and
Middelkoop~\cite{leino2009methods} replace the syntactical
constraints with queries to the theorem prover. Semantic
reachability analysis will signal most problems caused by the
use of method calls in specification, because they manifest as
inconsistencies at the level of the intermediate representation
(Boogie).  However, being a generic analysis, the error message is
usually not very informative. Another shortcoming of the semantic
reachability analysis compared to the other specialized solutions
is that it does not ensure that specifications are well-founded.


The naive version of semantic reachability analysis is used in
the static verifier Boogie from Microsoft
Research under the name ``smoke testing'' and is used mostly for
catching bugs in front-ends for Boogie.


\escjava~\cite{flanagan2002escjava} is a program verifier for
Java annotated with JML~\cite{leavens2006jml,cok2004}. By design,
\escjava favors user friendliness over soundness.

The method \textit{ComputeDominators} uses the algorithm of
Cooper~\cite{cooper2000}, which is easy to implement and
works fast in practice~\cite{georgiadis2006}. (There are
linear time algorithms, such as the one of Buchsbaum et
al.~\cite{buchsbaum2008}.) Probability generating functions,
such as the one used in Section~\ref{sec:ra.heuristic}, are
treated by most probability textbooks, such as Graham et
al.~\cite[Chapter~8]{graham1998}.

\chapter{Conclusions}
\label{ch:conclusions}

\chquote{I love to travel but I hate to arrive.}{Albert Einstein}

\noindent The design of FreeBoogie (Chapter~\ref{ch:design}) does
not surprise much those who write \emph{program verification}
tools, although it has a few distinctive characteristics.
FreeBoogie spends most of its time transforming step-by-step
a Boogie program into simpler and simpler Boogie programs.
The steps are always as small as possible. For example, even
if the \textbf{call} statements could be
desugared into \textbf{assume} and \textbf{assert} statements
in one step, FreeBoogie does it in two steps. After
each transformation, FreeBoogie type-checks the intermediate
program to rule out egregious bugs. \emph{Correctness has a
very high priority in deciding the design}. The main data
structures of FreeBoogie are immutable. A nice side-effect of
immutability shows that correctness and efficiency are not
always competing goals. Because pieces of Boogie programs are
represented by immutable data structures, they can act as keys
in caches, which makes it easy to avoid re-doing the same work
(Chapter~\ref{ch:ev}).

Operational semantics, Hoare logic, and predicate transformers
are ways to define \emph{programming languages}. This dissertation
uses all three for a subset of Boogie and lingers on the relations
between the three (Chapter~\ref{ch:spwp}). Being able to quickly
switch between different points of view on the semantics of Boogie
programs is essential in understanding the techniques presented
later (Chapters~\ref{ch:ev} and~\ref{ch:reachability}).

The \emph{algorithmic} problems solved in this dissertation
(Chapters~\ref{ch:passive} and~\ref{ch:reachability})
are not particularly difficult. The most difficult part
is the proof of Theorem~\ref{th:coiv-is-np-complete} in
Section~\ref{sec:passive.iv}. Program verifiers try to solve
instances of an undecidable problem, which makes for a good
excuse to not analyze their algorithms. But, there is not much
undecidable going on in many parts of a program verifier. In
particular, a VC generator is a place full of little algorithms
that deserve to be better understood.

\emph{The goal of this dissertation is to give an account of some
interesting but often overlooked parts of a VC generator. The
meta-goal of this dissertation is to find connections between
\emph{program verification}, \emph{programming languages}, and
\emph{algorithms}.}

It also happens that the dissertation could serve as a suitable
guide for a developer who wants to contribute to the VC generator
FreeBoogie. There are $32\,000$~lines of code that handle the
tasks described in this dissertation. (About $23\,000$~of them
are generated by ANTLR, CLOPS, and AstGen.) This infrastructure
could serve for future developments such as invariant inference
(perhaps via abstract interpretation), termination detection
(which would probably entail enriching the Boogie language),
and other experiments with the Boogie language (such as adding
operators from separation logic).

\bigskip
Specific contributions include:
  \begin{itemize}
  \item 
    the design of the VC generator, including discussions
    of the main design decisions;
  \item 
    a precise definition of passivation and a study of its
    algorithmic properties;
  \item
    a comparison between the weakest precondition and the
    strongest postcondition methods of generating VCs;
  \item
    various semantics for a subset of the Boogie 
    language---operational semantics, Hoare logic,
    predicate transformers---and a discussion of the
    relations between them;
  \item
    an algorithm for unsharing expressions, a problem that
    is usually tangled with computing the weakest precondition
    efficiently;
  \item
    a proof technique for the correctness of algorithms
    that simplify a VC based on a known old  VC;
  \item
    a heuristic for detecting common parts of two expression
    trees, such as two VCs;
  \item
    the semantic reachability analysis, in the context of
    program verifiers;
  \item
    an efficient heuristic for finding dead code, doomed code,
    inconsistent specifications, and soundness bugs.
  \end{itemize}

\appendix
\chapter{Notation}
\label{ch:notation}

\chquote{The best notation is no notation; whenever it is possible
to avoid the use of a complicated alphabetic apparatus, avoid it.
A good attitude to the preparation of written mathematical exposition
is to pretend that it is spoken. Pretend that you are explaining
the subject to a friend on a long walk in the woods, with no paper
available; fall back to symbolism only when it is really necessary.}
{Paul Halmos~\cite{steenrod1973}}

{
  \baselineskip=12pt\raggedright
  \def\row#1#2{
    \setbox0=\hbox to 2.5cm{\hfil\strut#1\hfil}
    \setbox1=\vbox{\hsize=8.5cm\noindent\strut#2\strut}
    \dimen0=\ht0\advance\dimen0 by\dp0
    \dimen1=\ht1\advance\dimen1 by\dp1
    \ifdim\dimen0<\dimen1\dimen0=\dimen1\fi
    \hbox to\hsize{%
      \hfil%
      \vbox to\dimen0{\vfil\box0\vfil}%
      \hskip1em plus1em%
      \vbox to\dimen0{\vfil\box1\vfil}%
      \hfil}\vskip5pt plus1pt\relax
  }

  \hrule\row{\textbf{notation}}{\textbf{description}}
  \row{$A$, $B$, \dots}{sets; objects with an internal structure}
  \row{$\Z$}{the set of integers $-1$,~$0$,~$1$, \dots}
  \row{$\Z_+$}{the set of nonnegative integers $0$,~$1$, \dots}
  \row{$\R$}{the set of real numbers}
  \row{$\R_+$}{the set of nonnegative real numbers}
  \row{$a\in A$ \emph{or} $a:A$}{$a$~is an element of the set~$A$}
  \row{$A\times B$}{the cartesian product of the sets~$A$ and~$B$}
  \row{$A\to B$ \emph{or} $B^A$}{the set of functions from $A$ to $B$; $\to$ is
    right associative}
  \row{$f\>x$ \emph{or} $f(x)$}{function application; 
    space is left associative and binds tighter than all else}
  \row{$w(P)$}{$\{\,w(x)\mid x\in P\,\}$}
  \row{$f \circ g$}{function composition: $(f\circ g)\>x=f\>(g\>x)$
    for all~$x$ in the domain of~$g$}
  \row{$\top$}{true}
  \row{$\fls$}{false}
  \row{$\mathbb{B}$}{the set $\{\tru,\fls\}$ of booleans}
  \row{$e$, $f$}{expressions, possibly predicates}
  \row{$p$, $q$, $r$}{predicates}
  \row{$a$}{preconditions}
  \row{$b$}{postconditions}
  \row{$u$, $v$, $w$}{variables}
  \row{$x$, $y$, $z$}{statements, nodes in a graph}
  \row{$m$, $n$}{integer constants}
  \row{$i$, $j$, $k$}{integer variables, indexes}
  \row{\textit{wp}}{the weakest precondition predicate transformer}
  \row{\textit{sp}}{the strongest postcondition predicate transformer}
  \row{$x\to y$}{directed edge from node~$x$ to node~$y$}
  \row{$x\stackrel{P}{\leadsto}y$}
    {denotes a path~$P$ from node~$x$ to node~$y$; the path~$P$ is sometimes seen
    as a set of nodes that includes the~endpoints}
  \row{\tikz[baseline=-.5ex] \enode () at (0,0){};}{graph node}
  \row{\tikz[baseline=-.5ex] \fnode () at (0,0){};}
    {graph node of interest; read-write flowgraph node}
  \row{\tikz[baseline=-.5ex] \ronode () at (0,0){};}
    {read-only flowgraph node}
  \row{\tikz[baseline=-.5ex] \wonode () at (0,0){};}
    {write-only flowgraph node}
  \row{
    \begin{tikzpicture}[scale=.5,baseline=-.5ex]
    \enode (A) at (0,0) [label=left:$x$] {};
    \enode (B) at (1,0) [label=right:$y$] {};
    \draw[densely dotted] (A) -- (B);
    \end{tikzpicture}}
    {read node with incoming edges from the nodes~$x$ and~$y$}
  \row{$G$, $H$}{graphs, flowgraphs, programs}
  \row{$V$}{the set of nodes of a graph}
  \row{$E$}{the set of edges of a graph}
  \row{$P$, $Q$}{paths, sets of nodes}
  \row{$[p]$}{$1$ if $p$ evaluates to $\tru$, $0$ otherwise}
  \row{$|p|$}{$\tru$ if the predicate~$p$ is valid, $\fls$ otherwise}
}

\backmatter
\raggedright
\bibliographystyle{plain}
\bibliography{phd}
\printindex

\end{document}